\documentclass[acmsmall,screen,nonacm]{acmart}

\usepackage[english]{babel}
\usepackage{graphicx}
\usepackage{textcomp}
\usepackage{amsthm}
\usepackage{amsfonts}
\usepackage{tikz}
\usetikzlibrary{calc,shapes.multipart,chains,arrows,positioning,shapes,fit}
\usepackage{mathtools}
\usepackage{galois}
\usepackage{stackengine}
\usepackage{stmaryrd}
\usepackage{proof}
\usepackage{scalerel}
\usepackage{xcolor}
\usepackage{rotating}
\usepackage{multirow}
\usepackage{array}
\usepackage{listings}
\usepackage{accsupp}
\usepackage[caption=false]{subfig}
\usepackage{url}
\usepackage{hyperref}
\usepackage{hyphenat}
\usepackage{float}
\usepackage{comment}
\usepackage{caption}  
\usepackage{subcaption}  
\usepackage{algorithm}
\usepackage{algpseudocode}
\usepackage{multicol}
\usepackage{cancel}
\usepackage{enumitem}
\usepackage{float}
\usepackage{multirow}
\usepackage{tabularx}
\usepackage{makecell}
\usepackage{hhline}
\DeclareMathOperator*{\sepconj}{\scalerel*{\ast}{\textstyle\sum}}
\def\eqbydef{\mathrel{\ensurestackMath{\stackon[1pt]{=}{\scriptscriptstyle\triangle}}}}

\newcommand*{\llbrace}{%
  \BeginAccSupp{method=hex,unicode,ActualText=2983}%
    \textnormal{\usefont{OMS}{lmr}{m}{n}\char102}%
    \mathchoice{\mkern-4.05mu}{\mkern-4.05mu}{\mkern-4.3mu}{\mkern-4.8mu}%
    \textnormal{\usefont{OMS}{lmr}{m}{n}\char106}%
  \EndAccSupp{}%
}
\newcommand*{\rrbrace}{%
  \BeginAccSupp{method=hex,unicode,ActualText=2984}%
    \textnormal{\usefont{OMS}{lmr}{m}{n}\char106}%
    \mathchoice{\mkern-4.05mu}{\mkern-4.05mu}{\mkern-4.3mu}{\mkern-4.8mu}%
    \textnormal{\usefont{OMS}{lmr}{m}{n}\char103}%
  \EndAccSupp{}%
}

\newcommand{\unionast}{\mathbin{\ooalign{$\cup$\cr\hfil\raise0.42ex\hbox{$\scriptscriptstyle\ast$}\hfil\cr}}}
\newcommand{\unioncircledast}{\mathbin{\ooalign{$\cup$\cr\hfil\raise0.42ex\hbox{$\scriptscriptstyle\circledast$}\hfil\cr}}}
\newcommand{\emp}{\mathsf{emp}}
\newcommand{\true}{\mathsf{true}}
\newcommand{\false}{\mathsf{false}}
\newcommand{\skipexp}{\mathsf{skip}}
\newcommand{\heapcore}{\hat{M}}
\newcommand{\abort}{\textbf{abort}}
\newcommand{\sepimp}{\mathrel{-\mkern-6mu*}}
\newcommand{\ok}{\textcolor{mygreen}{ok }}
\newcommand{\er}{\textcolor{red}{er }}
\newcommand{\oker}{\textcolor{blue}{\epsilon }}
\newcommand{\valdomain}{\mathbb{V}}
\newcommand{\locdomain}{\mathbb{L}}
\newcommand{\progvar}{\mathbb{X}_\mathsf{p}}
\newcommand{\logicvar}{\mathbb{X}_\mathsf{l}}
\NewDocumentCommand{\slstore}{O{}}{\mathbb{S}_\mathsf{#1}}
\NewDocumentCommand{\islstore}{O{}}{\mathbb{S}_\mathsf{#1}}
\newcommand{\slheap}{\mathbb{H}^{\circled{1}}}
\newcommand{\islheap}{\mathbb{H}^{\circled{2}}}
\newcommand{\heap}{\mathbb{H}^{\circled{i}}}
\newcommand{\slmem}{\mathbb{M}^{\circled{1}}}
\newcommand{\islmem}{\mathbb{M}^{\circled{2}}}
\newcommand{\mem}{\mathbb{M}^{\circled{i}}}
\newcommand{\sast}{\bullet}

\newcommand{\sastconcur}{%
  \mathbin{%
    \ooalign{$\odot$\cr
      \hfil\raise0.15ex\hbox{\scalebox{1.35}{$\scriptscriptstyle\bullet$}}\hfil\cr
    }%
  }%
}

\newcommand{\nnmapsto}{\mathrel{\mapsto\!\!\!\!\!\not\!\not\;\;\;\;}}

\newcommand*\circled[1]{\ensuremath{\underline{\mathtt{#1}}}}

\makeatletter
\newsavebox{\@brx}
\newcommand{\llangle}[1][]{\savebox{\@brx}{\(\m@th{#1\langle}\)}
	\mathopen{\copy\@brx\mkern2mu\kern-0.9\wd\@brx\usebox{\@brx}}}
\newcommand{\rrangle}[1][]{\savebox{\@brx}{\(\m@th{#1\rangle}\)}%
	\mathclose{\copy\@brx\mkern2mu\kern-0.9\wd\@brx\usebox{\@brx}}}
\makeatother

\NewDocumentCommand{\il}{O{black} O{black} m m m}{
    \textcolor{#1}{[#3]} \;#4\; \textcolor{#2}{[#5]}
}
\NewDocumentCommand{\hl}{O{black} O{black} m m m}{
    \textcolor{#1}{\{#3\}} \;#4\; \textcolor{#2}{\{#5\}}
}
\NewDocumentCommand{\sil}{O{black} O{black} m m m}{
    \textcolor{#1}{\llangle#3\rrangle} \;#4\; \textcolor{#2}{\llangle#5\rrangle}
}
\NewDocumentCommand{\nc}{O{black} O{black} m m m}{
    \textcolor{#1}{(#3)} \;#4\; \textcolor{#2}{(#5)}
}

\makeatletter
\newcommand{\stacktwo}[3][\displaystyle]{%
  \mathrel{\vcenter{\offinterlineskip
    \halign{\hfil$#1\m@th##$\hfil\cr
      #2\cr
      #3\cr
    }%
  }}%
}
\makeatother

\newcommand{\rightarrowoverdot}{\stacktwo[\scriptstyle]{\rightarrow}{\cdot}}
\newcommand{\rightarrowbelowdot}{\stacktwo[\scriptstyle]{\cdot}{\rightarrow}}
\newcommand{\leftarrowoverdot}{\stacktwo[\scriptstyle]{\leftarrow}{\cdot}}
\newcommand{\leftarrowbelowdot}{\stacktwo[\scriptstyle]{\cdot}{\leftarrow}}

\newcommand{\forwardarrows}{\stacktwo[\scriptstyle]{\rightarrow}{\rightarrow}}
\newcommand{\backwardarrows}{\stacktwo[\scriptstyle]{\leftarrow}{\leftarrow}}
\newcommand{\leftrightleftrightarrows}{\stacktwo[\scriptstyle]{\leftrightarrow}{\leftrightarrow}}

\NewDocumentCommand{\fwdovertriple}{m m m}{(#1, #2, #3)^{\rightarrowoverdot}}
\NewDocumentCommand{\fwdundertriple}{m m m}{(#1, #2, #3)^{\rightarrowbelowdot}}
\NewDocumentCommand{\bwdovertriple}{m m m}{(#1, #2, #3)^{\leftarrowoverdot}}
\NewDocumentCommand{\bwdundertriple}{m m m}{(#1, #2, #3)^{\leftarrowbelowdot}}

\NewDocumentCommand{\triple}{m m m}{(#1, #2, #3)^{\leftrightleftrightarrows}}
\NewDocumentCommand{\fwdtriple}{m m m}{(#1, #2, #3)^{\forwardarrows}}
\NewDocumentCommand{\bwdtriple}{m m m}{(#1, #2, #3)^{\backwardarrows}}

\NewDocumentCommand{\closure}{m m}{\ensuremath{\rho_{\mathsf{#1}}^{#2}}}

\newcommand{\heapcomp}{\mathrel{\scalebox{0.9}{$\blacktriangleright\!\!\blacktriangleleft$}}}
\newcommand{\logicheapcomp}{\mathrel{\bowtie}}

\newcommand{\reserved}{\boxtimes}

\NewDocumentCommand{\semantics}{m m m}{\llbracket \mathsf{#1} \rrbracket^\mathsf{#2}_{#3}}
\NewDocumentCommand{\proofsystem}{m m}{\vdash_{\mathsf{#1}}^{#2}}

\NewDocumentCommand{\hllabel}{O{\relax} m m}{%
  \ensuremath{\{#1{\mathsf{#2}}\}^{\mathsf{#3}}_{+}}%
}

\NewDocumentCommand{\illabel}{O{\relax} m m}{%
  \ensuremath{[#1{\mathsf{#2}}]^{\mathsf{#3}}_{+}}%
}

\NewDocumentCommand{\sillabel}{O{\relax} m m}{%
  \ensuremath{\llangle#1{\mathsf{#2}}\rrangle^{\mathsf{#3}}_{+}}%
}

\NewDocumentCommand{\nclabel}{O{\relax} m m}{%
  \ensuremath{(#1{\mathsf{#2}})^{\mathsf{#3}}_{+}}%
}

\NewDocumentCommand{\fwdovertripleset}{O{\mathsf{c}}}{
    \mathsf{Triples}_{\llbracket #1 \rrbracket}^{\rightarrowoverdot}
}
\NewDocumentCommand{\fwdundertripleset}{O{\mathsf{c}}}{
    \mathsf{Triples}_{\llbracket #1 \rrbracket}^{\rightarrowbelowdot}
}
\NewDocumentCommand{\bwdovertripleset}{O{\mathsf{c}}}{
    \mathsf{Triples}_{\llbracket #1 \rrbracket}^{\leftarrowoverdot}
}
\NewDocumentCommand{\bwdundertripleset}{O{\mathsf{c}}}{
    \mathsf{Triples}_{\llbracket #1 \rrbracket}^{\leftarrowbelowdot}
}

\NewDocumentCommand{\fwdoveraxiomsset}{O{\mathsf{c}}}{
    \mathsf{Axioms}_{\llbracket #1 \rrbracket}^{\rightarrowoverdot}
}
\NewDocumentCommand{\fwdunderaxiomsset}{O{\mathsf{c}}}{
    \mathsf{Axioms}_{\llbracket #1 \rrbracket}^{\rightarrowbelowdot}
}
\NewDocumentCommand{\bwdoveraxiomsset}{O{\mathsf{c}}}{
    \mathsf{Axioms}_{\llbracket #1 \rrbracket}^{\leftarrowoverdot}
}
\NewDocumentCommand{\bwdunderaxiomsset}{O{\mathsf{c}}}{
    \mathsf{Axioms}_{\llbracket #1 \rrbracket}^{\leftarrowbelowdot}
}

\definecolor{mygreen}{rgb}{0,0.6,0}
\definecolor{myorange}{rgb}{1, 0.392, 0.125}
\definecolor{noveltycolor}{rgb}{0.365, 0.62, 0.863}

\NewDocumentCommand{\revision}{m}{\textcolor{black}{#1}}
\NewDocumentCommand{\revisiontmp}{m}{\textcolor{black}{#1}}

\AtBeginDocument{}
\setcopyright{none}

\newtheorem{theorem}{Theorem}[section]
\newtheorem{property}{Property}[section]
\newtheorem{lemma}{Lemma}[section]
\newtheorem{corollary}[theorem]{Corollary}
\newtheorem{example}{Example}[section]
\newtheorem{definition}{Definition}[section]

\newtheorem{remark}{Remark}

\belowcaptionskip\abovecaptionskip

\begin{document}

\title{Systematic Design of Separation Logics}

\author{Roberto Bruni}
\orcid{0000-0002-7771-4154}
\affiliation{%
    \department{Dipartimento di Informatica}
    \institution{Università di Pisa}
    \city{Pisa}
    \country{Italy}
}
\email{roberto.bruni@unipi.it}

\author{Lorenzo Gazzella}
\orcid{0009-0004-2778-2847}
\affiliation{%
    \department{Dipartimento di Informatica}
    \institution{Università di Pisa}
    \city{Pisa}
    \country{Italy}
}
\email{lorenzo.gazzella@phd.unipi.it}

\author{Roberta Gori}
\orcid{0000-0002-7424-9576}
\affiliation{%
    \department{Dipartimento di Informatica}
    \institution{Università di Pisa}
    \city{Pisa}
    \country{Italy}
}
\email{roberta.gori@unipi.it}


\begin{abstract}
\revisiontmp{Thanks to the locality principle, separation logics support modular, scalable analysis of large codebases by relying on local axioms and frame rules to focus only on the heap fragments required for verification.
However, depending on the direction---forward vs. backward---and sense of approximation---over vs. under---of the analysis, designing the corresponding proof systems can require some ingenuity.
In his work on the \emph{calculational design} of program logics, Patrick Cousot outlines a methodology for deriving proof systems directly from program semantics using abstract interpretation, covering both correctness and incorrectness analyses.
Unfortunately, when applied to heap-manipulating programs, Cousot's calculational approach cannot handle the locality principle, because it does not provide a calculational way to derive frame rules and produces axioms that refer to the global heap.
In this paper, we propose a general methodology for systematically deriving local axioms in which the locality principle is embedded by construction.}
For heap-manipulating primitives, we can derive the minimal required heap and the corresponding pre- and postconditions, complemented by universal frame rules without additional syntactic side conditions.
Our method is parametric w.r.t. a set of semantic closure properties that are exploited to design local axioms; it can deal with different memory models; it favors the reuse of many inference rules across over- and under-approximation; and it produces logical systems capable of deriving a broader range of triples w.r.t. existing, cleverly designed, program logics for (in)correctness, ranging from Separation Logic (SL) and Incorrectness Separation Logic (ISL)
to Separation Sufficient Incorrectness Logic (SepSIL).
Furthermore, we demonstrate the flexibility of our methodology by applying it to design a novel proof system for \revisiontmp{inferring} necessary preconditions with separation logic. 
\end{abstract}

\begin{CCSXML}
<ccs2012>
   <concept>
       <concept_id>10003752.10003790.10002990</concept_id>
       <concept_desc>Theory of computation~Logic and verification</concept_desc>
       <concept_significance>500</concept_significance>
       </concept>
   <concept>
       <concept_id>10003752.10003790.10011742</concept_id>
       <concept_desc>Theory of computation~Separation logic</concept_desc>
       <concept_significance>500</concept_significance>
       </concept>
   <concept>
       <concept_id>10003752.10003790.10011741</concept_id>
       <concept_desc>Theory of computation~Hoare logic</concept_desc>
       <concept_significance>300</concept_significance>
       </concept>
   <concept>
       <concept_id>10003752.10003790.10003792</concept_id>
       <concept_desc>Theory of computation~Proof theory</concept_desc>
       <concept_significance>300</concept_significance>
       </concept>
   <concept>
       <concept_id>10003752.10003790.10003806</concept_id>
       <concept_desc>Theory of computation~Programming logic</concept_desc>
       <concept_significance>300</concept_significance>
       </concept>
   <concept>
       <concept_id>10003752.10010124</concept_id>
       <concept_desc>Theory of computation~Semantics and reasoning</concept_desc>
       <concept_significance>500</concept_significance>
       </concept>
   <concept>
       <concept_id>10003752.10010124.10010138.10010143</concept_id>
       <concept_desc>Theory of computation~Program analysis</concept_desc>
       <concept_significance>500</concept_significance>
       </concept>
   <concept>
       <concept_id>10003752.10010124.10010138.10010141</concept_id>
       <concept_desc>Theory of computation~Pre- and post-conditions</concept_desc>
       <concept_significance>500</concept_significance>
       </concept>
 </ccs2012>
\end{CCSXML}

\ccsdesc[500]{Theory of computation~Logic and verification}
\ccsdesc[500]{Theory of computation~Separation logic}
\ccsdesc[300]{Theory of computation~Hoare logic}
\ccsdesc[300]{Theory of computation~Proof theory}
\ccsdesc[300]{Theory of computation~Programming logic}
\ccsdesc[500]{Theory of computation~Semantics and reasoning}
\ccsdesc[500]{Theory of computation~Program analysis}
\ccsdesc[500]{Theory of computation~Pre- and post-conditions}
\keywords{Incorrectness Separation Logic, Separation Sufficient Incorrectness Logic, Necessary Condition, Sound-by-construction analysis, Frame rule, Locality principle}

\maketitle

\section{Introduction} \label{sec:introduction}

\paragraph*{Program analysis}
In the development of increasingly complex and critical software systems, software verification plays a fundamental role in ensuring reliability, security, and correctness of program behavior.
Since their conception, frameworks for program analysis, as envisaged by \citet{Turing49-program-proof} and first formalized by \citet{Floyd1967Flowcharts} and \citet{DBLP:journals/cacm/Hoare69} with Hoare Logic (HL), have focused on program correctness, that is,  proving the absence of bugs. 
More recently, motivated by the goal of making formal methods effective in industrial settings, \citet{DBLP:journals/pacmpl/OHearn20} has introduced Incorrectness Logic (IL), which shifts the perspective to identifying the presence of bugs.
Unlike correctness analyses, IL cannot raise false alarms, which can be an advantageous support for programmers.
Whether the analysis is aimed at proving correctness or exposing incorrectness, some form of approximation of the exact program semantics is necessary, because Rice's theorem \cite{rice1953classes} establishes that any non-trivial program property is undecidable.  
Sound methods for proving correctness are based on over-approximation, which reports all true alarms, possibly mixed with false alarms. 
Vice versa, under-approximations are best suited for incorrectness techniques, which only raise true alarms, even if not necessarily all of them. 

\begin{table}[t]
    \centering
    \renewcommand{\arraystretch}{1.3}
	\begin{tabular}{p{0.08\textwidth}|p{0.42\textwidth}|p{0.42\textwidth}|}
    & \textit{forward} & \textit{backward} \\
    \hline
	\textit{over}
    & Partial correctness analysis: it shows absence of errors, but can report false alarms \cite{DBLP:journals/cacm/Hoare69,Floyd1967Flowcharts}. 
    & Necessary condition (NC) analysis: 
    points out some entry states that inevitably lead to errors \cite{DBLP:conf/vmcai/CousotCFL13}.  \\
    baseline
    & Separation logic (SL) \cite{DBLP:conf/csl/OHearnRY01} 
    & Currently none available. \\
    \hline
	\textit{under} 
    & Incorrectness analysis: shows some reachable bugs, without false alarms \cite{DBLP:conf/sefm/VriesK11,DBLP:journals/pacmpl/OHearn20}.
    & Lisbon logic (a.k.a. SIL): exposes some entry states with at least an erroneous execution \cite{DBLP:journals/pacmpl/ZhangK22,DBLP:journals/pacmpl/ZilbersteinDS23,DBLP:journals/pacmpl/ZilbersteinSS24,DBLP:journals/pacmpl/RaadVO24}. \\
    baseline 
    & Incorrectness separation logics (ISL and CISL) \cite{DBLP:conf/cav/RaadBDDOV20,DBLP:journals/pacmpl/RaadBDO22}.
    & Separation SIL \cite{DBLP:journals/pacmpl/AscariBGL25}.  \\
    \hline
	\end{tabular}
	\caption{\revision{Taxonomy of logics with reference frameworks for heap manipulating programs. Columns indicate  the direction, i.e., whether the validity conditions of triples is based on forward or backward semantics, while rows indicate the sense of approximation, i.e., over- or under-approximation.}}
	\Description{Fully described in the text}
	\label{tab:taxonomy}
\end{table}

\revisiontmp{Here, the notion of approximation refers to \emph{forward} program analyses. Indeed, in Hoare triples the postcondition over-approximates all final states reachable from states satisfying the precondition.}
The same idea can also be defined over initial states, as in \citet{DBLP:conf/vmcai/CousotCFL13}, yielding \emph{backward} program analyses.
We will use the term \emph{direction} to refer to the type of analysis (forward or backward), and the term \emph{sense}  to indicate the kind of approximation (over or under).
\revision{Each combination of sense and direction is best suited for achieving specific analysis goals, as summarized in Table~\ref{tab:taxonomy}, triggering a new thread of research in program logics, with efforts to integrate them with separation logic, concurrency, and abstract interpretation, as well as to combine correctness and incorrectness reasoning within unifying frameworks \cite{DBLP:conf/cav/RaadBDDOV20,DBLP:conf/RelMiCS/MollerOH21,DBLP:journals/pacmpl/LeRVBDO22,DBLP:journals/jacm/BruniGGR23,DBLP:conf/ecoop/MaksimovicCLSG23,DBLP:journals/pacmpl/RaadVO24,DBLP:journals/pacmpl/ZilbersteinSS24,DBLP:journals/pacmpl/AscariBGL25,DBLP:journals/pacmpl/VerbeekSFR25,DBLP:journals/pacmpl/VerschtK25,DBLP:conf/ecoop/CarrottAR25}.
Table~\ref{tab:taxonomy} also shows the separation logic instances that serve as reference models
for comparing our proposals. 
Highly sophisticated toolsets grounded in these theories are being developed in both academic and industrial settings, spanning over a wide range of applications \cite{DBLP:journals/cacm/DistefanoFLO19, DBLP:journals/pacmpl/LeRVBDO22, DBLP:journals/jfp/JungKJBBD18,DBLP:journals/pacmpl/VistrupSJ25}.
}


\paragraph*{Cousot's calculational design}
As different logics address different goals, their proof systems are often designed ad hoc, possibly re-using, extending or simply adapting inference rules already available from the literature.
To bring clarity in the program logic zoo and facilitate mutual comparisons for expressiveness and applicability, \citet{DBLP:journals/pacmpl/Cousot24} promoted a semantic-driven methodology, called \emph{calculational design}.
The methodology for deriving a proof system directly from program semantics builds on well-established Abstract Interpretation techniques~\cite{DBLP:conf/popl/CousotC77,cousot2021principles}, and enables the automatic design of sound and complete proof systems through the abstraction associated with the analysis.
Cousot's method begins by specifying the formal semantics of the language, followed by establishing the theory of the logic through Galois connections that abstract suitable semantic properties. The proof system's rules are then automatically derived using fixpoint abstractions, principles of fixpoint induction, and Aczel's approach to constructing deductive rule-based systems from fixpoints \cite{aczel1977introduction}.
Thanks to calculational design, Cousot is able to position all assertional transformational logics within a comprehensive taxonomy \revision{of abstractions~\cite[Figure 3]{DBLP:journals/pacmpl/Cousot24}, which is highly valuable for comparing analyses by varying their underlying Galois connections and for constructing sound and complete proof systems through semantic calculation.}

\paragraph*{Problem statement: local axiom design}
\revision{To make program analysis scalable and effective for large, frequently updated codebases, it is necessary to reuse previous analyses of code fragments, rather than recomputing them each time. In the case of heap-manipulating programs, a successful approach in this respect is provided by logics grounded in the \emph{locality principle}~\cite{DBLP:journals/cacm/DistefanoFLO19}: so-called \emph{`local' axioms} allow the computation of local summaries for individual code fragments, which can then be composed using suitable \emph{frame rules} and bi-abductive techniques \cite{DBLP:conf/popl/CalcagnoDOY09, DBLP:journals/jacm/CalcagnoDOY11}. 
As advised by \citet{DBLP:conf/csl/OHearnRY01}, \emph{to understand how a program works, it should be possible for reasoning and specification to be confined to the cells that the program actually accesses. The value of any other cell will automatically remain unchanged}.
The locality principle enables incremental analysis to reason on larger portions of programs by focusing only on the parts of the code that have changed, rather than requiring a full inspection of the entire program at every update.
Local axioms and frame rules are in fact the key ingredients of separation logics~\cite{DBLP:conf/csl/OHearnRY01, DBLP:conf/lics/Reynolds02, DBLP:conf/cav/RaadBDDOV20, DBLP:journals/entcs/BornatCY06, DBLP:conf/ecoop/MaksimovicCLSG23, DBLP:journals/pacmpl/AscariBGL25, DBLP:conf/lics/CalcagnoOY07}. 
\revisiontmp{Separation logics exploit basic ownership assertions such as $\mathsf{x \mapsto v}$, stating  that the cell at address $\mathsf{x}$ stores the value $\mathsf{v}$, and the separating conjunction $\ast$, which composes assertions over disjoint portions of memory, e.g., the assertion $\mathsf{x \mapsto v} \ast \mathsf{y \mapsto v}$ implies that $\mathsf{x}$ and $\mathsf{y}$ refer to \emph{different} cells of the heap.}
While the program logics taxonomies proposed in~\cite{DBLP:journals/pacmpl/AscariBGL25,DBLP:journals/pacmpl/VerschtK25,DBLP:journals/pacmpl/Cousot24} can serve as a compass to determine the right analysis framework to use for a given goal, the design of their corresponding local axioms and frame rules are not yet supported by Cousot's calculational design. 
This limitation stems from the fact that program semantics is defined to handle arbitrary inputs, and therefore must account for the entire memory. Consequently, Cousot's calculational method produces inference rules that inherently refer to the entire memory state.}

\paragraph*{Main contribution: Systematic design}
\revision{The design, extension, and automation of  separation logics require a certain degree of ingenuity. In this paper, we propose a novel, well-disciplined method to derive local axioms and frame rules directly from program semantics. We exploit semantic closure properties to derive the minimal set of axioms from which all valid triples can be obtained.
Unlike Cousot's calculational design, we do not aim to derive the entire proof system; instead, we focus on the axioms for atomic commands. The inference rules for the remaining language constructs can be borrowed from existing proposals, including Cousot's own.\footnote{More explicitly, we do not address directly the issue of designing rules for sequential composition, conditionals, nondeterministic choice, Kleene star loops, while loops, etc., for which existing solutions are fully reusable.} 
To avoid confusion, we call our method \emph{systematic design}.}

To witness the full generality of our solution, we apply our methodology to a variety of program analyses, addressing different memory models and comparing the \revision{new resulting proof systems} with the ones previously proposed in the literature. 
\revision{Simple examples show that our systematic design improves the expressiveness of existing frameworks in the vast majority of cases.}

 Having fixed the collecting semantics $\llbracket \cdot\rrbracket$ and a set of analysis-preserving closure operators, our approach is first formulated at the purely semantic level and then instantiated to reconstruct assertional program logics based on (in)correctness separation logics formulas~\cite{DBLP:conf/csl/OHearnRY01,DBLP:conf/cav/RaadBDDOV20}, so to exploit the usual separating conjunction operator $\ast$ for local reasoning. 
The distinguishing features of our methodology are:
\emph{\textbf{1)~Principled locality:}} It guarantees the derivation of sound and complete local axioms by construction. Matched with universal frame rules, it enables modular reasoning about program behaviour.
\emph{\textbf{2)~Broad generality:}} It seamlessly supports any combination of over- and under-approximation, as well as forward and backward analyses. Moreover, it accommodates diverse memory management policies, including both reuse and non-reuse of deallocated memory cells.
\emph{\textbf{3)~Reusable core rules:}} It identifies a common core of inference rules that are agnostic to the sense of approximation, supporting their reuse across multiple analyses.
\emph{\textbf{4)~Expressiveness:}} It yields proof systems that derive strictly more valid triples w.r.t. existing analyses, as soundness and completeness are ensured by construction, \revision{rather than by ad hoc and/or ingenious arguments.}
\revision{\emph{\textbf{5)~Mechanization:}} To gain confidence in the proposed methodology, we formalized a proof of concept of the procedure in Isabelle/HOL.}

\paragraph*{\revision{Relevance of results}}
\revisiontmp{To clarify how our method advances the state of the art and to highlight the advantages of a systematic design, we revisit several key issues that our approach addresses in a principled way, largely independent of the sense and direction of analysis. }
\revisiontmp{In contrast, previous ad hoc solutions left room for improvement. Although the literature contains different interpretations of separating assertions (see, e.g., \cite{DBLP:conf/vstte/Parkinson10,DBLP:conf/aplas/CaoCA17}), we follow the line of research in which they impose exact (rather than minimal) requirements on the heap. For example, $\emp$ denotes the empty heap, and $\mathsf{x\mapsto 1}\ast \mathsf{y\mapsto 2}$ denotes a heap containing exactly two cells.}

\paragraph*{Issue \#1: Frame rule soundness across analyses}
\revisiontmp{A key feature of separation logics is the reuse of local code summaries, allowing the analysis to focus on individual program fragments independently from the entire global state. 
To this end, separation logics deal with triples whose pre- and postconditions are confined to the part of the memory accessed by the code. Idle memory portions can be added on demand to scale the reasoning thanks to the frame rule. Frames are possibly computed by the bi-abduction method \cite{DBLP:conf/popl/CalcagnoDOY09, DBLP:journals/jacm/CalcagnoDOY11} to extend and compose local analyses of `adjacent' code fragments that were carried out in isolation. 
As a result, separation logics can scale to large codebases, even when the complete program is unavailable \cite{DBLP:journals/cacm/DistefanoFLO19, DBLP:conf/nfm/CalcagnoDDGHLOP15}. 
Since the notion of soundness is dependent on the sense and direction of the analysis, the same frame rule is not necessarily reusable across different analyses or memory models, unless certain semantic properties are known to hold \cite{DBLP:conf/aplas/CaoCA17} or suitable side conditions are added.}
For example, in Separation Logic (SL) \cite{DBLP:conf/csl/OHearnRY01, DBLP:conf/lics/Reynolds02}---a forward over-approximation logic---the soundness requirement of a triple $\hl{\mathsf{P}}{\mathsf{r}}{\mathsf{Q}}$ is $\llbracket \mathsf{r}\rrbracket \mathsf{P}\subseteq \mathsf{Q}$, namely that the postcondition $\mathsf{Q}$ is satisfied by any final state reachable by executing the program $\mathsf{r}$ on any initial state that satisfies the precondition $\mathsf{P}$.
The classical frame rule is written as follows: 
\begin{equation}
    \vcenter{\hbox{
    \infer{\hl{\mathsf{R\ast P}}{\mathsf{r}}{\mathsf{R\ast Q}}}{\hl{\mathsf{P}}{\mathsf{r}}{\mathsf{Q}} & \mathsf{mod(r)\cap fv(R) = \emptyset}}
    }}
    \label{eq:classicalframe}
\end{equation}
where $\mathsf{mod(r)}$ is the set of variables that appear in the left-hand side of some assignment in $\mathsf{r}$ and $\mathsf{fv(R)}$ the set of free variables in $\mathsf{R}$.
\revisiontmp{The disjointness condition between $\mathsf{fv(R)}$ and $\mathsf{mod(r)}$ guarantees that the memory described by $\mathsf{R}$ is untouched by $\mathsf{r}$ and thus the soundness condition $\llbracket \mathsf{r}\rrbracket (\mathsf{R}\ast \mathsf{P})\subseteq \mathsf{R}\ast \mathsf{Q}$ is trivially satisfied whenever $\llbracket \mathsf{r}\rrbracket \mathsf{P}\subseteq \mathsf{Q}$, because  $\llbracket \mathsf{r}\rrbracket (\mathsf{R}\ast \mathsf{P})\subseteq\mathsf{R}\ast \llbracket \mathsf{r}\rrbracket \mathsf{P}$ and $\mathsf{R}\ast \llbracket \mathsf{r}\rrbracket \mathsf{P}\subseteq \mathsf{R}\ast \mathsf{Q}$ follow from the properties of the collecting semantics $\llbracket \cdot\rrbracket$.}

For a triple $\il{\mathsf{P}}{\mathsf{r}}{\mathsf{Q}}$ in Incorrectness Separation Logic (ISL)~\cite{DBLP:conf/cav/RaadBDDOV20}---a forward under-approximation logic---the soundness requirement is $\llbracket \mathsf{r}\rrbracket \mathsf{P}\supseteq \mathsf{Q}$ and the above reasoning cannot be replicated, unless $\llbracket \mathsf{r}\rrbracket (\mathsf{R}\ast \mathsf{P})=\mathsf{R}\ast \llbracket \mathsf{r}\rrbracket \mathsf{P}$. \revision{Unfortunately this fails in general:} if frame $\mathsf{R}$ shares some memory cell with $\mathsf{P}$ that is deallocated by $\mathsf{r}$, it can happen that $\llbracket \mathsf{r}\rrbracket (\mathsf{R}\ast \mathsf{P}) = \llbracket \mathsf{r}\rrbracket \false = \false$ but $\mathsf{R}\ast \llbracket \mathsf{r}\rrbracket \mathsf{P}\neq \false$.
\revisiontmp{Consequently, \citet{DBLP:conf/cav/RaadBDDOV20} introduced a different memory model in which deallocated cells are explicitly recorded along the computation, denoted with the assertion $\mathsf{x\not\mapsto}$. This makes the heap grow monotonically. Under this model, $\mathsf{R}\ast \mathsf{P}\neq \false$ whenever $\mathsf{R}\ast \llbracket \mathsf{r}\rrbracket \mathsf{P}\neq \false$, and the classical frame rule \eqref{eq:classicalframe} is sound, but their solution requires a different memory model.}

\revisiontmp{
We note that several approaches for avoiding syntactic side conditions have already appeared in the literature. All of them can be traced back to the core insight of \cite{DBLP:journals/entcs/BornatCY06}, although they differ in the semantic assumptions they adopt and in the programming paradigms they target.
\citet{DBLP:journals/entcs/BornatCY06} treat program variables as resources: they introduce the predicate $\mathsf{Own(x)}$ to denote stores that contain exactly the single variable $\mathsf{x}$, and they use the separating conjunction to compose such stores. In this setting, the frame rule can be detached from  syntactic side conditions by adding a semantic requirement that relates the pre- and postcondition; their development, however, is restricted to forward correctness analyses.
A closely related line is pursued in \cite{DBLP:journals/jfp/JungKJBBD18, DBLP:journals/pacmpl/VistrupSJ25}, where the side condition is eliminated entirely,
}
\revisiontmp{
albeit 1) under an affine logic ---where ownership means \emph{"at least this resource"}, rather than \emph{"exactly this resource"}--- 2) without an explicit account of mutable stores and 3) without addressing backward directed rules.
Finally, \citet{DBLP:journals/pacmpl/RaadBDO22} follow a similar principle for incorrectness analysis by modeling both store and heap as a separation algebra, and by adopting explicit deallocation predicates as in  \cite{DBLP:conf/cav/RaadBDDOV20}.
Together, these semantic assumptions suffice to remove side conditions altogether.}

\textit{Advancements on issue \#1.}
We define a `universal' characterization of frames based on the concept of \emph{logical} variables,  a set of inherently immutable variables disjoint from  the program ones.
For each direction of the analysis, we define a \emph{unique frame rule that is independent from the sense of approximation and from the memory model}.
\revisiontmp{In contrast to other approaches, where side-condition-free framing emerges from specific semantic properties,} our method avoids syntactic side conditions on the program without treating variables as resources, 
\revisiontmp{applies directly to settings with mutable stores, and does not require heap monotonicity in order to support incorrectness analysis.}

\paragraph*{Issue \#2: Design of local axioms}

Local axioms are concerned with the minimal amount of information that is necessary to derive the behaviour of each atomic command.
Defining local axioms that are expressive enough to recover all possible valid triples can require some ingenuity.
Moreover, one has to keep in mind that they can be applied jointly with other rules, like the frame rule or the consequence rules, that may depend themselves on the sense and direction of the analysis. In fact, different ad hoc solutions are often chosen in the literature for different logics.

For example, let us take the deallocation command $\mathsf{free(x)}$ that disposes the location pointed to by the variable $\mathsf{x}$ in three different separation logics: the already mentioned SL and ISL, and Separation Sufficient Incorrectness Logic (SepSIL)~\cite{DBLP:journals/pacmpl/AscariBGL25}---a more recent backward under-approximation logic for disclosing the sources of incorrectness, in the spirit of so-called \emph{Lisbon triples} and \emph{manifest error detection}, also studied in \cite{DBLP:journals/pacmpl/ZhangK22,DBLP:journals/pacmpl/ZilbersteinSS24,DBLP:journals/pacmpl/RaadVO24}. SepSIL triples are written $\sil{\mathsf{P}}{\mathsf{r}}{\mathsf{Q}}$ and their soundness requirement guarantees that any initial state that satisfies $\mathsf{P}$ has at least one execution leading to some state in $\mathsf{Q}$. When $\mathsf{Q}$ describes error states, it means that some error is reachable starting from any initial state.
The axioms proposed in the literature for the three different logics are the following.
\[\resizebox{.99\linewidth}{!}{
\infer[\mathsf{(SL)}]{\hl{\mathsf{x\mapsto \_}}{\mathsf{free(x)}}{\mathsf{\emp}}}{}\;
\infer[\mathsf{(ISL)}]{\il{\mathsf{x\mapsto v}}{\mathsf{free(x)}}{\mathsf{x\not\mapsto}}}{}\;
\infer[\mathsf{(SepSIL)}]{\sil{\mathsf{x\mapsto \_}}{\mathsf{free(x)}}{\mathsf{x\not\mapsto}}}{}
}\]

The axiom of SL asserts that the deallocation is possible whenever $\mathsf{x}$ was allocated, no matter the stored value, and then the empty heap is returned because it applies to the memory model without explicit reallocation. Thanks to the consequence rule of SL, it is possible to strengthen the precondition and derive, e.g., $\hl{\mathsf{x\mapsto v}}{\mathsf{free(x)}}{\mathsf{\emp}}$ for any value $\mathsf{v}$.

The axiom of ISL requires a different memory model to guarantee the soundness of the frame rule (as explained above), where the postcondition $\mathsf{x\not\mapsto}$ records the deallocation of $\mathsf{x}$ for future reuse. Moreover, the consequence rule of ISL allows to weaken the precondition, not to strengthen it, because the sense of approximation is reversed w.r.t. SL. Therefore the axiom must be given for any value $\mathsf{v}$, which can then be generalized to derive $\il{\mathsf{x\mapsto \_}}{\mathsf{free(x)}}{\mathsf{x\not\mapsto}}$.

Finally, the axiom of SepSIL has been given for the same memory model as ISL and designed by taking into account the same consequence rule as SL.

Given such ad hoc solutions, one may wonder whether certain analysis are possible only for certain memory models.
Furthermore, we miss a disciplined approach to select the right axioms each time and this makes it complicated to deal with new primitives as well as new logics.

\textit{Advancements on issue \#2.}
Our methodology allows to systematically derive local axioms from the semantics, independently from the sense and direction of the analysis and from the memory model. This brings clarity and automation in the design of novel program logics, which we exemplify in Section~\ref{sec:systematic_backward} by the systematic design of a proof system for backward over-approximation.

\paragraph*{Issue \#3: Derivability and completeness}
\revisiontmp{
While the soundness of a proof system follows directly from the soundness of each of its rules, completeness emerges only from the combined effect of all the rules. }
\revisiontmp{Below, we show that even well-established program logics fail to derive triples whose validity is straightforward. This is where a methodological approach based on closure operators becomes valuable, as it supports reasoning about the proof system globally.
}
\revisiontmp{The presence of a syntactic side condition in the frame rule of SL  may hinder its own application even in cases where the command under analysis does not modify the heap.  }
For example, the following triple can be derived in SL, where the assertion $\mathsf{\emp}$ means the heap is empty:
\[  \hl{\mathsf{\emp}}{\mathsf{if}\; \mathsf{ (\true)} \; \mathsf{then}\;\skipexp\; \mathsf{else}\; \mathsf{x}:=[\mathsf{y}]}{ \mathsf{ \emp}} .\]

\noindent
However, despite the assignment $\mathsf{x}:=[\mathsf{y}]$ being dead code, due to the frame rule's side condition $\mathsf{mod(r)\cap fv(R) = \emptyset}$, we cannot extend this reasoning to fit the frame $\mathsf{R}=\mathsf{x\mapsto \_}$.
\revisiontmp{Hence, the following valid SL triple cannot be derived within this proof system:}
\[  \hl{\mathsf{x}\mapsto \_}{\mathsf{if}\; \mathsf{ (\true)} \; \mathsf{then}\;\skipexp\; \mathsf{else}\; \mathsf{x}:=[\mathsf{y}]}{ \mathsf{x}\mapsto \_} .\]

Notably, the same impossibility argument can be used as a counterexample to the completeness of ISL. There, the problem is even more evident: under-approximation triples $\il{\mathsf{P}}{\mathsf{r}}{\mathsf{Q}}$ are allowed to drop arbitrary path inspections from the analysis, meaning that variables modified in discarded branches of $\mathsf{r}$ should not be taken into account.
Roughly, $\mathsf{mod(r)}$ over-approximates the set of modified variables. As a consequence, 
the side condition may become a source of incompleteness, inhibiting the derivation of valid ISL triples.
This can be substantiated by the valid ISL triple\footnote{\revision{Adhering to~\cite{DBLP:conf/cav/RaadBDDOV20} green assertions are for normal termination, red for erroneous one, and blue for both.}} 
\begin{equation}\label{eq:triple_ex}
    \il[blue][red]{\mathsf{y\mapsto v}}{\mathsf{x:=alloc();\;free(y);\;x:=[y]}}{{\mathsf{y}\not\mapsto} \ast \mathsf{x}\mapsto \_}
\end{equation}
\noindent
whose precondition requires that $\mathsf{y}$ is pointing to the value $\mathsf{v}$, while the postcondition states that $\mathsf{y}$ has been deallocated and that $\mathsf{x}$ is pointing to some value.
When handling the command $\mathsf{x}:=[\mathsf{y}]$ after $\mathsf{y}$ has been deallocated, we end up with an error without any actual modification to the variable $\mathsf{x}$. Despite the fact that $\mathsf{x}$ is not actually modified, we cannot use the frame rule to frame out the assertion on $\mathsf{x}$ due to the side condition $\mathsf{mod(r)\cap fv(R) = \emptyset}$, which prevents the frame from mentioning the variable $\mathsf{x}$, leading to the impossibility of applying the local axiom for $\mathsf{x}:=[\mathsf{y}]$.

\revisiontmp{The next example shows that incompleteness does not stem solely from the frame rule.} Concurrent Incorrectness Separation Logic ($\text{CISL}_{\text{DC}}$) \cite{DBLP:journals/pacmpl/RaadBDO22}, the concurrent version of ISL for detecting memory safety bugs, uses a different underlying memory model that enables a frame rule in the style of \eqref{eq:classicalframe} but without any side condition. 
\revisiontmp{Despite this, $\text{CISL}_{\text{DC}}$ local axioms may still give rise to sources of incompleteness.}
To see this, let us focus on the axiom for allocation:
\[
    \infer{\il[blue][mygreen]{\mathsf{x\mapsto v_1}}{\mathsf{x:=alloc()}}{\mathsf{\exists l. x\mapsto l \ast l\mapsto v}}}{}
\]
whose precondition assumes the variable $\mathsf{x}$ points to the value $\mathsf{v_1}$, while the postcondition states that $\mathsf{x}$ points to some location $\mathsf{l}$, which in turn points to $\mathsf{v}$. We note that since $\text{CISL}_{\text{DC}}$ is an under-approximation logic, we would like to have the possibility of expressing the strongest possible postcondition, which would carry a generic choice of value $\mathsf{v}$. Unfortunately, the valid triple
\[
\il[blue][mygreen]{\mathsf{x\mapsto v_1}}{\mathsf{x:=alloc()}}{\mathsf{\exists l. \exists v. x\mapsto l \ast l\mapsto v}}
\]
cannot be derived in $\text{CISL}_{\text{DC}}$ since we would need an infinite disjunction rule to collect every possible value $\mathsf{v}$ or the `exists' rule, but neither is present in the logic system. 

\textit{Advancements on issue \#3.}
\revisiontmp{In contrast to approaches where these interactions are handled implicitly at the semantic level (e.g., Iris-style or CISL-style systems),}
our approach systematically identifies the necessary side conditions for applying frame rules in both forward and backward analyses, thereby extending the range of derivable triples well beyond existing proposals.
Moreover, the derivability of all valid triples for any atomic command $\mathsf{c}$ is ensured by construction.

\paragraph*{Technical contribution}
Our approach refines and extends the semantic methodology outlined in the introduction by systematically incorporating three key parameters of program analysis: 
\emph{direction} (forward/backward analysis), \emph{sense} (over-/under-approximation), and the enforcement of the \emph{locality principle} through \emph{semantic}, rather than syntactic, conditions.

From a methodological standpoint, we characterize the set of valid triples for a given atomic command $\mathsf{c}$ by identifying semantic \emph{closure properties} that are preserved under the intended analysis. 
These include:
(i)~consequence closure (e.g., weakening postconditions in over-approximation),
(ii)~frame closure (capturing heap locality), and
(iii)~existential closure (generalization over variables).
These closure properties play a crucial role in determining the simplest triples from which all others can be derived using closure operators. Indeed, if a triple can be obtained through a closure operator, including it explicitly in the set of axioms would be redundant. 
We show that some of these closures commute and that each valid triple can, therefore, be derived via a short canonical sequence of such closures in a given order, allowing axioms to be selected by minimality considerations. 
This yields a systematic and unified proof system design that operates on pre- and postconditions that are concrete sets of states. 
\revision{The translation from the semantic proof system to program logics is then achieved by selecting appropriate assertion languages, building on separation logic constructs.}

\revision{Technically speaking, the ingredients that make systematically designed program logics more expressive, modular, and effective than existing ones are immutable variables and semantic closures.}

\revision{Immutable variables are not a new concept; on the contrary, they have been widely employed in many proposals before \cite{DBLP:journals/pacmpl/DardinierM24,Floyd1967Flowcharts, DBLP:conf/cav/RaadBDDOV20, DBLP:conf/csl/OHearnRY01, DBLP:conf/lics/Reynolds02, DBLP:journals/pacmpl/RaadVO24}, even if with different names (e.g., auxiliary, logical or shadow variables) and not necessarily injected into the memory model at the semantic level, as we do here. 
They can be used to freeze and retain the values of program variables across transitions, thereby characterizing the class of admissible frames. 
Thanks to logical variables, we define frame compatibility directly at the semantic level, thus avoiding syntactic side conditions that are a well-known source of incompleteness. 
Their principled treatment allows us to distill universal frame rules.}

\revision{Closure operators are instrumental to obtain local axioms that are independent of the memory model and the sense of approximation. This supports a uniform treatment across multiple program logics, including both correctness and incorrectness oriented analyses.}

\begin{table}[t]
    \centering
    \renewcommand{\arraystretch}{1.5}
    \begin{tabular}{c|c||c|p{0.35\textwidth}|p{0.35\textwidth}|}
    \multicolumn{2}{c||}{} & \textbf{sense} 
    & \textbf{no memory reallocation}  ($\slmem$)
    & \textbf{memory reallocation} ($\islmem$) \\
    \hline
    \hline
    \multirow{2}{*}{%
    \raisebox{0pt}[0pt][0pt]{%
    \rotatebox[origin=c]{90}{\textit{forward}\hspace{25pt}}%
    }} &
    SL+ & \textit{over} 
    & $\vdash_{\mathsf{SL+}}^{\circled{1}}$ is \textcolor{noveltycolor}{more expressive} than SL (see Th.~\ref{th:sl_completeness} and Ex.~\ref{ex:sl+vssl})
    & \textcolor{noveltycolor}{New logic:} 
      $\vdash_{\mathsf{SL+}}^{\circled{2}}$ \\
    \cline{2-5} &
    ISL+ & \textit{under}
    & \textcolor{noveltycolor}{New logic:} 
      $\vdash_{\mathsf{ISL+}}^{\circled{1}}$
    & $\vdash_{\mathsf{ISL+}}^{\circled{2}}$ is \textcolor{noveltycolor}{more expressive} than ISL (see Th.~\ref{th:relative_completeness_loopfree_isl}--\ref{th:statewise_completeness_isl} and Ex.~\ref{ex:isl+vsisl}) \\
    \hline
    \hline
    \multirow{2}{*}{%
    \raisebox{0pt}[0pt][0pt]{%
    \rotatebox[origin=c]{90}{\textit{backward}\hspace{25pt}}%
    }} &
    SIL+ & \textit{under} 
    & \textcolor{noveltycolor}{New logic:} 
      $\vdash_{\mathsf{SIL+}}^{\circled{1}}$
    & Unlike SepSIL, \textcolor{noveltycolor}{all axioms of $\vdash_{\mathsf{SIL+}}^{\circled{2}}$ are local} \\
    \cline{2-5} &
    NC+  & \textit{over}
    & \textcolor{noveltycolor}{New logic:} 
      $\vdash_{\mathsf{NC+}}^{\circled{1}}$
    &  \textcolor{noveltycolor}{New logic:} 
      $\vdash_{\mathsf{NC+}}^{\circled{2}}$ \\[8pt]
    \hline
    \end{tabular}
    \caption{\revision{Systematically designed logics and their improvements w.r.t. existing ones.
    Logics are organized according to the direction of the analysis, sense of approximation and underlying memory model.
    We write $\vdash_\mathsf{L}^\text{\circled{i}}$ to denote the logic $\mathsf{L}$ under the memory model $\mem$.
    All original contributions are highlighted in \textcolor{noveltycolor}{light blue}.}}
    \label{tab:summary+}
\end{table}

\revision{
Overall, our methodology produces proof systems that are sound-by-construction.
All synthesized proof systems and their novelties are outlined in Table~\ref{tab:summary+}. 
Besides increasing the expressiveness of existing logics,
new logics can be designed with minimal effort.
This has been done to address different memory models (see entries 
$\vdash_{\mathsf{SL+}}^{\circled{2}}$,
$\vdash_{\mathsf{ISL+}}^{\circled{1}}$, and
$\vdash_{\mathsf{SIL+}}^{\circled{1}}$), but also
for distilling the proof system for Necessary Conditions (NC) ~\cite{DBLP:conf/vmcai/CousotCFL13} (see entries $\vdash_{\mathsf{NC+}}^{\circled{1}}$ and $\vdash_{\mathsf{NC+}}^{\circled{2}}$), to be used in automatic precondition inference.
As a further instance, not reported in Table~\ref{tab:summary+}, we apply our methodology to the setting of $\text{CISL}_{\text{DC}}$ \cite{DBLP:journals/pacmpl/RaadBDO22}, where a different store model is required.}

\paragraph*{Structure of the paper}

After recalling some necessary background in Section~\ref{sec:background}, we introduce our semantic framework and the methodology for systematic design in Sections~\ref{sec:new_semantic_model}--\ref{sec:systematic_program_logics}. 
In Sections \ref{sec:systematic_forward}–\ref{sec:systematic_backward}, we apply our methodology to both forward and backward analyses, addressing under- and over-approximations. 
While related work is thoroughly discussed throughout the paper, we conclude in Section~\ref{sec:conclusion} with future work. 
Auxiliary material is in the Appendix for reviewers' convenience.

\section{Background} \label{sec:background}
\revision{Following \citet{DBLP:journals/pacmpl/OHearn20}, we consider a language of regular commands, parametric to suitable atomic commands $\mathsf{c}$, including heap-manipulating primitives, according to the grammar below:}
\[
    \mathsf{Reg} \ni \mathsf{r} ::= \; \mathsf{c} \;|\; \mathsf{r;r} \;|\; \mathsf{r + r} \;|\; \mathsf{r^*} 
\]

We write $\mathsf{r;r}$ for sequential composition, $\mathsf{r+r}$ for nondeterministic choice, and $\mathsf{r^\ast}$ for the Kleene iteration of $\mathsf{r}$, where the command $\mathsf{r}$ can be executed any finite number of times, possibly zero. 
\revision{
Regular commands are a well-established framework \cite{DBLP:conf/focs/Pratt76, DBLP:journals/jcss/FischerL79, dynamic_logic/10.5555/557365} that serves as a general template, allowing for different instantiations depending on the chosen set of atomic commands, to be detailed later (see~\eqref{eq:atom}).
Standard imperative constructs, like conditionals and loops, can be coded following, e.g., \cite{DBLP:books/daglib/0070910}.}


\subsection{Memory models} \label{sec:semantic_model}
As briefly discussed in Section \ref{sec:introduction}, SL \cite{DBLP:conf/csl/OHearnRY01, DBLP:conf/lics/Reynolds02} and ISL \cite{DBLP:conf/cav/RaadBDDOV20} consider two different memory models: the ISL memory model extends SL by allowing for tracking of deallocated locations and their subsequent reallocation. To distinguish the two models, throughout the paper we will use $\circled{1}$ to refer to the SL model and $\circled{2}$ to refer to the ISL model, using $\circled{i}$ whenever we do not care whether it is the model $\circled{1}$ or $\circled{2}$. In particular, we define the set of stores $\slstore[p]$ as the set of total functions from program variables $\progvar$ to values $\valdomain$. The set of heaps $\slheap$ in the memory model $\circled{1}$ is the set of partial functions from locations $\locdomain\subseteq \valdomain$ to values $\valdomain$. Finally, the set of memories $\slmem$ is the product domain between $\slstore[p]$ and $\slheap$. The model $\circled{2}$ extends the heap domain $\slheap$ with a new special value $\bot$, which tracks for deallocated locations, i.e., given an heap $h$ in $\circled{2}$ if $h(l)=\bot$, then the location $l$ was previously allocated and then deallocated. Formally,
\begin{align*}
    \circled{1}: \qquad &\slstore[p] \triangleq \progvar \rightarrow \valdomain &&
    \slheap\triangleq \locdomain\rightharpoonup \valdomain &
    \slmem\triangleq \slstore[p] \times \slheap \ ;\\
    \circled{2}: \qquad & \islstore[p] \triangleq \progvar \rightarrow \valdomain &&
    \islheap\triangleq \locdomain\rightharpoonup (\valdomain\cup \{\bot\}) &
    \islmem\triangleq \islstore[p]\times \islheap \ .
\end{align*}

We write $h_p \# h_q$ to indicate that $h_p$ and $h_q$ are heaps with disjoint domains (i.e. $dom(h_p) \cap dom(h_q) = \emptyset$), while $h_p \bullet h_q$ indicates the join of $h_p$ and $h_q$ and it is defined if and only if $h_p \# h_q$. 
\revision{With a slight abuse of notation, we lift the operator $\bullet$ to compose sets of memories, defined as:
\[
P\sast Q \triangleq \{(s,h_p\bullet h_q)\;|\;(s,h_p)\in P \wedge (s,h_q)\in Q \wedge h_p\#h_q\}
\]
Moreover, we find it convenient to define existential quantification over sets as follows:
\[
    \exists \mathsf{X}.P \triangleq \{(s',h)\;|\;(s,h)\in P \wedge s'_{|_{\mathbb{X}\setminus \mathsf{X}}}=s_{|_{\mathbb{X}\setminus \mathsf{X}}}\}.
\]
where the function $f_{|_X}:X\rightarrow B$ is the restriction of $f:A\rightarrow B$ to the domain $X\subseteq A$.
}

\subsection{Assertion language}\label{sec:al}
We rely on the assertion language of \citet{DBLP:conf/cav/RaadBDDOV20}, which extends the classical definition \cite{DBLP:conf/csl/OHearnRY01, DBLP:conf/lics/Reynolds02} by introducing the negative heap assertion $\mathsf{x\not\mapsto}$. Since the assertion $\mathsf{x\not\mapsto}$ is used to track deallocated locations, it will only be used in the memory model \circled{2}. We let $\mathsf{Ast}$ be defined as follows, where $\asymp \in \{=, \leq, ...\}$ represents standard comparison operators, $\mathsf{X}$ is a set of variables and $\mathsf{e,e_1,e_2}$ are arithmetic expressions:
\begin{equation} \label{eq:ast}
    \mathsf{Ast \ni P,Q ::=\; \textcolor{gray}{\false \;|\; \true \;|\; e_1 \asymp e_2 \;|\; P\wedge Q \;|\; P\vee Q \;|\; \exists X. P \;|} \;\emp \;|\; x \mapsto e \;|\; x\not\mapsto \;|\; P \ast Q}
\end{equation}

The standard operators from first-order logic, here in light gray, define the so-called \emph{pure} assertions.
The remaining operators define the \emph{structural} assertions. The assertion $\mathsf{\emp}$ denotes the empty heap, the ownership predicate $\mathsf{x\mapsto e}$ denotes memories in which the heap contains a single location pointed by $\mathsf{x}$ and whose content is $\mathsf{e}$, while $\mathsf{x\not\mapsto}$ stands for memories in which the heap has a single location pointed by $\mathsf{x}$, but that has been deallocated. The separating conjunction $\mathsf{P \ast Q}$ represents memories where the heap (not the store) can be split into two disjoint sub-heaps, one satisfying $\mathsf{P}$ and the other satisfying $\mathsf{Q}$. 
We let $\mathsf{fv(P)}$ be the set of free variables in $\mathsf{P}$.
The intuitive meaning of structural assertions is summarized in Table \ref{tab:ast}. 
To simplify the exposition, and with a slight abuse of notation, we introduce a single semantics for the assertion language. The reference model will be evident from the context; for example, the semantics of $\true$ is either  $\slmem$ or $\islmem$ depending on the memory model we are dealing with, and $\mathsf{x\not\mapsto}$ will be used only when the memory model $\circled{2}$ is considered. Hence, the set of all and only memories satisfying the assertion $\mathsf{P\in Ast}$ is denoted by $\llbrace \mathsf{P}\rrbrace$ and defined as follows:
\[
\begin{gathered}
\begin{aligned}
\llbrace \mathsf{\false}\rrbrace &\triangleq \emptyset \qquad
\llbrace \mathsf{\true}\rrbrace \triangleq \mem \qquad
\llbrace \mathsf{e_1 \asymp e_2}\rrbrace \triangleq \{(s,h)\mid \llparenthesis \mathsf{e_1}\rrparenthesis s \asymp \llparenthesis \mathsf{e_2}\rrparenthesis s\} \qquad
\llbrace \mathsf{\exists X.P}\rrbrace \triangleq \revision{\exists X.\,\llbrace \mathsf{P}\rrbrace}
\end{aligned}
\\
\begin{aligned}
\llbrace \mathsf{x \mapsto e}\rrbrace &\triangleq \{(s,[s(\mathsf{x}) \mapsto \llparenthesis \mathsf{e}\rrparenthesis s])\} &
\llbrace \mathsf{x \not\mapsto}\rrbrace &\triangleq \{(s,[s(\mathsf{x})\mapsto \bot])\} &
\llbrace \mathsf{\emp}\rrbrace &\triangleq \{(s,[])\} \\
\llbrace \mathsf{P \wedge Q}\rrbrace &\triangleq \llbrace \mathsf{P}\rrbrace \cap \llbrace \mathsf{Q}\rrbrace &
\llbrace \mathsf{P \vee Q}\rrbrace &\triangleq \llbrace \mathsf{P}\rrbrace \cup \llbrace \mathsf{Q}\rrbrace &
\llbrace \mathsf{P \ast Q}\rrbrace &\triangleq \revision{\llbrace \mathsf{P}\rrbrace \sast \llbrace \mathsf{Q}\rrbrace}
\end{aligned}
\end{gathered}
\]

\noindent 
where the semantics for arithmetic and Boolean expressions $\llparenthesis \cdot \rrparenthesis$ are defined as usual. 
We note that a generic assertion in $\mathsf{Ast}$ can be rewritten as a (possibly infinite) disjunction of pure assertions and structural assertions.\footnote{This fact will be exploited in completeness proofs and in Definition \ref{def:logic_heap_comp}.} 
We use the usual shortcuts $\mathsf{x\mapsto \_}$ for the ownership predicate when we disregard the content of the location and $\mathsf{x\doteq y}$ for expressing equality constraints over a memory with empty heap:
$\mathsf{x\mapsto \_} \triangleq \mathsf{\exists y. x\mapsto y}$, and $\mathsf{x\doteq y} \triangleq \mathsf{x=y\wedge \emp}$.

As a matter of notation, we use the font $\mathsf{P}, \mathsf{Q},\mathsf{R}, ... \in \mathsf{Ast}$ when dealing with logical assertions, and $P,Q,R, ...\in \wp(\mem)$ when dealing with concrete sets of memories. 

\section{Logical Variables and Universal Frames} \label{sec:new_semantic_model}
The scalability of methods based on separation logics builds on the notion of local axioms and framing. 
Local axioms should account for the least amount of information necessary to execute each atomic command.
Then, a suitable frame rule allows for extending both the pre- and postconditions by adding idle portions of the heap, whenever needed. Here, the frame rule must be designed with particular care in order to preserve the validity of the triples, according to the sense and direction of the analysis.
The exact characterization of the valid frames applicable to a given triple is a long-standing problem \cite{DBLP:conf/fossacs/YangO02, DBLP:conf/fossacs/RazaG08, DBLP:conf/lics/CalcagnoOY07}. As discussed in Section \ref{sec:introduction}, the most common side condition for the frame rule is $\mathsf{mod(r)\cap fv(R) = \emptyset}$, where $\mathsf{mod(r)}$ is the set of all variables that appear on the left hand side of some assignment in $\mathsf{r}$. 
While it prevents the frame from constraining any potentially modified program variable, this can be unnecessarily restrictive in some cases (e.g., assignments in dead code or in dropped disjuncts).

\paragraph{Logical variables}
We address and solve the above problems by introducing a `universal' characterization of frames based on the concept of \emph{logical} variables. To distinguish them from program variables, they are annotated with a prime symbol ($\mathsf{'}$).
An early example about the use of logical variables is the well-known Floyd's axiom for assignment \cite{Floyd1967Flowcharts}:
\begin{equation}
    \infer{\hl{\mathsf{P}}{\mathsf{x:=e}}{\mathsf{\exists x'. P[x'/x] \wedge x=e[x'/x]}}}{}
    \label{eq:floyd}
\end{equation}
where $\mathsf{x}'$ represents the value of the program variable $\mathsf{x}$ before the assignment. 
Over the years, various forms of auxiliary variables have been employed for similar purposes, albeit under different names, such as in \cite{Floyd1967Flowcharts,DBLP:journals/fac/AptO19, DBLP:conf/fm/OheimbN02, DBLP:books/daglib/0067018, DBLP:journals/fac/Kleymann99, DBLP:journals/pacmpl/DardinierM24}. 
In particular, they have been often used to freeze the initial values of certain variables in the precondition and link them to their updated values in the postcondition, upon termination. 
Since logical variables cannot be altered, they are inherently immutable. 
As an example, the meaning of the triple
\[\hl{\mathsf{x=x'\wedge y=y'}}{\mathsf{r}}{\mathsf{x=x' \wedge y=y' \wedge z=x'+y'}}\]
\noindent
is twofold: it says not only that whenever the command $\mathsf{r}$ terminates, the variable $\mathsf{z}$ stores the sum of $\mathsf{x}$ and $\mathsf{y}$, but also that the variables $\mathsf{x}$ and $\mathsf{y}$ remain unchanged \cite[\S10.2]{DBLP:journals/fac/AptO19}.
Thereby, \emph{if a program variable is associated with the same logical variable in the pre- and postcondition, then the program has not changed the value of that variable}.

Formally, we extend the memory models introduced in Section \ref{sec:background} by introducing a separate set $\logicvar$ of logical variables that cannot appear in any command and require the free variables of the frame to be logical ones. This way the side condition $\mathsf{mod(r)\cap fv(R) = \emptyset}$ is satisfied by definition. 

\paragraph{Extended memory models}
In order to guarantee the availability of a fresh symbol to denote a certain value, in the logical store, we require for each value, the existence of a denumerable family of logical variables with that value. This can be formalized by defining the logical store as follows:
\[
\slstore[l] \triangleq \logicvar \rightarrow \valdomain 
\qquad\text{where} \qquad
\forall s\in \slstore[l].\forall v\in \valdomain. \exists \{\mathsf{x_i'}\}_{\mathsf{i}\in\mathbb{N}}\subseteq \logicvar. s(\mathsf{x_i'})=v 
\]

The set of stores will be $\mathbb{S} \triangleq \slstore[p] \times \slstore[l]$, both for memory models \circled{1} and \circled{2}. As a matter of notation, we denote a generic pair $\langle s_\mathsf{p}, s_\mathsf{l}\rangle$ with $s$ and write just $s(x)$ and $s(x')$ in place of $s_\mathsf{p}(x)$ and $s_\mathsf{l}(x')$, respectively. 
The introduction of logical variables is just a convenient mathematical construction to support program analysis, as they should not be part of the machine model for executing the code. 
Of course, any store $s_\mathsf{p}:\mathbb{X}_\mathsf{p}\rightarrow \mathbb{V}$ can be seen as the representative of the set of stores $\{s_\mathsf{p}\} \times \slstore[l]$.

\paragraph{Universal frames}
In the following, we define a universal frame as any set of memories $\slstore[p] \times R \times \heap$, with $R\subseteq \slstore[l]$, i.e., it can only constrain logical variables and not program ones. 

\begin{definition}[Universal frame]\label{def:universalframe}
    A set of memories $R\subseteq \mem$ (with $\circled{i}\in \{\circled{1}, \circled{2}\}$) such that:
    \[(s,h)\in R \Rightarrow \forall \mathsf{x}\in \progvar. \forall v\in \mathbb{V}. (s[\mathsf{x}\mapsto v], h)\in R\]
    is called a \emph{universal frame}.
    We name $\mathbb{F}$ the set of all universal frames.
\end{definition}

In the assertion language, a universal frame is any assertion $\mathsf{R}$ such that $\mathsf{fv(R)}\subseteq \mathbb{X}_\mathsf{l}$. Since $\mathsf{mod(r)}\subseteq \mathbb{X}_\mathsf{p}$, the condition $\mathsf{mod(r)\cap fv(R) = \emptyset}$ is trivially satisfied by universal frames.

\section{\revision{Systematic Design Methodology}} \label{sec:systematic_program_logics}

In this section, we show how to pair the locality principle with the seminal work of \citet{DBLP:journals/pacmpl/Cousot24} on the calculational design of program logics through the means of abstract interpretation \cite{DBLP:conf/popl/CousotC77, cousot2021principles}. While Cousot's framework elegantly recovers a broad spectrum of program logics \cite[Figure 3]{DBLP:journals/pacmpl/Cousot24}, it does not address the direct calculation of local axioms or the frame rule from the semantics. 
Here we exploit semantic closure properties to define a preorder on valid triples and a notion of minimality that can be exploited to derive local axioms that, through finite applications of closures, can generate all valid triples.
Notably, by appropriately choosing closure operators, the same approach can be applied to a variety of program logics. 
Axioms devised with our methodology can be paired with other approaches that account for the inference rules required to handle all the other program composition operators, e.g, \citet{DBLP:journals/pacmpl/Cousot24} framework.

In the first part of this section, we present the general methodology that entirely depends on the semantic function and the identification of available closures. 
In the second part, we showcase the systematic derivation of semantic proof systems for separation logics. \revision{Both developments have been mechanized in Isabelle/HOL, providing a proof of concept for forward analysis.}

\subsection{General methodology}

A closure is a monotone function that is extensive and idempotent. 
We focus on closure properties of program logics, such that, given a valid predicate $p$, any predicate obtained by applying the closure operators to $p$ is also valid. 
Typically, these closures correspond to inference rules of the logic. 
As an example, for the theory of HL triples \cite{DBLP:journals/cacm/Hoare69}, one possible closure operator is:
\[\closure{cons}{}\;T \eqbydef \{(P,\mathsf{c},Q')\;|\;(P,\mathsf{c},Q)\in T \wedge Q\subseteq Q'\}\]

This means that if $Q$ over-approximates the semantics of the command $\mathsf{c}$ on input $P$, we can replace $Q$ with any larger $Q'$ while preserving the validity of the triple. 
This closure operator is effectively encapsulated within the proof system of HL through the consequence rule:
\begin{equation}
    \vcenter{\hbox{
    \infer
        {\hl{\mathsf{P}}{\mathsf{c}}{\mathsf{Q}'}} 
        {\hl{\mathsf{P}}{\mathsf{c}}{\mathsf{Q}} & \mathsf{Q} \Rightarrow \mathsf{Q}'}
        \label{eq:samplecons}
    }}
\end{equation}

Each closure $\rho$ induces a preorder, in the sense that $p$ precedes or is equal to $p'$ if $p'\in \rho(\{p\})$. Therefore, a minimal set of (valid) predicates $T$ such that $\rho(T)$ contains all and only valid predicates is a suitable candidate to form the axioms of the logic.
Although our discussion centers on closure operators for program triples related to atomic commands, the same reasoning could naturally be extended to the entire language $\mathsf{Reg}$. 
However, exploring this broader extension lies beyond the scope of our current focus.
The general methodology consists of the following steps:
\begin{enumerate}
    \item 
    Fix the semantic property of interest (e.g., that the postcondition is an over-approximation of the exact semantics) and let $\mathsf{Triples}_{\llbracket \mathsf{c} \rrbracket}$ denote the set of valid semantic triples for $\mathsf{c}$.
    \item 
    Select a finite set of (property preserving) closure operators $\rho_1,...,\rho_n$ (e.g., the consequence rule~\eqref{eq:samplecons} or a frame rule), such that $\rho_i(\mathsf{Triples}_{\llbracket \mathsf{c} \rrbracket}) = \mathsf{Triples}_{\llbracket \mathsf{c} \rrbracket}$ for any closure $\rho_i$.
    \item 
    Let $\rho \triangleq (\bigcup_i \rho_i)^\ast$ represent any finite sequence of applications of $\rho_1,...,\rho_n$. 
    It can be shown that $\rho$ is still a closure operator and that $\rho(\mathsf{Triples}_{\llbracket \mathsf{c} \rrbracket}) = \mathsf{Triples}_{\llbracket \mathsf{c} \rrbracket}$ holds. Moreover, under suitable circumstances, extensively  discussed in \cite{DBLP:books/hal/Cousot78, Ore1943}, the closure $\rho$ can be further simplified to a single sequence of applications $\rho_{\pi(1)}\circ\cdots \circ\rho_{\pi(n)}$ of the original closures, for some permutation $\pi$ of the indexes $1,...,n$. We achieve this result by proving that certain pairwise swaps $\rho_{\pi(j)}\circ \rho_{\pi(i)} \subseteq \rho_{\pi(i)}\circ \rho_{\pi(j)}$ of  closures are possible when $i\leq j$.
    \item Select a set
    $\mathsf{Axioms}_{\llbracket \mathsf{c} \rrbracket} \subseteq \mathsf{Triples}_{\llbracket \mathsf{c} \rrbracket}$
such that
$\rho(\mathsf{Axioms}_{\llbracket \mathsf{c} \rrbracket}) = \mathsf{Triples}_{\llbracket \mathsf{c} \rrbracket}$,
\revision{and which is minimal in the sense that no triple in
$\mathsf{Axioms}_{\llbracket \mathsf{c} \rrbracket}$ is derivable from the remaining ones via the closure operators}. We exploit the characterization $\rho = \rho_{\pi(1)} \circ \cdots \circ \rho_{\pi(n)}$ to prove that any valid semantic triple can be obtained from the axioms by applying the closures in a fixed order.
    \revision{The axioms in the set $\mathsf{Axioms}_{\llbracket \mathsf{c} \rrbracket} \subseteq \mathsf{Triples}_{\llbracket \mathsf{c} \rrbracket}$ are exactly the axioms of our semantic proof system, and each closure $\rho_1,...,\rho_n$ yields a corresponding inference rule.}
    \item \revision{The semantic rules are then transformed to a logical proof system by selecting a suitable assertion language for expressing pre- and postconditions. In particular, for each command $\mathsf{c}$, a minimal set of logical axioms is chosen that is able to cover the triples in $\mathsf{Axioms}_{\llbracket \mathsf{c} \rrbracket}$.}
\end{enumerate}

\revision{It is immediate to verify that the above methodology leads us to derive sound and (relative) complete proof systems for the analysis of interest.}

\subsection{\revision{Systematic design of separation logics}}\label{sec:systematic_separation}
\revision{In this section, we apply our methodology to the synthesis of separation logics. We let $\mathsf{c}$ be an atomic command and denote by $\llbracket \mathsf{c} \rrbracket$ its forward semantics, which may depend, e.g., on the considered memory model.
For either directions of the analysis and senses of approximation, the set of semantics triples is defined in Table~\ref{tab:triples}, where we write $\llbracket \overleftarrow{\mathsf{c}}\rrbracket$ to refer to the backward semantics of the command $\mathsf{c}$, defined as the opposite relation of the forward semantics, i.e., $\llbracket \overleftarrow{\mathsf{r}}\rrbracket \sigma'\triangleq \{\sigma\;|\;\sigma'\in \llbracket \mathsf{r}\rrbracket\}$.}
Since multiple notions of validity are involved, superscripts indicate both the analysis direction and the sense of approximation. Arrows specify direction—right for forward, left for backward semantics—while their position relative to the dot denotes approximation: above for over-, below for under-approximation.
For example, $\rightarrowoverdot$ represents an over-approximation in forward semantics.
\revision{Step (1) consists of fixing the set of triples of interest among those in Table~\ref{tab:triples}.}

\begin{table}[t]
    \centering
    \renewcommand{\arraystretch}{1.3}
    \begin{tabular}{|c|c|c|}
         \hline
         & Forward & Backward \\
         \hline  \rule{0pt}{1em}
         Over & $\fwdovertripleset\triangleq\{\fwdovertriple{P}{\mathsf{c}}{Q} \;|\; \llbracket \mathsf{c}\rrbracket P\subseteq Q\}$ & $\bwdovertripleset\triangleq\{\bwdovertriple{P}{\mathsf{c}}{Q} \;|\; \llbracket \overleftarrow{\mathsf{c}}\rrbracket Q\subseteq P\}$\\[10pt]
         Under & $\fwdundertripleset\triangleq\{\fwdundertriple{P}{\mathsf{c}}{Q} \;|\; Q \subseteq \llbracket \mathsf{c}\rrbracket P\}$ & $\bwdundertripleset\triangleq\{\bwdundertriple{P}{\mathsf{c}}{Q} \;|\; P \subseteq \llbracket \overleftarrow{\mathsf{c}}\rrbracket Q\}$ \\[3pt]
         \hline
    \end{tabular}
    \caption{Set of valid semantics triple for each direction of the analysis and sense of approximation.}\label{tab:triples}
\end{table}

\revision{According to the choice in Step (1), in Step (2) we select three closure operators: \emph{exists}, \emph{frame}, and \emph{consequence}.\footnote{
\revision{The names of the closure operators comes from the inference rules Exists, Frame and Consequence of existing separation logics \cite{DBLP:conf/cav/RaadBDDOV20, DBLP:journals/pacmpl/RaadBDO22, DBLP:journals/pacmpl/RaadVO24, DBLP:conf/lics/Reynolds02, DBLP:journals/pacmpl/OHearn20, DBLP:journals/pacmpl/AscariBGL25}, from which they have been inspired.}}
The \emph{exists} closure establishes that proving correctness for some concrete values of variables automatically lifts to the family of states where those variables are existentially quantified.}

\begin{definition}[Exists Closure]\label{def:exists_cl}
    $\closure{exists}{\leftrightleftrightarrows} T \triangleq \{\triple{\exists \mathsf{X}.P}{\mathsf{c}}{\exists \mathsf{X}.Q} \;|\; \triple{P}{\mathsf{c}}{Q}\in T, \mathsf{X}\subseteq \logicvar\}$
\end{definition}

\revision{For closure operators, superscripts can be combined to cover all analyses supported by the closure; e.g., the superscript $\leftrightleftrightarrows$ indicates that the closure operator $\rho_{\mathsf{exists}}$ is valid for over- and under\hyp{}approximation of the forward and backward semantics.
Table~\ref{tab:arrows} in Appendix provides a glossary of all superscripts used, for the reviewer's convenience.}
The correctness of the closure $\closure{exists}{\leftrightleftrightarrows}$ immediately follows from the fact that logical variables cannot appear in the program. 

The second closure embodies the essence of the locality principle.
It states that any valid triple that refers to a portion of the memory can be automatically lifted to every other memory extended with a compatible frame. In other words, it captures the role of the \emph{frame} rule in separation logics. 
However, we need to find an appropriate notion of compatibility. 
The classical condition ensuring correctness was $\mathsf{mod(r)\cap fv(R)=\emptyset}$. 
As discussed in Section \ref{sec:introduction}, we argue that this condition works properly for over-approximation but not for under-approximation. 
This is because when dealing with over-approximation, the soundness requirement for framing a triple $\mathsf{\hl{P}{r}{Q}}$ with $\mathsf{R}$ is expressed by the condition $\mathsf{\llbracket r\rrbracket (R\ast P)\subseteq (R\ast Q)}$, which is not guaranteed to be sound also for under-approximations. 
Here, instead, we proceed with a different approach by defining a new notion of \emph{heap compatibility}, where the framing is allowed only if the semantics is preserved. 
The new notion of heap compatibility is driven by the need for a closure that serves only two purposes: 
\begin{enumerate}
    \item to restrict the set of admissible stores in both the pre- and postconditions,
    \item to extend the heap in both pre- and postconditions whenever no conflicts arise.
\end{enumerate}

\revision{For example, let us frame with $\mathsf{x\mapsto\_}$ the triple derived from the axiom of deallocation:}
\begin{equation} \label{eq:wrong_frame}
    \vcenter{\hbox{
        \infer
            {\mathsf{\hl{x\mapsto \_ \ast x\mapsto \_}{free(x)}{\mathsf{x\mapsto\_}}}}
            {\mathsf{\hl{x\mapsto \_}{free(x)}{\mathsf{\emp}}}}
    }}
\end{equation}

Even though the conclusion is a valid over-approximation triple, the exact semantics of $\mathsf{free(x)}$ is not preserved by framing, because, by definition of $\ast$, the pre is $\mathsf{\false}$ but the post is not.
Moreover, extending a triple with an already specified region yields no additional semantic meaning.

In our setting, we want to preclude this kind of framing, because it falls into neither case (1) nor case (2). 
Since approximation is the scope of our third closure operator, we require that the \emph{frame} closure operator just preserve approximation without introducing it.
\revision{To this aim, we say that two sets of memories $M_1$ and $M_2$ are heap compatible} if for each state $(s,h)\in M_1$ and $(s',h')\in M_2$, if they match on the store, then the domains of $h$ and $h'$ are disjoint, i.e. $h\#h'$.

\begin{definition}[Heap Compatibility]\label{def:heap_compatibility}
    Given $M_1, M_2 \revision{\;\subseteq\;} \mathbb{M}^{\circled{i}}$, we define heap compatibility between $M_1$ and $M_2$ as follows:
    $M_1 \heapcomp M_2 \triangleq \forall (s,h)\in M_1, \forall (s',h')\in M_2, s=s' \Rightarrow h\#h'$
\end{definition}

\revision{We note that, although the predicate $\heapcomp$ may resemble the operator $\bullet$, they are quite distinct. The latter requires the stores to be identical and the heaps to be disjoint, whereas the former requires heap disjointness only when the stores coincide. This stronger property is critically important, as it is instrumental in the following results, which guarantee semantic preservation when a local proof is lifted to a heap-compatible memory.} 
Specifically, they establish the preservation of forward and backward semantics.
We recall that $\mathbb{F}$ is the set of universal frames (see Definition~\ref{def:universalframe}).

\begin{theorem}[Forward Preservation] \label{th:forward_preservation}
    Given $P,\mathsf{r}, Q$ such that $\llbracket \mathsf{r} \rrbracket P = Q$ and $R\in \mathbb{F}$ such that $P\heapcomp R$, then $(\llbracket \mathsf{r} \rrbracket P) \sast R = \llbracket \mathsf{r} \rrbracket (P\sast R) = Q\sast R$.
\end{theorem}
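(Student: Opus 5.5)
The first equality carries all the content, since the second, $\llbracket \mathsf{r}\rrbracket P \sast R = Q\sast R$, is just the hypothesis $\llbracket \mathsf{r}\rrbracket P = Q$. I would reduce the first equality to a \emph{statewise} frame lemma and then lift it to sets using the two hypotheses $R\in\mathbb{F}$ and $P\heapcomp R$. Collecting semantics is additive, $\llbracket \mathsf{r}\rrbracket P=\bigcup_{\sigma\in P}\llbracket \mathsf{r}\rrbracket\{\sigma\}$, so it is enough to know how a single state behaves once its heap is extended by a disjoint chunk.

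\emph{Statewise lemma.} For every $\mathsf{r}\in\mathsf{Reg}$, store $s$ and heaps $h,h_R$ with $h\# h_R$:
\[
\llbracket \mathsf{r}\rrbracket\{(s,h\bullet h_R)\}=\{(s',h'\bullet h_R)\mid (s',h')\in\llbracket \mathsf{r}\rrbracket\{(s,h)\},\ h'\# h_R\}.
\]
I would also record that every $(s',h')\in\llbracket\mathsf{r}\rrbracket\{(s,h)\}$ satisfies $s'_{|\logicvar}=s_{|\logicvar}$, because logical variables never occur in commands. The lemma is proved by structural induction on $\mathsf{r}$.
\begin{itemize}
\item For $\mathsf{r}_1;\mathsf{r}_2$, apply the hypothesis to $\mathsf{r}_1$ and then, for each intermediate state, to $\mathsf{r}_2$. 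The disjointness side condition $h'\#h_R$ is exactly what lets the hypothesis be reapplied at the intermediate state.
\item Choice is a union of the two branches.
\item $\mathsf{r}^\ast$ is a union of finite iterates, handled by an inner induction on the number of iterations.
\end{itemize}

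The base case, the atomic commands, is where I expect the real work. I would do a case analysis on each primitive of~\eqref{eq:atom} in both memory models. Commands that only read or write $s(\mathsf{x})$ or an allocated cell of $h$ behave identically on $h\bullet h_R$, since the accessed cell lies in $\mathrm{dom}(h)$. Commands that fault (in model~\circled{1} on a missing cell, in model~\circled{2} also on a $\bot$ cell) fault in both settings, because a cell missing from the larger heap $h\bullet h_R$ is missing from $h$, and a $\bot$ cell is inspected in $h$.

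The delicate case is $\mathsf{alloc}$, especially with reuse in model~\circled{2}. On the small heap $h$, allocation may pick a location in $\mathrm{dom}(h_R)$, or reuse a deallocated cell that is not visible in $h$. These extra outcomes are exactly those with $h'\not\# h_R$, so the side condition discards them. Conversely, each outcome on $h\bullet h_R$ picks a location outside $\mathrm{dom}(h\bullet h_R)$, which is also fresh for $h$, so it arises from an outcome on $h$ with $h'\#h_R$. I would need to check that the paper's allocation semantics is a nondeterministic choice among all locations that are fresh, or reusable, with respect to the current heap. If instead it reuses only $\bot$ cells, the argument is unchanged, because such cells in $h_R$ are not visible to the small run.

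\emph{Lifting to sets.}
\begin{itemize}
\item ($\subseteq$) Take an element of $\llbracket\mathsf{r}\rrbracket(P\sast R)$. It arises from $(s,h_p\bullet h_r)$ with $(s,h_p)\in P$ and $(s,h_r)\in R$, and by the lemma it has the form $(s',h'\bullet h_r)$ with $(s',h')\in\llbracket\mathsf{r}\rrbracket\{(s,h_p)\}$. Since $s'$ differs from $s$ only on program variables, Definition~\ref{def:universalframe} gives $(s',h_r)\in R$, so the element lies in $\llbracket\mathsf{r}\rrbracket P\sast R$.
\item ($\supseteq$) Take $(s',h'\bullet h_r)$ with $(s',h')\in\llbracket\mathsf{r}\rrbracket\{(s,h_p)\}$, $(s,h_p)\in P$, $(s',h_r)\in R$ and $h'\#h_r$. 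Universality again gives $(s,h_r)\in R$, because $s$ and $s'$ agree on logical variables. Then $P\heapcomp R$ yields $h_p\#h_r$, so $(s,h_p\bullet h_r)\in P\sast R$, and the lemma shows the element lies in $\llbracket\mathsf{r}\rrbracket(P\sast R)$.
\end{itemize}
This is precisely where heap compatibility is indispensable: without it the initial joined state could be undefined, which is the failure illustrated in~\eqref{eq:wrong_frame}.
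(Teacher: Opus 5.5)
Your proof has the right shape. Your statewise lemma's two inclusions are exactly the two facts the paper's proof invokes without justification: that a run from $(s',h_1\bullet h_2)$ comes from a run from $(s',h_1)$ with $h_2$ appended, and that a run from $(s',h')$ ``leaves $h_2$ untouched''. But the lemma is false as stated, at three points your case analysis passes over.

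\emph{(i) Faults.} You show that a fault on $h\bullet h_R$ forces a fault on $h$. Equality also needs the converse, which fails when the accessed cell is missing from $h$ but lies in $\mathrm{dom}(h_R)$. For $\mathsf{free(x)}$ with $h=[\,]$ and $h_R=[s(\mathsf{x})\mapsto v]$, the left side is $\{(s,[\,])\}$ and the right side is empty.

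\emph{(ii) Reuse in model~\circled{2}.} It is not true that every outcome on $h\bullet h_R$ picks a location outside $\mathrm{dom}(h\bullet h_R)$. Allocation may reuse some $l$ with $h_R(l)=\bot$, yielding $h\bullet(h_R[l\mapsto v])$, which is not of the form $h'\bullet h_R$ at all. That such cells are invisible to the small run is the problem, not a reassurance.

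\emph{(iii) Sequencing in model~\circled{1}.} Your side condition constrains only the final heap. So in the case $\mathsf{r_1;r_2}$ (and likewise $\mathsf{r}^\ast$), nothing lets you reapply the hypothesis at an intermediate heap that overlaps $h_R$. For $\mathsf{x:=alloc();free(x)}$ with $h=[\,]$ and $h_R=[l_0\mapsto v_0]$, the right side contains $(s[\mathsf{x}\mapsto l_0],[l_0\mapsto v_0])$, yet the big run must allocate outside $\{l_0\}$.

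Points (ii) and (iii) cannot be fixed by a better proof: they lift to fault-free counterexamples to the theorem itself. Take $P=\llbrace\emp\rrbrace$, so $P\heapcomp R$ holds for every $R$. With $R=\llbrace\mathsf{l'\not\mapsto}\rrbrace$ and $\mathsf{x:=alloc()}$ in model~\circled{2}, $\llbracket\mathsf{r}\rrbracket(P\sast R)$ contains $(s[\mathsf{x}\mapsto s(\mathsf{l'})],[s(\mathsf{l'})\mapsto v])$. This state has no $\bot$ cell, so it is not in $(\llbracket\mathsf{r}\rrbracket P)\sast R$. With $R=\llbrace\mathsf{l'\mapsto v'}\rrbrace$ and $\mathsf{x:=alloc();free(x)}$ in model~\circled{1}, $Q\sast R$ contains states with $s(\mathsf{x})=s(\mathsf{l'})$ that are unreachable from $P\sast R$.

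Point (i) is harmless in the SL semantics only if you read $Q\subseteq\mem$ as forcing $\abort\notin\llbracket\mathsf{r}\rrbracket P$. That is a safety hypothesis your proof never invokes. In the error-tagged semantics of ISL+, $\mathsf{free(x)}$ with the same $P$ and $R=\llbrace\mathsf{l'\mapsto v'}\rrbrace$ refutes the statement outright: $\llbracket\mathsf{r}\rrbracket P$ consists only of error states, while from $P\sast R$ the states with $s(\mathsf{x})=s(\mathsf{l'})$ terminate normally. This is the situation the paper patches with $\reserved$. The paper's own proof breaks at these same steps.

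What your strategy does establish, with the lemma restricted accordingly, is a corrected statement under three extra hypotheses:
\begin{enumerate}
\item No state of $P$ may fault because a cell is absent from its own heap.
\item In model~\circled{2}, $R$ has no $\bot$ cells. This is needed only for $\llbracket\mathsf{r}\rrbracket(P\sast R)\subseteq Q\sast R$.
\item In model~\circled{1}, either $\mathsf{r}$ is atomic (the frame closure is only defined on atomic commands), or every intermediate heap is required to be disjoint from the frame. Model~\circled{2} gets this for free, since heap domains only grow.
\end{enumerate}
Under these hypotheses, your statewise induction plus lifting is a sound and more explicit version of the paper's two-inclusion argument.
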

\begin{theorem}[Backward Preservation] \label{th:backward_preservation}
    Given $P,\mathsf{r}, Q$ such that $\llbracket \overleftarrow{\mathsf{r}} \rrbracket Q = P$ and $R\in \mathbb{F}$ such that $Q\heapcomp R$, then $(\llbracket \overleftarrow{\mathsf{r}} \rrbracket Q) \sast R = \llbracket \overleftarrow{\mathsf{r}} \rrbracket (Q\sast R) = P\sast R$.
\end{theorem}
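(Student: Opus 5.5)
The plan is to derive Backward Preservation from Forward Preservation (Theorem~\ref{th:forward_preservation}) by exploiting that backward semantics is the converse relation of forward semantics. If that duality cannot be used directly, the fallback is to redo the forward argument with the roles of pre- and postconditions swapped.

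The heart of the forward proof is a statewise \emph{locality} fact about the collecting semantics of $\mathsf{r}$. Take a memory $(s,h)$ and a heap $h_r$ with $h\#h_r$, such that every $(s',h')$ reachable from $(s,h)$ satisfies $h'\#h_r$. Then the outcomes from $(s,h\bullet h_r)$ are exactly $\{(s',h'\bullet h_r)\mid (s',h')\in\llbracket \mathsf{r}\rrbracket (s,h)\}$. This should be proved first for each atomic command by case analysis on the operational rules, in both memory models. It then lifts to $;$, $+$ and $^\ast$ by structural induction, using that the join on sets of memories distributes over unions.

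For backward preservation I would state the converse form of this locality property. Take $(s',h')\in Q$ and $h_r$ with $h'\#h_r$. Then
\[\llbracket \overleftarrow{\mathsf{r}}\rrbracket\{(s',h'\bullet h_r)\} = \{(s,h\bullet h_r)\mid (s,h)\in \llbracket \overleftarrow{\mathsf{r}}\rrbracket\{(s',h')\},\ h\#h_r\}.\]
The forward locality fact gives the inclusion $\supseteq$ directly.

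For the inclusion $\subseteq$, take a predecessor $\sigma$ of $(s',h'\bullet h_r)$. I need to split $\sigma$'s heap as $h\bullet h_r$ with $(s,h)$ a predecessor of $(s',h')$. This is the \emph{frame-property direction}: any execution that ends in a heap containing the idle cells $h_r$ must have had them at the start, untouched. It holds for the memory operations considered, but in model $\circled{1}$ allocation could in principle return a cell of $h_r$. Two ingredients block this. First, $R$ is a universal frame, so it does not depend on program variables. Second, heap compatibility $Q\heapcomp R$ holds for \emph{every} state in $Q$ sharing the store. Together these should rule out the case where $h_r$'s cells were freshly allocated: such a cell would appear in some state of $Q$ with the same store, contradicting $Q\heapcomp R$.

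With the statewise lemma in hand, the set-level equalities follow. The right-hand side is $P\sast R=\{(s,h\bullet h_r)\mid (s,h)\in P,(s,h_r)\in R, h\#h_r\}$, where $P=\llbracket \overleftarrow{\mathsf{r}}\rrbracket Q$. The remaining gap is the store: $R$ is read at the initial store $s$ on one side and at the final store $s'$ on the other. Since $R\in\mathbb{F}$ is closed under arbitrary changes of program variables, and commands never touch logical variables, $(s,h_r)\in R\iff(s',h_r)\in R$. Disjointness $h\#h_r$ on the precondition side is obtained from the lemma's splitting.

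I expect the main obstacle to be this splitting step: showing that every backward predecessor of a framed state decomposes through the frame. That is where the exact hypotheses $Q\heapcomp R$ and $\llbracket \overleftarrow{\mathsf{r}}\rrbracket Q=P$ (an equality, not an approximation) are genuinely needed. I would first try to obtain it by contradiction, using the forward theorem: apply Theorem~\ref{th:forward_preservation} to $P$ and $R$, after checking that $P\heapcomp R$ follows from $Q\heapcomp R$ via the store-invariance of $R$. Then compare $\llbracket \mathsf{r}\rrbracket(P\sast R)=Q\sast R$ against the predecessors of $Q\sast R$.
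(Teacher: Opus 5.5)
\textbf{The splitting step is false, and so is the stated equality.} You correctly identify the splitting step as the crux, but it cannot be proved. Your justification, that a cell of $h_r$ created by the command ``would appear in some state of $Q$ with the same store'', has no basis. Your remark that the frame-property direction ``holds for the memory operations considered'' apart from allocation is also wrong.

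Take $Q=\llbrace\emp\rrbrace$ and $R=\{(s,[l_0\mapsto 0])\mid s\in\mathbb{S}\}$ for a fixed location $l_0$. Then $R\in\mathbb{F}$, and $Q\heapcomp R$ holds trivially.
\begin{itemize}
\item For $\mathsf{r}=\mathsf{x:=[y]}$, no state loads successfully and ends with an empty heap. So $P=\llbracket\overleftarrow{\mathsf{r}}\rrbracket Q=\emptyset$ and $P\sast R=\emptyset$. Yet any $(s_0,[l_0\mapsto 0])$ with $s_0(\mathsf{y})=l_0$ steps to $(s_0[\mathsf{x}\mapsto 0],[l_0\mapsto 0])\in Q\sast R$.
\item For $\mathsf{r}=\mathsf{x:=alloc()}$, again $P=\emptyset$, since allocation never yields an empty heap. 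Yet every $(s_0,[])$ can allocate $l_0$ with content $0$ and land in $Q\sast R$.
\end{itemize}
Both counterexamples work in $\circled{1}$ and in $\circled{2}$. In $\circled{2}$, $\mathsf{free(x)}$ with the frame $\{(s,[l_0\mapsto\bot])\mid s\in\mathbb{S}\}$ fails in the same way.

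The failure is structural. $Q\heapcomp R$ only says that $R$ is disjoint from the \emph{final} local heap. A backward predecessor may read, overwrite or create cells of the frame; that is, the command's footprint may lie inside $R$. The forward argument puts the hypothesis on the initial heap and replays the local run inside the framed state. Backward, the unframed predecessor need not exist, so there is nothing to replay, and duality is not a mere relabeling. The paper's own proof only says the argument is ``analogous'' to the forward case and does not address this asymmetry. Neither route can establish the statement as given.

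\textbf{What survives.} For atomic $\mathsf{r}$, the inclusion $(\llbracket\overleftarrow{\mathsf{r}}\rrbracket Q)\sast R\subseteq\llbracket\overleftarrow{\mathsf{r}}\rrbracket(Q\sast R)$ does hold. It follows from exactly the forward-locality half of your lemma: universality carries $(s,h_r)\in R$ to the final store, and $Q\heapcomp R$ then keeps any allocated cell, which lies in the final $Q$-heap, out of $h_r$. The reverse inclusion needs an extra hypothesis, for instance that the frame contains no cell the command reads, writes or creates. That holds when the postcondition owns those cells, as in the backward axioms ($\mathsf{y\mapsto z'}$ for loads, $\mathsf{x\mapsto z'}$ for allocation). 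It fails for arbitrary exact pairs such as $(\emptyset,\mathsf{r},\llbrace\emp\rrbrace)$ above.

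\textbf{The fallback does not help.}
\begin{itemize}
\item Forward Preservation needs $\llbracket\mathsf{r}\rrbracket P=Q$, which does not follow from $P=\llbracket\overleftarrow{\mathsf{r}}\rrbracket Q$.
\item $P\heapcomp R$ does not follow from $Q\heapcomp R$. For $\mathsf{free(x)}$ in $\circled{1}$, $Q=\llbrace\emp\rrbrace$ gives $P=\llbrace\mathsf{x\mapsto\_}\rrbrace$, which clashes with the $R$ above at stores where $s(\mathsf{x})=l_0$.
\item Your forward lemma itself needs repair. It ignores faulting runs: the load above aborts from $(s_0,[])$ but succeeds from $(s_0,[l_0\mapsto 0])$. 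And if ``reachable'' refers only to final states, it does not survive sequential composition in $\circled{1}$. In $\mathsf{x:=alloc();free(x)}$ every final heap from $(s,[])$ is disjoint from $[l_0\mapsto 0]$, yet the framed run can never end with $s(\mathsf{x})=l_0$.
\end{itemize}
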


We can now define two closure operators \emph{frame}, according to the direction of the analysis.

\begin{definition}[Frame Closures] \label{def:frame_cl}
    The frame closure operators are defined as follows:
    \begin{itemize}
        \item $\closure{frame}{\forwardarrows} \;T \triangleq \{\fwdtriple{P\sast R}{\mathsf{c}}{Q\sast R} \;|\;\fwdtriple{P}{\mathsf{c}}{Q}\in T, R\in \mathbb{F}, P \heapcomp R \}$ (forward semantics),
        \item $\closure{frame}{\backwardarrows} \;T \triangleq \{\bwdtriple{P\sast R}{\mathsf{c}}{Q\sast R} \;|\;\bwdtriple{P}{\mathsf{c}}{Q}\in T, R\in \mathbb{F}, Q \heapcomp R \}$ (backward semantics).
    \end{itemize}
\end{definition}

The third and last family of closure operators account for the admissible weakening and strengthening of pre- and postcondition. 
We note that these closures are the only sources of approximation among the closure operators we consider. 

\begin{definition}[Consequence Closures] \label{def:cons_cl}
The cons closure operators are defined as follows:
    \begin{itemize}
        \item $\closure{cons}{\rightarrowoverdot} \;T \triangleq \{\fwdovertriple{P}{\mathsf{c}}{Q'}\;|\;\fwdovertriple{P}{\mathsf{c}}{Q} \in T, Q\subseteq Q'\}$ (forward-over), 
        \item $\closure{cons}{\rightarrowbelowdot} \;T \triangleq \{\fwdundertriple{P}{\mathsf{c}}{Q'}\;|\;\fwdundertriple{P}{\mathsf{c}}{Q} \in T, Q'\subseteq Q\}$ (forward-under),
        \item $\closure{cons}{\leftarrowoverdot} \;T \triangleq \{\bwdovertriple{P'}{\mathsf{c}}{Q}\;|\;\bwdovertriple{P}{\mathsf{c}}{Q} \in T, P\subseteq P'\}$ (backward-over),
        \item $\closure{cons}{\leftarrowbelowdot} \;T \triangleq \{\bwdundertriple{P'}{\mathsf{c}}{Q}\;|\;\bwdundertriple{P}{\mathsf{c}}{Q} \in T, P'\subseteq P\}$ (backward-under).
    \end{itemize}
\end{definition}

In line with the general methodology, the set of valid axioms for each logic under consideration can be characterized in terms of the closure operators introduced above (as summarized in Table~\ref{tab:closure_combination}).  

\begin{table}[t]
    \centering
    \begin{tabular}{|c|c||c|c|c|}\hline
        \textbf{Direction} & \textbf{Sense} & \textbf{Exists closure} & \textbf{Frame closure} & \textbf{Cons closure} \\ 
        \hline \rule{0pt}{1em} 
        \multirow{2}{*}{\textit{Forward}} & \textit{Over} & \multirow{4}{*}{$\closure{exists}{\leftrightleftrightarrows}$} & \multirow{2}{*}{$\closure{frame}{\forwardarrows}$} & $\closure{cons}{\rightarrowoverdot}$ \\ \cline{2-2}\cline{5-5}
        & \textit{Under} & & & \rule{0pt}{1em}$\closure{cons}{\rightarrowbelowdot}$ \\ \cline{1-2}\cline{4-5}
        \multirow{2}{*}{\textit{Backward}} & \textit{Over} & & \multirow{2}{*}{$\closure{frame}{\backwardarrows}$} & \rule{0pt}{1em}$\closure{cons}{\leftarrowoverdot}$ \\ \cline{2-2}\cline{5-5}
        & \textit{Under} & & & \rule{0pt}{1em} $\closure{cons}{\leftarrowbelowdot}$ \\ \hline
    \end{tabular}
    \caption{Characterization of each logic under consideration in terms of closures.}
    \label{tab:closure_combination}
\end{table}

For the sake of conciseness, the following material concerns only over-approximation logics of the forward semantics; material for all remaining combinations is collected in Appendix \ref{sec:systematic_program_logics_appendix}.

Let us define the closure $\closure{}{\rightarrowoverdot}$ as the closure accounting for any finite combination of closures 
$\closure{exists}{\leftrightleftrightarrows}$,
$\closure{frame}{\forwardarrows}$,  and 
$\closure{cons}{\rightarrowoverdot}$, i.e., we let
$\closure{}{\rightarrowoverdot} \triangleq (\closure{exists}{\leftrightleftrightarrows} \cup \closure{frame}{\forwardarrows} \cup \closure{cons}{\rightarrowoverdot})^\ast$.

\revision{As required by Step (3),} the next result allows us to express $\closure{}{\rightarrowoverdot}$ in terms of a single application (and in very precise order) of $\closure{exists}{\leftrightleftrightarrows}$,
$\closure{frame}{\forwardarrows}$,  and 
$\closure{cons}{\rightarrowoverdot}$: first the \emph{frame} closure then the \emph{exists} closure and finally the \emph{consequence} closure. The proof relies on \cite[\S4.2]{DBLP:books/hal/Cousot78}.

\begin{theorem}[Normalization] \label{th:reorder}
    $\closure{}{\rightarrowoverdot}
     = \closure{cons}{\rightarrowoverdot} \circ \closure{exists}{\leftrightleftrightarrows} \circ \closure{frame}{\forwardarrows}$
\end{theorem}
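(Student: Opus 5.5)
We prove the normalization by the classical argument on compositions of closures \cite[\S4.2]{DBLP:books/hal/Cousot78}. Write $\rho_c \triangleq \closure{cons}{\rightarrowoverdot}$, $\rho_e \triangleq \closure{exists}{\leftrightleftrightarrows}$, $\rho_f \triangleq \closure{frame}{\forwardarrows}$, and $\sigma \triangleq \rho_c\circ\rho_e\circ\rho_f$. Each of the three operators is monotone and extensive, so $\sigma$ is as well. Moreover, $\sigma \subseteq \closure{}{\rightarrowoverdot}$ holds trivially, since $\sigma$ is one particular finite composition.

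For the converse inclusion it suffices to show that $\sigma$ is idempotent. Then $\sigma$ is a closure that contains each $\rho_i$ (by extensivity of the other two). Hence every finite word over $\rho_c,\rho_e,\rho_f$ is below $\sigma\circ\cdots\circ\sigma=\sigma$, which gives $\closure{}{\rightarrowoverdot}\subseteq\sigma$. Idempotence will follow from the commutation inequalities
\[
\rho_f\circ\rho_c \subseteq \rho_c\circ\rho_f,\qquad \rho_e\circ\rho_c\subseteq\rho_c\circ\rho_e,\qquad \rho_f\circ\rho_e\subseteq \rho_e\circ\rho_f ,
\]
together with the idempotence of each $\rho_i$. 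Using these, one pushes all copies of $\rho_c$ to the left and all copies of $\rho_f$ to the right in $\sigma\circ\sigma$, then collapses repeated operators.

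Each inequality is verified pointwise on a single triple.

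\emph{Idempotence of the individual operators.} This holds because $\exists\mathsf{X}.\exists\mathsf{Y}.P=\exists(\mathsf{X}\cup\mathsf{Y}).P$. For frame, we use $(P\sast R_1)\sast R_2=P\sast(R_1\sast R_2)$, where $R_1\sast R_2\in\mathbb{F}$ and $P\heapcomp (R_1\sast R_2)$ must be checked from $P\heapcomp R_1$ and $(P\sast R_1)\heapcomp R_2$. The argument: if a state of $R_1\sast R_2$ with heap $h_1\bullet h_2$ meets $(s,h)\in P$ at the same store, then $h\#h_1$, so $(s,h\bullet h_1)\in P\sast R_1$ and thus $(h\bullet h_1)\#h_2$.

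\emph{Frame past consequence.} From $Q\subseteq Q'$ we get $Q\sast R\subseteq Q'\sast R$, and the frame side condition concerns only $P$, which is unchanged.

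\emph{Exists past consequence.} This follows from monotonicity of $\exists\mathsf{X}$.

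\emph{Frame past exists.} This is the delicate step and the main obstacle: given $(P,\mathsf{c},Q)$, $\mathsf{X}\subseteq\logicvar$, and a frame $R$ with $(\exists\mathsf{X}.P)\heapcomp R$, we must produce the triple $(\exists\mathsf{X}.P\sast R,\mathsf{c},\exists\mathsf{X}.Q\sast R)$ as $\rho_e(\rho_f(\cdot))$. Since $R$ may constrain the variables in $\mathsf{X}$, we cannot simply frame with $R$ first. The plan is to rename. Choose fresh logical variables $\mathsf{X}'$, which exist because every value is carried by denumerably many logical variables, and let $R'$ be $R$ with $\mathsf{X}$ renamed to $\mathsf{X}'$. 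Frame $(P,\mathsf{c},Q)$ by $\exists\mathsf{X}'.(R'\cap \mathsf{X}=\mathsf{X}')$ or a similar decoupling. Then quantify $\mathsf{X}$ and finally re-identify, obtaining that $\exists\mathsf{X}.(P\sast R_0)=(\exists\mathsf{X}.P)\sast R$ for a suitable universal frame $R_0$ with $P\heapcomp R_0$; heap compatibility transfers since $P\subseteq\exists\mathsf{X}.P$. If the renaming construction becomes too heavy, a fallback is to take $R_0\triangleq\{(s,h)\mid \exists (t,h)\in R.\ t \text{ agrees with } s \text{ outside } \mathsf{X}\}$, intersected appropriately. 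If even this fails, we use the weaker inclusion up to a final $\rho_c$ step, $\rho_f\circ\rho_e\subseteq\rho_c\circ\rho_e\circ\rho_f$, which suffices for the same collapsing argument.
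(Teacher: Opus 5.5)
Your architecture is the paper's: make $\sigma$ idempotent through the three swaps, then collapse every finite word. Your idempotence checks and the two swaps involving $\rho_c$ are correct. The gap is exactly the step you flag, and none of your three options can close it. When the frame constrains variables of $\mathsf{X}$, the inclusion $\rho_f\circ\rho_e\subseteq\rho_e\circ\rho_f$ is false, and so is the weakened $\rho_f\circ\rho_e\subseteq\rho_c\circ\rho_e\circ\rho_f$.

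Take $P\triangleq\llbrace\mathsf{x'\doteq 3}\rrbrace$, $R\triangleq\llbrace\mathsf{x'\doteq 5}\rrbrace\in\mathbb{F}$, $\mathsf{X}\triangleq\{\mathsf{x'}\}$ and $T\triangleq\{(P,\mathsf{true?},P)\}$, which is a valid triple. Then $\exists\mathsf{X}.P=\llbrace\emp\rrbrace$, $\llbrace\emp\rrbrace\heapcomp R$ and $\llbrace\emp\rrbrace\sast R=R$, so $(R,\mathsf{true?},R)\in\rho_f(\rho_e(T))$. On the other side, forward $\rho_c$ never changes preconditions, and every precondition in $\rho_e(\rho_f(T))$ has the form $\exists\mathsf{Y}.(P\sast R_0)$, which contains $P\sast R_0$. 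Every state of $P\sast R_0$ satisfies $\mathsf{x'}=3$, so this precondition is either empty or contains a state with $\mathsf{x'}=3$; either way it is never $R$. Hence $(R,\mathsf{true?},R)\notin\sigma(T)$. The operator identity itself fails on this $T$, so no argument can prove it as stated.

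Renaming cannot rescue the step, because $\rho_e$ only cylindrifies and $\rho_f$ only restricts stores and extends heaps. A variable of $\mathsf{X}$ either keeps $P$'s constraint or loses every constraint, so $R$'s constraint on it cannot survive while $P$'s is dropped. Renaming the logical variables of $P$ is derivable (frame with $\mathsf{x''\doteq x'}$, then quantify $\mathsf{x'}$). But the subsequent framing with $R$ and quantification of $\mathsf{x''}$ needs a second frame/exists round, which is exactly what the three-step normal form rules out. Also, the requirement that every value is carried by denumerably many logical variables is a condition on stores. It does not give you variables fresh for arbitrary semantic sets $P$, $Q$, $R$.

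The paper's own proof crosses this step in one line, using its Distributivity property $(\exists\mathsf{X}.P)\sast R=\exists\mathsf{X}.(P\sast R)$ for all $R\in\mathbb{F}$. That property silently needs $R$ to be independent of $\mathsf{X}$ (i.e.\ $\exists\mathsf{X}.R=R$), which is precisely the hypothesis you noticed is missing. A correct version needs an extra ingredient. One option is an explicit closure under renaming of logical variables, together with a finite-support assumption so that fresh names exist. In that setting your renaming idea works, provided you apply it to the triple $(P,\mathsf{c},Q)$ rather than to the frame: rename $\mathsf{X}$ to fresh $\mathsf{X}''$, frame with $R$, then quantify $\mathsf{X}''$.
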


\revision{At Step (4)} we identify the actual set of axioms and prove that all valid triples can be derived using closure operators.

\begin{definition}[Axioms] \label{def:axioms}
    A set of axioms for $\fwdovertripleset$ is any minimal set $\fwdoveraxiomsset\subseteq \fwdovertripleset$ such that
    \(\fwdovertripleset \triangleq 
    \closure{}{\rightarrowoverdot} (\fwdoveraxiomsset)\). 
\end{definition}

\begin{corollary}[Completeness] \label{cor:triples}
    Any valid set of axioms $\fwdoveraxiomsset$ for $\fwdovertripleset$ is such that:
    \[\fwdovertripleset = (\closure{cons}{\rightarrowoverdot} \circ \closure{exists}{\leftrightleftrightarrows} \circ \closure{frame}{\forwardarrows}) (\fwdoveraxiomsset)\ .\]   
\end{corollary}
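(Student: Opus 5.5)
The corollary follows by combining Definition~\ref{def:axioms} with the Normalization Theorem~\ref{th:reorder}. By Definition~\ref{def:axioms}, any valid set of axioms satisfies $\fwdovertripleset = \closure{}{\rightarrowoverdot}(\fwdoveraxiomsset)$. Theorem~\ref{th:reorder} identifies $\closure{}{\rightarrowoverdot}$ with $\closure{cons}{\rightarrowoverdot} \circ \closure{exists}{\leftrightleftrightarrows} \circ \closure{frame}{\forwardarrows}$ as operators on sets of triples. Applying both sides to $\fwdoveraxiomsset$ gives the stated equality directly.

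For robustness, I would also check the two inclusions separately, in case the normalization is read only pointwise. Write $\sigma \triangleq \closure{cons}{\rightarrowoverdot} \circ \closure{exists}{\leftrightleftrightarrows} \circ \closure{frame}{\forwardarrows}$.

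\textbf{Inclusion $\supseteq$.} Each of the three closures belongs to the union whose reflexive--transitive iteration defines $\closure{}{\rightarrowoverdot}$. Hence $\sigma(T) \subseteq \closure{}{\rightarrowoverdot}(T)$ for every $T$, and in particular $\sigma(\fwdoveraxiomsset) \subseteq \fwdovertripleset$. Equivalently, one can argue by soundness: $\fwdoveraxiomsset \subseteq \fwdovertripleset$, and every closure maps $\fwdovertripleset$ into itself (Step (2); for the frame closure this uses Theorem~\ref{th:forward_preservation}).

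\textbf{Inclusion $\subseteq$.} This direction is exactly where Theorem~\ref{th:reorder} is needed. Any element of $\closure{}{\rightarrowoverdot}(\fwdoveraxiomsset)$ arises from a finite sequence of closure applications. The swap inclusions behind the theorem (for instance $\closure{frame}{\forwardarrows}\circ\closure{cons}{\rightarrowoverdot} \subseteq \closure{cons}{\rightarrowoverdot}\circ\closure{frame}{\forwardarrows}$, and the analogous ones involving exists) let us push all frame applications to the front and all consequence applications to the end. Idempotence then collapses repeated applications of the same closure. The result is that the triple lies in $\sigma(\fwdoveraxiomsset)$.

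The only potential obstacle is this reordering, which is precisely the content of Theorem~\ref{th:reorder}. Since that theorem is assumed, the corollary itself requires no further work. Minimality of the axiom set plays no role in this proof.
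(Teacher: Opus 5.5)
Your proposal is correct and follows the same route as the paper. The corollary comes from substituting the normal form of Theorem~\ref{th:reorder} into the defining equation $\fwdovertripleset = \closure{}{\rightarrowoverdot}(\fwdoveraxiomsset)$ of Definition~\ref{def:axioms}; your supplementary two-inclusion check just unpacks the pairwise-swap and idempotence argument behind the normalization theorem, and you are right that minimality plays no role.
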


\revision{
The very same strategy inspired by Corollary~\ref{cor:triples} is adopted, e.g., in the proof of Theorem~\ref{th:sl_completeness}.}

\begin{figure}[t]
    \renewcommand{\arraystretch}{3}
    \centering
    \resizebox{.99\linewidth}{!}{
    \begin{tabular}{c|cccc|}
         Rule & Fwd-over & Fwd-under & \textcolor{noveltycolor}{Bwd-over} & Bwd-under\\
         \hline \rule{0pt}{3em}
         $\mathsf{Transfer}$ & 
         \textcolor{noveltycolor}{$\infer{\Vdash\hl[noveltycolor][noveltycolor]{P}{\mathsf{c}}{Q}}{\fwdovertriple{P}{\mathsf{c}}{Q}\in \fwdoveraxiomsset}$} &
         \textcolor{noveltycolor}{$\infer{\Vdash\il[noveltycolor][noveltycolor]{P}{\mathsf{c}}{Q}}{\fwdundertriple{P}{\mathsf{c}}{Q}\in \fwdunderaxiomsset}$} &
         \textcolor{noveltycolor}{$\infer{\Vdash\nc[noveltycolor][noveltycolor]{P}{\mathsf{c}}{Q}}{\bwdovertriple{P}{\mathsf{c}}{Q}\in \bwdoveraxiomsset}$} &
         \textcolor{noveltycolor}{$\infer{\Vdash\sil[noveltycolor][noveltycolor]{P}{\mathsf{c}}{Q}}{\bwdundertriple{P}{\mathsf{c}}{Q}\in \bwdunderaxiomsset}$} \\
         \hline
         $\mathsf{Frame}$ &
         \textcolor{noveltycolor}{$\infer{\Vdash\hl[noveltycolor][noveltycolor]{P\sast R}{\mathsf{r}}{Q\sast R}}{R\in \mathbb{F}&\Vdash\hl[noveltycolor][noveltycolor]{P}{\mathsf{r}}{Q}&P\heapcomp R}$} &
         \textcolor{noveltycolor}{$\infer{\Vdash\il[noveltycolor][noveltycolor]{P\sast R}{\mathsf{r}}{Q\sast R}}{R\in \mathbb{F}&\Vdash\il[noveltycolor][noveltycolor]{P}{\mathsf{r}}{Q}&P\heapcomp R}$} &
         \textcolor{noveltycolor}{$\infer{\Vdash\nc[noveltycolor][noveltycolor]{P\sast R}{\mathsf{r}}{Q\sast R}}{R\in \mathbb{F}&\Vdash\nc[noveltycolor][noveltycolor]{P}{\mathsf{r}}{Q}&Q\heapcomp R}$} &
         \textcolor{noveltycolor}{$\infer{\Vdash\sil[noveltycolor][noveltycolor]{P\sast R}{\mathsf{r}}{Q\sast R}}{R\in \mathbb{F}&\Vdash\sil[noveltycolor][noveltycolor]{P}{\mathsf{r}}{Q}&Q\heapcomp R}$} \\
         $\mathsf{Cons}$ &
         $\infer{\Vdash\hl{P}{\mathsf{r}}{Q}}{\Vdash\hl{P}{\mathsf{r}}{Q'}&Q'\subseteq Q}$ &
         $\infer{\Vdash\il{P}{\mathsf{r}}{Q}}{\Vdash\il{P}{\mathsf{r}}{Q'}&Q\subseteq Q'}$ &
         \textcolor{noveltycolor}{$\infer{\Vdash\nc[noveltycolor][noveltycolor]{P}{\mathsf{r}}{Q}}{\Vdash\nc[noveltycolor][noveltycolor]{P'}{\mathsf{r}}{Q}&P'\subseteq P}$} &
         $\infer{\Vdash\sil{P}{\mathsf{r}}{Q}}{\Vdash\sil{P'}{\mathsf{r}}{Q}&P\subseteq P'}$ \\
         $\mathsf{Exists}$ &
         $\infer{\Vdash\hl{\exists \mathsf{X}.P}{\mathsf{r}}{\exists \mathsf{X}.Q}}{\Vdash\hl{P}{\mathsf{r}}{Q}&\mathsf{X}\subseteq \logicvar}$ &
         $\infer{\Vdash\il{\exists \mathsf{X}.P}{\mathsf{r}}{\exists \mathsf{X}.Q}}{\Vdash\il{P}{\mathsf{r}}{Q}&\mathsf{X}\subseteq \logicvar}$ &
         \textcolor{noveltycolor}{$\infer{\Vdash\nc[noveltycolor][noveltycolor]{\exists \mathsf{X}.P}{\mathsf{r}}{\exists \mathsf{X}.Q}}{\Vdash\nc[noveltycolor][noveltycolor]{P}{\mathsf{r}}{Q}&\mathsf{X}\subseteq \logicvar}$} &
         $\infer{\Vdash\sil{\exists \mathsf{X}.P}{\mathsf{r}}{\exists \mathsf{X}.Q}}{\Vdash\sil{P}{\mathsf{r}}{Q}&\mathsf{X}\subseteq \logicvar}$ \\
         $\mathsf{Disj}$ &
         $\infer{\Vdash\hl{\cup_{i\in I} P_i}{\mathsf{r}}{\cup_{i\in I} Q_i}}{\forall i\in I.\Vdash\hl{P_i}{\mathsf{r}}{Q_i}}$ &
         $\infer{\Vdash\il{\cup_{i\in I} P_i}{\mathsf{r}}{\cup_{i\in I} Q_i}}{\forall i\in I.\Vdash\il{P_i}{\mathsf{r}}{Q_i}}$ &
         \textcolor{noveltycolor}{$\infer{\Vdash\nc[noveltycolor][noveltycolor]{\cup_{i\in I} P_i}{\mathsf{r}}{\cup_{i\in I} Q_i}}{\forall i\in I.\Vdash\nc[noveltycolor][noveltycolor]{P_i}{\mathsf{r}}{Q_i}}$} &
         $\infer{\Vdash\sil{\cup_{i\in I} P_i}{\mathsf{r}}{\cup_{i\in I} Q_i}}{\forall i\in I.\Vdash\sil{P_i}{\mathsf{r}}{Q_i}}$\\
         \hline
         $\mathsf{Seq}$ &
         $\infer{\Vdash\hl{P}{\mathsf{r_1;r_2}}{Q}}{\Vdash\hl{P}{\mathsf{r_1}}{S}&\Vdash\hl{S}{\mathsf{r_2}}{Q}}$ &
         $\infer{\Vdash\il{P}{\mathsf{r_1;r_2}}{Q}}{\Vdash\il{P}{\mathsf{r_1}}{S}&\Vdash\il{S}{\mathsf{r_2}}{Q}}$ &
         \textcolor{noveltycolor}{$\infer{\Vdash\nc[noveltycolor][noveltycolor]{P}{\mathsf{r_1;r_2}}{Q}}{\Vdash\nc[noveltycolor][noveltycolor]{P}{\mathsf{r_1}}{S}&\Vdash\nc[noveltycolor][noveltycolor]{S}{\mathsf{r_2}}{Q}}$} &
         $\infer{\Vdash\sil{P}{\mathsf{r_1;r_2}}{Q}}{\Vdash\sil{P}{\mathsf{r_1}}{S}&\Vdash\sil{S}{\mathsf{r_2}}{Q}}$ \\
         $\mathsf{Choice}$ &
         $\infer{\Vdash\hl{P}{\mathsf{r_1+r_2}}{Q}}{\forall i\in \{1,2\}.\Vdash\hl{P}{r_i}{Q}}$ &
         $\infer{\Vdash\il{P}{\mathsf{r_1+r_2}}{Q_1\cup Q_2}}{\forall i\in \{1,2\}.\Vdash\il{P}{r_i}{Q_i}}$ &
         \textcolor{noveltycolor}{$\infer{\Vdash\nc[noveltycolor][noveltycolor]{P}{\mathsf{r_1+r_2}}{Q}}{\forall i\in \{1,2\}.\Vdash\nc[noveltycolor][noveltycolor]{P}{r_i}{Q}}$} &
         $\infer{\Vdash\sil{P_1\cup P_2}{\mathsf{r_1+r_2}}{Q}}{\forall i\in \{1,2\}.\Vdash\sil{P_i}{r_i}{Q}}$ \\ 
         $\mathsf{Iter}$ &
         $\infer{\Vdash\hl{P}{\mathsf{r^\ast}}{P}}{\Vdash\hl{P}{\mathsf{r}}{P}}$ &
         $\infer{\Vdash\il{P_0}{\mathsf{r^\ast}}{\cup_{n\geq0}P_n}}{\forall n\geq0. \Vdash\il{P_n}{\mathsf{r}}{P_{n+1}}}$ &
         \textcolor{noveltycolor}{$\infer{\Vdash\nc[noveltycolor][noveltycolor]{Q}{\mathsf{r^\ast}}{Q}}{\Vdash\nc[noveltycolor][noveltycolor]{Q}{\mathsf{r}}{Q}}$} &
         $\infer{\Vdash\sil{\cup_{n\geq0}Q_n}{\mathsf{r^\ast}}{Q_0}}{\forall n\geq0. \Vdash\sil{Q_{n+1}}{\mathsf{r}}{Q_n}}$\\
         \hline
         $\mathsf{Empty}$ &
         $\infer{\Vdash\hl{\emptyset}{\mathsf{r}}{Q}}{}$ &
         $\infer{\Vdash\il{P}{\mathsf{r}}{\emptyset}}{}$ &
         \textcolor{noveltycolor}{$\infer{\Vdash\nc[noveltycolor][noveltycolor]{P}{\mathsf{r}}{\emptyset}}{}$} &
         $\infer{\Vdash\sil{\emptyset}{\mathsf{r}}{Q}}{}$ \\
         $\mathsf{Cons2}$ &
         $\infer{\Vdash\hl{P}{\mathsf{r}}{Q}}{\Vdash\hl{P'}{\mathsf{r}}{Q}&P\subseteq P'}$ &
         $\infer{\Vdash\il{P}{\mathsf{r}}{Q}}{\Vdash\il{P'}{\mathsf{r}}{Q}&P'\subseteq P}$ &
         \textcolor{noveltycolor}{$\infer{\Vdash\nc[noveltycolor][noveltycolor]{P}{\mathsf{r}}{Q}}{\Vdash\nc[noveltycolor][noveltycolor]{P}{\mathsf{r}}{Q'}&Q\subseteq Q'}$} &
         $\infer{\Vdash\sil{P}{\mathsf{r}}{Q}}{\Vdash\sil{P}{\mathsf{r}}{Q'}&Q'\subseteq Q}$ \\
    \end{tabular}}
    \caption{\revision{Semantic proof systems.  The figure is divided into four groups. First, we list the rules for axioms, according to Corollary~\ref{cor:triples}. In the second group, we list rules induced by closure operators (for $\mathsf{Disj}$ see Section~\ref{sec:disj}). The third group collects standard rules for command compositions, while auxiliary rules are in the fourth group. Key novelties are in \textcolor{noveltycolor}{light blue}. Except for the new transfer and frame rules, the proof systems listed in Columns 1, 2, and 4 have already been proposed in literature \cite{DBLP:journals/pacmpl/AscariBGL25, DBLP:conf/RelMiCS/MollerOH21}. By contrast, Column 3 introduces the first proof system for separation NC.}}
    \Description{Semantic proof systems of program logics}
    \label{fig:semantic_proof_systems}
\end{figure}

In Figure \ref{fig:semantic_proof_systems} we report our general semantic proof system for each logic under consideration, which is applicable for any choice of atomic commands and their semantics. We use the symbol $\Vdash$ for assertions in the semantic proof systems in order to distinguish them from assertion in the logical proof systems that will be discussed in the next sections.
The next result guarantees that the semantic proof systems are sound and complete within their semantics, i.e., a triple is derivable if and only if it is valid in the considered semantics. 
Soundness can be proved by induction on the derivation tree. 
Completeness of atomic commands follows directly from the definitions of semantic axioms.
Furthermore, completeness of compositions for Fwd-over, Fwd-under and Bwd-under can be proved respectively as in \cite{DBLP:journals/siamcomp/Cook78}, \cite{DBLP:journals/pacmpl/OHearn20}, and \cite{ DBLP:journals/pacmpl/AscariBGL25}. 
Finally, completeness of compositions for Bwd-over is proved in the Appendix \ref{sec:appendix_proofs}.

\begin{theorem}[Soundness and completeness] \label{th:sound_complete_semantic}
    All proof systems in Fig. \ref{fig:semantic_proof_systems} are sound and complete.
\end{theorem}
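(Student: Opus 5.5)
We prove soundness by induction on derivations and completeness by structural induction on $\mathsf{r}$, treating the four columns of Fig.~\ref{fig:semantic_proof_systems} in parallel. The statement then follows from two inclusions, derivable $\subseteq$ valid and valid $\subseteq$ derivable, proved for each validity notion of Table~\ref{tab:triples}.

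\emph{Soundness.} We check each rule separately.
\begin{itemize}
\item $\mathsf{Transfer}$ is sound because, by Definition~\ref{def:axioms} (and its analogues in the Appendix), every axiom is a valid triple.
\item $\mathsf{Cons}$ and $\mathsf{Cons2}$ follow from the inclusions in Table~\ref{tab:triples}.
\item $\mathsf{Empty}$ follows from strictness of $\llbracket\cdot\rrbracket$, namely $\llbracket \mathsf{r}\rrbracket\emptyset=\emptyset$, and the same for the backward semantics.
\item $\mathsf{Disj}$ follows from additivity of the collecting semantics and of its converse.
\item $\mathsf{Seq}$, $\mathsf{Choice}$ and $\mathsf{Iter}$ use the usual equations $\llbracket \mathsf{r_1;r_2}\rrbracket=\llbracket \mathsf{r_2}\rrbracket\circ\llbracket \mathsf{r_1}\rrbracket$, $\llbracket \mathsf{r_1+r_2}\rrbracket=\llbracket \mathsf{r_1}\rrbracket\cup\llbracket \mathsf{r_2}\rrbracket$, and $\llbracket \mathsf{r^*}\rrbracket P=\bigcup_n\llbracket \mathsf{r}\rrbracket^nP$, together with their backward counterparts. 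For $\mathsf{Iter}$ in the over-approximating columns we use a simple induction on $n$.
\end{itemize}

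The two genuinely new rules need more care, since they must be sound for an arbitrary $\mathsf{r}$ and not only for atomic commands.

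For $\mathsf{Frame}$ in the forward columns, set $Q_0=\llbracket \mathsf{r}\rrbracket P$. Theorem~\ref{th:forward_preservation} gives $\llbracket \mathsf{r}\rrbracket(P\sast R)=Q_0\sast R$. Since $\sast$ is monotone in each argument:
\begin{itemize}
\item in the over case, $Q_0\subseteq Q$ yields $Q_0\sast R\subseteq Q\sast R$;
\item in the under case, $Q\subseteq Q_0$ yields $Q\sast R\subseteq Q_0\sast R$.
\end{itemize}
The backward columns are symmetric: we use Theorem~\ref{th:backward_preservation} with $P_0=\llbracket\overleftarrow{\mathsf{r}}\rrbracket Q$ and the side condition $Q\heapcomp R$.

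For $\mathsf{Exists}$ we first prove an auxiliary lemma by induction on $\mathsf{r}$: $\llbracket \mathsf{r}\rrbracket(\exists\mathsf{X}.P)=\exists\mathsf{X}.\llbracket \mathsf{r}\rrbracket P$ for all $\mathsf{X}\subseteq\logicvar$, and similarly for $\llbracket\overleftarrow{\mathsf{r}}\rrbracket$. The base case holds because atomic commands neither read nor write the logical store. The inductive cases hold because $\exists\mathsf{X}$ commutes with unions and with composition. Monotonicity of $\exists\mathsf{X}$ then transports each of the four validity inclusions.

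\emph{Completeness.} For an atomic $\mathsf{c}$ we invoke Corollary~\ref{cor:triples} (and its Appendix analogues for the other three settings). Every valid triple is obtained from an axiom by one application each of $\closure{frame}{}$, $\closure{exists}{}$ and $\closure{cons}{}$, in that order. We mirror this in a derivation by applying $\mathsf{Transfer}$, then $\mathsf{Frame}$, then $\mathsf{Exists}$, then $\mathsf{Cons}$.

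For composite commands we choose canonical intermediate conditions:
\begin{itemize}
\item For $\mathsf{Seq}$ we take $S=\llbracket \mathsf{r_1}\rrbracket P$ in the forward columns and $S=\llbracket\overleftarrow{\mathsf{r_2}}\rrbracket Q$ in the backward columns.
\item For $\mathsf{Choice}$ in the over columns, validity for $\mathsf{r_1+r_2}$ is equivalent to validity for both branches. In the under columns we split the condition, for example $Q_i=Q\cap\llbracket \mathsf{r_i}\rrbracket P$ in the forward case.
\item For $\mathsf{Iter}$ in the forward-over case we take the invariant $I=\llbracket \mathsf{r^*}\rrbracket P$. Then $\{I\}\,\mathsf{r}\,\{I\}$ is valid, so $\mathsf{Iter}$ gives $\{I\}\,\mathsf{r^*}\,\{I\}$. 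Next, $\mathsf{Cons2}$ strengthens the precondition to $P\subseteq I$, and $\mathsf{Cons}$ weakens the postcondition using $I\subseteq Q$.
\item For $\mathsf{Iter}$ in the backward-over (NC) case, given $\llbracket\overleftarrow{\mathsf{r^*}}\rrbracket Q\subseteq P$, we take $I=\llbracket\overleftarrow{\mathsf{r^*}}\rrbracket Q$. Then $(I)\,\mathsf{r}\,(I)$ is valid, so $\mathsf{Iter}$ gives $(I)\,\mathsf{r^*}\,(I)$. Next, $\mathsf{Cons2}$ narrows the postcondition to $Q\subseteq I$, and $\mathsf{Cons}$ enlarges the precondition to $P\supseteq I$.
\item For $\mathsf{Iter}$ in the under columns we take $P_n=\llbracket \mathsf{r}\rrbracket^nP$, or the backward iterates, and close with $\mathsf{Cons}$.
\end{itemize}
The forward-over, forward-under and backward-under cases are the known arguments cited in the text. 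The backward-over case is the new one, but it is dual to the forward-over case as shown above.

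The main obstacle is soundness of $\mathsf{Frame}$ and $\mathsf{Exists}$ for arbitrary regular commands. The preservation theorems are stated for arbitrary $\mathsf{r}$, but they have to be combined correctly with the approximation inclusions. In particular, monotonicity of $\sast$ must be checked in the under case, and the commutation of $\exists\mathsf{X}$ with the backward semantics must be established separately. I would prove the commutation lemma first, since both directions rely on it.
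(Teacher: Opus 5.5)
Your skeleton matches the paper's: soundness by induction on derivations, atomic completeness read off Corollary~\ref{cor:triples} (Transfer, then Frame, Exists, Cons), the cited Cook/O'Hearn/Ascari arguments for composite commands, and for the NC column the same invariant $I=\llbracket\overleftarrow{\mathsf{r^*}}\rrbracket Q$ closed with $\mathsf{Iter}$ and the two consequence rules. Your Exists commutation lemma and the composite cases are fine.

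The genuine gap is the step you yourself flag: soundness of $\mathsf{Frame}$. You cannot get it by quoting Theorems~\ref{th:forward_preservation} and~\ref{th:backward_preservation}, because those equalities are false in the generality in which you use them. The paper's one-line ``induction on the derivation tree'' implicitly leans on the same two results, so the hole is inherited rather than introduced by you. Heap compatibility only inspects initial heaps (forward) or final heaps (backward). It therefore cannot stop the command from allocating, reallocating, or allocating and then disposing a cell that the frame claims.

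Concretely, fix $l_0\in\locdomain$, $v_0\in\valdomain$ and the universal frame $R=\{(s,[l_0\mapsto v_0])\mid s\in\mathbb{S}\}$.

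(i) Backward-over, with $\mathsf{c}=\mathsf{x:=alloc()}$ in either model and $Q=\{(s,[])\mid s\in\mathbb{S}\}$. Here $Q\heapcomp R$ and $\llbracket\overleftarrow{\mathsf{c}}\rrbracket Q=\emptyset$, since allocation never yields an empty heap, so $(\emptyset)\,\mathsf{c}\,(Q)$ is valid. Yet every $(s,[])$ reaches $(s[\mathsf{x}\mapsto l_0],[l_0\mapsto v_0])\in Q\sast R$, so the framed $(\emptyset)\,\mathsf{c}\,(Q\sast R)$ is invalid. Completeness forces $(\emptyset)\,\mathsf{c}\,(Q)$ to be derivable, so this column cannot be both sound and complete. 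Equivalently, no set satisfying Definition~\ref{def:axioms} exists for allocation, and your atomic-completeness step rests on a set that does not exist.

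(ii) The under-approximating columns, in model $\circled{1}$, with $\mathsf{r}=\mathsf{x:=alloc();free(x)}$ and $P=\{(s_0,[])\}$. The state $(s_0[\mathsf{x}\mapsto l_0],[l_0\mapsto v_0])$ lies in $(\llbracket\mathsf{r}\rrbracket P)\sast R$ but not in $\llbracket\mathsf{r}\rrbracket(P\sast R)$, because with $l_0$ occupied the allocator never returns $l_0$. All compatibility side conditions hold, since the relevant pre- and postcondition heaps are empty. Hence both the forward-under and the backward-under Frame derive invalid triples. With the ISL error semantics, $\mathsf{free(x)}$ on a heap lacking $s(\mathsf{x})$ already breaks the forward-under Frame in either model; this is the paper's own case~(5).

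(iii) Even forward-over fails in model $\circled{2}$. Take the exact triple for $\mathsf{x:=alloc()}$ from $\{(s_0,[])\}$ and frame it with $\{(s,[l_0\mapsto\bot])\mid s\in\mathbb{S}\}$, which is allowed. The command may reallocate $l_0$ and reach $(s_0[\mathsf{x}\mapsto l_0],[l_0\mapsto v_0])$, which lies outside the framed postcondition.

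What does hold is the classical one-sided frame property $\llbracket\mathsf{r}\rrbracket(P\sast R)\subseteq(\llbracket\mathsf{r}\rrbracket P)\sast R$ for abort-free $P$ in model $\circled{1}$. That settles the forward-over column in that model. For the other columns the statement itself needs extra hypotheses on the semantics, or restrictions on admissible frames and triples, and your proof would have to verify them memory model by memory model.
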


\revision{
Given the assertion language of Section~\ref{sec:al}, 
Step (5) requires to identify a set of pre/postcondition pairs such that every triple in $\mathsf{Axioms}_{\llbracket \mathsf{c} \rrbracket}$ is captured by some of the pre/postcondition pair (and such that every axiom instance is sound too, i.e., it is in $\mathsf{Triples}_{\llbracket \mathsf{c} \rrbracket}$). Interestingly, Corollary~\ref{cor:triples} provides a systematic method to derive logical axioms:
\begin{itemize}
    \item due to $\closure{cons}{\rightarrowoverdot}$, axioms can provide the strongest postcondition for a given precondition;
    \item thanks to $\closure{exists}{\leftrightleftrightarrows}$, logical variables can be used to separate the part of the heap $\mathsf{H}$ that is modified by the atomic command $\mathsf{c}$ from the rest, seen as a suitable universal frame $\mathsf{R}$;
    \item thanks to $\closure{frame}{\forwardarrows}$, each axiom can focus on the precondition $\mathsf{H}$, leaving the frame $\mathsf{R}$ aside.
\end{itemize}
}

\revision{
Since the semantic frame rule requires, as a side condition, the heap compatibility check $P\heapcomp R$ (see Definition~\ref{def:heap_compatibility}), we need to distill a logical characterization of heap compatibility:}

\begin{definition}[Logical Heap Compatibility]\label{def:logic_heap_comp}
    \revision{Given the assertions $\mathsf{M_1=\vee_i(P_i \wedge H_i)\in Ast}$ and $\mathsf{M_2=\vee_j(P_j\wedge H_j)\in Ast}$ we define heap compatibility $\mathsf{M_1 \logicheapcomp M_2}$ between $\mathsf{M_1}$ and $\mathsf{M_2}$ as follows:
    \[\mathsf{M_1\logicheapcomp M_2 \triangleq \forall i,j. (P_i\wedge P_j)\Rightarrow (H_i\ast H_j)}\ .\]}
\end{definition}
\begin{proposition}\label{th:compatibility}
    \revision{For any $\mathsf{M_1},\mathsf{M_2}\in \mathsf{Ast}$ we have:
     \(\mathsf{M_1} \logicheapcomp \mathsf{M_2} \iff \llbrace \mathsf{M_1}\rrbrace \heapcomp \llbrace \mathsf{M_2}\rrbrace\).}
\end{proposition}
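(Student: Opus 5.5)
The proof rests on a store-wise reading of both sides, made possible by the normal form $\mathsf{M_1=\vee_i(P_i\wedge H_i)}$ recalled in Section~\ref{sec:al}. Each $\mathsf{P_i}$ is pure, and its meaning constrains only the store, with arbitrary heap. Each $\mathsf{H_i}$ is structural, built from $\emp$, $\mathsf{x\mapsto e}$, $\mathsf{x\not\mapsto}$ and $\ast$ only, since disjunction and quantifiers have been pushed into the outer disjunction and into the pure parts.

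The implication $\mathsf{(P_i\wedge P_j)\Rightarrow(H_i\ast H_j)}$ in Definition~\ref{def:logic_heap_comp} cannot mean plain set inclusion, because a pure assertion allows any heap. I will make explicit the intended reading: for every store $s$ satisfying $\mathsf{P_i\wedge P_j}$, the set of memories of $\mathsf{H_i\ast H_j}$ with store $s$ is non-empty. I will also assume, as a harmless normalization, that each $\mathsf{P_i}$ already entails the pure satisfiability condition of $\mathsf{H_i}$. That condition states that the addresses $s(\mathsf{x})$ occurring in $\mathsf{H_i}$ are pairwise distinct, and it is expressible as a conjunction of disequalities. Disjuncts whose structural part is unsatisfiable contribute nothing to $\llbrace\mathsf{M_1}\rrbrace$, so they are dropped.

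The key auxiliary lemma is \emph{precision} of structural assertions. For every structural $\mathsf{H}$ and store $s$ there is at most one heap $h$ with $(s,h)\in\llbrace\mathsf{H}\rrbrace$, and such an $h$ exists exactly when $s$ satisfies the satisfiability condition of $\mathsf{H}$. I prove this by induction on $\mathsf{H}$. For $\emp$, $\mathsf{x\mapsto e}$ and $\mathsf{x\not\mapsto}$ the heap is determined by the semantic clauses. For $\mathsf{H\ast H'}$, the composition $\sast$ fixes the store, so the only candidate is $h_{\mathsf{H}}(s)\bullet h_{\mathsf{H'}}(s)$. This candidate is defined iff the two domains are disjoint, and by induction and the definition of $\sast$ that is iff the combined distinctness condition holds. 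Write $h_{\mathsf{H}}(s)$ for this unique heap. Then $\llbrace\mathsf{P_i\wedge H_i}\rrbrace=\{(s,h_{\mathsf{H_i}}(s))\mid s\models\mathsf{P_i}\}$.

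With precision in hand, both directions reduce to unfolding definitions. $\llbrace\mathsf{M_1}\rrbrace\heapcomp\llbrace\mathsf{M_2}\rrbrace$ holds iff for all $i,j$ and all $s$ with $s\models\mathsf{P_i}$ and $s\models\mathsf{P_j}$ we have $h_{\mathsf{H_i}}(s)\# h_{\mathsf{H_j}}(s)$. This uses only Definition~\ref{def:heap_compatibility} applied with equal stores, distributed over the unions. By precision, the disjointness holds iff $(s,h_{\mathsf{H_i}}(s)\bullet h_{\mathsf{H_j}}(s))\in\llbrace\mathsf{H_i\ast H_j}\rrbrace$, i.e. iff $\mathsf{H_i\ast H_j}$ is satisfied at $s$. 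This is exactly the store-wise reading of $\mathsf{(P_i\wedge P_j)\Rightarrow(H_i\ast H_j)}$.

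The main obstacle is not the calculation but pinning down the normal form and the meaning of $\Rightarrow$ so that the statement is true. Infinite disjunctions arising from $\exists$ must be absorbed into the index sets, with the quantified values instantiated so that each $\mathsf{H_i}$ is quantifier-free. Unsatisfiable structural disjuncts must be removed. Otherwise a disjunct like $\mathsf{x\mapsto\_\ast x\mapsto\_}$ would break the ($\Rightarrow$) direction: it contributes nothing semantically, yet makes $\mathsf{H_i\ast H_j}$ false. I would settle these conventions first and then carry out the precision induction, which is routine.
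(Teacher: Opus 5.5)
The paper states this proposition without proof: it is used to justify replacing $\heapcomp$ by $\logicheapcomp$ in the frame rules, but the appendix contains no argument for it. So there is nothing to compare against.

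Your argument supplies a correct proof. It combines precision of structural assertions with two explicit conventions: a store-wise reading of $\Rightarrow$, and normalized disjuncts. Both conventions are genuinely needed, since under a literal inclusion reading of $\Rightarrow$, or with unnormalized disjuncts, the statement as written fails.

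Two small corrections.

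\emph{Direction of your counterexample.} Your unsatisfiable-disjunct example refutes the implication from $\llbrace\mathsf{M_1}\rrbrace\heapcomp\llbrace\mathsf{M_2}\rrbrace$ to $\mathsf{M_1}\logicheapcomp\mathsf{M_2}$. That is the ($\Leftarrow$) direction of the stated equivalence, not ($\Rightarrow$). The ($\Rightarrow$) direction uses only precision; the normalization is needed only for ($\Leftarrow$).

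\emph{The satisfiability condition.} For a structural $\mathsf{H}$ it is not only pairwise distinctness of addresses. Every address $s(\mathsf{x})$ must also lie in $\locdomain$, otherwise $[s(\mathsf{x})\mapsto\cdot]$ is not a heap, and in this setting not every value is a location (think of $\mathsf{null}$, and of case~(4) in the discussion of $\mathsf{free(x)}$).

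Already $\mathsf{M_1}=\mathsf{x\mapsto 1}$, $\mathsf{M_2}=\emp$ breaks ($\Leftarrow$) under your normalization, at stores with $s(\mathsf{x})\notin\locdomain$. Semantic compatibility holds trivially there, yet $\mathsf{x\mapsto 1}\ast\emp$ is unsatisfiable at such stores, and your normalization adds only disequalities (vacuous here). So your precision lemma's existence clause is false as stated. The missing constraint is also not in general a conjunction of disequalities, and may not be expressible in the pure fragment at all.

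The clean fix is to treat each $\mathsf{P_i}$ semantically as a set of stores and intersect it with the set of stores at which $\mathsf{H_i}$ is satisfiable. With that change, your precision lemma and the two unfoldings go through unchanged.
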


\revision{The correspondence in Proposition~\ref{def:logic_heap_comp} grants us that logical heap compatibility $\logicheapcomp$ can safely replace the semantic relation $\heapcomp$ in all frame rules of Figure~\ref{fig:semantic_proof_systems}.}

\paragraph{Mechanization}
\revision{
The main technical results of this section have been mechanized in Isabelle/HOL as a proof-of-concept for our systematic design procedure. Concretely, we have formalized the semantic setting (memories, universal frames, and heap compatibility) together with the three closure operators (consequence, existential generalization, and frame). 
Within this framework, we replicated the central normalization proof, showing that any finite combination of closures can be rearranged into a canonical sequence (frame, then existential generalization, then consequence), thereby confirming the validity of Theorem~\ref{th:reorder}. We then moved to address Step~(5). We fixed an assertion language in the spirit of our separation-logic assertions, we introduced a corresponding normal form for assertions, implemented the procedure that distill local axioms by extracting the assertion that represents the idle frame and verified its correctness in Isabelle.
}

\subsubsection{On Disjunction}\label{sec:disj}
The choices of closure operators need some discussion. 
Indeed, one may think of adding a fourth closure operator to capture the additivity of the collecting semantics. 
In the following, we give a definition of the closure operator \emph{disj} which is independent from the direction of the analysis and the sense of approximation.

\begin{definition}[Disj closure] \label{def:disj_closure}
    $\closure{disj}{\leftrightleftrightarrows} \;T\triangleq \{\triple{\cup_{i\in I} P_i}{\mathsf{c}}{\cup_{i\in I} Q_i}\;|\;\forall i\in I. \triple{P_i}{\mathsf{c}}{Q_i}\in T\}$.
\end{definition}

The reason why we treat the closure operator \emph{disj} separately comes from order theory.
When dealing only with closures \emph{exists}, \emph{cons} and \emph{frame}, we can guide the choice of the axioms by minimality: we can define an order $\leq$ induced by the closure and looking for the minimal elements of that order. Although applying the three closures can result in some preorder (e.g., the \emph{frame} closure can introduce vacuous logical variables that are then eliminated by applying the \emph{exists} closure), we can take the order induced by the equivalent classes modulo $\leq$. Each element of the minimal equivalence class is a candidate axiom. 
On the other hand, when considering also the closure operator \emph{disj}, it is difficult to define a suitable order: the same resulting element can be composed by starting from more than one choice of axioms. 
Notwithstanding this issue, the subsequent results enable us to present and treat every proof system with the disjunction rule in Figure \ref{fig:semantic_proof_systems}, which is the inference rule induced by the closure operator $\closure{disj}{\leftrightleftrightarrows}$. In fact, Definition~\ref{def:disj_closure} allows for a similar result obtained for the previous closures: any finite, arbitrarily ordered use of the four closures can be rewritten as a single application of each, ordered \emph{frame} $\rightarrow$ \emph{exists} $\rightarrow$ \emph{disj} $\rightarrow$ \emph{cons}.
In the following, we just report the result for over-approximation logics of the forward semantics; the corresponding material for all the logics examined is provided in the Appendix \ref{sec:systematic_program_logics_appendix}.

\begin{theorem}\label{th:disj_closure}
    $(\closure{cons}{\rightarrowoverdot} \cup \closure{disj}{\leftrightleftrightarrows} \cup \closure{exists}{\leftrightleftrightarrows} \cup \closure{frame}{\forwardarrows})^\ast = \closure{cons}{\rightarrowoverdot} \circ \closure{disj}{\leftrightleftrightarrows} \circ \closure{exists}{\leftrightleftrightarrows} \circ \closure{frame}{\forwardarrows}$.
\end{theorem}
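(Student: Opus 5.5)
The plan is to reuse the commutation technique behind Theorem~\ref{th:reorder} (following \cite[\S4.2]{DBLP:books/hal/Cousot78}). Let $\sigma \triangleq \closure{cons}{\rightarrowoverdot} \circ \closure{disj}{\leftrightleftrightarrows} \circ \closure{exists}{\leftrightleftrightarrows} \circ \closure{frame}{\forwardarrows}$.

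\textbf{The easy inclusion.} Each factor is extensive, so $\sigma \subseteq (\closure{cons}{\rightarrowoverdot} \cup \closure{disj}{\leftrightleftrightarrows} \cup \closure{exists}{\leftrightleftrightarrows} \cup \closure{frame}{\forwardarrows})^\ast$.

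\textbf{The converse inclusion.} Since $\sigma$ is extensive and monotone, it suffices to show that $\sigma$ is idempotent and that each single closure $\rho$ satisfies $\rho\circ\sigma\subseteq\sigma$. Then $\sigma$ is a closure containing every $\rho_i$, and hence it contains their reflexive-transitive union. By Theorem~\ref{th:reorder}, the three-closure fragment is already normalized. So it remains to prove swap inclusions that move $\closure{disj}{}$ to its slot:
\begin{enumerate}
\item $\closure{disj}{}\circ\closure{cons}{\rightarrowoverdot}\subseteq\closure{cons}{\rightarrowoverdot}\circ\closure{disj}{}$. Take the union of the premises and weaken the unioned postcondition, since $\bigcup Q_i\subseteq\bigcup Q'_i$.
\item $\closure{exists}{}\circ\closure{disj}{}\subseteq\closure{disj}{}\circ\closure{exists}{}$. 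This holds because $\exists\mathsf{X}.\bigcup_i P_i=\bigcup_i\exists\mathsf{X}.P_i$, which is immediate from the set-level definition of $\exists$.
\item $\closure{frame}{\forwardarrows}\circ\closure{disj}{}\subseteq\closure{disj}{}\circ\closure{frame}{\forwardarrows}$. We have $(\bigcup_i P_i)\sast R=\bigcup_i(P_i\sast R)$ and likewise for $Q$. Moreover, $(\bigcup_i P_i)\heapcomp R$ implies $P_i\heapcomp R$ for each $i$, since heap compatibility is universally quantified over the left set and is therefore antimonotone in it.
\item $\closure{disj}{}\circ\closure{disj}{}=\closure{disj}{}$. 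This is a union of unions, reindexed over a dependent-sum index set.
\end{enumerate}

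\textbf{Assembling the argument.} Let $\rho$ be one of the four closures. Using the swaps from Theorem~\ref{th:reorder} together with items 1 to 4, push $\rho$ rightward through $\sigma$ until it reaches its own slot. Then absorb it by idempotence of that slot. For $\closure{cons}{}$ no pushing is needed. For $\closure{disj}{}$, use item 1 and then item 4. For $\closure{exists}{}$, use the swap of exists past cons from Theorem~\ref{th:reorder}, then item 2. For $\closure{frame}{}$, use the frame-past-cons and frame-past-exists swaps, then item 3. Idempotence of $\sigma$ follows from the same rewriting, applied factor by factor to $\sigma\circ\sigma$.

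\textbf{Expected obstacle.} The delicate step is item 3, and specifically the interaction of frame with disjunction when $\closure{frame}{}$ is applied after $\closure{exists}{}$ and $\closure{disj}{}$. The relevant swap from Theorem~\ref{th:reorder} may need a renaming of logical variables so that the frame $R$ does not capture the quantified $\mathsf{X}$. If that renaming is not uniform across the disjuncts, I would first try to choose fresh variables per disjunct. This is possible by the assumption of a denumerable family of logical variables for each value, which keeps $R$ a universal frame and preserves $P_i\heapcomp R$ disjunct by disjunct.
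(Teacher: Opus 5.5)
Your skeleton matches the paper's proof. The paper also proves only the three swaps of $\closure{disj}{\leftrightleftrightarrows}$ past $\closure{cons}{\rightarrowoverdot}$, $\closure{exists}{\leftrightleftrightarrows}$ and $\closure{frame}{\forwardarrows}$, and takes the remaining swaps from Theorem~\ref{th:reorder}. Your items 1--4 are correct. Your assembly step (every single closure lies below $\sigma$, so idempotence of $\sigma$ suffices) makes explicit what the paper leaves implicit, including $\closure{disj}{\leftrightleftrightarrows}\circ\closure{disj}{\leftrightleftrightarrows}=\closure{disj}{\leftrightleftrightarrows}$. Item 3 is not the delicate step, though: it is just distributivity of $\sast$ over $\cup$ and antimonotonicity of $\heapcomp$ in its left argument.

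The difficulty you sense is real, but it lies elsewhere and your fix does not resolve it.

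\textbf{Where it is.} It sits in the swap $\closure{frame}{\forwardarrows}\circ\closure{exists}{\leftrightleftrightarrows}\subseteq\closure{exists}{\leftrightleftrightarrows}\circ\closure{frame}{\forwardarrows}$ that you import from Theorem~\ref{th:reorder}. Its proof rests on Property~\ref{prop:distributivity}, $(\exists\mathsf{X}.P)\sast R=\exists\mathsf{X}.(P\sast R)$, which fails whenever $R$ constrains variables in $\mathsf{X}$.

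\textbf{Counterexample.} Take $P=\{(s,[])\mid s(\mathsf{x'})=1\}$, the valid singleton $T=\{(P,\mathsf{true?},P)\}$, and the universal frame $R=\{(s,[])\mid s(\mathsf{x'})=2\}$.
Since $\exists\mathsf{x'}.P$ is the set of all memories with empty heap, $(R,\mathsf{true?},R)\in\closure{frame}{\forwardarrows}(\closure{exists}{\leftrightleftrightarrows}T)$. So this triple belongs to the left-hand side.
On the right-hand side, $\closure{cons}{\rightarrowoverdot}$ leaves preconditions unchanged. Every precondition produced by $\closure{disj}{\leftrightleftrightarrows}\circ\closure{exists}{\leftrightleftrightarrows}\circ\closure{frame}{\forwardarrows}$ is a union of sets $\exists\mathsf{Y}.(P\sast R_0)\supseteq P\sast R_0$. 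Each such set is either empty or contains a state with $\mathsf{x'}=1$. Hence no such union equals $R$, and $(R,\mathsf{true?},R)\notin\sigma T$.

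\textbf{Why renaming does not help.} Renaming $\mathsf{X}$ inside $P$ yields a triple that is not in $T$. Renaming inside $R$ changes the conclusion. The denumerable-family condition on logical stores concerns values, not the availability of names fresh for a semantic set $R$.

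\textbf{Consequence.} As an identity of operators on arbitrary $T$, the statement is false, and the paper's proof inherits the same gap from Theorem~\ref{th:reorder}. Your argument goes through only under an additional freshness hypothesis. One option: require $T$ to be closed under renaming of logical variables, and require $T$ and the frames to depend on only finitely many logical variables. Then you can $\alpha$-rename $\mathsf{X}$ away from $R$ and use the corrected distributivity $(\exists\mathsf{X}.P)\sast R=\exists\mathsf{X}.(P\sast R)$, valid when $R$ does not depend on $\mathsf{X}$.
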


The following theorem establishes a correspondence between a proof that employs the disjunction closure and one that does not. Specifically, starting from a collection of triples $(P_i, \mathsf{c}, Q_i)$ and allowing the use of the disjunction closure yields the same result as starting from their unified form ${(\bigcup_{i \in I} (P_i \cap d' = i), \mathsf{c}, \bigcup_{i \in I} (Q_i \cap d' = i))}$ and avoiding the use of the disjunction closure. The variable $d'$ serves to distinguish among the different disjuncts, thereby enabling a connection between the pre- and the postcondition. In particular, by assuming the unified triple ${(\bigcup_{i \in I} (P_i \cap d' = i), \mathsf{c}, \bigcup_{i \in I} (Q_i \cap d' = i))}$, one can selectively focus on a subset of cases $K$ by applying a frame  $R=\bigcup_{k \in K} d' = k$. We note that the rule induced by this closure was also used by \citet{DBLP:journals/pacmpl/RaadVO24} with the rule DisjTrack, a non-lossy version of the classical Disj, which preserves the pre-post correspondence and allows a disjunct to be dropped from both the pre- and postcondition.

\begin{theorem} \label{th:disj_no_disj}
    For every index set $I$, 
    \[(\closure{disj}{\leftrightleftrightarrows} \circ \closure{exists}{\leftrightleftrightarrows} \circ \closure{frame}{\forwardarrows})\{(P_i, \mathsf{c}, Q_i)\}_{i\in I} = (\closure{exists}{\leftrightleftrightarrows} \circ \closure{frame}{\forwardarrows})\{(\cup_{i\in I} (P_i\cap d'=i), \mathsf{c}, \cup_{i\in I} (Q_i\cap d'=i))\}\ .\]
\end{theorem}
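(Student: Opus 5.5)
The proof is by double inclusion. Throughout, $d'$ is a logical variable not constraining any $P_i,Q_i$, i.e.\ each $P_i,Q_i$ is closed under changing the value of $d'$ (we may pick it fresh, since logical stores provide denumerably many variables for each value). We write $D_i\triangleq\{(s,h)\mid s(d')=i\}$, so $P_i\cap d'=i$ is $P_i\cap D_i$. Two routine facts are used repeatedly. First, $\sast$ and $\exists\mathsf{X}$ distribute over arbitrary unions. Second, $(A\sast R)\cap D_i=A\sast(R\cap D_i)=(A\cap D_i)\sast R$, since $D_i$ constrains only the store. Moreover, $R\cap D_i$ is again a universal frame, because $D_i$ constrains only a logical variable.

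\emph{Right $\subseteq$ left.} Take $R\in\mathbb{F}$ with $(\bigcup_i P_i\cap D_i)\heapcomp R$, and $\mathsf{X}\subseteq\logicvar$. By the two facts,
\[\textstyle\exists\mathsf{X}.\big((\bigcup_i P_i\cap D_i)\sast R\big)=\bigcup_i\exists\mathsf{X}.\big(P_i\sast(R\cap D_i)\big),\]
and the same holds for the postcondition with $Q_i$. For each $i$, $P_i\heapcomp(R\cap D_i)$ holds. Indeed, take $(s,h)\in P_i$ and $(s,h')\in R\cap D_i$. Then $s(d')=i$, so $(s,h)\in P_i\cap D_i$, and the assumed compatibility gives $h\#h'$. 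Hence each branch $(\exists\mathsf{X}.(P_i\sast(R\cap D_i)),\mathsf{c},\exists\mathsf{X}.(Q_i\sast(R\cap D_i)))$ lies in $(\closure{exists}{\leftrightleftrightarrows}\circ\closure{frame}{\forwardarrows})\{(P_i,\mathsf{c},Q_i)\}$. The union over $I$ is then one application of $\closure{disj}{\leftrightleftrightarrows}$.

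\emph{Left $\subseteq$ right.} A generic element of the left-hand side is indexed by a family $J$. For each $j\in J$ it has an index $i_j\in I$, a frame $R_j\in\mathbb{F}$ with $P_{i_j}\heapcomp R_j$, and a set $\mathsf{X}_j\subseteq\logicvar$. The triple is $(\bigcup_j\exists\mathsf{X}_j.(P_{i_j}\sast R_j),\mathsf{c},\bigcup_j\exists\mathsf{X}_j.(Q_{i_j}\sast R_j))$. I would build a single frame and a single quantifier set reproducing this. First introduce a second fresh tag $e'$, set $E_j\triangleq\{(s,h)\mid s(e')=j\}$, and take $R\triangleq\bigcup_j(R_j\cap D_{i_j}\cap E_j)$, which is a universal frame.

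Heap compatibility $(\bigcup_i P_i\cap D_i)\heapcomp R$ reduces to the given $P_{i_j}\heapcomp R_j$. If two states share the store $s$, then $s(d')$ determines $i$ and $s(e')$ determines $j$, and in that case $i=i_j$. Then $(\bigcup_iP_i\cap D_i)\sast R=\bigcup_j(P_{i_j}\sast R_j)\cap D_{i_j}\cap E_j$, and likewise for the postcondition. Finally, quantifying $d'$ and $e'$ away erases the tags, since $P_{i_j},R_j$ do not depend on them.

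\emph{Main obstacle.} The difficult step is that the branches use different quantifier sets $\mathsf{X}_j$, while the right-hand side allows only one $\exists\mathsf{X}$. A variable quantified in branch $j$ may be meaningful (free) in branch $k$. My plan is to first push each $\exists\mathsf{X}_j$ into the frame by renaming. Using the denumerable supply of logical variables, rename the variables of $\mathsf{X}_j$ to pairwise disjoint fresh copies $\mathsf{X}_j^\sharp$. Renaming a bound variable in the branch triple $(P_{i_j}\sast R_j,\mathsf{c},Q_{i_j}\sast R_j)$ does not change $\exists\mathsf{X}_j.(\cdot)$. The delicate point is when $P_{i_j}$ itself mentions variables of $\mathsf{X}_j$. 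In that case I would instead rename inside the frame only: replace $R_j$ by a frame that records $\mathsf{X}_j$'s values in fresh copies and leaves the original variables unconstrained. I would then check that $\exists(\mathsf{X}_j\cup\mathsf{X}_j^\sharp)$ yields the same set. With this done, take $\mathsf{X}\triangleq\{d',e'\}\cup\bigcup_j\mathsf{X}_j^\sharp$. Quantifying $\mathsf{X}_k^\sharp$ is harmless on branch $j\neq k$, since those copies are fresh there. Distributing $\exists\mathsf{X}$ over the union then gives exactly the left-hand triple. I expect this renaming argument, and its interaction with $\heapcomp$ (which must survive the renaming because it is store-pointwise), to require the most care.
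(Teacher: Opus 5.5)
Your right-to-left inclusion is correct. It is the paper's argument made precise: the paper absorbs the tags into the frame, citing $d'=i\in\mathbb{F}$, and then distributes $\sast$ and $\exists\mathsf{X}$ over the union.

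Your left-to-right argument, however, fails at the two points you flag, and neither can be patched.
\begin{enumerate}
\item Erasing the tags requires each $R_j$ to be independent of $d'$. But the $R_j$ range over all universal frames, while $d'$ is fixed by the statement. (Its freshness for the $P_i,Q_i$ is an implicit hypothesis; the store model does not provide it.) So $R_j=\llbrace\mathsf{d'\doteq 5}\rrbrace$ is allowed.
\item Your renaming does not reconcile different $\mathsf{X}_j$. In the ``delicate'' case you still quantify the original variables of $\mathsf{X}_j$. With one shared $\mathsf{X}$, this also quantifies them in every branch $k$ where they are free and constrained by $P_{i_k}$ or $Q_{i_k}$. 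No frame can compensate. On variables outside $\mathsf{X}$, every state of $\exists\mathsf{X}.(P\sast R)$ carries values admitted by $P$; on variables inside $\mathsf{X}$, the set is closed under change.
\end{enumerate}

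In fact left $\subseteq$ right is false as stated. Take $I=\{0\}$, distinct logical variables $d',z',w'$, and $P_0=Q_0=\{(s,[])\mid s(z')=0\}$.
\begin{itemize}
\item Frame with $\{(s,[])\mid s(w')=1\}$ and quantify $\{z'\}$.
\item Separately, frame with $\{(s,[])\mid s(w')=2\}$ and quantify nothing.
\item Join the two by one disjunction step.
\end{itemize}
This puts $(A,\mathsf{c},A)$ on the left, where $A=\{(s,[])\mid s(w')=1\vee(s(w')=2\wedge s(z')=0)\}$. Every precondition on the right has the form $\exists\mathsf{X}.((P_0\cap d'=0)\sast R)$.
\begin{itemize}
\item If $z'\notin\mathsf{X}$, all its states satisfy $z'=0$, which $A$ violates (take $w'=1$, $z'=5$).
\item If $z'\in\mathsf{X}$, it is closed under changing $z'$, which $A$ is not.
\end{itemize}
A second counterexample targets the tag. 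Take instead $P_0=Q_0=\llbrace\emp\rrbrace$ and frame with $\llbrace\mathsf{d'\doteq 5}\rrbrace$. The resulting left-hand precondition is neither contained in $d'=0$ nor closed under changing $d'$.

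The paper's own proof does not avoid this. Two of its steps are not equalities: the step dropping the tags ``since $d'=i\in\mathbb{F}$'', and the step reading a union with one shared $R$ and one shared $\mathsf{X}$ as a generic element of the disjunction closure. So it too supports only right $\subseteq$ left.

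What your tagging does prove is left $\subseteq$ right under two restrictions: all branches share one quantifier set, and the frames, like the $P_i,Q_i$, do not depend on $d'$. For that restricted version the second tag $e'$ is unnecessary. Taking $R\triangleq\bigcup_j(R_j\cap d'=i_j)$ and quantifying $\mathsf{X}\cup\{d'\}$ already works, because $\sast$ distributes over unions.
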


\begin{remark}
    Theorem \ref{th:disj_no_disj} allows us to include the disjunction rule in all proof systems in Figure \ref{fig:semantic_proof_systems}.
    However, it guarantees that we can find axioms by minimality ignoring the disjunction closure.
\end{remark}

\section{\revision{Systematic Design of Forward Separation Logics}} \label{sec:systematic_forward}
In this section, we instantiate our semantic proof systems for forward analysis (Figure~\ref{fig:semantic_proof_systems}, first two columns) by considering the program semantics of SL \cite{DBLP:conf/csl/OHearnRY01, DBLP:conf/lics/Reynolds02} and ISL \cite{DBLP:conf/cav/RaadBDDOV20}.
\revision{As discussed in Section~\ref{sec:background}, SL and ISL consider two different memory models; however we will synthesize SL and ISL proof systems for both memory models, using the assertion language $\mathsf{Ast}$ defined in~\eqref{eq:ast}.}
Following SL and ISL, we consider atomic commands $\mathsf{c}$ to be:
\begin{equation} \label{eq:atom}
    \mathsf{Atom} \ni \mathsf{c} ::= \; \mathsf{b?} \;|\; \mathsf{error()}\;|\; \mathsf{x:=e} \;|\; \mathsf{x:=alloc()} \;|\; \mathsf{free(x)} \;|\; \mathsf{x:=[y]} \;|\; \mathsf{[x]:=y} 
\end{equation}
where $\mathsf{b?}$ filters the stores by discarding those that do not satisfy the Boolean expression $\mathsf{b}$; $\mathsf{error()}$ halts the execution and raises an error; $\mathsf{x:=e}$ is the usual assignment; $\mathsf{x:=alloc()}$ allocates a new location on the heap with a nondeterministic value and assigns the location to $\mathsf{x}$; $\mathsf{free(x)}$ deallocates the location on the heap pointed by $\mathsf{x}$; $\mathsf{x:=[y]}$ assigns to $\mathsf{x}$ the value contained in the location on the heap pointed by $\mathsf{y}$; and $\mathsf{[x]:=y}$ assigns the value of $\mathsf{y}$ to the location on the heap pointed by $\mathsf{x}$. 
Without loss of generality, we assume the variables $\mathsf{x}$ and $\mathsf{y}$ to be different in the above grammar productions.
Classical no-op command $\mathsf{\skipexp}$ can be defined as the macro $\mathsf{\skipexp \triangleq \true?}$.

\subsection{\revision{Design of SL+}} \label{sec:sl+}
As a first instance of our methodology, we consider the semantics used in SL \cite{DBLP:conf/csl/OHearnRY01,DBLP:conf/lics/Reynolds02}, 
\revision{with closure operators $\closure{exists}{\leftrightleftrightarrows}$,
$\closure{frame}{\forwardarrows}$, and 
$\closure{cons}{\rightarrowoverdot}$ from Table~\ref{tab:closure_combination}.}
For the memory model $\circled{i}\in \{\circled{1},\circled{2}\}$, we rely on a functional version \revision{$\semantics{\cdot}{SL}{\circled{i}}: \mathsf{Reg} \rightarrow \mem \rightarrow \wp(\mem\uplus \{\abort\})$}. Both are formalized in the Appendix (Figure~\ref{fig:sl_semantics}) for reviewers' convenience. The distinguished outcome $\abort$ serves to indicate a generic fault.\footnote{We assume that the abort state is propagated during the computation, i.e., we let $\semantics{r}{SL}{\circled{i}} \abort \triangleq \{\abort\}$ with $\circled{i}\in \{\circled{1},\circled{2}\}$.}
For notational convenience, we use the same symbol $\semantics{\cdot}{SL}{\circled{i}}$ to denote the lifted version to set of states, defined by union as usual.\footnote{$\semantics{\mathsf{r}}{SL}{\circled{i}} P = \bigcup_{(s,h)\in P} \semantics{\mathsf{r}}{SL}{\circled{i}} (s,h)$.}
In Figure~\ref{fig:axioms_sl+} we introduce the \revision{logical axioms of SL+ obtained applying Step (5) to the semantic axioms for the two memory models}, where we characterize pre- and postconditions using the assertion language of \eqref{eq:ast} (see Figure \ref{fig:sl+} in Appendix for the full proof system).  
\begin{figure*}
    \centering
    \resizebox{.99\linewidth}{!}{\fbox{\renewcommand{\arraystretch}{2.5}
    \begin{tabular}{c}
        \begin{tabular}{c}
            \begin{tabular}{@{}l@{\hspace{10pt}}l@{\hspace{10pt}}l@{\hspace{10pt}}l@{}}
                \hllabel{Alloc2}{\circled{2}} & \hllabel{Free}{\circled{1}} & \hllabel{Free2}{\circled{2}}\\
                $\infer{\hl{\mathsf{\emp_{\progvar} \ast l'\not\mapsto}}{\mathsf{x:=alloc()}}{\mathsf{\emp_{\progvar\setminus \{x\}} \ast l'\mapsto z'\wedge x=l'}}}{}$ &
                $\infer{\hl{\mathsf{\emp_{\progvar}\ast x\mapsto z'}}{\mathsf{free(x)}}{\mathsf{\emp_{\progvar}}}}{}$ &
                $\infer{\hl{\mathsf{\emp_{\progvar}\ast x\mapsto z'}}{\mathsf{free(x)}}{\mathsf{\emp_{\progvar}\ast x\not\mapsto}}}{}$ 
            \end{tabular}
        \end{tabular}\\
    
        \begin{tabular}{c}
            \begin{tabular}{@{}l@{\hspace{10pt}}l@{\hspace{10pt}}l@{\hspace{10pt}}l@{}}
                \hllabel{Alloc}{\circled{i}} & \hllabel{Assign}{\circled{i}} & \hllabel{Assume}{\circled{i}} & \\
                $\infer{\hl{\mathsf{\emp_{\progvar}}}{\mathsf{x:=alloc()}}{\mathsf{\emp_{\progvar\setminus \{x\}} \ast x\mapsto z'}}}{}$ &
                $\infer{\hl{\mathsf{\emp_{\progvar\setminus \{x\}}\wedge x=x'}}{\mathsf{x:=e}}{\mathsf{\emp_{\progvar\setminus \{x\}}\wedge x=e[x'/x]}}}{}$ &
                $\infer{\hl{\mathsf{\emp_{\progvar}}}{\mathsf{b?}}{\mathsf{\emp_{\progvar}\wedge b}}}{}$ &
            \end{tabular}
        \end{tabular}\\

        \begin{tabular}{c}
            \begin{tabular}{@{}l@{\hspace{10pt}}l@{\hspace{10pt}}l@{\hspace{10pt}}l@{}}
                \hllabel{Load}{\circled{i}} & \hllabel{Store}{\circled{i}} \\
                $\infer{\hl{\mathsf{\emp_{\progvar}\ast y\mapsto z'}}{\mathsf{x:=[y]}}{\mathsf{\emp_{\progvar\setminus \{x\}} \ast y\mapsto z' \wedge x=z'}}}{}$ &
                $\infer{\hl{\mathsf{\emp_{\progvar}\ast x\mapsto z'}}{\mathsf{[x]:=y}}{\mathsf{\emp_{\progvar}\ast x\mapsto y}}}{}$ 
            \end{tabular}
        \end{tabular}\\

    \end{tabular}
     }}
     \caption{SL+: Synthesized axioms for forward correctness analysis in both memory models $\circled{i}\in \{\circled{1}, \circled{2}\}$.}
    \Description{Logical proof systems SL+}
    \label{fig:axioms_sl+}
\end{figure*}
For compactness and to highlight commonalities, we superscript every rule's name with \circled{1} and \circled{2}, according to the memory model we are considering. Rules that are common to both semantics are superscripted with \circled{i}. 
For example \hllabel[\underline]{Free}{\circled{1}} is valid only in the memory model \circled{1}, $\hllabel[\underline]{Alloc2}{\circled{2}}$ is valid only in the memory model \circled{2} and $\hllabel{Alloc}{\circled{i}}$ is valid for both memory models. 
\revision{We denote by $\vdash_{\mathsf{SL+}}^\text{\circled{i}}$ the proof system related to the memory model $\circled{i}$.}

Before delving into the proof system, we look at Floyd's axioms for assignment in the two proposed versions. On the left, the classical global version \cite{Floyd1967Flowcharts}, and on the right the local version proposed for SL \cite{DBLP:conf/csl/OHearnRY01}, and also used in ISL \cite{DBLP:conf/cav/RaadBDDOV20}:
\begin{align*}
    \infer{\mathsf{\hl{P}{x:=e}{\exists x'.P[x'/x]\wedge x=e[x'/x]}}}{} &&
    \infer{\mathsf{\hl{x\doteq x'}{x:=e}{x\doteq e[x'/x]}}}{} 
\end{align*}

The global version applies to any precondition $\mathsf{P}$ and the postcondition uses the logical variable $\mathsf{x'}$ to describe how to manipulate $\mathsf{P}$ when we assign the expression $\mathsf{e}$ to $\mathsf{x}$.  
For example, starting with a precondition that requires $\mathsf{x}$ to be odd, we can derive the following triple:
\begin{equation} \label{eq:global}
    \mathsf{\hl{odd(x)}{x:=x+1}{\exists x'. odd(x')\wedge x=x'+1}}
\end{equation}

On the other hand, the local version adheres to the \emph{principle of locality} of the separation logics where the premise accounts for the minimum heap needed to execute the assignment, which is the empty heap (recall that $\mathsf{x\doteq x'}$ requires the heap to be empty). 
Furthermore, the premise uses a logical variable that captures the value of $\mathsf{x}$ before the assignment, which will be used in the postcondition. 
Using the frame rule to constrain the value of $\mathsf{x'}$, thereby limiting the value of $\mathsf{x}$ before the assignment, yields the same result as \eqref{eq:global}:
\[
    \infer
        {\mathsf{\hl{x\doteq x'\ast odd(x')}{x:=x+1}{x=x'+1\ast odd(x')}}}
        {\mathsf{\hl{x\doteq x'}{x:=x+1}{x=x'+1}}}
\]

The proof system in Figure~\ref{fig:axioms_sl+} heavily employs logical variables akin to $\mathsf{x'}$ in the preceding example; in SL+, however, this design choice emerges directly from the memory model detailed in Section~\ref{sec:semantic_model}.
Specifically, axioms are selected by minimality w.r.t. (also) the frame closure, which uses only universal frames (Definition~\ref{def:universalframe}). 
Both the pre- and postcondition must include a mechanism that relies exclusively on universal frames to embed the local reasoning within a broader memory. 
The existence of a mirroring logical variable for each program variable in the precondition guarantees a way to constrain the value of $\mathsf{x}$ before executing the command.
To reflect this in our new model, we introduce a shorthand assertion, denoted $\emp_\mathsf{X}$. 
It is analogous to the classical assertion $\emp$ but it imposes the existence of a logical variable $\mathsf{x}'$ for each variable $\mathsf{x}\in\mathsf{X}$:
\[
    \emp_\mathsf{X} \triangleq \textstyle\bigwedge_{\mathsf{x}\in \mathsf{X}}\mathsf{x}\doteq \mathsf{x}' 
    \label{eq:empX}
\]

Each SL+ axiom incorporates the above shorthand, thereby ensuring the presence of a distinct logical variable for every program variable and making these variables explicit whenever required. For instance, in rule \hllabel{Assign}{\circled{i}}, the logical variable $\mathsf{x'}$ associated with $\mathsf{x}$ is made explicit in the post. 

In the following, we discuss the proof system in more detail, by highlighting its main features and advantages w.r.t. the classical SL.

\paragraph{Frame rule and logical variables}
The peculiarity of our derived proof system lies in the combination of logical variables with the frame rule, which can predicate only about them. 
In fact, the logical frame rule (on the left) immediately follows from its semantic version in Figure~\ref{fig:semantic_proof_systems} (on the right), the notion of universal frames (Definition~\ref{def:universalframe}) and the heap compatibility result in Proposition~\ref{th:compatibility}:
\[\infer[\revision{\mathsf{(Frame)}}]{\hl{\mathsf{P}\ast \mathsf{R}}{\mathsf{r}}{\mathsf{Q}\ast \mathsf{R}}} 
        {\mathsf{fv(R)\subseteq \logicvar} & \hl{\mathsf{P}}{\mathsf{r}}{\mathsf{Q}} & \mathsf{P\logicheapcomp R}}
  \qquad\qquad
  \infer{\Vdash\hl{P\sast R}{\mathsf{r}}{Q\sast R}}
        {R\in \mathbb{F}&\Vdash\hl{P}{\mathsf{r}}{Q}&P\heapcomp R}\]

This represents one of the main differences between SL and SL+, because we can abandon the syntactic side condition about modified variables in $\mathsf{r}$. 
In standard SL, logical variables are used only to refer to and constrain the values of variables before the assignments, in the way we described before. 
Instead, in SL+, logical variables are also used to avoid framing with modified variables. 
To see how the axioms change, let us consider the case of \hllabel{Assign}{\circled{i}} in SL+:
\[\infer{\hl{\mathsf{\emp_{\progvar\setminus \{x\}} \wedge x\doteq x'}}{\mathsf{x:=e}}{\mathsf{\emp_{\progvar\setminus\{x\}} \wedge x\doteq e[x'/x]}}} {}\]

In the precondition, a logical variable is associated with each program variable. 
However, in the postcondition, the constraint $\mathsf{x\doteq x'}$ is no longer present because $\mathsf{x}$ has been modified. 
As a result, it is not possible to constrain the new value of $\mathsf{x}$ with a frame, but only its previous value. We notice that to achieve the same result in the original frame rule, the syntactic side condition is needed, which is an over-approximation of the actual modified variables and so a source of incompleteness. Instead, SL+ can capture exactly when a variable is actually modified: if the variable $\mathsf{x}$ is actually modified, we lose the ability to constrain $\mathsf{x}$ in the postcondition, because there is no logical variable that can serve as an alias to $\mathsf{x}$ for framing. Conversely, if the program does not modify any variable, each relation with logical variables is preserved. 
\revision{As discussed in Section \ref{sec:systematic_program_logics}, we observe that the above Frame rule is also valid for under-approximation, since the framing depends solely on the direction of the analysis and not on its orientation; consequently, the same reasoning applies to ISL+, which will be introduced in the next section.}
To highlight the potential of our approach, we show that SL+ allows to derive more triples than SL. 

\begin{example}[$\vdash_{\mathsf{SL+}}^\text{\circled{1}}$
 is more expressive than SL]\label{ex:sl+vssl}
\revision{The (forward, over-approximation) valid triple 
$$\hl{\mathsf{x\mapsto \_}}{\mathsf{(\false?; x:=[y]) + (\true?;\skipexp)}}{\mathsf{x\mapsto \_}}$$ 
discussed in Section~\ref{sec:introduction} can be derived in $\vdash_{\mathsf{SL+}}^{\circled{1}}$ but not in classical SL (see Example~\ref{eg:sl_derivation} for details).}
\end{example}

Vice versa, it can be shown that any SL derivation can be rephrased using SL+ inference rules---roughly, all SL axioms can be derived from the ones in SL+ by using \hllabel{Exists}{\circled{i}}, while the frame rule of SL can be recovered from the one in SL+ by using \hllabel{Cons}{\circled{i}}.

\paragraph{Different memory models}
As suggested by \citet{DBLP:conf/cav/RaadBDDOV20}, we make explicit the proof system for over-approximation in the memory model \circled{2}. 
The only difference between the proof systems $\vdash_{SL+}^\text{\circled{1}}$ 
 and $\vdash_{SL+}^\text{\circled{2}}$ is the first row of Figure~\ref{fig:axioms_sl+}, involving the axioms for deallocation (\hllabel{Free}{\circled{1}} and \hllabel{Free2}{\circled{2}}) and for the reallocation of deallocated locations (\hllabel{Alloc2}{\circled{2}}).
All the other rules remain identical.
Note that, for the memory model \circled{2}, \revision{the systematic design} would produce one single axiom for $\mathsf{x:=alloc()}$, whose pre- and postconditions for the two distinct cases---new allocation or reallocation---are indexed by logical variables, according to Theorem~\ref{th:disj_no_disj} (cf. the proof of Theorem~\ref{th:sl_completeness}, in Appendix). However, in Figure~\ref{fig:axioms_sl+} we list one separate axiom for each case to improve readability.

\paragraph{Different approximations}
A distinctive feature of our proof system is that axioms are independent of the sense of approximation. This can be seen below, by comparing rule \hllabel{Free}{\circled{1}} (on the left), whose precondition requires $\mathsf{x\mapsto z'}$, instead of $\mathsf{x\mapsto \_}$ like SL rule does (on the right):
\begin{align*}
    \infer{\mathsf{\hl{\emp_{\progvar}\ast x\mapsto z'}{free(x)}{\emp_{\progvar}}}}{} &&
    \infer{\mathsf{\hl{x\mapsto \_}{free(x)}{\emp}}}{} 
\end{align*}

In SL, the value pointed to by $\mathsf{x}$ is irrelevant. This is because we can always shrink the precondition and select a subset of the possible values. Instead, in SL+, we require $\mathsf{x}$ to point to a logical variable $\mathsf{z'}$. By using the rule of \hllabel{Exists}{\circled{i}} and \hllabel{Frame}{\circled{i}}, it is possible to derive the same axiom of SL. On the other hand, it turns out that our axioms are valid both for over- and under-approximation, allowing a unique set of axioms for both senses of approximation (see Section~\ref{sec:isl+} for more details). The same applies for rules \hllabel{Free2}{\circled{2}}, \hllabel{Store}{\circled{i}}, \hllabel{Load}{\circled{i}}, \hllabel{Alloc}{\circled{i}} and \hllabel{Alloc2}{\circled{2}}. \\

The following theorem guarantees that SL+ is sound and relative complete for loop-free programs.
\begin{theorem}[SL+ Sound and Relative Complete for Loop-Free Programs] \label{th:sl_completeness}
    The proof system SL+ (sketched in Figure \ref{fig:axioms_sl+}) is sound. Moreover, for loop-free programs, it is also relative complete.
\end{theorem}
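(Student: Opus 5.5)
Soundness is handled by induction on the derivation in $\vdash_{\mathsf{SL+}}^{\circled{i}}$. The compositional rules (Seq, Choice, Iter, Cons, Disj, Empty) are exactly the semantic rules of the first column of Figure~\ref{fig:semantic_proof_systems}, so Theorem~\ref{th:sound_complete_semantic} covers them after translating assertions through $\llbrace\cdot\rrbrace$. The Exists rule is sound because logical variables never occur in commands, which is the correctness of $\closure{exists}{\leftrightleftrightarrows}$. For the logical Frame rule, $\mathsf{fv(R)}\subseteq\logicvar$ gives $\llbrace\mathsf{R}\rrbrace\in\mathbb{F}$, and Proposition~\ref{th:compatibility} turns $\mathsf{P}\logicheapcomp\mathsf{R}$ into $\llbrace\mathsf{P}\rrbrace\heapcomp\llbrace\mathsf{R}\rrbrace$. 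Theorem~\ref{th:forward_preservation} applied with $Q=\llbracket\mathsf{r}\rrbracket\llbrace\mathsf{P}\rrbrace$ then gives $\llbracket\mathsf{r}\rrbracket(\llbrace\mathsf{P}\rrbrace\sast\llbrace\mathsf{R}\rrbrace)=(\llbracket\mathsf{r}\rrbracket\llbrace\mathsf{P}\rrbrace)\sast\llbrace\mathsf{R}\rrbrace\subseteq\llbrace\mathsf{Q}\rrbrace\sast\llbrace\mathsf{R}\rrbrace$, using monotonicity of $\sast$. Finally, each axiom of Figure~\ref{fig:axioms_sl+} is checked directly against $\semantics{c}{SL}{\circled{i}}$ in both memory models. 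The case of $\mathsf{x:=alloc()}$ in model $\circled{2}$ needs a freshness argument for the reused location $\mathsf{l'}$.

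For relative completeness on loop-free programs, completeness for compositions ($\mathsf{r_1;r_2}$ and $\mathsf{r_1+r_2}$) follows Theorem~\ref{th:sound_complete_semantic} under the expressiveness assumption that intermediate strongest postconditions are expressible in $\mathsf{Ast}$. The core is therefore completeness for each atomic command $\mathsf{c}$. By Corollary~\ref{cor:triples}, every valid triple for $\mathsf{c}$ lies in $(\closure{cons}{\rightarrowoverdot}\circ\closure{exists}{\leftrightleftrightarrows}\circ\closure{frame}{\forwardarrows})(\fwdoveraxiomsset)$. It then suffices to exhibit a semantic axiom set built from the logical axioms, show it generates all valid triples, and replay the three closures with the Frame, Exists and Cons rules.

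Concretely, I take a valid $(\llbrace\mathsf{P}\rrbrace,\mathsf{c},\llbrace\mathsf{Q}\rrbrace)$ and write $\mathsf{P}$ as a disjunction of pure $\wedge$ structural disjuncts, as noted in Section~\ref{sec:al}. I treat each disjunct separately and recombine with Disj, or, following Theorem~\ref{th:disj_no_disj}, tag the disjuncts with a logical variable $\mathsf{d'}$. For a single disjunct, the construction has four steps.
\begin{enumerate}
\item Introduce fresh logical variables $\mathsf{x'}$ for each program variable and $\mathsf{z'}$ for the content of the accessed cell.
\item Rewrite the disjunct as $\exists\mathsf{X}.\,(\mathsf{H}\ast\mathsf{R})$, where $\mathsf{H}$ is the axiom precondition (e.g.\ $\emp_{\progvar}\ast\mathsf{x\mapsto z'}$) and $\mathsf{R}$ is the original assertion with program variables replaced by their primed copies. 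By construction $\mathsf{fv(R)}\subseteq\logicvar$.
\item Check $\mathsf{H}\logicheapcomp\mathsf{R}$ and apply Frame, then Exists.
\item Close with Cons, since the framed postcondition is exactly the strongest postcondition, which is contained in $\llbrace\mathsf{Q}\rrbrace$.
\end{enumerate}
Disjuncts whose semantics aborts or is empty are handled by Empty or the error outcome.

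The main obstacle is step~(3), proving the logical heap compatibility $\mathsf{H}\logicheapcomp\mathsf{R}$. One must split $\mathsf{R}$ so that it excludes the cell at $\mathsf{x'}$, and in the reallocation case of model $\circled{2}$ it must exclude the $\bot$-cell $\mathsf{l'}$. Since separating conjunction already forces disjointness in the original disjunct, I would argue this semantically through Proposition~\ref{th:compatibility}, checking per store that the accessed location is not in the residual heap. For $\mathsf{x:=alloc()}$ the nondeterministic choice of the fresh location must also be encoded by a logical variable. The allocation and reallocation cases would be merged via Theorem~\ref{th:disj_no_disj}.
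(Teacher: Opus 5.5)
The soundness half breaks at two steps, and not for lack of an argument: the objects involved are invalid.

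\textbf{The alloc axioms fail the direct check.} Under the over-approximate reading $\llbracket\mathsf{c}\rrbracket\llbrace\mathsf{P}\rrbrace\subseteq\llbrace\mathsf{Q}\rrbrace$, the axioms for $\mathsf{x:=alloc()}$ are not valid as written. The post $\emp_{\progvar\setminus\{\mathsf{x}\}}\ast\mathsf{x\mapsto z'}$ pins the new cell's content to the value of $\mathsf{z'}$, which the command never changes. Allocation, however, stores an arbitrary $v$. From $(s,[])$ with $s(\mathsf{z'})=0$ you reach $(s[\mathsf{x}\mapsto l],[l\mapsto 1])$, which lies outside the post. The reallocation axiom has a further problem. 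From $(s,[s(\mathsf{l'})\mapsto\bot])$ allocation may also pick a fresh $l\neq s(\mathsf{l'})$ and end with a two-cell heap, which its post excludes. No freshness argument can remove runs that exist.

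\textbf{The Frame step fails in model $\circled{2}$.} Your Frame argument uses the inclusion $\llbracket\mathsf{r}\rrbracket(P\sast R)\subseteq(\llbracket\mathsf{r}\rrbracket P)\sast R$ of Theorem~\ref{th:forward_preservation}. That is the frame property of the semantics, and reallocation breaks it because a $\bot$-cell belonging to the frame can be reused. Take $\{\emp_{\progvar}\}\,\mathsf{x:=alloc()}\,\{\emp_{\progvar\setminus\{\mathsf{x}\}}\ast\mathsf{x\mapsto\_}\}$, which is valid and obtained from Alloc by Exists. Take $\mathsf{R}=\mathsf{l'\not\mapsto}$, a universal frame compatible with the empty pre-heap. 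Frame yields the post $\emp_{\progvar\setminus\{\mathsf{x}\}}\ast\mathsf{x\mapsto\_}\ast\mathsf{l'\not\mapsto}$. Yet $\semantics{x:=alloc()}{SL}{\circled{2}}$ may reuse $s(\mathsf{l'})$ and reach $(s[\mathsf{x}\mapsto s(\mathsf{l'})],[s(\mathsf{l'})\mapsto v])$, a one-cell heap outside that post.

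\textbf{What survives.} Your rule-by-rule argument establishes soundness only for model $\circled{1}$, and only after weakening the Alloc post to $\emp_{\progvar\setminus\{\mathsf{x}\}}\ast\mathsf{x\mapsto\_}$. Model $\circled{2}$ additionally needs Alloc2 to keep the fresh-allocation disjunct, and a stronger side condition on frames (for instance, no $\not\mapsto$ cells in $\mathsf{R}$). These axioms are fine as under-approximations, which is why they work in ISL+. The paper's own soundness proof is the single line ``by induction on the derivation tree'' and passes over the same points, so you inherit the problem rather than create it.

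\textbf{Completeness.} Your completeness half follows the paper's route: exact postcondition from the minimal footprint via Frame, Exists and Cons, then Seq and Choice by induction. The paper isolates the footprint with the tagged normal form $\exists\mathsf{a'}.\,\heapcore\ast\mathsf{M'}$ of Lemma~\ref{lm:rewrite}. You instead split per disjunct and recombine with Disj, which amounts to the $\mathsf{d'}$-tagging of Theorem~\ref{th:disj_no_disj}.

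One correction to your step (3): $\mathsf{H}\logicheapcomp\mathsf{R}$ is not inherited from the disjointness inside the original $\ast$. Once the cell at $\mathsf{x}$ is split off, $\mathsf{R}$ alone no longer forbids its addresses from equalling $\mathsf{x'}$. You must case-split on which points-to conjunct sits at $\mathsf{x}$, and add the disequalities $\mathsf{e_j'\neq x'}$ to $\mathsf{R}$ explicitly. The paper leaves this implicit as well.
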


\begin{remark}
    We want to note that, in line with \cite[\S 3]{DBLP:conf/lics/CalcagnoOY07} soundness and relative completeness results do not account for $\abort$ state as it is not expressible in the assertion language.  
\end{remark}

\subsection{\revision{Design of ISL+}} \label{sec:isl+}
In this section, we \revision{focus on} proof systems for forward under-approximation, \revision{i.e., subject to closure operators $\closure{exists}{\leftrightleftrightarrows}$,
$\closure{frame}{\forwardarrows}$, and
$\closure{cons}{\rightarrowbelowdot}$ from Table~\ref{tab:closure_combination}.}
Consistently with the principles of under-approximation logics \cite{DBLP:journals/pacmpl/OHearn20, DBLP:conf/cav/RaadBDDOV20, DBLP:journals/pacmpl/RaadBDO22}, we adopt semantics that explicitly account for errors, where the flags \ok and \er track, respectively, normal and erroneous terminations in the postcondition. 
While \citet{DBLP:journals/pacmpl/OHearn20} and later works \cite{DBLP:conf/cav/RaadBDDOV20, DBLP:journals/pacmpl/RaadBDO22, DBLP:conf/concur/RaadVBO23} rely on two different semantics, one accounting for normal termination and one for erroneous termination, here we follow the alternative exploited by \citet{DBLP:journals/jacm/BruniGGR23} and also used in other works \cite{DBLP:journals/pacmpl/AscariBGL25} where the domain is enriched with such flags, i.e., $\{\ok,\er\}\times \mem$. In this way, the only difference between the semantics functions of SL and ISL is the treatment of errors: while SL returns the $\abort$ state, ISL returns the whole state tagged as $\er$.
As in the previous section, we present slightly different semantics $\semantics{\cdot}{ISL}{\circled{i}}: \mathsf{Reg} \rightarrow \{\ok, \er\}\times \mem \rightarrow \wp(\{\ok,\er\}\times \mem)$ depending on the memory model $\circled{i}\in\{\circled{1},\circled{2}\}$. 
Both are formalised in Figure \ref{fig:isl_semantics} for readers’ convenience.\footnote{We assume that error states are propagated during the computation, i.e., we let $\semantics{r}{ISL}{\circled{i}} \er :(s,h) \triangleq \{\er:(s,h)\}$.}
Finally, we lift both definitions to sets of states by union as usual.
Note that the semantics $\semantics{\cdot}{ISL}{\circled{2}}$ is the functional version of ISL relational semantics  \cite{DBLP:conf/cav/RaadBDDOV20}. 
For readability, in proof systems that deal with explicit error handling, we color assertions in green for states tagged with $\ok$, in red if tagged with $\er$, and in blue when they can be both. 

\begin{figure*}[t]
    \centering
    \resizebox{.99\linewidth}{!}{\fbox{\renewcommand{\arraystretch}{2.5}
        \begin{tabular}{c}
            \begin{tabular}{c}
                \begin{tabular}{@{}l@{\hspace{10pt}}l@{\hspace{10pt}}l@{}}
                    \illabel{Alloc2}{\circled{2}} & \illabel{Free1}{\circled{1}} & \illabel{Free2}{\circled{2}} \\
                    $\infer{\il[blue][mygreen]{\mathsf{\emp_{\progvar}\ast l'\not\mapsto}}{\mathsf{x:=alloc}()}{\mathsf{\emp_{\progvar\setminus \{x\}}\ast l'\mapsto z'\wedge x=l'}}} {}$ &
                    $\infer{\il[blue][mygreen]{\mathsf{\emp_{\progvar} \ast x\mapsto z'}}{\mathsf{free(x)}}{\emp_{\progvar}}} {}$ &
                    $\infer{\il[blue][mygreen]{\mathsf{\emp_{\progvar} \ast x\mapsto z'}}{\mathsf{free(x)}}{\emp_{\progvar} \ast\mathsf{x\not\mapsto}}} {}$
                \end{tabular} \\[-10pt]

                \begin{tabular}{@{}l@{\hspace{10pt}}l@{\hspace{10pt}}l@{}}
                    \illabel{FreeEr3}{\circled{2}} & \illabel{LoadEr3}{\circled{2}} & \illabel{StoreEr3}{\circled{2}}  \\
                    $\infer{\il[blue][red]{\mathsf{\emp_{\progvar}\ast x\not\mapsto}}{\mathsf{free(x)}}{\mathsf{\emp_{\progvar}\ast x\not\mapsto}}} {}$ &
                    $\infer{\il[blue][red]{\mathsf{\emp_{\progvar} \ast y \not \mapsto}}{\mathsf{x:=[y]}}{\mathsf{\emp_{\progvar} \ast y \not \mapsto}}} {}$  &
                    $\infer{\il[blue][red]{\mathsf{\emp_{\progvar}\ast x \not \mapsto}}{\mathsf{[x]:=y}}{\mathsf{\emp_{\progvar}\ast x \not \mapsto}}} {}$  
                \end{tabular} \\[-10pt]
                
                \begin{tabular}{@{}l@{\hspace{10pt}}l@{\hspace{10pt}}l@{}}
                    \illabel{Assign}{\circled{i}} & \illabel{Assume}{\circled{i}} & \illabel{Error}{\circled{i}} \\
                    $\infer{\il[blue][mygreen]{\mathsf{\emp_{\progvar\setminus \{x\}} \wedge x=x'}}{\mathsf{x:=e}}{\mathsf{\emp_{\progvar\setminus\{x\}} \wedge x=e[x'/x]}}} {}$ &
                    $\infer{\il[blue][mygreen]{\mathsf{\emp_{\progvar}}}{\mathsf{b?}}{\mathsf{\emp_{\progvar}\wedge b}}} {}$ &
                    $\infer{\il[blue][red]{\mathsf{\emp_{\progvar}}}{\mathsf{error()}}{\mathsf{\emp_{\progvar}}}} {}$ 
                \end{tabular} \\[-10pt]

                \begin{tabular}{@{}l@{\hspace{10pt}}l@{\hspace{10pt}}l@{}}
                    \illabel{Alloc}{\circled{i}} & \illabel{Load}{\circled{i}} & \illabel{Store}{\circled{i}}\\
                    $\infer{\il[blue][mygreen]{\mathsf{\emp_{\progvar}}}{\mathsf{x:=alloc}()}{\mathsf{\emp_{\progvar\setminus \{x\}}\ast x\mapsto z'}}} {}$ &
                    $\infer{\il[blue][mygreen]{\emp_{\progvar} \ast\mathsf{y \mapsto z'}}{\mathsf{x:=[y]}}{\emp_{\progvar\setminus \{x\}} \ast\mathsf{y \mapsto z' \wedge x= z'}}} {}$ &
                    $\infer{\il[blue][mygreen]{\mathsf{\emp_{\progvar} \ast x \mapsto z'}}{\mathsf{[x]:=y}}{\mathsf{\emp_{\progvar} \ast x\mapsto y}}} {}$ 
                \end{tabular} \\[-10pt]

                \begin{tabular}{@{}l@{\hspace{10pt}}l@{\hspace{10pt}}l@{}}
                    \illabel{FreeEr1}{\circled{i}} & \illabel{LoadEr1}{\circled{i}} & \illabel{StoreEr1}{\circled{i}}\\
                    $\infer{\il[blue][red]{\mathsf{\emp_{\progvar}\ast x\doteq null}}{\mathsf{free(x)}}{\mathsf{\emp_{\progvar}\ast x\doteq null}}} {}$  &
                    $\infer{\il[blue][red]{\mathsf{\emp_{\progvar} \ast y\doteq null}}{\mathsf{x:=[y]}}{\mathsf{\emp_{\progvar} \ast y\doteq null}}} {}$  &
                    $\infer{\il[blue][red]{\mathsf{\emp_{\progvar}\ast x\doteq null}}{\mathsf{[x]:=y}}{\mathsf{\emp_{\progvar}\ast x\doteq null}}} {}$  
                \end{tabular} \\[-10pt]

                \begin{tabular}{@{}l@{\hspace{10pt}}l@{\hspace{10pt}}l@{}}
                    \illabel{FreeEr2}{\circled{i}} & \illabel{LoadEr2}{\circled{i}} & \illabel{StoreEr2}{\circled{i}}\\
                    $\infer{\il[blue][red]{\mathsf{\emp_{\progvar}\ast x\nnmapsto}}{\mathsf{free(x)}}{\mathsf{\emp_{\progvar}\ast x\nnmapsto}}} {}$  &
                    $\infer{\il[blue][red]{\mathsf{\emp_{\progvar} \ast y\nnmapsto}}{\mathsf{x:=[y]}}{\mathsf{\emp_{\progvar} \ast y\nnmapsto}}} {}$  &
                    $\infer{\il[blue][red]{\mathsf{\emp_{\progvar}\ast x\nnmapsto}}{\mathsf{[x]:=y}}{\mathsf{\emp_{\progvar}\ast x\nnmapsto}}} {}$  
                \end{tabular} \\
                
            \end{tabular}\\
        \end{tabular}
     }}       
     \caption{ISL+: Synthesized axioms for forward incorrectness analysis in both memory models $\circled{i}\in \{\circled{1}, \circled{2}\}$.}
    \Description{Logical proof systems ISL+}
    \label{fig:axioms_isl+}
\end{figure*}

\paragraph{Underspecified behaviour}
Examining the ISL proof system \cite{DBLP:conf/cav/RaadBDDOV20} \revision{and its semantics}, we can identify some underspecification glitches: no axiom accounts for the execution of a command that presupposes a location to be allocated in an initially empty heap. \revision{For example, in the deallocation command, we can identify five minimal pre on the heap under which the command may be executed:} 

\noindent
{\setlength{\multicolsep}{5pt} 
    \begin{multicols}{3}
    \begin{enumerate}
        \item[(1)] $s(\mathsf{x})=null$,
        \item[(4)] $s(\mathsf{x})\in \valdomain\setminus \locdomain$,
        \item[(2)] $h(s(\mathsf{x}))=\bot$,
        \item[(5)] $s(\mathsf{x}) \in \locdomain\setminus dom(h)$.
        \item[(3)] $h(s(\mathsf{x}))\in \valdomain$,
    \end{enumerate} 
    \end{multicols}
}

\revision{ISL covers (1)---(3) via `FreeNull', `FreeEr' and `Free'. Case (4) can be readily prevented by an appropriate type system. Case (5) is problematic: although $\mathsf{x}$ is currently not allocated, it may appear in a future heap extension through framing, potentially leading to an unsound triple. In separation logic, such \emph{"permanently absent"} locations can be expressed via separating implication $\sepimp$ \cite{DBLP:conf/popl/IshtiaqO01, DBLP:journals/bsl/OHearnP99}. The assertion $\mathsf{P \sepimp Q}$ is satisfied by those memories that, if extended with a disjoint part where $\mathsf{P}$ holds, then $\mathsf{Q}$ will hold in the extended heap. 
Accordingly, the axiom that captures case (5) can be stated as $\mathsf{\il[blue][red]{(x \mapsto v\vee x\not\mapsto) \sepimp \false}{free(x)}{(x \mapsto v\vee x\not\mapsto) \sepimp \false}}$, which asserts that whenever we use a frame with $\mathsf{x\mapsto v}$ or $\mathsf{x\not\mapsto}$, we obtain a triple with both pre- and postcondition $\mathsf{\false}$.
Since we only need the special case $\mathsf{P \sepimp \false}$, we avoid adding full $\sepimp$ by extending heaps with a marker $\reserved$ for \emph{reserved} locations: if $h(l)=\reserved$, then $l$ is absent now and remains absent in all future extensions. We capture this with a new assertion $\mathsf{x\nnmapsto}$, which composes as expected (e.g., $\mathsf{x\nnmapsto \ast x\mapsto v \equiv \false}$), obtaining the axiom $\mathsf{\il[blue][red]{x\nnmapsto}{free(x)}{x\nnmapsto}}$ for the case (5). Although $\mathsf{x\nnmapsto}$ resembles $\mathsf{x\not\mapsto}$, they have different meanings. The assertion $\mathsf{x\not\mapsto}$ denotes deallocated locations that may be reallocated; it is necessary in the memory model $\circled{2}$ since such states are directly used in its semantics. By contrast, $\mathsf{x\nnmapsto}$ forbids the location from ever appearing and is needed only when reasoning with explicit error handling, to avoid unsoundness. If errors are not handled explicitly, all errors collapse to $\abort$ (which is not expressible in the assertion language), making $\mathsf{x\nnmapsto}$ superfluous. }

\revision{
A related solution \citep{DBLP:conf/ecoop/CarrottAR25} introduces an exit condition $\textcolor{myorange}{\mathsf{miss}}(l)$ and enforces soundness via a side condition forbidding framing with $l$. In contrast, $\mathsf{x\nnmapsto}$ is a compositional assertion (e.g., $\mathsf{x\nnmapsto \ast y\nnmapsto}$ is meaningful), while $\textcolor{myorange}{\mathsf{miss}}(l)$ is an exit label that does not support such a combination.
}

\paragraph{Proof system}
\revision{In Figure~\ref{fig:axioms_isl+} we outline the axioms of the proof system ISL+, where we characterize every pre- and postcondition with the assertion language~\eqref{eq:ast}, here extended with $\mathsf{x \nnmapsto}$ (see Figure \ref{fig:isl+} in Appendix for the full proof system). 
As before, rule names are annotated with their reference memory model.
We denote by $\vdash_{\mathsf{ISL+}}^\text{\circled{i}}$ the proof system related to the memory model $\circled{i}$.}


\paragraph{Frame rule and logical variables}
\revision{As previously discussed, the frame rule is independent from the sense of the analysis, thereby ISL+ uses the same frame rule of SL+ and exploits logical variables in the same manner, allowing for deriving more triples w.r.t. ISL. In the next example, we show a triple derivable in ISL+ but not in ISL.}

\begin{example}[$\vdash_{\mathsf{ISL+}}^\text{\circled{2}}$ is more expressive than ISL]\label{ex:isl+vsisl}
\revision{As discussed in Section~\ref{sec:introduction}, Eq.~\eqref{eq:triple_ex}, the triple $\il[blue][red]{\mathsf{y\mapsto v}}{\mathsf{x:=alloc();free(y);x:=[y]}}{{\mathsf{y\not\mapsto}}\ast  \mathsf{x\mapsto \_}}$ is provable in $\vdash_{\mathsf{ISL+}}^\text{\circled{2}}$ but not in ISL (see Ex.~\ref{eg:isl_derivation}).}
\end{example}

\paragraph{Different memory models}
To the best of our knowledge, this work presents the first axiomatization of a separation logic for under-approximation that dispenses with the negative heap assertion, originally introduced by \citet{DBLP:conf/cav/RaadBDDOV20} to guarantee soundness. 
We argue that the negative heap assertion is not required in general to define an under-approximation logic; rather, it becomes necessary when using the standard frame rule of SL. 
Here, instead, we take another point of view by defining a new frame rule where the semantic preservation is exactly captured (Theorem~\ref{th:forward_preservation} and Theorem~\ref{th:backward_preservation}). This implies that frames that would lead to an over-approximation are denied by Definition~\ref{def:heap_compatibility}. 
Indeed, derivations like \eqref{eq:wrong_frame} are not possible in ISL+ \revision{as shown in the next example.}

\begin{example} [$\vdash_{\mathsf{ISL+}}^\text{\circled{1}}$ forbids wrong framing]
\revision{Let us consider the triple $\mathsf{\il[blue][mygreen]{x\mapsto \_}{free(x)}{\emp}}$, used as motivation for the introduction of the assertion $\mathsf{x\not\mapsto}$ in ISL. 
The classical frame rule~\eqref{eq:classicalframe} would allow the framing with the assertion $\mathsf{x\mapsto\_}$, leading to derive the unsound triple $\mathsf{\il[blue][mygreen]{\false}{free(x)}{x\mapsto\_}}$ (see derivation~\eqref{eq:wrong_frame}). 
Instead, in ISL+ such framing is prevented by the condition $P \logicheapcomp R$.}
\end{example}

\paragraph{Different approximations}
We highlight that the only difference between SL+ and ISL+ is the axioms for explicitly handling error termination: every axiom for normal termination remains the same. 
Notably, this is because we consider different semantics. 
Apart from this, our approach allows to reuse the same axioms for both over- and under-approximation. 

\begin{theorem}[ISL+ Sound and Relative Complete for Loop-Free Programs] \label{th:relative_completeness_loopfree_isl}
    The proof system ISL+ (sketched in Figure~\ref{fig:axioms_isl+}) is sound. Moreover, for loop-free programs, it is also relative complete.
\end{theorem}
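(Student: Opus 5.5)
Soundness and relative completeness for ISL+ follow the same two-layer strategy already used for SL+ (Theorem~\ref{th:sl_completeness}). We reduce the logical proof system to the semantic one of Figure~\ref{fig:semantic_proof_systems} (column Fwd-under), which is sound and complete by Theorem~\ref{th:sound_complete_semantic}. The only genuinely new work concerns the logical axioms of Figure~\ref{fig:axioms_isl+}, where the semantics carries the $\ok/\er$ tags and, for error axioms, the reserved marker $\reserved$ behind $\mathsf{x\nnmapsto}$.

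\emph{Soundness.} We proceed by induction on derivations. The compositional rules (Seq, Choice, Iter, Disj, Cons, Exists) are sound at the semantic level and carry over verbatim, since $\llbrace\cdot\rrbrace$ is compositional. The logical Frame rule is sound because $\mathsf{fv(R)}\subseteq\logicvar$ makes $\llbrace\mathsf{R}\rrbrace\in\mathbb{F}$, and Proposition~\ref{th:compatibility} turns $\mathsf{P}\logicheapcomp\mathsf{R}$ into $\llbrace\mathsf P\rrbrace\heapcomp\llbrace\mathsf R\rrbrace$. Theorem~\ref{th:forward_preservation} then gives $\llbracket\mathsf r\rrbracket(P\sast R)=(\llbracket\mathsf r\rrbracket P)\sast R\supseteq Q\sast R$. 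This needs a check that preservation also holds for the tagged semantics, with $\sast$ applied componentwise to tagged states. For each axiom we verify directly that $\llbrace\mathsf Q\rrbrace\subseteq\semantics{c}{ISL}{\circled{i}}\llbrace\mathsf P\rrbrace$ using Figure~\ref{fig:isl_semantics}. In the $\mathsf{x\nnmapsto}$ axioms the check is trivial on the reserved heap, because the semantics errors on an absent location.

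\emph{Relative completeness for loop-free programs.} We proceed by structural induction on $\mathsf r$. Seq and Choice are handled using strongest postconditions as intermediate assertions, which are expressible in $\mathsf{Ast}$ since loop-free programs need no fixpoints. This is exactly the O'Hearn-style argument, so relativity concerns only the Cons oracle. The core case is atomic $\mathsf c$. By Corollary~\ref{cor:triples} (with $\closure{cons}{\rightarrowbelowdot}$ in place of $\closure{cons}{\rightarrowoverdot}$, as in the Appendix), every valid triple is obtained from semantic axioms by frame, then exists, then cons. By Theorem~\ref{th:disj_no_disj}, disjunctions may be absorbed through a tracking logical variable $d'$. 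It therefore suffices to show that each semantic axiom $(P,\mathsf c,Q)$ with $Q=\llbracket\mathsf c\rrbracket P$ is derivable from the logical axioms by Frame, Exists and Disj. The plan is to write $\llbrace\mathsf P\rrbrace$ as a disjunction of pure-and-structural components. For each component we rename every program variable to a fresh logical one via $\emp_{\progvar}$. We then split the heap into the cell touched by $\mathsf c$, which matches one of the axioms, and a remainder expressed over logical variables only, which is a universal frame. Logical heap compatibility holds precisely because the component is satisfiable. We frame, existentially quantify the renaming variables, and take the disjunction.

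The main obstacle is the case analysis on the five shapes of the accessed location in the error cases of $\mathsf{free}$, $\mathsf{x:=[y]}$ and $\mathsf{[x]:=y}$, in particular case (5). There the precondition heap does not contain $s(\mathsf x)$, yet the error postcondition must survive arbitrary compatible frames. We would first show that any semantic axiom whose precondition leaves $s(\mathsf x)$ unallocated with an $\er$ outcome is generated from the $\mathsf{x\nnmapsto}$ axioms. The argument is that heap compatibility with a frame not mentioning $s(\mathsf x)$ keeps the reserved marker consistent, and that minimal axioms must carry this marker, since otherwise the frame closure would produce unsound triples. Secondary care is needed for $\mathsf{alloc}$ in model $\circled{2}$, where the fresh-versus-reallocated cases are merged via $d'$ as in the proof of Theorem~\ref{th:sl_completeness}.
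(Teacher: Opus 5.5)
\textbf{Gap in the case-(5) step.} Your skeleton matches the paper's: soundness by induction on derivations, and relative completeness for loop-free $\mathsf r$ by structural induction, with exact postconditions at Seq/Choice. For atomic $\mathsf c$ you split the precondition into $\emp_{\progvar}$, the accessed cell, and a frame over logical variables, then apply Frame, Exists and Cons. The paper does this split once and for all via the normal form $\exists\mathsf{a'}.\heapcore^{\circled{i}}\ast\mathsf{M'}$ of Lemmas~\ref{lm:rewrite} and~\ref{lm:rewrite2}, whose tag $\mathsf{a'}$ plays the role of your $d'$.

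The genuine gap is exactly where you place the main obstacle, and your argument does not close it. Under the error clauses of Figure~\ref{fig:isl_semantics}, an access with $s(\mathsf x)\in\locdomain\setminus dom(h)$ returns $\er:(s,h)$. The check you postpone, preservation for tagged states, is then false. Take $P=\llbrace\mathsf{x\doteq z'}\rrbrace$, $Q=\{\er:(s,[\,])\mid s(\mathsf x)=s(\mathsf{z'})\in\locdomain\}$ and $R=\llbrace\mathsf{z'\mapsto 5}\rrbrace\in\mathbb F$. Then $Q\subseteq\semantics{free(x)}{ISL}{\circled{i}}P$, and $P\heapcomp R$ holds trivially because the heaps of $P$ are empty. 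Yet $\er:(s,[s(\mathsf{z'})\mapsto 5])\in Q\sast R$ is not an outcome of $P\sast R$, on which $\mathsf{free(x)}$ succeeds. So Frame cannot be proved sound by an induction that only knows its premise is valid.

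Dually, $\il[blue][red]{\mathsf{x\doteq z'\wedge z'\neq null}}{\mathsf{free(x)}}{\mathsf{x\doteq z'\wedge z'\neq null}}$ is valid but not derivable. Frame only adds cells, Exists only abstracts logical variables, Disj only takes unions, and Cons only shrinks postconditions. Hence every state in a derivable error postcondition of $\mathsf{free(x)}$ has $s(\mathsf x)=\mathsf{null}$ or a $\bot$/$\reserved$ cell at $s(\mathsf x)$. Your claim that such error triples are generated from the $\mathsf{x\nnmapsto}$ axioms is therefore false. Its justification (``otherwise the frame closure would produce unsound triples'') is circular: it presupposes that the frame closure preserves validity, which is precisely what fails here.

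\textbf{What is needed, and smaller points.} The missing ingredient is a change of semantics, not a minimality argument. Make the error clauses local: raise $\er$ only when $s(\mathsf x)=\mathsf{null}$ or $h(s(\mathsf x))\in\{\bot,\reserved\}$, and give no outcome when $s(\mathsf x)\notin dom(h)$. Then the inclusion $(\llbracket\mathsf c\rrbracket P)\sast R\subseteq\llbracket\mathsf c\rrbracket(P\sast R)$ of Theorem~\ref{th:forward_preservation} holds for tagged states, which is all Frame needs. Moreover, the only case-(5) errors then come from states carrying $\reserved$, so your reduction to the $\mathsf{x\nnmapsto}$ axioms goes through. The paper's own proof (``as for SL+, using Lemma~\ref{lm:rewrite2}'') does not address this point. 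In SL+ the case vanishes because the outcome is the inexpressible $\abort$, whereas in ISL+ it is the expressible $\er:(s,h)$.

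Two smaller points. First, routing completeness through Corollary~\ref{cor:triples} does not by itself give relative completeness, because the semantic axioms and frames it yields need not be expressible in $\mathsf{Ast}$. The argument must run on the given assertion $\mathsf P$, as your splitting in fact does. Second, $\logicheapcomp$ between the axiom precondition and the renamed remainder does not follow from satisfiability alone. A remainder cell $\mathsf{y'\mapsto b'}$ clashes with $\mathsf{x\mapsto z'}$ on stores where $s(\mathsf{y'})=s(\mathsf x)$, so you must conjoin $\mathsf{y'\neq x'}$ (implied by the original $\ast$) to the frame. The same aliasing forces a pure case split on which cell lies at $s(\mathsf x)$.
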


In the case of under-approximation, we can also extend the completeness result to any program $\mathsf{r}$ by taking advantage of the reachability requirement of the postcondition.

\begin{theorem}[State-wise completeness] \label{th:statewise_completeness_isl}
    Given a program $\mathsf{r}$ and two states $(\oker: (s,h))$ and $(\oker:(s',h'))$ such that, $(\oker:(s',h'))\in \semantics{r}{ISL}{\circled{i}} (\oker:(s,h))$, for every assertion $\mathsf{P}$ such that $(s,h) \in \llbrace \mathsf{P}\rrbrace$ there exists an assertion $\mathsf{Q}$ such that $(s',h') \in \llbrace \mathsf{Q}\rrbrace$ and $\vdash_{\mathsf{ISL+}}^{\circled{i}}\il{\mathsf{P}}{\mathsf{r}}{\mathsf{Q}}$ is provable. 
\end{theorem}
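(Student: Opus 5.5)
We prove the statement by structural induction on $\mathsf{r}$, strengthening it to hold for every assertion $\mathsf{P}$ with $(s,h)\in\llbrace\mathsf{P}\rrbrace$. The key observation is that an under-approximate triple only needs to cover the single witnessed final state, not all of $\semantics{r}{ISL}{\circled{i}}\llbrace\mathsf{P}\rrbrace$. This is what lets us handle $\mathsf{r}^\ast$, where relative completeness (Theorem~\ref{th:relative_completeness_loopfree_isl}) is only available for loop-free programs.

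\emph{Atomic commands.} We apply Theorem~\ref{th:relative_completeness_loopfree_isl} to $\mathsf{c}$ directly. The strongest postcondition $\semantics{c}{ISL}{\circled{i}}\llbrace\mathsf{P}\rrbrace$ contains $(\oker:(s',h'))$. By relative completeness there is an expressible $\mathsf{Q}$ denoting that set, derivable from the axioms of Figure~\ref{fig:axioms_isl+} via Frame, Exists, Disj and Cons, following Corollary~\ref{cor:triples}. If we prefer to avoid an expressivity assumption, we instead pick a $\mathsf{Q}$ that describes only the relevant part of the post. We take $\mathsf{P}_0\subseteq\mathsf{P}$ describing $(s,h)$ exactly, via $\emp_{\progvar}$, logical-variable equalities and points-to facts. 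We then frame the matching axiom with the universal frame encoding the rest of $h$ through logical variables, and weaken by Cons to $\mathsf{P}$.

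\emph{Composite commands.} Each case reduces to the induction hypothesis.
\begin{itemize}
\item For $\mathsf{r_1;r_2}$ we pick an intermediate state $(\oker:(s'',h''))$ on the execution path. The hypothesis gives $\mathsf{S}$ containing it with $\vdash\il{\mathsf{P}}{\mathsf{r_1}}{\mathsf{S}}$. Applying the hypothesis again to $\mathsf{S}$ gives $\mathsf{Q}$, and the Seq rule concludes.
\item If the intermediate state is an $\er$ state, propagation gives $(s',h')=(s'',h'')$. In that case we derive $\il{\mathsf{S}}{\mathsf{r_2}}{\mathsf{S}}$ restricted to error states, which is immediate from the error-propagation semantics (the ISL error-propagation rule).
\item For $\mathsf{r_1+r_2}$ the witnessed execution goes through some $\mathsf{r_k}$. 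We use the hypothesis for $\mathsf{r_k}$, together with the Empty rule $\il{\mathsf{P}}{\mathsf{r_j}}{\false}$ for the other branch, and combine them with Choice.
\item For $\mathsf{r}^\ast$ the execution consists of a finite number $n$ of unfoldings through states $\sigma_0,\dots,\sigma_n$. Applying the hypothesis $n$ times yields a chain of assertions $\mathsf{P}_0=\mathsf{P},\dots,\mathsf{P}_n$ with $\sigma_k\in\llbrace\mathsf{P}_k\rrbrace$ and $\vdash\il{\mathsf{P}_k}{\mathsf{r}}{\mathsf{P}_{k+1}}$. Setting $\mathsf{P}_m=\false$ for $m>n$, the Empty rule supplies the remaining premises. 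The Iter rule of Figure~\ref{fig:semantic_proof_systems} then gives postcondition $\bigvee_{k\le n}\mathsf{P}_k$, which contains $\sigma_n=(s',h')$, and we may keep $\mathsf{Q}=\bigvee_k\mathsf{P}_k$ or narrow it with Cons.
\end{itemize}

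\emph{Main obstacle.} The delicate part is the base case: showing that the state-specific triple is derivable from the synthesized axioms for an arbitrary $\mathsf{P}$, including error outcomes involving $\mathsf{x\nnmapsto}$ and reserved locations, and in both memory models. The plan is to rely on Theorem~\ref{th:relative_completeness_loopfree_isl} rather than redo the case analysis. We then only need to check that the post produced by the Frame/Exists/Cons normal form contains the particular $(s',h')$, which follows from the equality guaranteed by Theorem~\ref{th:forward_preservation}.
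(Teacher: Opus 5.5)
Your proposal is correct and follows the paper's proof essentially step for step: structural induction on $\mathsf{r}$, the atomic case discharged by Theorem~\ref{th:relative_completeness_loopfree_isl}, the split on whether the intermediate state of $\mathsf{r_1;r_2}$ is erroneous (your error-propagation step is exactly the paper's \textsf{SeqEr} rule), \textsf{Choice} with \textsf{Empty} on the unused branch, and a chain of induction hypotheses along the finite run for $\mathsf{r}^\ast$. The only deviation is in the $\mathsf{r}^\ast$ case. You use the infinitary Iter rule of Figure~\ref{fig:semantic_proof_systems}, padded with $\false$ via \textsf{Empty}, whereas the paper stays inside the logical system of Figure~\ref{fig:isl+}, which only has Iterate-zero and the unfolding Iterate rule, and builds $\il{\mathsf{P_0}}{\mathsf{r}^\ast}{\mathsf{P_m}}$ by alternating them with \textsf{Seq}; to remain within $\vdash_{\mathsf{ISL+}}^{\circled{i}}$ you should either derive your finite instance of that rule (easy from those rules plus \textsf{Disj}) or simply use the paper's unrolling.
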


\revision{It is worth mentioning here the relatively complete proof system for incorrectness separation logic defined by \citet{DBLP:conf/aplas/LeeN24}. Their axioms, however, require global preconditions, which violate the locality principle and can harm scalability and compositionality.}

\paragraph{On the footprint theorem}
Let us consider the footprint relation $\mathsf{foot(\cdot)}$ defined by \citet{DBLP:conf/cav/RaadBDDOV20}, where $\mathsf{foot(r)}$ describes the minimal state needed for some execution of $\mathsf{r}$.\footnote{The original definition takes into account also exit conditions that here we omit for the sake of presentation.}
This means that if $((s,h),(s',h'))\in \mathsf{foot(r)}$, then $h$ only contains locations accessed during the execution of $\mathsf{r}$ and its execution on the state $(s,h)$ terminates in the state $(s',h')$. 
When $\mathsf{c \in Atom}$, we observe some correspondences between $\mathsf{foot(c)}$ and $\fwdunderaxiomsset$. In particular, if two states are related by $\mathsf{foot(c)}$, then there is an axiom whose precondition contains the initial state and the postcondition contains the final state (independently from the values of logical variables). Formally,
\begin{equation} \label{eq:foot_relation}
    ((s,h),(s',h'))\in \mathsf{foot(c)} \implies \exists (P,\mathsf{c}, Q) \in \closure{exists}{\rightarrowbelowdot}(\fwdunderaxiomsset). (s,h)\in P, (s',h')\in Q
\end{equation}

We point out that, while the relation $\mathsf{foot(c)}$ must be given ad hoc for the specific semantics we are considering, the set of axioms $\fwdunderaxiomsset$ can be found by minimality considerations. 

Let us also consider the relation $\mathsf{frame(r)}\triangleq \{(\sigma_p \bullet \sigma_r, \sigma_q \bullet \sigma_r)\;|\; (\sigma_p, \sigma_q)\in R\}$ for which \citet[Theorem 2 (Footprint)]{DBLP:conf/cav/RaadBDDOV20} proved that $\forall \mathsf{c}. \semantics{c}{ISL}{\circled{2}} = \mathsf{frame(foot(c))}$. 
In other terms, the semantics of a program is given by every possible framing of the footprint of the command $\mathsf{c}$. A similar correspondence to \eqref{eq:foot_relation} can be observed if we also consider the closure operator frame:
\[
\resizebox{.99\linewidth}{!}{
$((s,h),(s',h'))\in \mathsf{frame(foot(c))} \implies \exists (P,\mathsf{c}, Q) \in \closure{frame}{\rightarrowbelowdot} \circ \closure{exists}{\rightarrowbelowdot}(\fwdunderaxiomsset). (s,h)\in P, (s',h')\in Q$.
}
\]

Although the Footprint Theorem secures an elegant property, there is no corresponding logical principle in ISL.
In fact, some states in $\mathsf{frame(foot(c))}$ cannot be derived in ISL due to the impossibility of applying the frame rule because of the side condition (see~\eqref{eq:triple_ex}, again). 
On the other hand, conditions imposed by the closure operator $\closure{frame}{\rightarrowbelowdot}$ and then by the rule of frame of the semantic proof system are translated as a one-to-one correspondence in the logical proof system ISL+. This means that (up to the expressiveness of the assertion language), the frame rule of ISL+ is complete.

\subsubsection{Design of CISL+.}
\revision{CISL \citep{DBLP:journals/pacmpl/RaadBDO22} is a general framework for multiple incorrectness analysis in concurrent settings (e.g., detecting memory safety bugs, data races on shared memory, deadlock issues) and avoids the syntactic side-condition by adopting a memory model where separation ranges uniformly over store and heap. Accordingly, a slightly different notion of separating conjunction is used:
\[
P \sastconcur Q \;\triangleq\; \{(s_p \bullet s_q,\, h_p \bullet h_q)\mid (s_p,h_p)\in P \land (s_q,h_q)\in Q \land s_p \# s_q \land h_p \# h_q\}.
\]
Applying our systematic design methodology to $\text{CISL}_{\text{DC}}$ (i.e. the instance of CISL for detecting memory safety bugs) yields the logic CISL+, that is analogous to ISL+, but where $\closure{frame}{\forwardarrows}$ is defined in terms of $\sastconcur$ rather than $\bullet$. Although CISL+ is similar to ISL+\footnote{Technically, CISL also exploits parallel composition of commands $\mathsf{r \parallel r}$; being orthogonal to our study, we leave the rules for handling parallel composition in Figure~\ref{fig:cisl+} in Appendix.}, its axioms reflect the different definition of the separating conjunction. For example, letting $\circledast$ be the assertional counterpart of $\sastconcur$ (i.e., $\llbrace \mathsf{P \circledast Q}\rrbrace \triangleq \llbrace\mathsf{P}\rrbrace \sastconcur \llbrace\mathsf{Q}\rrbrace$), the axiom for allocation can be written as follows:
\[
\infer
    {\il[blue][mygreen]{\mathsf{x\mapsto z'}}{\mathsf{x:=alloc()}}{\mathsf{\exists l. \exists v. x\mapsto l \circledast l\mapsto v}}}
    {}
\]
The axiom above itself provides an example in which CISL+ is more expressive than $\text{CISL}_{\text{DC}}$, which is intrinsically incomplete, as discussed in Section \ref{sec:introduction} (see Issue \#3). }
\revisiontmp{Notably, in the extended model $\islmem$ with explicit deallocation considered by CISL, the side condition of our frame rule $\mathsf{P \logicheapcomp R}$ is vacuously true. However, thanks to that side condition, we can also apply the CISL settings to the classical model $\slmem$, in the same manner described in Section \ref{sec:isl+}.}

\section{\revision{Systematic Design of Backward Separation Logics}} \label{sec:systematic_backward}

\revision{Our methodology is not limited to reconstructing and refining existing logics; it also enables the synthesis of entirely new ones. In this section, we address the design of \emph{backward} separation logics, for any sense of approximation and any memory model. While a backward, under-approximation separation logic, called \emph{SepSIL}, was recently introduced in~\cite{DBLP:journals/pacmpl/AscariBGL25} for $\islmem$, the over-approximation variants presented here are entirely novel contributions of this work.
For simplicity, as reference semantics we consider $\semantics{\cdot}{SL}{\circled{i}}: \mathsf{Reg} \rightarrow \mem \rightarrow \wp(\mem\uplus \{\abort\})$, i.e., without explicit error handling (see Figure~\ref{fig:sl_semantics} in Appendix).
The \revision{resulting} logic \emph{SIL+}, for backward under-approximation and of the logic \emph{NC+} for backward over-approximation exploit the closure operators already presented in Table~\ref{tab:closure_combination}, namely 
$\closure{exists}{\leftrightleftrightarrows}$,
$\closure{frame}{\backwardarrows}$, and 
$\closure{cons}{\leftarrowbelowdot}$ for SIL+, and
$\closure{exists}{\leftrightleftrightarrows}$,
$\closure{frame}{\backwardarrows}$, and 
$\closure{cons}{\leftarrowoverdot}$ for NC+.
}

\revision{We highlight that NC+ and SIL+ share a unique set of axioms, valid for both senses of approximation.
This is evident by comparing the axioms for SIL+ in Figure~\ref{fig:axioms_sil+} with those for NC+ in (the topmost part of) Figure~\ref{fig:nc+}, which carry the same pre- and postconditions except for the use of surrounding brackets: SIL+ triples are written using $\llangle\cdot\rrangle$, while NC+ triples using $(\cdot)$.
In these axioms, both pre- and postconditions are expressed using the assertion language defined in~\eqref{eq:ast}.}

\revision{
As an illustrative example, consider the axiom for allocation:}

\[\infer{\sil{\mathsf{\emp_{\mathbb{X}_\mathsf{p}\setminus \{x\}}\wedge x'=l'}}{\mathsf{x:=alloc}()}{\mathsf{\emp_{\mathbb{X}_\mathsf{p}\setminus \{x\}}\wedge x=x'\ast x\mapsto z'}}} {}\]

\revision{
The axiom is written in SIL+ syntax, but also valid for NC+. 
Due to the locality principle, the pre- and postconditions of the conclusion are restricted to the footprint of the allocated cell, i.e., to the heap fragment directly affected by the allocation.
In contrast with SL+ and ISL+, the analysis of NC+ and SIL+ is best read backward: the axiom is applicable to a generic heap-minimal postcondition. 
}

\subsection{\revision{Design of SIL+}}

\revisiontmp{
The goal of backward under-approximation logics, such as SepSIL, is to expose \emph{some} initial states that have at least one terminating run in one state that satisfies the given postcondition $Q$.}
\revision{Figure~\ref{fig:axioms_sil+} just showcases the axioms of SIL+, but the full proof system can be found in Appendix, Figure \ref{fig:sil+}.
Even if SepSIL is as expressive as SIL+, we observe that some axioms of SepSIL (see \cite[Figure~5]{DBLP:journals/pacmpl/AscariBGL25}) were defined globally, whereas our methodology enforces the introduction of local axioms only.
As a simple instance, compare the global assignment axiom in SepSIL---a.k.a., Hoare’s assignment axiom \cite{DBLP:journals/cacm/Hoare69}, as opposed to Floyd’s axiom~\eqref{eq:floyd}---with SIL+ local axiom:}

\[
    \infer[\mathsf{(SepSIL)}]{\sil{\mathsf{Q[e/x]}}{\mathsf{x:=e}}{\mathsf{Q}}}{}
    \qquad
    \infer[\mathsf{(SIL+)}]{\sil{\mathsf{\emp_{\mathbb{X}_\mathsf{p}\setminus \{x\}} \wedge x'=e}}{\mathsf{x:=e}}{\mathsf{\emp_{\mathbb{X}_\mathsf{p}\setminus\{x\}} \wedge x=x'}}} {}\
\]

\begin{figure*}[t]
    \centering
    \resizebox{.99\linewidth}{!}{\fbox{\renewcommand{\arraystretch}{2.5}
        \begin{tabular}{c}
            \begin{tabular}{c}
                \begin{tabular}{@{}l@{\hspace{10pt}}l@{\hspace{10pt}}l@{}}
                    \sillabel{Alloc2}{\circled{2}} & \sillabel{Free1}{\circled{1}} & \sillabel{Free2}{\circled{2}}  \\
                    $\infer{\sil{\mathsf{\emp_{\mathbb{X}_\mathsf{p}\setminus \{x\}}\wedge x'=l'\ast l'\not\mapsto}}{\mathsf{x:=alloc}()}{\mathsf{\emp_{\mathbb{X}_\mathsf{p}\setminus \{x\}}\wedge x=x'\ast x\mapsto z'}}} {}$ &
                    $\infer{\sil{\mathsf{\emp_{\mathbb{X}_\mathsf{p}}\ast x\mapsto z'}}{\mathsf{free(x)}}{\mathsf{\emp_{\mathbb{X}_\mathsf{p}}}}} {}$ &
                    $\infer{\sil{\mathsf{\emp_{\mathbb{X}_\mathsf{p}}\ast x\mapsto z'}}{\mathsf{free(x)}}{\mathsf{\emp_{\mathbb{X}_\mathsf{p}} \ast x\not\mapsto}}} {}$
                \end{tabular} \\[-10pt]
            
                \begin{tabular}{@{}l@{\hspace{10pt}}l@{\hspace{10pt}}l@{}}
                    \sillabel{Alloc}{\circled{i}} & \sillabel{Assign}{\circled{i}} & \sillabel{Assume}{\circled{i}} \\
                    $\infer{\sil{\mathsf{\emp_{\mathbb{X}_\mathsf{p}\setminus \{x\}}\wedge x'=l'}}{\mathsf{x:=alloc}()}{\mathsf{\emp_{\mathbb{X}_\mathsf{p}\setminus \{x\}}\wedge x=x'\ast x\mapsto z'}}} {}$ &
                    $\infer{\sil{\mathsf{\emp_{\mathbb{X}_\mathsf{p}\setminus \{x\}} \wedge x'=e}}{\mathsf{x:=e}}{\mathsf{\emp_{\mathbb{X}_\mathsf{p}\setminus\{x\}} \wedge x=x'}}} {}$ &
                    $\infer{\sil{\mathsf{\emp_{\mathbb{X}_\mathsf{p}}\wedge b}}{\mathsf{b?}}{\mathsf{\emp_{\mathbb{X}_\mathsf{p}}}}} {}$ 
                \end{tabular} \\[-10pt]            

                \begin{tabular}{@{}l@{\hspace{10pt}}l@{\hspace{10pt}}l@{}}
                    \sillabel{Load}{\circled{i}} & \sillabel{Store}{\circled{i}}\\
                    $\infer{\sil{\mathsf{\emp_{\mathbb{X}_\mathsf{p}\setminus \{x\}} \wedge x'=z' \ast y \mapsto z'}}{\mathsf{x:=[y]}}{\mathsf{\emp_{\mathbb{X}_\mathsf{p}\setminus \{x\}} \wedge x=x' \ast y \mapsto z'}}} {}$ &
                    $\infer{\sil{\mathsf{\emp_{\mathbb{X}_\mathsf{p}}\ast x \mapsto z'}}{\mathsf{[x]:=y}}{\mathsf{\emp_{\mathbb{X}_\mathsf{p}}\ast x\mapsto y}}} {}$
                \end{tabular} \\
                
            \end{tabular}\\

        \end{tabular}
     }}         
     \caption{SIL+: Synthesized axioms for backward incorrectness analysis in both memory models $\circled{i}\in \{\circled{1}, \circled{2}\}$.}
    \Description{Logical proof systems SIL+}
    \label{fig:axioms_sil+}
\end{figure*}

\revision{Our method leads to exploit logical variables to disregard the idle part of the postcondition, which can be added at a later time using the  frame rule.   
More precisely, using logical variables, we can rewrite the SepSIL axiom for assignment as $\sil{\mathsf{Q[x'/x]\ast x'\doteq e}}{\mathsf{x:=e}}{\mathsf{Q[x'/x]\ast x\doteq x'}}$. 
The local axiom of SIL+ lifts this approach to the whole set of program variables $\emp_{\mathbb{X}_\mathsf{p}}$ in order to remove the idle frame $\mathsf{Q[x'/x]}$ from the picture.
The same difference can be observed by looking at the axioms for $\mathsf{x:=e}$, $\mathsf{b?}$, and $\mathsf{x:=[y]}$, where SepSIL still considers global post rather than local ones.
Finally, thanks to universal frames,  SIL+ can be applied to the memory model without reallocation.} 

\subsection{\revision{Design of NC+}}

\begin{figure*}[t]
    \centering
    \resizebox{.99\linewidth}{!}{\fbox{\renewcommand{\arraystretch}{2.5}
        \begin{tabular}{c}
            \begin{tabular}{c}
                \begin{tabular}{@{}l@{\hspace{10pt}}l@{\hspace{10pt}}l@{}}
                    \nclabel{Alloc2}{\circled{2}} & \nclabel{Free1}{\circled{1}} & \nclabel{Free2}{\circled{2}}  \\
                    $\infer{\nc{\mathsf{\emp_{\mathbb{X}_\mathsf{p}\setminus \{x\}}\wedge x'=l'\ast l'\not\mapsto}}{\mathsf{x:=alloc}()}{\mathsf{\emp_{\mathbb{X}_\mathsf{p}\setminus \{x\}}\wedge x=x'\ast x\mapsto z'}}} {}$ &
                    $\infer{\nc{\mathsf{\emp_{\mathbb{X}_\mathsf{p}}\ast x\mapsto z'}}{\mathsf{free(x)}}{\mathsf{\emp_{\mathbb{X}_\mathsf{p}}}}} {}$ &
                    $\infer{\nc{\mathsf{\emp_{\mathbb{X}_\mathsf{p}}\ast x\mapsto z'}}{\mathsf{free(x)}}{\mathsf{\emp_{\mathbb{X}_\mathsf{p}} \ast x\not\mapsto}}} {}$
                \end{tabular} \\[-10pt]
            
                \begin{tabular}{@{}l@{\hspace{10pt}}l@{\hspace{10pt}}l@{}}
                    \nclabel{Alloc}{\circled{i}} & \nclabel{Assign}{\circled{i}} & \nclabel{Assume}{\circled{i}} \\
                    $\infer{\nc{\mathsf{\emp_{\mathbb{X}_\mathsf{p}\setminus \{x\}}\wedge x'=l'}}{\mathsf{x:=alloc}()}{\mathsf{\emp_{\mathbb{X}_\mathsf{p}\setminus \{x\}}\wedge x=x'\ast x\mapsto z'}}} {}$ &
                    $\infer{\nc{\mathsf{\emp_{\mathbb{X}_\mathsf{p}\setminus \{x\}} \wedge x'=e}}{\mathsf{x:=e}}{\mathsf{\emp_{\mathbb{X}_\mathsf{p}\setminus\{x\}} \wedge x=x'}}} {}$ &
                    $\infer{\nc{\mathsf{\emp_{\mathbb{X}_\mathsf{p}}\wedge b}}{\mathsf{b?}}{\mathsf{\emp_{\mathbb{X}_\mathsf{p}}}}} {}$ 
                \end{tabular} \\[-10pt]            

                \begin{tabular}{@{}l@{\hspace{10pt}}l@{\hspace{10pt}}l@{}}
                    \nclabel{Load}{\circled{i}} & \nclabel{Store}{\circled{i}}\\
                    $\infer{\nc{\mathsf{\emp_{\mathbb{X}_\mathsf{p}\setminus \{x\}} \wedge x'=z' \ast y \mapsto z'}}{\mathsf{x:=[y]}}{\mathsf{\emp_{\mathbb{X}_\mathsf{p}\setminus \{x\}} \wedge x=x' \ast y \mapsto z'}}} {}$ &
                    $\infer{\nc{\mathsf{\emp_{\mathbb{X}_\mathsf{p}}\ast x \mapsto z'}}{\mathsf{[x]:=y}}{\mathsf{\emp_{\mathbb{X}_\mathsf{p}}\ast x\mapsto y}}} {}$
                \end{tabular} \\
            \end{tabular}\\

            \hline 
            
            \begin{tabular}{@{}l@{\hspace{10pt}}l@{\hspace{10pt}}l@{\hspace{10pt}}l@{}}
                \nclabel{Frame}{\circled{i}} & \nclabel{Exists}{\circled{i}} & \nclabel{Disj}{\circled{i}} & \nclabel{Seq}{\circled{i}} \\
                $\infer
                    {\nc{\mathsf{P}\ast \mathsf{R}}{\mathsf{r}}{\mathsf{Q}\ast \mathsf{R}}} 
                    {\mathsf{fv(R)\subseteq\mathbb{X}_\mathsf{l}} & \nc{\mathsf{P}}{\mathsf{r}}{\mathsf{Q}} & \mathsf{Q\logicheapcomp R}}$ &
                $\infer{\nc{\mathsf{\exists \mathsf{X}.P}}{\mathsf{r}}{\mathsf{\exists \mathsf{X}.Q}}}{\nc{\mathsf{P}}{\mathsf{r}}{\mathsf{Q}} & \mathsf{X}\subseteq \mathbb{X}_\mathsf{l}}$ &
                $\infer
                    {\nc{\mathsf{\cup_{i\in I}P_i}}{\mathsf{r}}{\mathsf{\cup_{i\in I}Q_i}}} 
                    {\forall i\in I. \nc{\mathsf{P_i}}{\mathsf{r}}{\mathsf{Q_i}}}$ &
                $\infer
                    {\nc{\mathsf{P}}{\mathsf{r_1;r_2}}{\mathsf{Q}}} 
                    {\nc{\mathsf{P}}{\mathsf{r_1}}{\mathsf{S}} & \nc{{S}}{\mathsf{r_2}}{\mathsf{Q}}}$ 
            \end{tabular} \\
            \begin{tabular}{@{}l@{\hspace{10pt}}l@{\hspace{10pt}}l@{\hspace{10pt}}l@{}}
                \nclabel{Cons}{\circled{i}} & \nclabel{Iterate}{\circled{i}} & \nclabel{Choice}{\circled{i}} \\
                $\infer
                    {\nc{\mathsf{P}}{\mathsf{r}}{\mathsf{Q}}} 
                    {\mathsf{P'} \Rightarrow \mathsf{P} & \nc{\mathsf{P}'}{\mathsf{r}}{\mathsf{Q}'} & \mathsf{Q} \Rightarrow \mathsf{Q}'}$ &
                $\infer{\nc{\mathsf{Q}}{\mathsf{r}^\ast}{\mathsf{Q}}} {\nc{\mathsf{Q}}{\mathsf{r}}{\mathsf{Q}}}$ &
                $\infer
                    {\nc{\mathsf{P}}{\mathsf{r_1+r_2}}{\mathsf{Q}}} 
                    {\forall i\in \{1,2\}. \nc{\mathsf{P}}{\mathsf{r_i}}{\mathsf{Q}}}$ 
            \end{tabular}
            \end{tabular}
            }}
     \caption{NC+: Synthesized proof system for backward correctness analysis in both memory models $\circled{i}\in \{\circled{1}, \circled{2}\}$.}
    \Description{Logical proof systems NC+}
    \label{fig:nc+}
\end{figure*}

\revision{
The goal of backward over-approximation logics, such as NC, is to characterise, for a given postcondition $Q$, a precondition that covers \emph{all} initial states with at least one terminating run in one state that satisfies $Q$.
Equivalently, this property ensure that no state in $Q$ is reachable from outside $P$.
Although necessary condition triples were already introduced more than a decade ago by \citet{DBLP:conf/vmcai/CousotCFL13}, a proof system for establishing such triples for either regular programs or heap-manipulating ones had remained missing until now. Thanks to our methodology, such a proof system can be straightforwardly obtained from the common axioms of backward separation logic by simply adding the structural inference rules and the right consequence rule, since in this setting we are over-approximating the preconditions.
The resulting proof system is listed in Fig.~\ref{fig:nc+} and it is parametric to the underlying memory model.
}

\revisiontmp{
The next example illustrates how backward analyses can largely differ from forward ones.}

\begin{example}
\revision{Consider the triples
    \(
    \vdash_{\mathsf{SIL+}}^{\circled{i}} 
    \mathsf{\sil{x\mapsto 1}{\mathsf{r}}{Q}}
    \) and \(
    \vdash_{\mathsf{NC+}}^{\circled{i}}
    \mathsf{\nc{x\mapsto z'}{\mathsf{r}}{Q}}
    \)
    where 
    $\mathsf{r} = (\mathsf{free(x)} + \skipexp); \mathsf{y:=[x]}$
    and 
    $\mathsf{Q=x\mapsto z' \wedge y>0 \wedge z'<3}$ is a correctness specification.
The SIL+ triple is only valid for backward, under-approximation analysis and asserts that any initial state satisfying $\mathsf{x\mapsto 1}$ has a terminating execution in $\mathsf{Q}$.
The NC+ triple is only valid for backward, over-approximation analysis and asserts that only initial states that satisfy $\mathsf{x\mapsto z'}$ could have a terminating execution in $\mathsf{Q}$.
Neither of the triples specifies which final states are reachable.}
\end{example}

\section{Conclusions and future work} \label{sec:conclusion}
We presented a methodology for designing \revision{separation} logics, grounded in collecting semantics, the locality principle, and closure operators. 
By combining semantic closure reasoning with universal frames over logical variables, the method synthesizes sound and expressive proof systems for forward/backward and over/under-approximate analyses across different memory models. 
\revision{The approach unifies, generalizes, and improves SL, ISL, CISL, and SepSIL, and offers a disciplined way to deal with language extensions and to derive new logics with limited additional effort. 
At the core of the approach lies a closure-based characterization of valid triples, which allows local axioms to be derived as minimal elements from the program semantics.
The main outcomes of our systematic design are summarized in Table~\ref{tab:summary+}.}
Because the treatment is semantic, the approach is fully general: beyond the two memory models studied, it applies to any semantics and evaluation model. Moreover, the proof systems introduced for the forward analysis use forward axioms (i.e., axioms applicable to any precondition) to highlight the duality of the analysis, yet by disregarding this duality and focusing on over-approximations, one can still define a Hoare-style assignment axiom incorporating the principle of locality, even when logical variables are employed. We believe our work contributes to a deeper understanding of the semantic foundations of separation logics, also offering practical tools for the scalable and modular verification of heap-manipulating programs.

\revision{
Our methodology has been validated in two ways: by a prototype tool developed as part of a master's course project, which is able to perform forward reasoning, and by the mechanization of several core results in Isabelle/HOL. 
However, they were not the final goal.
A more promising direction is to instrument existing verification tools with our closure-guided principle for deriving axioms. By largely reusing, or only lightly extending, current codebases, one can obtain a principled selection of axiom sets and adapt bi-abduction procedures to match the direction and sense of the analysis (forward vs.\ backward), thereby broadening both expressiveness and applicability. In this setting, the ability to share axioms across different analyses, together with the notion of universal frames, can play a key role in integrating heterogeneous analyses within a unified reasoning framework and in deriving more expressive code summaries. From this perspective, our framework finds potential applications in improving the capabilities of tools such as Infer \cite{DBLP:journals/cacm/DistefanoFLO19}, Pulse \cite{DBLP:journals/pacmpl/LeRVBDO22} and Iris \cite{DBLP:journals/jfp/JungKJBBD18,DBLP:journals/pacmpl/VistrupSJ25}.
}

\revision{Future work also includes the investigation of richer assertion languages, concurrency beyond heap safety, and the extension of bi-abduction to backward reasoning. 
While in this paper the focus is mostly on the systematic design of local axioms for atomic commands, the next steps will be to refine the concept of minimality further and to see if it can be incorporated within the calculational approach of \citet{DBLP:journals/pacmpl/Cousot24}. 
We also plan to study the exploitation of logical variables within probabilistic separation logics \cite{DBLP:journals/pacmpl/BatzKKMN19, DBLP:journals/pacmpl/BartheHL20,DBLP:journals/pacmpl/ZilbersteinST26} and hyper Hoare logics \cite{DBLP:journals/pacmpl/DardinierM24}, along the lines of \citet{DBLP:journals/pacmpl/CousotW25}.}

\section{Data-Availability Statement} \label{sec:data_availability}
\revision{We have developed two supporting tools based on our findings in the paper.}

\revision{A prototype implementing the core ideas of SL+ and ISL+ was developed within a master’s course project. Its objective was to show that our methodology can effectively support both over- and under-approximation analyses, and that logical variables can be practically integrated.} 

\revision{Moreover, the crucial steps of our systematic design methodology have been mechanized in Isabelle/HOL to gain confidence in their correctness and to verify that the locality principle was correctly addressed in our logical axioms.}

\revision{Both artifacts constitute just proof of concepts and they are not intended to achieve competitive performance, precision, or usability.}

\begin{acks}
The research has been supported by the Italian Ministero dell'Università e della Ricerca under Grant No. P2022HXNSC, PRIN 2022 PNRR -- \emph{Resource Awareness in Programming: Algebra, Rewriting, and Analysis}.
\end{acks}

\bibliographystyle{ACM-Reference-Format}
\bibliography{bibliography}

\clearpage

\appendix
\section{Glossary}\label{sec:glossary}
In this appendix, we provide a set of tables that can guide the reader through the various notational conventions used throughout the main text.
Table~\ref{tab:ast} recalls the meaning of the structural assertions.
Table~\ref{tab:arrows} summarizes the superscripts employed to indicate the orientation and interpretation of closure operators and semantic triples, as introduced in Section~\ref{sec:systematic_program_logics}.
Finally, Table~\ref{tab:proof_system_validity} lists all the proof systems discussed in this work, along with their corresponding validity conditions and classifications according to the analytical dimensions considered.

\begin{table}[h]
    \centering
    \renewcommand{\arraystretch}{1.3}
    \begin{tabular}{|c|c|}
        \hline
        \textbf{Structural assertion} & \textbf{Meaning} \\
        \hline
        $\mathsf{\emp}$ & Empty heap \\
        $\mathsf{x\mapsto e}$ &  Single location in the heap pointed by $\mathsf{x}$ whose content is $\mathsf{e}$ \\
        $\mathsf{x\not\mapsto}$ & Single location in the heap pointed by $\mathsf{x}$ which has been deallocated \\
        $\mathsf{P\ast Q}$ & $\mathsf{P}$ and $\mathsf{Q}$ hold in two disjoint portions of the heap \\
        \hline
        \textcolor{noveltycolor}{$\mathsf{x \nnmapsto}$} & \textcolor{noveltycolor}{Single location pointed by $\mathsf{x}$ that cannot be framed} \\
        \hline
    \end{tabular}
    \caption{Intuitive meaning of structural assertions. The assertion \textcolor{noveltycolor}{$\mathsf{x \nnmapsto}$} is novel to this contribution.}
    \label{tab:ast}
\end{table}

\begin{table}[h]
    \centering
    \renewcommand{\arraystretch}{1.3}
    \begin{tabular}{|c|c|c|c|c|c|c|c|}
        \hline
         & $\rightarrowoverdot$ & $\rightarrowbelowdot$ & $\leftarrowoverdot$ & $\leftarrowbelowdot$ & $\forwardarrows$ & $\backwardarrows$ & $\leftrightleftrightarrows$ \\
         \hline
         Over &\checkmark&&\checkmark&&\checkmark&\checkmark& \checkmark\\ 
         Under &&\checkmark&&\checkmark&\checkmark&\checkmark&\checkmark\\
         Forward &\checkmark&\checkmark&&&\checkmark&& \checkmark\\
         Backward &&&\checkmark&\checkmark&&\checkmark& \checkmark\\
        \hline
    \end{tabular}
    \caption{Arrows validity (symbols appearing as superscripts in closures and triples).}
    \label{tab:arrows}
\end{table}

\begin{table}[h]
    \centering
    \renewcommand{\arraystretch}{1.2}
    \begin{tabular}{c|c|c|c|c|c|}
        Proof System & Validity & Direction & Sense & Memory model & Error Handling \\
        \hline
        $\proofsystem{SL+}{\circled{1}} \mathsf{\hl{P}{\mathsf{r}}{Q}}$ & $\semantics{r}{SL}{\circled{1}} \llbrace \mathsf{P}\rrbrace \subseteq \llbrace \mathsf{Q}\rrbrace$ & Forward & Over & No reallocation & $\times$ \\ 
        $\proofsystem{SL+}{\circled{2}} \mathsf{\hl{P}{\mathsf{r}}{Q}}$ & $\semantics{r}{SL}{\circled{2}} \llbrace \mathsf{P}\rrbrace \subseteq \llbrace \mathsf{Q}\rrbrace$ & Forward & Over & Reallocation & $\times$ \\
        $\proofsystem{ISL+}{\circled{1}} \mathsf{\il{P}{\mathsf{r}}{Q}}$ & $\llbrace \mathsf{Q}\rrbrace \subseteq \semantics{r}{ISL}{\circled{1}} \llbrace \mathsf{P}\rrbrace$ & Forward & Under & No reallocation & $\checkmark$ \\
        $\proofsystem{ISL+}{\circled{2}} \mathsf{\il{P}{\mathsf{r}}{Q}}$ & $\llbrace \mathsf{Q}\rrbrace \subseteq \semantics{r}{ISL}{\circled{2}} \llbrace \mathsf{P}\rrbrace$ & Forward & Under & Reallocation & $\checkmark$ \\
        $\proofsystem{NC+}{\circled{1}} \mathsf{\nc{P}{\mathsf{r}}{Q}}$ & $\semantics{\overleftarrow{\mathsf{r}}}{SL}{\circled{1}} \llbrace \mathsf{Q}\rrbrace \subseteq \llbrace \mathsf{P}\rrbrace$ & Backward & Over & No reallocation & $\times$ \\
        $\proofsystem{NC+}{\circled{2}} \mathsf{\nc{P}{\mathsf{r}}{Q}}$ & $\semantics{\overleftarrow{\mathsf{r}}}{SL}{\circled{2}} \llbrace \mathsf{Q}\rrbrace \subseteq \llbrace \mathsf{P}\rrbrace$ & Backward & Over & Reallocation & $\times$ \\
        $\proofsystem{SIL+}{\circled{1}} \mathsf{\sil{P}{\mathsf{r}}{Q}}$ & $\llbrace \mathsf{P}\rrbrace \subseteq \semantics{\overleftarrow{\mathsf{r}}}{SL}{\circled{1}} \llbrace \mathsf{Q}\rrbrace$ & Backward & Under & No reallocation & $\times$ \\
        $\proofsystem{SIL+}{\circled{2}} \mathsf{\sil{P}{\mathsf{r}}{Q}}$ & $\llbrace \mathsf{P}\rrbrace \subseteq \semantics{\overleftarrow{\mathsf{r}}}{SL}{\circled{2}} \llbrace \mathsf{Q}\rrbrace$ & Backward & Under & Reallocation & $\times$ \\
    \end{tabular}
    \caption{Validity conditions for each of the considered proof systems, detailing their direction of the analysis, sense of approximation, memory model, and whether or not they explicitly account for errors.}
    \label{tab:proof_system_validity}
\end{table}

\clearpage
\section{Program semantics}\label{sec:program_semantics}
In this appendix we spell out the full details of the collecting semantics considered in the paper.
The subscripts $\circled{1}$ and $\circled{2}$ denote the reference memory model.
As usual, we use the subscript $\circled{i}$ whenever the definition is independent from the memory model.

\begin{figure}[h]
    \centering
        \[
            \begin{aligned}
            \semantics{b?}{SL}{\circled{i}} (s,h)&\triangleq
            \begin{cases}
              \{(s,h)\} & \text{if } \llparenthesis \mathsf{b}\rrparenthesis s = \true \\
              \emptyset & \text{otherwise}
            \end{cases}\\
            \semantics{error()}{SL}{\circled{i}} (s,h)&\triangleq \{\abort\} \\
            \semantics{x:=e}{SL}{\circled{i}} (s,h)&\triangleq \{(s[\mathsf{x} \mapsto \llparenthesis \mathsf{e}\rrparenthesis s], h)\} \\
            \semantics{x:=[y]}{SL}{\circled{i}} (s,h)&\triangleq 
            \begin{cases}
              \{(s[\mathsf{x}\mapsto h(s(\mathsf{y}))],h)\} & \text{if } h(s(\mathsf{y}))\in \mathbb{V} \\
              \{\abort\} & \text{otherwise}
            \end{cases} \\
            \semantics{[x]:=y}{SL}{\circled{i}} (s,h)&\triangleq 
            \begin{cases}
              \{(s, h[s(\mathsf{x})\mapsto s(\mathsf{y})])\} & \text{if } h(s(\mathsf{x}))\in \mathbb{V} \\
              \{\abort\} & \text{otherwise}
            \end{cases}\\
            \semantics{x:=alloc()}{SL}{\circled{1}} (s,h) &\triangleq \{(s[\mathsf{x}\mapsto l], h[l\mapsto v])\;|\;v\in \mathbb{V}, l\notin dom(h)\} \\
            \semantics{x:=alloc()}{SL}{\circled{2}} (s,h) &\triangleq \{(s[\mathsf{x}\mapsto l], h[l\mapsto v])\;|\;v\in \mathbb{V}, l\notin dom(h) \vee h(l)=\bot\} \\
            \semantics{free(x)}{SL}{\circled{1}} (s,h)&\triangleq 
            \begin{cases}
              \{(s,h')\} & \text{if } h=h'\bullet [s(\mathsf{x})\mapsto v], v \in \mathbb{V} \\
              \{\abort\} & \text{otherwise}
            \end{cases}\\
            \semantics{free(x)}{SL}{\circled{2}} (s,h)&\triangleq 
            \begin{cases}
              \{(s,h'\bullet [s(\mathsf{x})\mapsto \bot])\} & \text{if } h=h'\bullet [s(\mathsf{x})\mapsto v], v \in \mathbb{V} \\
              \{\abort\} & \text{otherwise}
            \end{cases}\\
            \semantics{r_1+r_2}{SL}{\circled{i}} (s,h)&\triangleq \semantics{r_1}{SL}{\circled{i}} (s,h) \cup \semantics{r_2}{SL}{\circled{i}} (s,h) \\
            \semantics{r_1;r_2}{SL}{\circled{i}} (s,h) &\triangleq \{(s',h')\;|\; (s'',h'')\in \semantics{r_1}{SL}{\circled{i}} (s,h), (s',h')\in \semantics{r_2}{SL}{\circled{i}} (s'',h'')\} \\
            \semantics{r^\ast}{SL}{\circled{i}} (s,h) &\triangleq \bigcup_{\mathsf{i} \in \mathbb{N}} \semantics{r^i}{SL}{\circled{i}} (s,h) \quad \text{with} \quad \mathsf{r^0} \triangleq\skipexp \quad\text{and} \quad \mathsf{r^{i+1}}\triangleq \mathsf{r;r^i} \\
        \end{aligned}
        \]
    \caption{The SL denotational semantics \cite{DBLP:conf/lics/Reynolds02}.}
    \Description{Denotational semantics of SL}
    \label{fig:sl_semantics}
\end{figure}

\begin{figure}
    \centering
        \[
            \begin{aligned}
            \semantics{b?}{ISL}{\circled{i}} \oker:(s,h)&\triangleq
            \begin{cases}
              \{\ok:(s,h)\} & \text{if } \llparenthesis \mathsf{b}\rrparenthesis s = \true \\
              \emptyset & \text{otherwise}
            \end{cases}\\
            \semantics{error()}{ISL}{\circled{i}} \oker:(s,h)&\triangleq \{\er:(s,h)\} \\
            \semantics{x:=e}{ISL}{\circled{i}} \oker:(s,h)&\triangleq \{\ok:(s[\mathsf{x} \mapsto \llparenthesis \mathsf{e}\rrparenthesis s], h)\} \\
            \semantics{x:=[y]}{ISL}{\circled{i}} \oker:(s,h)&\triangleq 
            \begin{cases}
              \{\ok:(s[\mathsf{x}\mapsto h(s(\mathsf{y}))],h)\} & \text{if } h(s(\mathsf{y}))\in \mathbb{V} \\
              \{\er:(s,h)\} & \text{otherwise}
            \end{cases} \\
            \semantics{[x]:=y}{ISL}{\circled{i}} \oker:(s,h)&\triangleq 
            \begin{cases}
              \{\ok:(s, h[s(\mathsf{x})\mapsto s(\mathsf{y})])\} & \text{if } h(s(\mathsf{x}))\in \mathbb{V} \\
              \{\er:(s,h)\} & \text{otherwise}
            \end{cases}\\
            \semantics{x:=alloc()}{ISL}{\circled{1}} \oker:(s,h) &\triangleq \{\ok:(s[\mathsf{x}\mapsto l], h[l\mapsto v])\;|\;v\in \mathbb{V}, l\notin dom(h)\} \\
            \semantics{x:=alloc()}{ISL}{\circled{2}} \oker:(s,h) &\triangleq \{\ok:(s[\mathsf{x}\mapsto l], h[l\mapsto v])\;|\;v\in \mathbb{V}, l\notin dom(h) \vee h(l)=\bot\} \\
            \semantics{free(x)}{ISL}{\circled{1}} \oker:(s,h)&\triangleq 
            \begin{cases}
              \{\ok:(s,h')\} & \text{if } h=h'\bullet [s(\mathsf{x})\mapsto v], v \in \mathbb{V} \\
              \{\er:(s,h)\} & \text{otherwise}
            \end{cases}\\
            \semantics{free(x)}{ISL}{\circled{2}} \oker:(s,h)&\triangleq 
            \begin{cases}
              \{\ok:(s,h'\bullet [s(\mathsf{x})\mapsto \bot])\} & \text{if } h=h'\bullet [s(\mathsf{x})\mapsto v], v \in \mathbb{V} \\
              \{\er:(s,h)\} & \text{otherwise}
            \end{cases}\\
            \semantics{r_1+r_2}{ISL}{\circled{i}} \oker:(s,h)&\triangleq \semantics{r_1}{ISL}{\circled{i}} \oker:(s,h) \cup \semantics{r_2}{ISL}{\circled{i}} \oker:(s,h) \\
            \semantics{r_1;r_2}{ISL}{\circled{i}} \oker:(s,h) &\triangleq 
            \{\er: (s',h')\;|\; \er: (s',h')\in \semantics{r_1}{ISL}{\circled{i}} \oker:(s,h)\} \\
            &\cup \{\oker:(s',h')\;|\; \ok:(s'',h'')\in \semantics{r_1}{ISL}{\circled{i}} \oker:(s,h), \\
            &\phantom{\cup \{\textcolor{blue}{\epsilon: (s',h')}\;|\; \;\;} \oker:(s',h')\in \semantics{r_2}{ISL}{\circled{i}} \ok: (s'',h'')\} \\
            \semantics{r^\ast}{ISL}{\circled{i}} \oker:(s,h) &\triangleq \bigcup_{\mathsf{i} \in \mathbb{N}} \semantics{r^i}{ISL}{\circled{i}} \oker:(s,h) \quad \text{with} \quad \mathsf{r^0} \triangleq\skipexp \quad\text{and} \quad \mathsf{r^{i+1}}\triangleq \mathsf{r;r^i} \\
        \end{aligned}
        \]
    \caption{The ISL denotational semantics \cite{DBLP:conf/cav/RaadBDDOV20}.}
    \Description{Denotational semantics of ISL}
    \label{fig:isl_semantics}
\end{figure}

\clearpage
\section{Proof systems}\label{sec:proof_systems}
In this appendix, we report the proof systems omitted from the main text.
In particular, Figures~\ref{fig:sl+},~\ref{fig:isl+},~\ref{fig:sil+} and~\ref{fig:nc+} contain the full set of inference rules for forward and backward analyses. We can immediately observe that proof systems for forward/backward analysis share the same axioms for atomic commands for both over- and under-approximation.

\begin{figure*}
    \centering
    \resizebox{.99\linewidth}{!}{\fbox{\renewcommand{\arraystretch}{2.5}
    \begin{tabular}{c}
        \begin{tabular}{c}
            \begin{tabular}{@{}l@{\hspace{10pt}}l@{\hspace{10pt}}l@{\hspace{10pt}}l@{}}
                \hllabel{Alloc2}{\circled{2}} & \hllabel{Free}{\circled{1}} & \hllabel{Free2}{\circled{2}}\\
                $\infer{\hl{\mathsf{\emp_{\progvar} \ast l'\not\mapsto}}{\mathsf{x:=alloc()}}{\mathsf{\emp_{\progvar\setminus \{x\}} \ast l'\mapsto z'\wedge x=l'}}}{}$ &
                $\infer{\hl{\mathsf{\emp_{\progvar}\ast x\mapsto z'}}{\mathsf{free(x)}}{\mathsf{\emp_{\progvar}}}}{}$ &
                $\infer{\hl{\mathsf{\emp_{\progvar}\ast x\mapsto z'}}{\mathsf{free(x)}}{\mathsf{\emp_{\progvar}\ast x\not\mapsto}}}{}$ 
            \end{tabular}
        \end{tabular}\\
    
        \begin{tabular}{c}
            \begin{tabular}{@{}l@{\hspace{10pt}}l@{\hspace{10pt}}l@{\hspace{10pt}}l@{}}
                \hllabel{Alloc}{\circled{i}} & \hllabel{Assume}{\circled{i}} & \hllabel{Assign}{\circled{i}} & \\
                $\infer{\hl{\mathsf{\emp_{\progvar}}}{\mathsf{x:=alloc()}}{\mathsf{\emp_{\progvar\setminus \{x\}} \ast x\mapsto z'}}}{}$ &
                $\infer{\hl{\mathsf{\emp_{\progvar}}}{\mathsf{b?}}{\mathsf{\emp_{\progvar}\wedge b}}}{}$ &
                $\infer{\hl{\mathsf{\emp_{\progvar\setminus \{x\}}\wedge x=x'}}{\mathsf{x:=e}}{\mathsf{\emp_{\progvar\setminus \{x\}}\wedge x=e[x'/x]}}}{}$ &
            \end{tabular}
        \end{tabular}\\

        \begin{tabular}{c}
            \begin{tabular}{@{}l@{\hspace{10pt}}l@{\hspace{10pt}}l@{\hspace{10pt}}l@{}}
                \hllabel{Load}{\circled{i}} & \hllabel{Store}{\circled{i}} \\
                $\infer{\hl{\mathsf{\emp_{\progvar}\ast y\mapsto z'}}{\mathsf{x:=[y]}}{\mathsf{\emp_{\progvar\setminus \{x\}} \ast y\mapsto z' \wedge x=z'}}}{}$ &
                $\infer{\hl{\mathsf{\emp_{\progvar}\ast x\mapsto z'}}{\mathsf{[x]:=y}}{\mathsf{\emp_{\progvar}\ast x\mapsto y}}}{}$ 
            \end{tabular}
        \end{tabular}\\[0.4cm]
        
        \hline 
        
        \begin{tabular}{@{}l@{\hspace{10pt}}l@{\hspace{10pt}}l@{\hspace{10pt}}l@{}}
            \hllabel{Frame}{\circled{i}} & \hllabel{Cons}{\circled{i}} & \hllabel{Iterate}{\circled{i}} & \hllabel{Exists}{\circled{i}} \\
            $\infer
                {\hl{\mathsf{P}\ast \mathsf{R}}{\mathsf{r}}{\mathsf{Q}\ast \mathsf{R}}} 
                {\mathsf{fv(R)\subseteq \logicvar} & \hl{\mathsf{P}}{\mathsf{r}}{\mathsf{Q}} & \mathsf{P\logicheapcomp R}}$ &
            $\infer
                {\hl{\mathsf{P}}{\mathsf{r}}{\mathsf{Q}}} 
                {\mathsf{P} \Rightarrow \mathsf{P}' & \hl{\mathsf{P}'}{\mathsf{r}}{\mathsf{Q}'} & \mathsf{Q}'  \Rightarrow \mathsf{Q}}$ &
            $\infer{\hl{\mathsf{P}}{\mathsf{r}^\ast}{\mathsf{P}}} 
                   {\hl{\mathsf{P}}{\mathsf{r}}{\mathsf{P}}}$ &
            $\infer{\hl{\mathsf{\exists X.P}}{\mathsf{r}}{\mathsf{\exists X.Q}}}{\hl{\mathsf{P}}{\mathsf{r}}{\mathsf{Q}}& \mathsf{X}\subseteq \logicvar}$
        \end{tabular} \\[-10pt]

        \begin{tabular}{@{}l@{\hspace{10pt}}l@{\hspace{10pt}}l@{}}
            \hllabel{Seq}{\circled{i}} & \hllabel{Choice}{\circled{i}} & \hllabel{Disj}{\circled{i}} \\
            $\infer
                {\hl{\mathsf{P}}{\mathsf{r_1;r_2}}{\mathsf{Q}}} 
                {\hl{\mathsf{P}}{\mathsf{r_1}}{\mathsf{S}} & \hl{\mathsf{S}}{\mathsf{r_2}}{\mathsf{Q}}}$ &
            $\infer{\hl{\mathsf{P}}{\mathsf{r_1+r_2}}{\mathsf{Q}}} 
                   {\forall i\in \{1,2\}. \hl{\mathsf{P}}{\mathsf{r_i}}{\mathsf{Q}}}$ &
            $\infer{\hl{\vee_{i\in I} \mathsf{P_i}}{\mathsf{r}}{\vee_{i\in I}\mathsf{Q_i}}} 
                   {\forall i\in I. \hl{\mathsf{P_i}}{\mathsf{r}}{\mathsf{Q_i}}}$
        \end{tabular} 
    \end{tabular}
     }}
     \caption{SL+: Synthesized proof system for forward correctness analysis in both memory models $\circled{i}\in \{\circled{1}, \circled{2}\}$.}
    \Description{Logical proof systems SL+}
    \label{fig:sl+}
\end{figure*}

\begin{figure*}[t]
    \centering
    \resizebox{.99\linewidth}{!}{\fbox{\renewcommand{\arraystretch}{2.5}
        \begin{tabular}{c}
            \begin{tabular}{c}
                \begin{tabular}{@{}l@{\hspace{10pt}}l@{\hspace{10pt}}l@{}}
                    \illabel{Alloc2}{\circled{2}} & \illabel{Free1}{\circled{1}} & \illabel{Free2}{\circled{2}} \\
                    $\infer{\il[blue][mygreen]{\mathsf{\emp_{\progvar}\ast l'\not\mapsto}}{\mathsf{x:=alloc}()}{\mathsf{\emp_{\progvar\setminus \{x\}}\ast l'\mapsto z'\wedge x=l'}}} {}$ &
                    $\infer{\il[blue][mygreen]{\mathsf{\emp_{\progvar} \ast x\mapsto z'}}{\mathsf{free(x)}}{\emp_{\progvar}}} {}$ &
                    $\infer{\il[blue][mygreen]{\mathsf{\emp_{\progvar} \ast x\mapsto z'}}{\mathsf{free(x)}}{\emp_{\progvar} \ast\mathsf{x\not\mapsto}}} {}$
                \end{tabular} \\[-10pt]

                \begin{tabular}{@{}l@{\hspace{10pt}}l@{\hspace{10pt}}l@{}}
                    \illabel{FreeEr3}{\circled{2}} & \illabel{LoadEr3}{\circled{2}} & \illabel{StoreEr3}{\circled{2}}  \\
                    $\infer{\il[blue][red]{\mathsf{\emp_{\progvar}\ast x\not\mapsto}}{\mathsf{free(x)}}{\mathsf{\emp_{\progvar}\ast x\not\mapsto}}} {}$ &
                    $\infer{\il[blue][red]{\mathsf{\emp_{\progvar} \ast y \not \mapsto}}{\mathsf{x:=[y]}}{\mathsf{\emp_{\progvar} \ast y \not \mapsto}}} {}$  &
                    $\infer{\il[blue][red]{\mathsf{\emp_{\progvar}\ast x \not \mapsto}}{\mathsf{[x]:=y}}{\mathsf{\emp_{\progvar}\ast x \not \mapsto}}} {}$  
                \end{tabular} \\[-10pt]
                
                \begin{tabular}{@{}l@{\hspace{10pt}}l@{\hspace{10pt}}l@{}}
                    \illabel{Assign}{\circled{i}} & \illabel{Assume}{\circled{i}} & \illabel{Error}{\circled{i}} \\
                    $\infer{\il[blue][mygreen]{\mathsf{\emp_{\progvar\setminus \{x\}} \wedge x=x'}}{\mathsf{x:=e}}{\mathsf{\emp_{\progvar\setminus\{x\}} \wedge x=e[x'/x]}}} {}$ &
                    $\infer{\il[blue][mygreen]{\mathsf{\emp_{\progvar}}}{\mathsf{b?}}{\mathsf{\emp_{\progvar}\wedge b}}} {}$ &
                    $\infer{\il[blue][red]{\mathsf{\emp_{\progvar}}}{\mathsf{error()}}{\mathsf{\emp_{\progvar}}}} {}$ 
                \end{tabular} \\[-10pt]

                \begin{tabular}{@{}l@{\hspace{10pt}}l@{\hspace{10pt}}l@{}}
                    \illabel{Alloc}{\circled{i}} & \illabel{Load}{\circled{i}} & \illabel{Store}{\circled{i}}\\
                    $\infer{\il[blue][mygreen]{\mathsf{\emp_{\progvar}}}{\mathsf{x:=alloc}()}{\mathsf{\emp_{\progvar\setminus \{x\}}\ast x\mapsto z'}}} {}$ &
                    $\infer{\il[blue][mygreen]{\emp_{\progvar} \ast\mathsf{y \mapsto z'}}{\mathsf{x:=[y]}}{\emp_{\progvar\setminus \{x\}} \ast\mathsf{y \mapsto z' \wedge x= z'}}} {}$ &
                    $\infer{\il[blue][mygreen]{\mathsf{\emp_{\progvar} \ast x \mapsto z'}}{\mathsf{[x]:=y}}{\mathsf{\emp_{\progvar} \ast x\mapsto y}}} {}$ 
                \end{tabular} \\[-10pt]

                \begin{tabular}{@{}l@{\hspace{10pt}}l@{\hspace{10pt}}l@{}}
                    \illabel{FreeEr1}{\circled{i}} & \illabel{LoadEr1}{\circled{i}} & \illabel{StoreEr1}{\circled{i}}\\
                    $\infer{\il[blue][red]{\mathsf{\emp_{\progvar}\ast x\doteq null}}{\mathsf{free(x)}}{\mathsf{\emp_{\progvar}\ast x\doteq null}}} {}$  &
                    $\infer{\il[blue][red]{\mathsf{\emp_{\progvar} \ast y\doteq null}}{\mathsf{x:=[y]}}{\mathsf{\emp_{\progvar} \ast y\doteq null}}} {}$  &
                    $\infer{\il[blue][red]{\mathsf{\emp_{\progvar}\ast x\doteq null}}{\mathsf{[x]:=y}}{\mathsf{\emp_{\progvar}\ast x\doteq null}}} {}$  
                \end{tabular} \\[-10pt]

                \begin{tabular}{@{}l@{\hspace{10pt}}l@{\hspace{10pt}}l@{}}
                    \illabel{FreeEr2}{\circled{i}} & \illabel{LoadEr2}{\circled{i}} & \illabel{StoreEr2}{\circled{i}}\\
                    $\infer{\il[blue][red]{\mathsf{\emp_{\progvar}\ast x\nnmapsto}}{\mathsf{free(x)}}{\mathsf{\emp_{\progvar}\ast x\nnmapsto}}} {}$  &
                    $\infer{\il[blue][red]{\mathsf{\emp_{\progvar} \ast y\nnmapsto}}{\mathsf{x:=[y]}}{\mathsf{\emp_{\progvar} \ast y\nnmapsto}}} {}$  &
                    $\infer{\il[blue][red]{\mathsf{\emp_{\progvar}\ast x\nnmapsto}}{\mathsf{[x]:=y}}{\mathsf{\emp_{\progvar}\ast x\nnmapsto}}} {}$  
                \end{tabular} \\
                
            \end{tabular}\\
            \hline 
            
            \begin{tabular}{@{}l@{\hspace{10pt}}l@{\hspace{10pt}}l@{\hspace{10pt}}l@{\hspace{10pt}}l@{}}
                \illabel{Frame}{\circled{i}} & \illabel{Cons}{\circled{i}} & \illabel{Iterate-zero}{\circled{i}} & \illabel{Iterate}{\circled{i}} & \illabel{Exists}{\circled{i}}\\
                $\infer
                    {\il[blue][blue]{\mathsf{P}\ast \mathsf{R}}{\mathsf{r}}{\mathsf{Q}\ast \mathsf{R}}} 
                    {\mathsf{P\logicheapcomp R} & \il[blue][blue]{\mathsf{P}}{\mathsf{r}}{\mathsf{Q}} & \mathsf{fv(R)\subseteq \logicvar}}$ &
                $\infer
                    {\il[blue][blue]{\mathsf{P}}{\mathsf{r}}{\mathsf{Q}}} 
                    {\mathsf{P}' \Rightarrow \mathsf{P} & \il[blue][blue]{\mathsf{P}'}{\mathsf{r}}{\mathsf{Q}'} & \mathsf{Q} \Rightarrow \mathsf{Q}'}$ &
                $\infer{\il[blue][blue]{\mathsf{P}}{\mathsf{r}^\ast}{\mathsf{P}}} {}$ &
                $\infer{\il[blue][blue]{\mathsf{P}}{\mathsf{r}^\ast}{\mathsf{Q}}} {\il[blue][blue]{\mathsf{P}}{\mathsf{r^\ast;r}}{\mathsf{Q}}}$ &
                $\infer{\il[blue][blue]{\mathsf{\exists \mathsf{X}.P}}{\mathsf{r}}{\mathsf{\exists \mathsf{X}.Q}}}{\il[blue][blue]{\mathsf{P}}{\mathsf{r}}{\mathsf{Q}} & \mathsf{X}\subseteq \logicvar}$
            \end{tabular} \\[-10pt]

            \begin{tabular}{@{}l@{\hspace{10pt}}l@{\hspace{10pt}}l@{\hspace{10pt}}l@{}}
                \illabel{SeqEr}{\circled{i}} & \illabel{Seq}{\circled{i}} & \illabel{Choice}{\circled{i}} & \illabel{Disj}{\circled{i}} \\
                $\infer
                    {\il[blue][red]{\mathsf{P}}{\mathsf{r_1;r_2}}{\mathsf{Q}}} 
                    {\il[blue][red]{\mathsf{P}}{\mathsf{r_1}}{\mathsf{Q}}}$ &
                $\infer
                    {\il[blue][blue]{\mathsf{P}}{\mathsf{r_1;r_2}}{\mathsf{Q}}} 
                    {\il[blue][mygreen]{\mathsf{P}}{\mathsf{r_1}}{\mathsf{S}} & \il[blue][blue]{{S}}{\mathsf{r_2}}{\mathsf{Q}}}$ &
                $\infer
                    {\il[blue][blue]{\mathsf{P}}{\mathsf{r_1+r_2}}{\mathsf{Q_1} \vee \mathsf{Q_2}}}
                    {\forall i\in \{1,2\}. \il[blue][blue]{\mathsf{P}}{\mathsf{r_i}}{\mathsf{Q_i}}}$ &
                $\infer
                    {\il[blue][blue]{\vee_{i\in I}\mathsf{P_i}}{\mathsf{r}}{\vee_{i\in I}\mathsf{Q_i}}}
                    {\forall i\in I. \il[blue][blue]{\mathsf{P_i}}{\mathsf{r}}{\mathsf{Q_i}}}$
            \end{tabular} 
        \end{tabular}
     }}       
     \caption{ISL+: Synthesized proof system for forward incorrectness analysis in both memory models $\circled{i}\in \{\circled{1}, \circled{2}\}$.}
    \Description{Logical proof systems ISL+}
    \label{fig:isl+}
\end{figure*}

\begin{figure*}[ht]
    \centering
    \resizebox{.99\linewidth}{!}{\fbox{\renewcommand{\arraystretch}{2.5}
        \begin{tabular}{c}
            \begin{tabular}{c}
                \begin{tabular}{@{}l@{\hspace{10pt}}l@{\hspace{10pt}}l@{}}
                    Par & ParEr & ParSeq \\
                    $\infer
                        {\il[blue][blue]{\mathsf{P_1\circledast P_2}}{\mathsf{r_1 \parallel r_2}}{\mathsf{Q_1 \circledast Q_2}}} 
                        {\il[blue][blue]{\mathsf{P_i}}{\mathsf{r_i}}{\mathsf{Q_i}} & \text{for all } \mathsf{i} \in\{1,2\}}$ &
                    $\infer
                        {\il[blue][red]{\mathsf{P}}{\mathsf{r_1 \parallel r_2}}{\mathsf{Q}}} 
                        {\il[blue][red]{\mathsf{P}}{\mathsf{r_i}}{\mathsf{Q}} & \text{for some } \mathsf{i} \in\{1,2\}}$ &
                    $\infer
                        {\il[blue][blue]{\mathsf{P}}{\mathsf{r_1 \parallel r_2}}{\mathsf{Q}}} 
                        {\il[blue][blue]{\mathsf{P}}{\mathsf{r_1 \parallel r_2}}{\mathsf{Q}} & \text{or} & \il[blue][blue]{\mathsf{P}}{\mathsf{r_2 \parallel r_1}}{\mathsf{Q}}}$  
                \end{tabular} \\[-10pt]

                \begin{tabular}{@{}l@{\hspace{10pt}}l@{}}
                    ParL & ParR \\
                    $\infer
                        {\il[blue][blue]{\mathsf{P}}{\mathsf{r_1 \parallel r_2}}{\mathsf{Q}}} 
                        {\mathsf{r_1=r_3;r_4} & \il[blue][mygreen]{\mathsf{P}}{\mathsf{r_3}}{\mathsf{S}} & \il[blue][blue]{\mathsf{S}}{\mathsf{r_4 \parallel r_2}}{\mathsf{Q}}}$ &
                    $\infer
                        {\il[blue][blue]{\mathsf{P}}{\mathsf{r_1 \parallel r_2}}{\mathsf{Q}}} 
                        {\mathsf{r_2=r_3;r_4} & \il[blue][mygreen]{\mathsf{P}}{\mathsf{r_3}}{\mathsf{S}} & \il[blue][blue]{\mathsf{S}}{\mathsf{r_1 \parallel r_4}}{\mathsf{Q}}}$ 
                \end{tabular}
            \end{tabular}
        \end{tabular}
     }}
     \caption{Additional rules for parallel composition in CISL+.}
     \Description{Additional rules for parallel composition in CISL+}
     \label{fig:cisl+}
\end{figure*}

\begin{figure*}[ht]
    \centering
    \resizebox{.99\linewidth}{!}{\fbox{\renewcommand{\arraystretch}{2.5}
        \begin{tabular}{c}
            \begin{tabular}{c}
                \begin{tabular}{@{}l@{\hspace{10pt}}l@{\hspace{10pt}}l@{}}
                    \sillabel{Alloc2}{\circled{2}} & \sillabel{Free1}{\circled{1}} & \sillabel{Free2}{\circled{2}}  \\
                    $\infer{\sil{\mathsf{\emp_{\mathbb{X}_\mathsf{p}\setminus \{x\}}\wedge x'=l'\ast l'\not\mapsto}}{\mathsf{x:=alloc}()}{\mathsf{\emp_{\mathbb{X}_\mathsf{p}\setminus \{x\}}\wedge x=x'\ast x\mapsto z'}}} {}$ &
                    $\infer{\sil{\mathsf{\emp_{\mathbb{X}_\mathsf{p}}\ast x\mapsto z'}}{\mathsf{free(x)}}{\mathsf{\emp_{\mathbb{X}_\mathsf{p}}}}} {}$ &
                    $\infer{\sil{\mathsf{\emp_{\mathbb{X}_\mathsf{p}}\ast x\mapsto z'}}{\mathsf{free(x)}}{\mathsf{\emp_{\mathbb{X}_\mathsf{p}} \ast x\not\mapsto}}} {}$
                \end{tabular} \\[-10pt]
            
                \begin{tabular}{@{}l@{\hspace{10pt}}l@{\hspace{10pt}}l@{}}
                    \sillabel{Alloc}{\circled{i}} & \sillabel{Assign}{\circled{i}} & \sillabel{Assume}{\circled{i}} \\
                    $\infer{\sil{\mathsf{\emp_{\mathbb{X}_\mathsf{p}\setminus \{x\}}\wedge x'=l'}}{\mathsf{x:=alloc}()}{\mathsf{\emp_{\mathbb{X}_\mathsf{p}\setminus \{x\}}\wedge x=x'\ast x\mapsto z'}}} {}$ &
                    $\infer{\sil{\mathsf{\emp_{\mathbb{X}_\mathsf{p}\setminus \{x\}} \wedge x'=e}}{\mathsf{x:=e}}{\mathsf{\emp_{\mathbb{X}_\mathsf{p}\setminus\{x\}} \wedge x=x'}}} {}$ &
                    $\infer{\sil{\mathsf{\emp_{\mathbb{X}_\mathsf{p}}\wedge b}}{\mathsf{b?}}{\mathsf{\emp_{\mathbb{X}_\mathsf{p}}}}} {}$ 
                \end{tabular} \\[-10pt]            

                \begin{tabular}{@{}l@{\hspace{10pt}}l@{\hspace{10pt}}l@{}}
                    \sillabel{Load}{\circled{i}} & \sillabel{Store}{\circled{i}}\\
                    $\infer{\sil{\mathsf{\emp_{\mathbb{X}_\mathsf{p}\setminus \{x\}} \wedge x'=z' \ast y \mapsto z'}}{\mathsf{x:=[y]}}{\mathsf{\emp_{\mathbb{X}_\mathsf{p}\setminus \{x\}} \wedge x=x' \ast y \mapsto z'}}} {}$ &
                    $\infer{\sil{\mathsf{\emp_{\mathbb{X}_\mathsf{p}}\ast x \mapsto z'}}{\mathsf{[x]:=y}}{\mathsf{\emp_{\mathbb{X}_\mathsf{p}}\ast x\mapsto y}}} {}$
                \end{tabular} \\
                
            \end{tabular}\\

            \hline 
            
            \begin{tabular}{@{}l@{\hspace{10pt}}l@{\hspace{10pt}}l@{\hspace{10pt}}l@{}}
                \sillabel{Frame}{\circled{i}} & \sillabel{Exists}{\circled{i}} & \sillabel{Disj}{\circled{i}} & \sillabel{Seq}{\circled{i}} \\
                $\infer
                    {\sil{\mathsf{P}\ast \mathsf{R}}{\mathsf{r}}{\mathsf{Q}\ast \mathsf{R}}} 
                    {\mathsf{fv(R)\subseteq\mathbb{X}_\mathsf{l}} & \sil{\mathsf{P}}{\mathsf{r}}{\mathsf{Q}} & \mathsf{Q\logicheapcomp R}}$ &
                $\infer{\sil{\mathsf{\exists \mathsf{X}.P}}{\mathsf{r}}{\mathsf{\exists \mathsf{X}.Q}}}{\sil{\mathsf{P}}{\mathsf{r}}{\mathsf{Q}} & \mathsf{X}\subseteq \mathbb{X}_\mathsf{l}}$ &
                $\infer
                    {\sil{\mathsf{\cup_{i\in I}P_i}}{\mathsf{r}}{\mathsf{\cup_{i\in I}Q_i}}} 
                    {\forall i\in I. \sil{\mathsf{P_i}}{\mathsf{r}}{\mathsf{Q_i}}}$ &
                $\infer
                    {\sil{\mathsf{P}}{\mathsf{r_1;r_2}}{\mathsf{Q}}} 
                    {\sil{\mathsf{P}}{\mathsf{r_1}}{\mathsf{S}} & \sil{{S}}{\mathsf{r_2}}{\mathsf{Q}}}$ 
            \end{tabular} \\

            \begin{tabular}{@{}l@{\hspace{10pt}}l@{\hspace{10pt}}l@{\hspace{10pt}}l@{}}
                \sillabel{Cons}{\circled{i}} & \sillabel{Iterate\text{-}zero}{\circled{i}} & \sillabel{Iterate}{\circled{i}} & \sillabel{Choice}{\circled{i}} \\
                $\infer
                    {\sil{\mathsf{P}}{\mathsf{r}}{\mathsf{Q}}} 
                    {\mathsf{P} \Rightarrow \mathsf{P}' & \sil{\mathsf{P}'}{\mathsf{r}}{\mathsf{Q}'} & \mathsf{Q}' \Rightarrow \mathsf{Q}}$ &
                $\infer{\sil{\mathsf{Q}}{\mathsf{r}^\ast}{\mathsf{Q}}} {}$ &
                $\infer{\sil{\mathsf{P}}{\mathsf{r}^\ast}{\mathsf{Q}}} {\sil{\mathsf{P}}{\mathsf{r}^\ast;\mathsf{r}}{\mathsf{Q}}}$ &
                $\infer
                    {\sil{\mathsf{P}}{\mathsf{r_1;r_2}}{\mathsf{Q}}} 
                    {\sil{\mathsf{P}}{\mathsf{r_1}}{\mathsf{S}} & \sil{{S}}{\mathsf{r_2}}{\mathsf{Q}}}$
            \end{tabular} \\
        \end{tabular}
     }}         
     \caption{SIL+: Synthesized proof system for backward incorrectness analysis in both memory models $\circled{i}\in \{\circled{1}, \circled{2}\}$.}
    \Description{Logical proof systems SIL+}
    \label{fig:sil+}
\end{figure*}

\clearpage
\section{Proofs}\label{sec:appendix_proofs}
This appendix gathers rather technical material along with the proofs of the main results in which it is applied. 
\revision{
    The results on the properties of the closure operators presented for forward analysis have been mechanized and formally verified in Isabelle/HOL. In addition, we have formalized the systematic derivation procedure described in Section~\ref{sec:systematic_program_logics}, which the completeness proof follows.
}

\subsection{Proofs of Sec.\ref{sec:systematic_program_logics}} \label{sec:systematic_program_logics_appendix}

Standard properties for the separating conjunction $\ast$ extend naturally to the operator $\sast$ when dealing with sets of states instead of assertions. In the following, we present these properties explicitly.

\begin{property}[Distributivity] \label{prop:distributivity}
    For $R\in \mathbb{F}$ and it holds $(\exists \mathsf{X}.P)\sast R = \exists \mathsf{X}.(P\sast R)$.
\end{property}
\begin{property}[Monotonicity] \label{prop:monotonicity}
    For $R\in \mathbb{F}$, if $Q'\subseteq Q$ then $Q'\sast R\subseteq Q\sast R$.
\end{property}

\setcounter{theorem}{0} 
\renewcommand{\thetheorem}{\ref{th:forward_preservation}}
\begin{theorem}[Forward Semantic Preservation] 
    Given $P,\mathsf{r}, Q$ such that $\llbracket \mathsf{r} \rrbracket P = Q$ and $R\in \mathbb{F}$ such that $P\heapcomp R$, then $(\llbracket \mathsf{r} \rrbracket P) \sast R = \llbracket \mathsf{r} \rrbracket (P\sast R) = Q\sast R$.
\end{theorem}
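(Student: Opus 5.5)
The plan is to prove the statement by reducing it to a statewise frame property and then lifting it to sets. The lifting works because $\llbracket\cdot\rrbracket$ is additive and $\sast$ distributes over unions in each argument. The central lemma I aim for is the following. For every $(s,h)$, every heap $h_r$ with $h\#h_r$, and every $R\in\mathbb{F}$ with $(s,h_r)\in R$,
\[
\llbracket \mathsf{r}\rrbracket (s,h\bullet h_r) \;=\; \{(s',h'\bullet h_r)\mid (s',h')\in \llbracket \mathsf{r}\rrbracket(s,h),\ h'\#h_r\}.
\]
This should hold modulo the treatment of $\abort$, discussed in the last paragraph. Given the lemma, I unfold $P\sast R$. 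The side condition $P\heapcomp R$ supplies disjointness $h\#h_r$ exactly when the stores coincide, and that is precisely the pairing required by $\sast$. So $\llbracket \mathsf{r}\rrbracket(P\sast R)$ is the union of the right-hand sides over matching pairs.

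The second step compares that union with $Q\sast R$, using universality of $R$ (Definition~\ref{def:universalframe}) in two ways. First, $R$ constrains only logical variables and the heap. No command modifies logical variables, since they cannot occur in programs. Hence $(s,h_r)\in R$ implies $(s',h_r)\in R$ for every output store $s'$, because $s'$ differs from $s$ only on program variables. Second, for the converse inclusion I take $(s',h'\bullet h_r)\in Q\sast R$ with $(s',h')\in\llbracket\mathsf{r}\rrbracket(s,h)$ and $(s',h_r)\in R$. By universality, $(s,h_r)\in R$, and $P\heapcomp R$ then gives $h\#h_r$, so the lemma applies backwards. The side condition $h'\#h_r$ in the lemma corresponds exactly to the definedness requirement built into $\sast$. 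For instance, an $\mathsf{alloc}$ that picks a location in $dom(h_r)$ is excluded on both sides. This yields $\llbracket \mathsf{r}\rrbracket(P\sast R)=Q\sast R$, and $Q=\llbracket\mathsf{r}\rrbracket P$ gives the other equality.

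The third step proves the lemma by induction on $\mathsf{r}$. For atomic commands I would do a case analysis on Figure~\ref{fig:sl_semantics}: assume, assignment, load and store only read or write locations in $dom(h)$, and $\mathsf{free}$ removes or marks a location of $dom(h)$. Each case is direct, and $\mathsf{alloc}$ is handled by the $h'\#h_r$ filter. The cases of choice and star follow from additivity. For sequencing I deliberately keep the induction statewise rather than going through Theorem-level compatibility. The intermediate set $\llbracket\mathsf{r_1}\rrbracket P$ need not be heap-compatible with $R$, since an allocation may hit $dom(h_r)$. The statewise form only needs $h''\#h_r$ for the surviving intermediate states, and the filter supplies that.

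The main obstacle I expect is $\abort$, together with the case where a command accesses a location not in $dom(h)$. A load from $s(\mathsf{y})\notin dom(h)$ aborts on $(s,h)$. If $s(\mathsf{y})\in dom(h_r)$, however, the command succeeds on $h\bullet h_r$, so the lemma's left-hand side gains non-abort states. I would first try to show that such $h_r$ cannot occur: $P\heapcomp R$ together with universality lets me vary program variables freely, but this alone does not forbid $h_r$ from containing $s(\mathsf{y})$. So I would resolve the issue by adopting the convention that $\abort \sast R = \{\abort\}$ and that $\abort$ is propagated. The equality is then read on non-aborting outputs when $P$ is safe, in line with the Remark after Theorem~\ref{th:sl_completeness} that $\abort$ is not accounted for. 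If that is insufficient, I would add safety of $\mathsf{r}$ on $P$ as an implicit hypothesis. This is the step I would check most carefully.
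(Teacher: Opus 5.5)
\paragraph{The gap: the sequencing case fails in model $\circled{1}$.}
Several parts of your proposal are fine: the reduction to a statewise lemma, the use of universality in your second step, and your observation that a safety hypothesis is needed. On the last point, a load from $s(\mathsf{y})\notin dom(h)$ aborts on $(s,h)$ but can succeed on $h\sast h_r$; the paper's proof also silently assumes this case away.

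The step that fails is the sequencing case of your induction. Unfold both sides of your lemma for $\mathsf{r_1;r_2}$.
\begin{itemize}
\item The left-hand side contains only runs whose intermediate heap $h''$ satisfies $h''\#h_r$.
\item The right-hand side contains every run from $(s,h)$ whose \emph{final} heap is disjoint from $h_r$.
\end{itemize}
Your filter $h'\#h_r$ constrains only the final heap. It therefore yields $h''\#h_r$ only if heap domains never shrink along a run, and in model $\circled{1}$ they do shrink.

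Take $\mathsf{r}=\mathsf{x:=alloc();free(x)}$, $(s,h)=(s,[])$ and $h_r=[l\mapsto 0]$.
\begin{itemize}
\item From $(s,[])$, the run that allocates $l$ and then frees it ends in $(s[\mathsf{x}\mapsto l],[])$. This heap is disjoint from $h_r$, so $(s[\mathsf{x}\mapsto l],[l\mapsto 0])$ is on your right-hand side.
\item From $(s,[l\mapsto 0])$, the allocation can never return $l$, so that state is not on your left-hand side.
\end{itemize}
Safety does not rescue this, because the program never aborts.

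\paragraph{This refutes the statement itself in model $\circled{1}$.}
Let $P=\{(s,[])\}$ and $R=\{(t,[l\mapsto 0])\mid t\in\mathbb{S}\}\in\mathbb{F}$. Then $P\heapcomp R$ holds and $\llbracket\mathsf{r}\rrbracket P=\{(s[\mathsf{x}\mapsto m],[])\mid m\in\mathbb{L}\}$. The state $(s[\mathsf{x}\mapsto l],[l\mapsto 0])$ lies in $(\llbracket\mathsf{r}\rrbracket P)\sast R$ but not in $\llbracket\mathsf{r}\rrbracket(P\sast R)$.

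The paper's proof hides the same problem in the phrase ``$\mathsf{r}$ leaves $h_2$ untouched''. That phrase is false for this run, which allocates the very location owned by $h_2$.

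\paragraph{What your argument does establish.}
\begin{itemize}
\item Given safety of $P$, you obtain the inclusion $\llbracket\mathsf{r}\rrbracket(P\sast R)\subseteq Q\sast R$ in both memory models. This is the standard frame property, and it is all that over-approximation needs.
\item You obtain full equality for atomic commands.
\item You obtain full equality for arbitrary $\mathsf{r}$ in model $\circled{2}$. There $\mathsf{free}$ leaves a $\bot$ cell, so domains are monotone along runs and final disjointness does imply intermediate disjointness.
\end{itemize}
To make the proposal correct, you must state one of these restrictions explicitly, together with the safety hypothesis. Claiming the sequencing case in general does not work.
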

\begin{proof}
    We prove separately that $Q\sast R\subseteq \llbracket \mathsf{r}\rrbracket(P\sast R)$ and $\llbracket \mathsf{r}\rrbracket(P\sast R) \subseteq Q\sast R$.\\
    \textbf{Case $Q\sast R\subseteq \llbracket \mathsf{r}\rrbracket(P\sast R)$:}\\
    Suppose $(s,h)\in Q\sast R$, then by definition of $\sast$ we have:
    \begin{description}
        \item[(1)] $h=h_1\bullet h_2$,
        \item[(2)] $(s,h_1)\in Q$,
        \item[(3)] $(s,h_2) \in R$
    \end{description}
    By (2) and definition of $\llbracket \mathsf{r}\rrbracket$ we have that $\exists (s',h')\in P$ s.t. $(s,h_1)\in \llbracket \mathsf{r}\rrbracket \{(s',h')\}$. By hypothesis $R\in \mathbb{F}$, and since $s$ and $s'$ may differ only in modified variables, it implies that also $(s',h_2)\in R$. Hence, by using the hypothesis $P\heapcomp R$ we have that $h'\#h_2$, and so $(s', h'\bullet h_2)\in P\sast R$. Finally, since $(s,h_1)\in \llbracket \mathsf{r}\rrbracket\{(s',h')\}$ and $\mathsf{r}$ leaves $h_2$ untouched we have that $(s,h_1\bullet h_2)\in \llbracket \mathsf{r}\rrbracket(P\sast R)$, that by (1) $(s,h)\in \llbracket \mathsf{r}\rrbracket(P\sast R)$.\\
    \textbf{Case $\llbracket \mathsf{r}\rrbracket(P\sast R) \subseteq Q\sast R$:}\\
    Suppose $(s,h)\in \llbracket \mathsf{r}\rrbracket(P\sast R)$, then by definition of $\llbracket \mathsf{r}\rrbracket$ we have that $\exists (s',h')\in P\sast R$ s.t. $(s,h)\in \llbracket \mathsf{r}\rrbracket\{(s',h')\}$, and by definition of $\sast$, $\exists h_1,h_2$ s.t.:
    \begin{description}
        \item[(1)] $h_1\# h_2$
        \item[(2)] $h'=h_1\bullet h_2$
        \item[(3)] $(s',h_1)\in P$
        \item[(4)] $(s',h_2)\in R$
    \end{description}
    Since $(s,h) \in \llbracket \mathsf{r}\rrbracket\{(s',h_1\bullet h_2)\}$ and $R\in \mathbb{F}$, there exists $h_1'$ s.t. $(s,h_1')\in \llbracket \mathsf{r}\rrbracket\{(s',h_1)\}$ with $h=h_1'\bullet h_2$. By definition of $\llbracket \mathsf{r}\rrbracket$ and (3), we have that $(s,h_1')\in \llbracket \mathsf{r}\rrbracket P$. Finally, by (4) and the hypothesis of $R\in \mathbb{F}$, we have that $(s,h_2)\in R$, and so by hypothesis $P\heapcomp R$ we have that $h_1'\# h_2$. Hence $(s,h_1'\bullet h_2)\in Q\sast R$, which is $(s,h_1'\bullet h_2)\in Q\sast R$ since $h=h_1'\bullet h_2$.
\end{proof}

\setcounter{theorem}{0} 
\renewcommand{\thetheorem}{\ref{th:backward_preservation}}
\begin{theorem}[Backward Semantic Preservation]    
    Given $P,\mathsf{r}, Q$ such that $\llbracket \overleftarrow{\mathsf{r}} \rrbracket Q = P$ and $R\in \mathbb{F}$ such that $Q\heapcomp R$, then $(\llbracket \overleftarrow{\mathsf{r}} \rrbracket Q) \sast R = \llbracket \overleftarrow{\mathsf{r}} \rrbracket (Q\sast R) = P\sast R$. 
\end{theorem}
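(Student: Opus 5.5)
The plan mirrors the proof of Theorem~\ref{th:forward_preservation}, with the roles of $P$ and $Q$ swapped and the forward semantics replaced by its opposite relation. Throughout, recall that $\sigma\in\llbracket \overleftarrow{\mathsf{r}}\rrbracket\sigma'$ iff $\sigma'\in\llbracket \mathsf{r}\rrbracket\sigma$. The first equality $(\llbracket \overleftarrow{\mathsf{r}}\rrbracket Q)\sast R = P\sast R$ is immediate from the hypothesis $\llbracket \overleftarrow{\mathsf{r}}\rrbracket Q = P$. So it suffices to prove the two inclusions $P\sast R\subseteq \llbracket \overleftarrow{\mathsf{r}}\rrbracket(Q\sast R)$ and $\llbracket \overleftarrow{\mathsf{r}}\rrbracket(Q\sast R)\subseteq P\sast R$.

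The semantic fact I will use in both directions is the usual locality (frame) property of the forward semantics, already exploited in Theorem~\ref{th:forward_preservation}. Let $h_1\# h_2$ and let $(s',h_1')\in\llbracket\mathsf{r}\rrbracket(s,h_1)$ with $h_1'\# h_2$. Then:
\begin{itemize}
\item[(a)] $(s',h_1'\bullet h_2)\in\llbracket\mathsf{r}\rrbracket(s,h_1\bullet h_2)$;
\item[(b)] conversely, every non-aborting outcome of $\llbracket\mathsf{r}\rrbracket(s,h_1\bullet h_2)$ has the form $h_1'\bullet h_2$ with $(s',h_1')\in\llbracket\mathsf{r}\rrbracket(s,h_1)$, \emph{provided} $\llbracket \mathsf{r}\rrbracket(s,h_1)$ does not abort.
\end{itemize}
In addition, membership in $R$ is invariant under changes of program variables, since $R\in\mathbb{F}$.

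\textbf{Inclusion $P\sast R\subseteq \llbracket \overleftarrow{\mathsf{r}}\rrbracket(Q\sast R)$.} Take $(s,h_1\bullet h_2)$ with $(s,h_1)\in P$ and $(s,h_2)\in R$. Since $P=\llbracket\overleftarrow{\mathsf{r}}\rrbracket Q$, there is $(s',h_1')\in Q$ with $(s',h_1')\in\llbracket\mathsf{r}\rrbracket(s,h_1)$. Because $R\in\mathbb{F}$, we get $(s',h_2)\in R$. Then $Q\heapcomp R$ gives $h_1'\# h_2$, so $(s',h_1'\bullet h_2)\in Q\sast R$. By (a), this is a successor of $(s,h_1\bullet h_2)$, hence the latter lies in $\llbracket \overleftarrow{\mathsf{r}}\rrbracket(Q\sast R)$.

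\textbf{Inclusion $\llbracket \overleftarrow{\mathsf{r}}\rrbracket(Q\sast R)\subseteq P\sast R$.} Take $(s,h)$ with a successor $(s',h_1'\bullet h_2)$ such that $(s',h_1')\in Q$ and $(s',h_2)\in R$. We need a splitting $h=h_1\bullet h_2$ with $(s',h_1')\in\llbracket\mathsf{r}\rrbracket(s,h_1)$. Once we have it, $(s,h_1)\in\llbracket\overleftarrow{\mathsf{r}}\rrbracket Q=P$, and $(s,h_2)\in R$ by universality, so $(s,h)\in P\sast R$.

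This backward decomposition is the main obstacle. Forward reasoning tells us that a final heap $h_1'\bullet h_2$ arises from a split initial heap. Here, instead, the frame $h_2$ is given on the final side, and we must show it was already present and untouched in $h$. This is what $Q\heapcomp R$ is meant to provide: the part of the final heap described by $Q$ cannot overlap the frame. I would argue by induction on $\mathsf{r}$ (or by the atomic cases, lifting through $;$, $+$ and $^\ast$ by composition), using the fact that each atomic command only reads and writes the cells in its footprint. For example, for $\mathsf{free}$ in model $\circled{1}$ the freed cell is absent from the final heap, so $h$ equals the final heap plus that cell, and that cell is disjoint from $h_2$ because $h_2$ belongs to the final heap. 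Thus the cells outside the $Q$-part of the final heap coincide with cells of $h$ left untouched, which yields $h=h_1\bullet h_2$. I expect some care with allocation, where a fresh location of $h_1'$ must not occur in $h_2$; this is guaranteed again by $h_1'\# h_2$.
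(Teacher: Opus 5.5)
Your route is the paper's own, since its proof of this theorem only says it is ``analogous to Theorem~\ref{th:forward_preservation}''. You rightly single out $\llbracket \overleftarrow{\mathsf{r}}\rrbracket(Q\sast R)\subseteq P\sast R$ as the crux. That step cannot be closed: it does not follow from $Q\heapcomp R$, and it is in fact false. Heap compatibility only says that the $Q$-part of a \emph{final} heap is disjoint from the frame. It says nothing about which cells the command touched on the way there. A predecessor of a state in $Q\sast R$ may have read, written or allocated a cell that ends up in $h_2$, and then no split $h=h_1\bullet h_2$ with $(s',h_1')\in\llbracket\mathsf{r}\rrbracket(s,h_1)$ exists.

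Concretely, take $\mathsf{r}=\mathsf{x:=[y]}$ (either memory model), $Q=\llbrace\emp\rrbrace$ and $R=\llbrace\mathsf{l'\mapsto 5}\rrbrace\in\mathbb{F}$. Then $Q\heapcomp R$ holds trivially, and $P=\llbracket\overleftarrow{\mathsf{r}}\rrbracket Q=\emptyset$ because a load never succeeds on the empty heap, so $P\sast R=\emptyset$. Yet every $(s,[s(\mathsf{l'})\mapsto 5])$ with $s(\mathsf{y})=s(\mathsf{l'})$ lies in $\llbracket\overleftarrow{\mathsf{r}}\rrbracket(Q\sast R)$. The same $Q$ and $R$ refute the inclusion for two more commands:
\begin{itemize}
\item for $\mathsf{[x]:=y}$, the written cell is in the frame;
\item for $\mathsf{x:=alloc()}$, the fresh cell is $s(\mathsf{l'})$, so it lands in $h_2$, not in $h_1'$ as your remark on allocation presumes.
\end{itemize}
Your $\mathsf{free}$ case in model $\circled{1}$ goes through only because the freed cell vanishes from the final heap. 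In model $\circled{2}$ even $\mathsf{free}$ fails, with $R=\llbrace\mathsf{l'\not\mapsto}\rrbrace$.

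The forward proof hides the same assumption: its decomposition step needs the local state not to abort, which is exactly the proviso in your (b). In the backward direction nothing in the hypotheses supplies such a safety or footprint guarantee, so an extra hypothesis is genuinely required, not just more care. The example also shows that the backward-over frame closure does not preserve validity. The triple $(\emptyset,\mathsf{x:=[y]},\llbrace\emp\rrbrace)$ is valid for backward over-approximation, but its framing by $R$ is not.

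The inclusion you do prove is also narrower than you claim. Property (a) holds for every atomic command in both models. It also holds for arbitrary $\mathsf{r}$ in model $\circled{2}$, where heap domains never shrink, so intermediate heaps stay disjoint from $h_2$. For composite programs in model $\circled{1}$ it fails, because a cell can be allocated and freed in between. Take $\mathsf{r}=\mathsf{x:=alloc();free(x)}$, $Q=\llbrace\mathsf{x\doteq l'}\rrbrace$ and the same $R$. Every empty-heap state can allocate $s(\mathsf{l'})$ and free it, so $P\sast R=\llbrace\mathsf{l'\mapsto 5}\rrbrace$. By contrast, $\llbracket\overleftarrow{\mathsf{r}}\rrbracket(Q\sast R)=\emptyset$, because an allocation from a heap already containing $s(\mathsf{l'})$ can never return it. So your argument establishes only $P\sast R\subseteq\llbracket\overleftarrow{\mathsf{r}}\rrbracket(Q\sast R)$, and only for atomic $\mathsf{r}$ or in model $\circled{2}$. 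The stated equality needs hypotheses beyond $R\in\mathbb{F}$ and $Q\heapcomp R$.
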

\begin{proof}
    Analogous to the proof of Theorem \ref{th:forward_preservation}.
\end{proof}

\begin{property} \label{prop:exists_indip}
    Given $M_1, M_2\in \mem$, $(\exists \mathsf{X}.M_1)\heapcomp M_2 \iff M_1\heapcomp M_2$
\end{property}
\begin{property} \label{prop:union_indip}
    Given $M_1 = (\cup_{i\in I}.M_i), M_2\in \mem$, $M_1\heapcomp M_2 \iff \forall i\in I. M_i\heapcomp M_2$
\end{property}

Let us define the following closures:
\begin{itemize}
    \item $\closure{}{\rightarrowoverdot} \triangleq (\closure{exists}{\leftrightleftrightarrows} \cup \closure{frame}{\forwardarrows} \cup \closure{cons}{\rightarrowoverdot})^\ast$
    \item $\closure{}{\rightarrowbelowdot} \triangleq (\closure{exists}{\leftrightleftrightarrows} \cup \closure{frame}{\forwardarrows} \cup \closure{cons}{\rightarrowbelowdot})^\ast$
    \item $\closure{}{\leftarrowoverdot} \triangleq (\closure{exists}{\leftrightleftrightarrows} \cup \closure{frame}{\backwardarrows} \cup \closure{cons}{\leftarrowoverdot})^\ast$
    \item $\closure{}{\leftarrowbelowdot} \triangleq (\closure{exists}{\leftrightleftrightarrows} \cup \closure{frame}{\backwardarrows} \cup \closure{cons}{\leftarrowbelowdot})^\ast$
\end{itemize}

In the following, we state Definition \ref{def:axioms}, Theorem \ref{th:reorder} and Corollary \ref{cor:triples} extensively for each logic under consideration. 

\setcounter{definition}{0} 
\renewcommand{\thedefinition}{\ref{def:axioms}}
\begin{definition}[Axioms] 
    A valid set of axioms for $\fwdovertripleset$/$\fwdundertripleset$/$\bwdovertripleset$/$\bwdundertripleset$ is any minimal set $\fwdoveraxiomsset$/$\fwdunderaxiomsset$/$\bwdoveraxiomsset$/$\bwdunderaxiomsset$ such that: 
    \begin{itemize}
        \item $\fwdovertripleset \triangleq \closure{}{\rightarrowoverdot} (\fwdoveraxiomsset)$
        \item $\fwdundertripleset \triangleq \closure{}{\rightarrowbelowdot} (\fwdunderaxiomsset)$
        \item $\bwdovertripleset \triangleq \closure{}{\leftarrowoverdot} (\bwdoveraxiomsset)$
        \item $\bwdundertripleset \triangleq \closure{}{\leftarrowbelowdot} (\bwdunderaxiomsset)$
    \end{itemize}    
\end{definition}

\setcounter{theorem}{0} 
\renewcommand{\thetheorem}{\ref{th:reorder}}
\begin{theorem} 
    It holds that:
    \begin{enumerate}
        \item $\closure{}{\rightarrowoverdot} = \closure{cons}{\rightarrowoverdot} \circ \closure{exists}{\leftrightleftrightarrows} \circ \closure{frame}{\forwardarrows}$
        \item $\closure{}{\rightarrowbelowdot} = \closure{cons}{\rightarrowbelowdot} \circ \closure{exists}{\leftrightleftrightarrows} \circ \closure{frame}{\forwardarrows}$
        \item $\closure{}{\leftarrowoverdot} = \closure{cons}{\leftarrowoverdot} \circ \closure{exists}{\leftrightleftrightarrows} \circ \closure{frame}{\backwardarrows}$
        \item $\closure{}{\leftarrowbelowdot} = \closure{cons}{\leftarrowbelowdot} \circ \closure{exists}{\leftrightleftrightarrows} \circ \closure{frame}{\backwardarrows}$
    \end{enumerate}
\end{theorem}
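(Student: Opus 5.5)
The plan is the standard closure-composition argument of \cite[\S4.2]{DBLP:books/hal/Cousot78}. Write $\rho_f,\rho_e,\rho_c$ for the frame, exists and consequence closures of the relevant direction and sense. Each is a closure: monotone, extensive (take $\mathsf{X}=\emptyset$, $R=\slstore[p]\times\slstore[l]\times\{[]\}$, i.e.\ the universal frame $\emp$, and $Q'=Q$), and idempotent. Idempotence of $\rho_f$ uses associativity of $\sast$ together with the fact that $P\heapcomp R_1$ and $(P\sast R_1)\heapcomp R_2$ imply $P\heapcomp (R_1\sast R_2)$ and $R_1\sast R_2\in\mathbb{F}$.

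It then suffices to show that $\rho_c\circ\rho_e\circ\rho_f$ is idempotent. Since it is extensive and is contained in the star closure, it is a fixpoint-stable operator that contains every finite composite, and hence it equals $(\rho_e\cup\rho_f\cup\rho_c)^\ast$. Idempotence reduces to three swap inclusions:
\[
\rho_f\circ\rho_c\subseteq\rho_c\circ\rho_f,\qquad \rho_e\circ\rho_c\subseteq\rho_c\circ\rho_e,\qquad \rho_f\circ\rho_e\subseteq\rho_e\circ\rho_f .
\]
With these, any word in the three closures can be bubble-sorted into the form $\rho_c^\ast\rho_e^\ast\rho_f^\ast$. Idempotence then collapses each block to a single application.

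\textbf{Swap for $\rho_e$ over $\rho_c$.} Given a triple $(P,\mathsf{c},Q')$ with $Q\subseteq Q'$ (for forward-over), the triple $(\exists\mathsf{X}.P,\mathsf{c},\exists\mathsf{X}.Q')$ is obtained as the consequence of $(\exists\mathsf{X}.P,\mathsf{c},\exists\mathsf{X}.Q)$, because $\exists\mathsf{X}$ is monotone. The other three senses are symmetric, with the inclusion placed on $Q$ or $P$ as appropriate.

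\textbf{Swap for $\rho_f$ over $\rho_c$.} In the forward case the side condition $P\heapcomp R$ concerns only $P$, which consequence leaves untouched in the forward senses. By Property~\ref{prop:monotonicity}, $Q\sast R\subseteq Q'\sast R$, so the framed triple is the consequence of a framed triple; under-approximation is symmetric. In the backward case the side condition is $Q\heapcomp R$ and consequence alters $P$ only, so the same argument applies on the other side.

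\textbf{Swap for $\rho_f$ over $\rho_e$.} This is the main obstacle. Given $(\exists\mathsf{X}.P)\sast R$ with $(\exists\mathsf{X}.P)\heapcomp R$, I would first try to rewrite it as $\exists\mathsf{X}.(P\sast R)$ via Property~\ref{prop:distributivity}, with the condition $P\heapcomp R$ supplied by Property~\ref{prop:exists_indip}. The difficulty is that $R$ may constrain the variables in $\mathsf{X}$, so distributivity fails in general. The plan is to $\alpha$-rename $\mathsf{X}$ to a fresh set $\mathsf{Y}$ of logical variables not constrained by $R$. This renaming is possible because the logical store supplies a denumerable family of variables for each value; freshness can be made precise by requiring $R$ to be closed under updates of $\mathsf{Y}$, which is achievable since we may choose $\mathsf{Y}$ outside the variables $R$ depends on. Concretely, first apply a renaming frame, namely one equating $\mathsf{Y}$ to $\mathsf{X}$ with empty heap, which is compatible with everything, then existentially quantify $\mathsf{X}$. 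This expresses $\exists\mathsf{X}.P$ as $\exists\mathsf{X}.P'$ with $P'$ independent of how $R$ constrains $\mathsf{X}$. After that, apply Property~\ref{prop:distributivity} and Property~\ref{prop:exists_indip} to obtain a triple in $\rho_e(\rho_f(\cdot))$. The same argument covers the postcondition and the backward versions, where the heap-compatibility condition is checked on $Q$.
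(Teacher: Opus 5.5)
Your overall architecture matches the paper's. The paper also proves only $\subseteq$, via the same three swap inclusions and Cousot's \S4.2 argument. Your treatment of the two swaps involving $\rho_c$, and of the closure properties of each operator, is fine.

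The gap is the third swap, $\rho_f\circ\rho_e\subseteq\rho_e\circ\rho_f$. You are right that Property~\ref{prop:distributivity} fails when $R$ constrains variables of $\mathsf{X}$. The paper's proof invokes it at exactly this step with no side condition. However, your renaming repair does not close the hole, for two reasons.
\begin{itemize}
\item It is circular. The renamed triple comes from a frame with $\mathsf{Y}=\mathsf{X}$ and empty heap, followed by $\exists\mathsf{X}$, so it lies in $\rho_e(\rho_f(T))$, not in $T$. Framing it with $R$ and quantifying then gives an element of $\rho_e\rho_f\rho_e\rho_f(T)$. Eliminating the inner $\rho_f\rho_e$ is exactly the inclusion you are trying to prove.
\item A fresh $\mathsf{Y}$ need not exist. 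The condition on $\slstore[l]$ concerns which values each store takes, not which variables a set of memories depends on. For an enumeration $\{\mathsf{x}'_i\}$ of $\logicvar$, the universal frame of empty-heap memories with $s(\mathsf{x}'_{2i})=s(\mathsf{x}'_{2i+1})$ for all $i$ depends on every logical variable.
\end{itemize}

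In fact the inclusion is false, so no argument internal to the closures can succeed. Take $P\triangleq\{(s,[\,])\mid s(\mathsf{x'})=1\}$, the valid set $T\triangleq\{(P,\skipexp,P)\}$, $\mathsf{X}=\{\mathsf{x'}\}$, and $R\triangleq\{(s,[\,])\mid s(\mathsf{x'})=2\}\in\mathbb{F}$.
\begin{itemize}
\item $\exists\mathsf{X}.P$ is the set of all empty-heap memories. It is heap compatible with $R$, and $(\exists\mathsf{X}.P)\sast R=R$, so $(R,\skipexp,R)\in\rho_f(\rho_e(T))$.
\item Any set $\exists\mathsf{X}'.(P\sast R')$ either forces $\mathsf{x'}=1$ (if $\mathsf{x'}\notin\mathsf{X}'$) or is invariant under changes of $\mathsf{x'}$ (if $\mathsf{x'}\in\mathsf{X}'$). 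So it never equals the nonempty $R$.
\item Each consequence closure leaves one side of a triple unchanged. Hence $(R,\skipexp,R)\notin\rho_c(\rho_e(\rho_f(T)))$ in any of the four senses.
\end{itemize}
So the statement, read as an identity of operators on arbitrary (even valid) $T$, is false. The paper's proof breaks at the same point yours does.

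Your renaming idea becomes sound if you make it a hypothesis rather than a derived step. Assume $T$ is closed under bijective renamings of logical variables, as the schematic logical axioms are. Assume also that $P$, $Q$, $R$ depend on only finitely many logical variables, and shrink $\mathsf{X}$ to the variables $P,Q$ depend on. Pick $\mathsf{Y}$ fresh for $P$, $Q$, $R$, let $\sigma$ swap $\mathsf{X}$ and $\mathsf{Y}$, and take $(P^\sigma,\mathsf{c},Q^\sigma)$ directly from $T$. Then:
\begin{itemize}
\item $\exists\mathsf{X}.P=\exists\mathsf{Y}.P^\sigma$, and likewise for $Q$.
\item $P^\sigma\subseteq\exists\mathsf{X}.P$, so $P^\sigma\heapcomp R$ by antitonicity of $\heapcomp$ (resp.\ $Q^\sigma\heapcomp R$ in the backward case).
\item Distributivity now holds because $R$ does not depend on $\mathsf{Y}$.
\end{itemize}
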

\begin{proof}
    We only prove the first two statement since the last two are analogous. 
    Moreover, we only prove the inclusion $\subseteq$ since $\supseteq$ trivially holds. 
    The proof proceeds by showing pairwise switches of consequent closures. 
    
    \textit{Over-approximation of forward semantics:}
    \begin{enumerate}
        \item $\closure{frame}{\forwardarrows} \circ \closure{cons}{\rightarrowoverdot} \subseteq \closure{cons}{\rightarrowoverdot} \circ \closure{frame}{\forwardarrows}$:
        \begin{align*}
            &\closure{frame}{\forwardarrows} \circ \closure{cons}{\rightarrowoverdot} T  
            & [\text{Def. of }\closure{cons}{\rightarrowoverdot}]\\
            =&\; \closure{frame}{\forwardarrows}\{\fwdovertriple{P}{\mathsf{c}}{Q'}\;|\;\fwdovertriple{P}{\mathsf{c}}{Q}\in T, Q\subseteq Q'\}
            & [\text{Def. of }\closure{frame}{\forwardarrows}]\\
            =&\; \{\fwdovertriple{P\sast R}{\mathsf{c}}{Q'\sast R}\;|\;\fwdovertriple{P}{\mathsf{c}}{Q}\in T, Q\subseteq Q', R \in \mathbb{F}, P\heapcomp R\} 
            & [\text{Prop.} \ref{prop:monotonicity}]\\
            \subseteq&\; \{\fwdovertriple{P\sast R}{\mathsf{c}}{Q'\sast R} \;|\; \fwdovertriple{P}{\mathsf{c}}{Q}\in T, Q \sast R\subseteq Q'\sast R, R\in \mathbb{F}, P\heapcomp R\} 
            & [\text{Def. of }\closure{cons}{\rightarrowoverdot}]\\
            \subseteq&\; \closure{cons}{\rightarrowoverdot}\{\fwdovertriple{P\sast R}{\mathsf{c}}{Q\sast R}\;|\;\fwdovertriple{P}{\mathsf{c}}{Q}\in T, R \in \mathbb{F}, P\heapcomp R\}
            & [\text{Def. of }\closure{frame}{\forwardarrows}]\\
            =&\; \closure{cons}{\rightarrowoverdot} \circ \closure{frame}{\forwardarrows} T
        \end{align*}
        \item $\closure{frame}{\forwardarrows} \circ \closure{exists}{\leftrightleftrightarrows} \subseteq \closure{exists}{\leftrightleftrightarrows} \circ \closure{frame}{\forwardarrows}$:
        \begin{align*}
            & \closure{frame}{\forwardarrows} \circ \closure{exists}{\leftrightleftrightarrows} T
            & [\text{Def. of }\closure{exists}{\leftrightleftrightarrows}]\\
            =&\; \closure{frame}{\forwardarrows}\{\fwdovertriple{\exists \mathsf{X}.P}{\mathsf{c}}{\exists \mathsf{X}.Q}\;|\; \fwdovertriple{P}{\mathsf{c}}{Q}\in T, \mathsf{X} \subseteq \mathbb{X}_\mathsf{l}\} 
            & [\text{Def. of }\closure{frame}{\forwardarrows}]\\
            =&\; \{\fwdovertriple{(\exists \mathsf{X}.P)\sast R}{\mathsf{c}}{(\exists \mathsf{X}.Q)\sast R}\;|\;\fwdovertriple{P}{\mathsf{c}}{Q}\in T, \mathsf{X} \subseteq \mathbb{X}_\mathsf{l}, R\in \mathbb{F}, (\exists \mathsf{X}.P)\heapcomp R\} 
            & [\text{Prop.} \ref{prop:exists_indip}] \\
            =&\; \{\fwdovertriple{(\exists \mathsf{X}.P)\sast R}{\mathsf{c}}{(\exists \mathsf{X}.Q)\sast R}\;|\;\fwdovertriple{P}{\mathsf{c}}{Q}\in T, \mathsf{X} \subseteq \mathbb{X}_\mathsf{l}, R\in \mathbb{F}, P \heapcomp R\}
            & [\text{Prop.} \ref{prop:distributivity}] \\
            =&\; \{\fwdovertriple{\exists \mathsf{X}.(P\sast R)}{\mathsf{c}}{\exists \mathsf{X}.(Q\sast R)}\;|\;\fwdovertriple{P}{\mathsf{c}}{Q}\in T, \mathsf{X} \subseteq \mathbb{X}_\mathsf{l}, R\in \mathbb{F}, P \heapcomp R\}
            & [\text{Def. of }\closure{exists}{\leftrightleftrightarrows}] \\
            =&\; \closure{exists}{\leftrightleftrightarrows} \{\fwdovertriple{P\sast R}{\mathsf{c}}{Q\sast R}\;|\; \fwdovertriple{P}{\mathsf{c}}{Q}\in T, R\in \mathbb{F}, P\heapcomp R\}
            & [\text{Def. of }\closure{frame}{\forwardarrows}] \\
            =&\;\closure{exists}{\leftrightleftrightarrows} \circ \closure{frame}{\forwardarrows} T
        \end{align*}
        \item $\closure{exists}{\leftrightleftrightarrows} \circ \closure{cons}{\rightarrowoverdot} \subseteq \closure{cons}{\rightarrowoverdot} \circ \closure{exists}{\leftrightleftrightarrows}$: 
        \begin{align*}
            &\closure{exists}{\leftrightleftrightarrows} \circ \closure{cons}{\rightarrowoverdot} T  
            & [\text{Def. of }\closure{cons}{\rightarrowoverdot}]\\
            =&\; \closure{exists}{\leftrightleftrightarrows} \{\fwdovertriple{P}{\mathsf{c}}{Q'}\;|\;\fwdovertriple{P}{\mathsf{c}}{Q}\in T, Q\subseteq Q'\}
            & [\text{Def. of }\closure{exists}{\leftrightleftrightarrows}]\\
            =&\; \{\fwdovertriple{\exists \mathsf{X}. P}{\mathsf{c}}{\exists \mathsf{X}. Q'}\;|\;\fwdovertriple{P}{\mathsf{c}}{Q}\in T, Q\subseteq Q', \mathsf{X}\subseteq \mathbb{X}_\mathsf{l}\}
            & [\exists \text{ monotone}]\\
            \subseteq&\; \{\fwdovertriple{\exists \mathsf{X}. P}{\mathsf{c}}{\exists \mathsf{X}. Q'}\;|\;\fwdovertriple{P}{\mathsf{c}}{Q}\in T, \exists \mathsf{X}. Q\subseteq \exists \mathsf{X}. Q', \mathsf{X}\subseteq \mathbb{X}_\mathsf{l}\}
            & [\text{Def. of }\closure{cons}{\rightarrowoverdot}]\\
            \subseteq&\; \closure{cons}{\rightarrowoverdot}(\{\fwdovertriple{\exists \mathsf{X}. P}{\mathsf{c}}{\exists \mathsf{X}. Q}\;|\;\fwdovertriple{P}{\mathsf{c}}{Q}\in T, \mathsf{X}\subseteq \mathbb{X}_\mathsf{l}\})
            & [\text{Def. of }\closure{exists}{\leftrightleftrightarrows}]\\
            =&\; \closure{cons}{\rightarrowoverdot} \circ \closure{exists}{\leftrightleftrightarrows} T
        \end{align*}
    \end{enumerate}
    
    \textit{Under-approximation of forward semantics:}
    \begin{enumerate}
        \item $\closure{frame}{\forwardarrows} \circ \closure{cons}{\rightarrowbelowdot}\subseteq \closure{cons}{\rightarrowbelowdot} \circ \closure{frame}{\forwardarrows}$:
        \begin{align*}
            &\closure{frame}{\forwardarrows} \circ \closure{cons}{\rightarrowbelowdot} T 
            & [\text{Def. of }\closure{cons}{\rightarrowbelowdot}]\\
            =&\; \closure{frame}{\forwardarrows}\{\fwdundertriple{P}{\mathsf{c}}{Q'}\;|\;\fwdundertriple{P}{\mathsf{c}}{Q}\in T, Q'\subseteq Q\} 
            & [\text{Def. of }\closure{frame}{\forwardarrows}]\\
            =&\; \{\fwdundertriple{P\sast R}{\mathsf{c}}{Q'\sast R}\;|\;\fwdundertriple{P}{\mathsf{c}}{Q}\in T, Q'\subseteq Q, R \in \mathbb{F}, P\heapcomp R\} 
            & [\text{Prop.} \ref{prop:monotonicity}]\\
            \subseteq&\; \{\fwdundertriple{P\sast R}{\mathsf{c}}{Q'\sast R} \;|\; \fwdundertriple{P}{\mathsf{c}}{Q}\in T, Q' \sast R\subseteq Q\sast R, R\in \mathbb{F}, P\heapcomp R\} 
            & [\text{Def. of }\closure{cons}{\rightarrowbelowdot}]\\
            \subseteq&\; \closure{cons}{\rightarrowbelowdot}\{\fwdundertriple{P\sast R}{\mathsf{c}}{Q\sast R}\;|\;\fwdundertriple{P}{\mathsf{c}}{Q}\in T, R \in \mathbb{F}, P\heapcomp R\}
            & [\text{Def. of }\closure{frame}{\forwardarrows}]\\
            =&\; (\closure{cons}{\rightarrowbelowdot} \circ \closure{frame}{\forwardarrows}) T
            & 
        \end{align*}
        \item $\closure{exists}{\leftrightleftrightarrows} \circ \closure{frame}{\forwardarrows} = \closure{frame}{\forwardarrows} \circ \closure{exists}{\leftrightleftrightarrows}$:
        \begin{align*}
            &\closure{frame}{\forwardarrows} \circ \closure{exists}{\leftrightleftrightarrows} T
            & [\text{Def. of }\closure{exists}{\leftrightleftrightarrows}]\\
            =&\; \closure{frame}{\forwardarrows}\{\fwdundertriple{\exists \mathsf{X}.P}{\mathsf{c}}{\exists \mathsf{X}.Q}\;|\; \fwdundertriple{P}{\mathsf{c}}{Q}\in T, \mathsf{X} \subseteq \mathbb{X}_\mathsf{l}\}
            & [\text{Def. of }\closure{frame}{\forwardarrows}]\\
            =&\; \{\fwdundertriple{(\exists \mathsf{X}.P)\sast R}{\mathsf{c}}{(\exists \mathsf{X}.Q)\sast R}\;|\;\fwdundertriple{P}{\mathsf{c}}{Q}\in T, \mathsf{X} \subseteq \mathbb{X}_\mathsf{l}, R\in \mathbb{F}, P\heapcomp R\} 
            & [\text{Prop.} \ref{prop:distributivity}]\\
            =&\; \{\fwdundertriple{\exists \mathsf{X}.(P\sast R)}{\mathsf{c}}{\exists \mathsf{X}.(Q\sast R)}\;|\;\fwdundertriple{P}{\mathsf{c}}{Q}\in T, \mathsf{X} \subseteq \mathbb{X}_\mathsf{l}, R\in \mathbb{F}, P\heapcomp R \}
            & [\text{Def. of }\closure{exists}{\leftrightleftrightarrows}]\\
            =&\; \closure{exists}{\leftrightleftrightarrows}\{\fwdundertriple{P\sast R}{\mathsf{c}}{Q\sast R}\;|\;\fwdundertriple{P}{\mathsf{c}}{Q}\in T, R\in \mathbb{F}, P\heapcomp R\}
            & [\text{Def. of }\closure{frame}{\forwardarrows}]\\
            =&\; \closure{exists}{\leftrightleftrightarrows} \circ \closure{frame}{\forwardarrows} T 
            & 
        \end{align*}
        \item $\closure{exists}{\leftrightleftrightarrows} \circ \closure{cons}{\rightarrowbelowdot} \subseteq \closure{cons}{\rightarrowbelowdot} \circ \closure{exists}{\leftrightleftrightarrows}$:
        \begin{align*}
            &(\closure{exists}{\leftrightleftrightarrows} \circ \closure{cons}{\rightarrowbelowdot}) T 
            & [\text{Def. of }\closure{cons}{\rightarrowbelowdot}]\\
            =&\; \closure{exists}{\leftrightleftrightarrows} (\{\fwdundertriple{P}{\mathsf{c}}{Q'}\;|\;\fwdundertriple{P}{\mathsf{c}}{Q}\in T \wedge Q'\subseteq Q\})
            & [\text{Def. of }\closure{exists}{\leftrightleftrightarrows}]\\
            =&\; \{\fwdundertriple{\exists \mathsf{X}. P}{\mathsf{c}}{\exists \mathsf{X}. Q'}\;|\;\fwdundertriple{P}{\mathsf{c}}{Q}\in T \wedge Q'\subseteq Q \wedge \mathsf{X}\subseteq \mathbb{X}_\mathsf{l}\}
            & [\exists \text{ monotone}]\\
            \subseteq&\; \{\fwdundertriple{\exists \mathsf{X}. P}{\mathsf{c}}{\exists \mathsf{X}. Q'}\;|\;\fwdundertriple{P}{\mathsf{c}}{Q}\in T \wedge \exists \mathsf{X}. Q'\subseteq \exists \mathsf{X}. Q \wedge \mathsf{X}\subseteq \mathbb{X}_\mathsf{l}\}
            & [\text{Def. of }\closure{cons}{\rightarrowbelowdot}]\\
            \subseteq&\; \closure{cons}{\rightarrowbelowdot}(\{\fwdundertriple{\exists \mathsf{X}. P}{\mathsf{c}}{\exists \mathsf{X}. Q}\;|\;\fwdundertriple{P}{\mathsf{c}}{Q}\in T \wedge \mathsf{X}\subseteq \mathbb{X}_\mathsf{l}\})
            & [\text{Def. of }\closure{exists}{\leftrightleftrightarrows}]\\
            =&\; (\closure{cons}{\rightarrowbelowdot} \circ \closure{exists}{\leftrightleftrightarrows}) T
            & 
        \end{align*}
    \end{enumerate}
\end{proof}

\setcounter{theorem}{0} 
\renewcommand{\thetheorem}{\ref{cor:triples}}
\begin{corollary}[Axioms] 
    A valid set of axioms $\fwdoveraxiomsset$/$\fwdunderaxiomsset$/$\bwdoveraxiomsset$/$\bwdunderaxiomsset$ is any minimal set  $\fwdovertripleset$/$\fwdundertripleset$/$\bwdovertripleset$/$\bwdundertripleset$ such that:
    \begin{itemize}
        \item $\fwdovertripleset = (\closure{cons}{\rightarrowoverdot} \circ \closure{exists}{\leftrightleftrightarrows} \circ \closure{frame}{\forwardarrows}) \fwdoveraxiomsset$
        \item $\fwdundertripleset = (\closure{cons}{\rightarrowbelowdot} \circ \closure{exists}{\leftrightleftrightarrows} \circ \closure{frame}{\forwardarrows}) \fwdunderaxiomsset$
        \item $\bwdovertripleset = (\closure{cons}{\leftarrowoverdot} \circ \closure{exists}{\leftrightleftrightarrows} \circ \closure{frame}{\backwardarrows}) \bwdoveraxiomsset$
        \item $\bwdundertripleset = (\closure{cons}{\leftarrowbelowdot} \circ \closure{exists}{\leftrightleftrightarrows} \circ \closure{frame}{\backwardarrows}) \bwdunderaxiomsset$
    \end{itemize}    
\end{corollary}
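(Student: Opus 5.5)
The corollary follows by rewriting the defining equations of Definition~\ref{def:axioms} with Theorem~\ref{th:reorder}. Take any valid set of axioms for one of the four analyses, say the forward-over one, so that $\fwdovertripleset = \closure{}{\rightarrowoverdot}(\fwdoveraxiomsset)$ by Definition~\ref{def:axioms}. Theorem~\ref{th:reorder}(1) gives the equality of operators $\closure{}{\rightarrowoverdot} = \closure{cons}{\rightarrowoverdot} \circ \closure{exists}{\leftrightleftrightarrows} \circ \closure{frame}{\forwardarrows}$. Applying both sides to $\fwdoveraxiomsset$ yields the first item immediately. Items (2)--(4) of Theorem~\ref{th:reorder} handle the forward-under, backward-over and backward-under cases in exactly the same way, with the corresponding consequence and frame closures.

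Conversely, suppose a minimal set satisfies the displayed normalized equation. By Theorem~\ref{th:reorder} that equation is literally the same as the defining equation in Definition~\ref{def:axioms}. Minimality is stated with respect to the same closure in both formulations, so it transfers unchanged. Hence the two characterizations of ``valid set of axioms'' coincide.

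The only point needing care is that Theorem~\ref{th:reorder} holds as an equality of operators on arbitrary sets of triples, not just on $\fwdovertripleset$. Its proof shows the pairwise swaps for arbitrary $T$, namely $\closure{frame}{\forwardarrows} \circ \closure{cons}{\rightarrowoverdot} \subseteq \closure{cons}{\rightarrowoverdot} \circ \closure{frame}{\forwardarrows}$, the analogous swap for exists with frame, and the one for exists with cons. By the standard argument of \cite[\S4.2]{DBLP:books/hal/Cousot78}, these swaps turn any finite word over the three closures into the canonical order, using also that each closure is idempotent.

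The ``$\supseteq$'' direction needs each closure to be extensive, so that the composite is contained in the iterated closure. If anything is an obstacle it is this bookkeeping, confirming that the reorder theorem is being applied on exactly the input $\mathsf{Axioms}$. There is no new mathematical content beyond Theorem~\ref{th:reorder}.
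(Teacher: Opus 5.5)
Your argument is exactly the paper's: the corollary is Definition~\ref{def:axioms} with $\closure{}{\rightarrowoverdot}$ (and its three siblings) replaced by the canonical composite given by Theorem~\ref{th:reorder}, and the paper states it as an immediate consequence without a separate proof. One small correction to your side remark: the $\supseteq$ half of Theorem~\ref{th:reorder} only needs the canonical composite to be one of the finite words (plus monotonicity), whereas extensivity of the closures (e.g.\ framing with the empty-heap universal frame, taking $\mathsf{X}=\emptyset$, or $Q'=Q$) is what the $\subseteq$ half uses, to pad words in which some closure does not occur.
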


\setcounter{theorem}{0}
\renewcommand{\thetheorem}{\ref{th:sound_complete_semantic}}
\begin{theorem}[Soundness and completeness]
    Proof systems in Fig.\ref{fig:semantic_proof_systems} are sound and complete:
    \begin{enumerate}
        \item $\Vdash\hl{P}{\mathsf{r}}{Q} \iff \llbracket \mathsf{r} \rrbracket P \subseteq Q$
        \item $\Vdash\il{P}{\mathsf{r}}{Q} \iff Q \subseteq \llbracket \mathsf{r} \rrbracket P$ 
        \item $\Vdash\nc{P}{\mathsf{r}}{Q} \iff \llbracket \overleftarrow{\mathsf{r}} \rrbracket Q \subseteq P$
        \item $\Vdash\sil{P}{\mathsf{r}}{Q} \iff P\subseteq \llbracket \overleftarrow{\mathsf{r}} \rrbracket Q$
    \end{enumerate} 
\end{theorem}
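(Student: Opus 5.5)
Each of the four equivalences is proved in two directions: soundness ($\Rightarrow$) and completeness ($\Leftarrow$).

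\emph{Soundness.} We argue by induction on the derivation in the relevant column of Figure~\ref{fig:semantic_proof_systems}, checking that each rule preserves validity.
\begin{itemize}
\item \textbf{Transfer.} Sound because $\mathsf{Axioms}_{\llbracket \mathsf{c}\rrbracket}\subseteq \mathsf{Triples}_{\llbracket \mathsf{c}\rrbracket}$ by Definition~\ref{def:axioms}.
\item \textbf{Frame.} Sound by Theorem~\ref{th:forward_preservation} for the forward columns and Theorem~\ref{th:backward_preservation} for the backward ones. Take a valid premise, e.g.\ $\llbracket \mathsf{r}\rrbracket P\subseteq Q$ with $P\heapcomp R$. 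Since $\llbracket \mathsf{r}\rrbracket P=:Q_0$ exactly, preservation gives $\llbracket \mathsf{r}\rrbracket(P\sast R)=Q_0\sast R$. Property~\ref{prop:monotonicity} then yields $Q_0\sast R\subseteq Q\sast R$ in the over case and $Q\sast R\subseteq Q_0\sast R$ in the under case. The backward columns are symmetric, using $Q\heapcomp R$ and letting $P_0=\llbracket\overleftarrow{\mathsf{r}}\rrbracket Q$. For the under cases one also needs $Q\heapcomp R$, which follows from $Q\subseteq Q_0$ and $Q_0\heapcomp R$ (heap compatibility is antitone). This point deserves care because the frame condition is stated on the opposite side in the forward columns.
\item \textbf{Exists.} Sound because logical variables are never read or written by programs. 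Hence $\llbracket \mathsf{r}\rrbracket(\exists\mathsf{X}.P)=\exists\mathsf{X}.\llbracket \mathsf{r}\rrbracket P$ for $\mathsf{X}\subseteq\logicvar$, by a routine induction on $\mathsf{r}$; the same holds backward.
\item \textbf{Disj, Cons, Cons2, Empty.} Immediate from additivity and monotonicity of $\llbracket\cdot\rrbracket$ and $\llbracket\overleftarrow{\cdot}\rrbracket$.
\item \textbf{Seq, Choice, Iter.} Standard: use $\llbracket \mathsf{r_1;r_2}\rrbracket=\llbracket \mathsf{r_2}\rrbracket\circ\llbracket \mathsf{r_1}\rrbracket$ and the fixpoint characterization of $\mathsf{r}^\ast$. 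The backward Iter for NC uses that $Q$ is a post-fixpoint of $\llbracket\overleftarrow{\mathsf{r}}\rrbracket$, so $\bigcup_n\llbracket\overleftarrow{\mathsf{r}}\rrbracket^nQ\subseteq Q$.
\end{itemize}

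\emph{Completeness.} We argue by structural induction on $\mathsf{r}$.

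For atomic $\mathsf{c}$, a valid triple lies in $\mathsf{Triples}_{\llbracket\mathsf{c}\rrbracket}$. By Corollary~\ref{cor:triples} it is obtained from some axiom by applying frame, then exists, then cons. Each closure step is mirrored by the rules Transfer, Frame, Exists and Cons.

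For compositions we always first derive the \emph{exact} triple and then close with a consequence rule (Cons or Cons2). In the forward columns this is $(P,\mathsf{r},\llbracket\mathsf{r}\rrbracket P)$; in the backward columns it is $(\llbracket\overleftarrow{\mathsf{r}}\rrbracket Q,\mathsf{r},Q)$. The exact triple is valid in both senses.
\begin{itemize}
\item \textbf{Seq.} Use the intermediate $S=\llbracket\mathsf{r_1}\rrbracket P$, or $S=\llbracket\overleftarrow{\mathsf{r_2}}\rrbracket Q$ in the backward case.
\item \textbf{Choice.} In the over-approximating columns, weaken each branch to the common $Q$ (forward) or $P$ (backward) with Cons. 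In the under columns the union appears directly in the rule.
\item \textbf{Iteration, over cases.} Take the invariant $I=\bigcup_n\llbracket\mathsf{r}\rrbracket^nP$ (forward) or $\bigcup_n\llbracket\overleftarrow{\mathsf{r}}\rrbracket^nQ$ (backward, for NC). Then prove the exact triple for $\mathsf{r}$ on $I$ by the induction hypothesis, apply Iter, and finish with Cons and Cons2.
\item \textbf{Iteration, under cases.} Take the sequence $P_n=\llbracket\mathsf{r}\rrbracket^nP$ (resp.\ $Q_n=\llbracket\overleftarrow{\mathsf{r}}\rrbracket^nQ$) with the exact one-step triples supplied by the induction hypothesis, and apply the infinitary Iter rule.
\end{itemize}

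\emph{Main obstacle.} The delicate point is the NC column, which is new. There we must ensure that the backward Iter rule, given only a one-step invariant, still suffices. This works because the least fixpoint $\bigcup_n\llbracket\overleftarrow{\mathsf{r}}\rrbracket^nQ$ is itself such an invariant, so the argument mirrors Cook's completeness proof in the dual direction. A second point requiring attention is that the atomic-case completeness relies on Theorem~\ref{th:reorder}, which supplies the single-pass normal form used above.
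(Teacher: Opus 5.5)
Your overall route is the paper's. Soundness goes rule by rule over derivations, atomic completeness comes by construction from the axiom set (Corollary~\ref{cor:triples}), and composites are handled via exact triples, with the NC iteration case closed by the invariant $\bigcup_n\llbracket\overleftarrow{\mathsf r}\rrbracket^nQ$ exactly as in the appendix. Spelling out the three columns the paper delegates to the literature is fine.

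The step that does not go through is Frame soundness. You discharge it by Theorems~\ref{th:forward_preservation} and~\ref{th:backward_preservation} for arbitrary $\mathsf r$, but those results fail in that generality for the paper's own semantics. The paper's one-line ``induction on the derivation tree'' rests on the same results, so the problem is inherited, not introduced by you. The examples below refute the rule itself, not just the argument.

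For NC, take $\mathsf r=\mathsf{[x]:=y}$ and $Q=\llbrace\emp\rrbrace$. Every non-aborting run ends with the cell at $\mathsf x$ allocated, so $\llbracket\overleftarrow{\mathsf r}\rrbracket Q=\emptyset$ and $(\emptyset,\mathsf r,Q)$ is valid. With $R=\llbrace\mathsf{a'\mapsto b'}\rrbrace\in\mathbb F$, $Q\heapcomp R$ holds trivially. Yet the framed triple $(\emptyset,\mathsf r,R)$ is invalid: every state with $\mathsf x=\mathsf{a'}$, $\mathsf y=\mathsf{b'}$ and heap $[\mathsf{a'}\mapsto v]$ reaches $R$. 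Compatibility with $Q$ does not force the cells written by $\mathsf r$ into $Q$'s footprint. The same example shows the backward frame closure does not preserve backward-over validity. Hence no axiom set as in Definition~\ref{def:axioms} exists for NC, and your atomic completeness step is unavailable there too.

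The forward direction has the analogous problem. In model~\circled{1}, let $\mathsf r=\mathsf{x:=alloc();free(x)}$, $P=\llbrace\emp\rrbrace$, $R$ as above, and $Q$ the states of $\llbracket\mathsf r\rrbracket P$ with $\mathsf x=\mathsf{a'}$. Then $P\heapcomp R$ holds, but no state of $Q\sast R$ is reachable from $P\sast R$, because the temporarily allocated cell must avoid $\mathsf{a'}$. Compatibility checked on $P$ alone cannot see cells allocated and disposed inside $\mathsf r$, and the same program breaks the SIL column. In model~\circled{2}, $\mathsf{x:=alloc()}$ may reuse a $\bot$-cell owned by the frame $\llbrace\mathsf{a'\not\mapsto}\rrbrace$, which breaks even the forward over-approximate case. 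The soundness half therefore needs either a restricted setting with a genuine two-sided frame property or stronger side conditions.

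Separately, your claim that the under cases ``also need $Q\heapcomp R$'' is unnecessary and mis-justified. Granting preservation, monotonicity of $\sast$ alone gives $Q\sast R\subseteq Q_0\sast R$ (resp.\ $P\sast R\subseteq P_0\sast R$). Moreover, $Q_0\heapcomp R$ does not follow from $P\heapcomp R$: for $\mathsf{x:=alloc()}$ from $\llbrace\emp\rrbrace$, the exact post contains states whose new cell is $\mathsf{a'}$, clashing with $R$. The real obstacle in this theorem is framing, not the NC iteration case, which is routine once the invariant is chosen.
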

\begin{proof}
    Soundness can be proved by induction on the derivation tree. 
    Completeness of atomic commands follows directly from the definitions of semantic axioms
    Furthermore, completeness of compositions for Fwd-over, Fwd-under and Bwd-under can be proved respectively as in \cite{DBLP:journals/siamcomp/Cook78}, \cite{DBLP:journals/pacmpl/OHearn20}, and \cite{ DBLP:journals/pacmpl/AscariBGL25}.
    In the following, we show the completeness of the composition of atomic commands for Bwd-over. 
    Particularly, we show that for any $Q$, we can derive the triple $\nc{\llbracket \overleftarrow{\mathsf{r}}\rrbracket Q}{\mathsf{r}}{Q}$. Completeness will follow by applying the rule of consequence. We proceed by induction on the structure of $\mathsf{r}$:
    \\\textbf{Case $\mathsf{r\triangleq r_1;r_2}$:}\\
    We show a derivation for $\nc{\llbracket \overleftarrow{\mathsf{r_1;r_2}}\rrbracket Q}{\mathsf{r_1;r_2}}{Q}$. 
    By property of backward semantics \cite[Lemma 3.1]{DBLP:journals/pacmpl/AscariBGL25} we know that $\llbracket \overleftarrow{\mathsf{r_1;r_2}}\rrbracket Q = \llbracket \overleftarrow{\mathsf{r_1}}\rrbracket\llbracket \overleftarrow{\mathsf{r_2}}\rrbracket Q$. Hence, by inductive hypothesis we assume that the triples $\nc{\llbracket \overleftarrow{\mathsf{r_1}}\rrbracket\llbracket \overleftarrow{\mathsf{r_2}}\rrbracket Q}{\mathsf{r_1}}{\llbracket \overleftarrow{\mathsf{r_2}}\rrbracket Q}$ and $\nc{\llbracket \overleftarrow{\mathsf{r_2}}\rrbracket Q}{\mathsf{r_2}}{Q}$ are derivable, then by applying the rule of seq we can derive:
    \[
    \infer[\mathsf{(Seq)}]
        {\nc{\llbracket \overleftarrow{\mathsf{r_1;r_2}}\rrbracket Q}{\mathsf{r_1;r_2}}{Q}}
        {\nc{\llbracket \overleftarrow{\mathsf{r_1}}\rrbracket\llbracket \overleftarrow{\mathsf{r_2}}\rrbracket Q}{\mathsf{r_1}}{\llbracket \overleftarrow{\mathsf{r_2}}\rrbracket Q}&\nc{\llbracket \overleftarrow{\mathsf{r_2}}\rrbracket Q}{\mathsf{r_2}}{Q}}
    \]
    \textbf{Case $\mathsf{r\triangleq r_1+r_2}$:}\\
    We show a derivation for $\nc{\llbracket \overleftarrow{\mathsf{r_1+r_2}}\rrbracket Q}{\mathsf{r_1+r_2}}{Q}$. By property of backward semantics \cite[Lemma 3.1]{DBLP:journals/pacmpl/AscariBGL25} we know that $\llbracket \overleftarrow{\mathsf{r_1+r_2}}\rrbracket Q = \llbracket \overleftarrow{\mathsf{r_1}}\rrbracket Q \cup \llbracket \overleftarrow{\mathsf{r_2}}\rrbracket Q$. Hence, by inductive hypothesis we assume that the triples $\nc{\llbracket \overleftarrow{\mathsf{r_1}}\rrbracket Q}{\mathsf{r_1}}{Q}$ and $\nc{\llbracket \overleftarrow{\mathsf{r_2}}\rrbracket Q}{\mathsf{r_2}}{Q}$ are derivable, then by applying the rule of choice we can derive:
    \[
    \infer[\mathsf{(Choice)}]
        {\nc{\llbracket \overleftarrow{\mathsf{r_1+r_2}}\rrbracket Q}{\mathsf{r_1+r_2}}{Q}}
        {\infer[\mathsf{(Cons)}]
            {\nc{\llbracket \overleftarrow{\mathsf{r_1}}\rrbracket Q \cup \llbracket \overleftarrow{\mathsf{r_2}}\rrbracket Q}{\mathsf{r_1}}{Q}}
            {\nc{\llbracket \overleftarrow{\mathsf{r_1}}\rrbracket Q}{\mathsf{r_1}}{Q}} 
        & \infer[\mathsf{(Cons)}]
            {\nc{\llbracket \overleftarrow{\mathsf{r_1}}\rrbracket Q \cup \llbracket \overleftarrow{\mathsf{r_2}}\rrbracket Q}{\mathsf{r_2}}{Q}}
            {\nc{\llbracket \overleftarrow{\mathsf{r_2}}\rrbracket Q}{\mathsf{r_2}}{Q}}}
    \]
    \textbf{Case $\mathsf{r\triangleq r_1^\ast}$:}\\
    We show a derivation for $\nc{\llbracket \overleftarrow{\mathsf{r_1^\ast}}\rrbracket Q}{\mathsf{r_1^\ast}}{Q}$. 
    We set $I \triangleq\llbracket \overleftarrow{\mathsf{r_1^\ast}}\rrbracket Q$. By property of backward semantics \cite[Lemma 3.1]{DBLP:journals/pacmpl/AscariBGL25} we know that $\llbracket \overleftarrow{\mathsf{r_1^\ast}}\rrbracket Q = \cup_{n\geq 0} \llbracket \overleftarrow{\mathsf{r_1}}\rrbracket^n Q$ and then $\llbracket \overleftarrow{\mathsf{r_1}}\rrbracket I = \llbracket \overleftarrow{\mathsf{r_1}} \rrbracket (\cup_{n\geq 0} \llbracket \overleftarrow{\mathsf{r_1}}\rrbracket^n Q) = \cup_{n\geq 0} \llbracket \overleftarrow{\mathsf{r_1}}\rrbracket^{n+1}Q = \cup_{n\geq 1} \llbracket \overleftarrow{\mathsf{r_1}}\rrbracket^{n}Q \subseteq \cup_{n\geq 0} \llbracket \overleftarrow{\mathsf{r_1}}\rrbracket^{n}Q = I$. Hence, by inductive hypothesis we assume that the triple $\nc{I}{\mathsf{r_1}}{I}$ is derivable.
    Furthermore, we have that $Q = \llbracket \mathsf{r_1}\rrbracket^0 Q \subseteq \llbracket \mathsf{r_1}^\ast\rrbracket Q = I$, then by applying rule of Iter and Cons we can derive:
    \[
    \infer[\mathsf{(Cons)}]
        {\nc{I}{\mathsf{r_1^\ast}}{Q}}
        {\infer[\mathsf{(Iter)}]
            {\nc{I}{\mathsf{r_1^\ast}}{I}}
            {\nc{I}{\mathsf{r_1}}{I}}
            & Q\subseteq I
        }
    \]
    That by definition of $I$ is analogous to the triple $\nc{\llbracket \overleftarrow{\mathsf{r_1^\ast}}\rrbracket Q}{\mathsf{r_1^\ast}}{Q}$.
\end{proof}

In the following, we state Theorem \ref{th:disj_closure} and Theorem \ref{th:disj_no_disj} extensively for each logic under consideration. 

\setcounter{theorem}{0} 
\renewcommand{\thetheorem}{\ref{th:disj_closure}}
\begin{theorem}
    It holds that:
    \begin{enumerate}
        \item $(\closure{cons}{\rightarrowoverdot} \cup \closure{disj}{\leftrightleftrightarrows} \cup \closure{exists}{\leftrightleftrightarrows} \cup \closure{frame}{\forwardarrows})^\ast = \closure{cons}{\rightarrowoverdot} \circ \closure{disj}{\leftrightleftrightarrows} \circ \closure{exists}{\leftrightleftrightarrows} \circ \closure{frame}{\forwardarrows}$
        \item $(\closure{cons}{\rightarrowbelowdot} \cup \closure{disj}{\leftrightleftrightarrows} \cup \closure{exists}{\leftrightleftrightarrows} \cup \closure{frame}{\forwardarrows})^\ast = \closure{cons}{\rightarrowbelowdot} \circ \closure{disj}{\leftrightleftrightarrows} \circ \closure{exists}{\leftrightleftrightarrows} \circ \closure{frame}{\forwardarrows}$
        \item $(\closure{cons}{\leftarrowoverdot} \cup \closure{disj}{\leftrightleftrightarrows} \cup \closure{exists}{\leftrightleftrightarrows} \cup \closure{frame}{\backwardarrows})^\ast = \closure{cons}{\leftarrowoverdot} \circ \closure{disj}{\leftrightleftrightarrows} \circ \closure{exists}{\leftrightleftrightarrows} \circ \closure{frame}{\backwardarrows}$
        \item $(\closure{cons}{\leftarrowbelowdot} \cup \closure{disj}{\leftrightleftrightarrows} \cup \closure{exists}{\leftrightleftrightarrows} \cup \closure{frame}{\backwardarrows})^\ast = \closure{cons}{\leftarrowbelowdot} \circ \closure{disj}{\leftrightleftrightarrows} \circ \closure{exists}{\leftrightleftrightarrows} \circ \closure{frame}{\backwardarrows}$
    \end{enumerate}
\end{theorem}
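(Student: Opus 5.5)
We follow the same strategy as in the proof of Theorem~\ref{th:reorder}. The inclusion $\supseteq$ is immediate: each composite on the right is a finite combination of the four closures, and hence lies within the star of their union. For $\subseteq$ we rely on the classical result \cite[\S4.2]{DBLP:books/hal/Cousot78}. If closures $\rho_1,\dots,\rho_n$ satisfy $\rho_j\circ\rho_i\subseteq\rho_i\circ\rho_j$ whenever $\rho_i$ must be applied after $\rho_j$ in the target order, then $\rho_1\circ\cdots\circ\rho_n$ is idempotent. It therefore coincides with $(\bigcup_i\rho_i)^\ast$. So it suffices to establish six pairwise swap inclusions. Three of them are those already proved in Theorem~\ref{th:reorder}: frame/cons, frame/exists and exists/cons. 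The three new ones involve $\closure{disj}{\leftrightleftrightarrows}$:
\[
\closure{disj}{\leftrightleftrightarrows}\circ\closure{cons}{}\subseteq\closure{cons}{}\circ\closure{disj}{\leftrightleftrightarrows},\quad
\closure{exists}{\leftrightleftrightarrows}\circ\closure{disj}{\leftrightleftrightarrows}\subseteq\closure{disj}{\leftrightleftrightarrows}\circ\closure{exists}{\leftrightleftrightarrows},\quad
\closure{frame}{}\circ\closure{disj}{\leftrightleftrightarrows}\subseteq\closure{disj}{\leftrightleftrightarrows}\circ\closure{frame}{}.
\]

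We handle these in order, for the forward-over case (item 1), and then indicate the changes needed for the other senses and directions.

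\emph{Disj/cons.} Suppose we have triples $(P_i,\mathsf c,Q_i')$ with $Q_i\subseteq Q_i'$, and we take their union. Then $\bigcup Q_i\subseteq\bigcup Q_i'$, so the union triple arises as the consequence of $(\bigcup P_i,\mathsf c,\bigcup Q_i)$. For the under-approximation and backward consequences the same monotonicity of $\bigcup$ applies, in the appropriate component and direction.

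\emph{Exists/disj.} The operator $\exists\mathsf X.$ distributes over arbitrary unions, because it is the image of a relation. Hence $\exists\mathsf X.\bigcup_i P_i=\bigcup_i\exists\mathsf X.P_i$, and likewise for $Q$.

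\emph{Frame/disj.} This is the key case. We need $(\bigcup_iP_i)\sast R=\bigcup_i(P_i\sast R)$, which holds because $\sast$ is defined pointwise and therefore distributes over unions. We also need the side condition to transfer: by Property~\ref{prop:union_indip}, $(\bigcup_iP_i)\heapcomp R$ holds iff $P_i\heapcomp R$ for every $i$. In the backward case the side condition is placed on $Q$ instead, and the same property applies. Each framed disjunct $(P_i\sast R,\mathsf c,Q_i\sast R)$ therefore lies in $\closure{frame}{}T$, and their union lies in $\closure{disj}{\leftrightleftrightarrows}\closure{frame}{}T$.

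With these swaps in hand, any word over the four closures can be bubble-sorted into the order frame, exists, disj, cons. Idempotence of each closure then collapses repeated letters, which proves the equality.

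The main obstacle is making the appeal to Cousot's commutation lemma precise for four closures. Pairwise swaps yield idempotence of the composite only when every adjacent pair appearing in a word can be reordered, and we must make sure we never need a reversed inclusion, for instance $\closure{disj}{\leftrightleftrightarrows}\circ\closure{exists}{\leftrightleftrightarrows}$ moved the wrong way. If the lemma is awkward to apply directly, we would instead argue by induction on the length of a word. We show that $\rho_c\circ\rho_d\circ\rho_e\circ\rho_f\circ\rho_x\subseteq\rho_c\circ\rho_d\circ\rho_e\circ\rho_f$ for each single closure $\rho_x$, by pushing $\rho_x$ leftwards through the chain using the six swaps and then absorbing it by idempotence. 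We would also double-check that the index set $I$ in Definition~\ref{def:disj_closure} may be empty, so that the triple $(\emptyset,\mathsf c,\emptyset)$ does not break any of the swaps.
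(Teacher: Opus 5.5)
Your argument follows the paper's proof: the same target order and the same six swap inclusions. The three $\closure{disj}{\leftrightleftrightarrows}$ swaps are justified by the same facts the paper uses: monotonicity of $\cup$, distributivity of $\exists\mathsf{X}$ and of $(\cdot)\sast R$ over unions, and Property~\ref{prop:union_indip}. Those three swaps are correct.

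The gap is in the swap you import from Theorem~\ref{th:reorder}, namely $\closure{frame}{\forwardarrows}\circ\closure{exists}{\leftrightleftrightarrows}\subseteq\closure{exists}{\leftrightleftrightarrows}\circ\closure{frame}{\forwardarrows}$, and its backward analogue. Its proof relies on Property~\ref{prop:distributivity}, $(\exists\mathsf{X}.P)\sast R=\exists\mathsf{X}.(P\sast R)$. That law holds only when $R$ does not depend on $\mathsf{X}$ (i.e.\ $\exists\mathsf{X}.R=R$). A universal frame is only required to be insensitive to \emph{program} variables, so it may constrain exactly the logical variables that were just quantified.

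Concretely, take $P_0=\{(s,[])\mid s(\mathsf{x}')=3\}$ and $T=\{(P_0,\mathsf{true?},P_0)\}$, a set of valid triples.
\begin{itemize}
\item Left-hand side: applying $\closure{exists}{\leftrightleftrightarrows}$ with $\mathsf{X}=\{\mathsf{x}'\}$ gives $(E,\mathsf{true?},E)$, where $E$ is the set of all memories with empty heap. Framing with $R=\{(s,[])\mid s(\mathsf{x}')=5\}\in\mathbb{F}$ is allowed ($E\heapcomp R$ trivially) and $E\sast R=R$. So $(R,\mathsf{true?},R)$ lies in the left-hand side of every item.
\item Right-hand side: each pre- or postcondition of a triple in $\closure{exists}{\leftrightleftrightarrows}\circ\closure{frame}{\forwardarrows}(T)$ (or the backward analogue) has the form $\exists\mathsf{Y}.(P_0\sast R')$. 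If $\mathsf{x}'\notin\mathsf{Y}$, all its stores satisfy $s(\mathsf{x}')=3$; if $\mathsf{x}'\in\mathsf{Y}$, it is closed under changing $\mathsf{x}'$.
\item No nonempty set of either kind is contained in $R$, so no union of such sets equals $R$.
\item The consequence closure never alters the component that would have to become $R$: the precondition in the forward items, the postcondition in the backward ones.
\end{itemize}
Hence $(R,\mathsf{true?},R)$ is not in the right-hand side, and all four identities fail for this $T$. The same example already refutes Theorem~\ref{th:reorder}, so the paper's proof has the same hole.

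Intuitively, the star can forget a logical variable and then re-constrain it, which requires a frame \emph{after} an exists. Your bubble-sort would have to move that frame past the exists, and that is exactly the move that fails.

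A repair needs a freshness argument. For instance, restrict to sets $T$ closed under renaming of logical variables, with pre-, postconditions and frames depending on finitely many logical variables. Then rename the quantified $\mathsf{X}$ apart from $R$, $P$ and $Q$ before swapping, and use the distributivity law under the side condition $\exists\mathsf{X}.R=R$. Alternatively, establish the normal form only for the concrete axiom sets used in Corollary~\ref{cor:triples}.
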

\begin{proof}
    We only prove the first statement, as the others are analogous. 
    Moreover, we only prove the inclusion $\subseteq$ since $\supseteq$ trivially holds. 
    The proof proceeds by showing pairwise switches of consequent closures, avoiding cases already proved in the proof of Theorem \ref{th:reorder}. 
    \begin{enumerate}
        \item $\closure{exists}{\leftrightleftrightarrows} \circ \closure{disj}{\leftrightleftrightarrows} = \closure{disj}{\leftrightleftrightarrows} \circ \closure{exists}{\leftrightleftrightarrows}$:
        \begin{align*}
            &\closure{exists}{\leftrightleftrightarrows} \circ \closure{disj}{\leftrightleftrightarrows} T \\
            =&\; \closure{exists}{\leftrightleftrightarrows}\{\fwdovertriple{\cup_{i\in I}P_i}{\mathsf{c}}{\cup_{i\in I}Q_i}\;|\; \forall i\in I. \fwdovertriple{P_i}{\mathsf{c}}{Q_i}\in T\}
            & [\text{Def. of }\closure{disj}{\leftrightleftrightarrows}]\\
            =&\; \{\fwdovertriple{\exists \mathsf{X}.\cup_{i\in I}P_i}{\mathsf{c}}{\exists \mathsf{X}.\cup_{i\in I}Q_i}\;|\; \forall i\in I. \fwdovertriple{P_i}{\mathsf{c}}{Q_i}\in T, \mathsf{X}\subseteq \mathbb{X}_\mathsf{l}\} 
            & [\text{Def. of }\closure{exists}{\leftrightleftrightarrows}]\\
            =&\; \{\fwdovertriple{\cup_{i\in I}\exists \mathsf{X}.P_i}{\mathsf{c}}{\cup_{i\in I}\exists \mathsf{X}.Q_i}\;|\; \forall i\in I. \fwdovertriple{P_i}{\mathsf{c}}{Q_i}\in T, \mathsf{X}\subseteq \mathbb{X}_\mathsf{l}\} 
            & [\text{Distributivity of $\cup$ over $\exists$}]\\
            =&\; \closure{exists}{\leftrightleftrightarrows}\{\fwdovertriple{\exists \mathsf{X}.P_i}{\mathsf{c}}{\exists \mathsf{X}.Q_i}\;|\; \fwdovertriple{P_i}{\mathsf{c}}{Q_i}\in T, \mathsf{X}\subseteq \mathbb{X}_\mathsf{l}\}
            & [\text{Def. of }\closure{disj}{\leftrightleftrightarrows}]\\
            =&\; \closure{exists}{\leftrightleftrightarrows} \circ \closure{disj}{\leftrightleftrightarrows} T 
            & [\text{Def. of }\closure{exists}{\leftrightleftrightarrows}]
        \end{align*}
        \item $\closure{frame}{\forwardarrows} \circ \closure{disj}{\leftrightleftrightarrows} = \closure{disj}{\leftrightleftrightarrows} \circ \closure{frame}{\forwardarrows}$:
        \begin{align*}
            &\closure{frame}{\forwardarrows} \circ \closure{disj}{\leftrightleftrightarrows} T \\
            =&\; \closure{frame}{\forwardarrows}\{\fwdovertriple{\cup_{i\in I}P_i}{\mathsf{c}}{\cup_{i\in I}Q_i}\;|\; \forall i\in I. \fwdovertriple{P_i}{\mathsf{c}}{Q_i}\in T\} 
            & [\text{Def. of }\closure{disj}{\leftrightleftrightarrows}]\\
            =&\; \{\fwdovertriple{(\cup_{i\in I}P_i)\sast R}{\mathsf{c}}{(\cup_{i\in I}Q_i)\sast R}\;|\; \forall i\in I. \fwdovertriple{P_i}{\mathsf{c}}{Q_i}\in T, R\in \mathbb{F}, (\cup_{i\in I}P_i)\heapcomp R\} 
            & [\text{Def. of }\closure{frame}{\forwardarrows}]\\
            =&\; \{\fwdovertriple{\cup_{i\in I}(P_i\sast R)}{\mathsf{c}}{\cup_{i\in I}(Q_i\sast R)}\;|\; \forall i\in I. \fwdovertriple{P_i}{\mathsf{c}}{Q_i}\in T, R\in \mathbb{F}, (\cup_{i\in I}P_i)\heapcomp R\}  
            & [\text{Distributivity}]\\
            =&\; \{\fwdovertriple{\cup_{i\in I}(P_i\sast R)}{\mathsf{c}}{\cup_{i\in I}(Q_i\sast R)}\;|\; \forall i\in I. \fwdovertriple{P_i}{\mathsf{c}}{Q_i}\in T, R\in \mathbb{F}, \forall i \in I. P_i\heapcomp R\}  
            & [\text{Prop.} \ref{prop:union_indip}]\\
            =&\; \closure{disj}{\leftrightleftrightarrows}\{\fwdovertriple{P_i\sast R}{\mathsf{c}}{Q_i\sast R}\;|\; \fwdovertriple{P_i}{\mathsf{c}}{Q_i}\in T, R\in \mathbb{F}, \forall i\in I. P_i\heapcomp R\}
            & [\text{Def. of }\closure{disj}{\leftrightleftrightarrows}]\\
            \subseteq &\; \closure{disj}{\leftrightleftrightarrows} \circ \closure{frame}{\forwardarrows} T 
            & [\text{Def. of }\closure{frame}{\forwardarrows}]
        \end{align*}
        \item $\closure{disj}{\leftrightleftrightarrows} \circ \closure{cons}{\rightarrowoverdot} = \closure{cons}{\rightarrowoverdot} \circ \closure{disj}{\leftrightleftrightarrows}$:
        \begin{align*}
            &\closure{disj}{\leftrightleftrightarrows} \circ \closure{cons}{\rightarrowoverdot} T \\
            =&\; \closure{disj}{\leftrightleftrightarrows}\{\fwdovertriple{P}{\mathsf{c}}{Q'}\;|\; \fwdovertriple{P}{\mathsf{c}}{Q}\in T, Q\subseteq Q'\} 
            & [\text{Def. of }\closure{cons}{\rightarrowoverdot}] \\
            =&\; \{\fwdovertriple{\cup_{i\in I}P_i}{\mathsf{c}}{\cup_{i\in I}Q'_i}\;|\; \forall i\in I.\fwdovertriple{P_i}{\mathsf{c}}{Q_i}\in T, Q_i\subseteq Q'_i\}
            & [\text{Def. of }\closure{disj}{\leftrightleftrightarrows}] \\
            \subseteq&\; \{\fwdovertriple{\cup_{i\in I}P_i}{\mathsf{c}}{\cup_{i\in I}Q'_i}\;|\; \forall i\in I.\fwdovertriple{P_i}{\mathsf{c}}{Q_i}\in T, \cup_{i\in I} Q_i\subseteq \cup_{i\in I} Q'_i\}
            & [\cup \text{ monotone}] \\
            \subseteq&\; \closure{cons}{\rightarrowoverdot}\{\fwdovertriple{\cup_{i\in I}P_i}{\mathsf{c}}{\cup_{i\in I}Q_i}\;|\; \forall i\in I.\fwdovertriple{P_i}{\mathsf{c}}{Q_i}\in T\}
            & [\text{Def. of }\closure{cons}{\rightarrowoverdot}] \\
            =&\; \closure{cons}{\rightarrowoverdot} \circ \closure{disj}{\leftrightleftrightarrows} T
            & [\text{Def. of }\closure{disj}{\leftrightleftrightarrows}]
        \end{align*}
    \end{enumerate}
\end{proof}

\setcounter{theorem}{0} 
\renewcommand{\thetheorem}{\ref{th:disj_no_disj}}
\begin{theorem}
    For every index set $I$,
    \begin{enumerate}
        \item \resizebox{\linewidth}{!}{$\closure{disj}{\leftrightleftrightarrows} \circ \closure{exists}{\leftrightleftrightarrows} \circ \closure{frame}{\forwardarrows}\{\fwdovertriple{P_i}{\mathsf{c}}{Q_i})\;|\;i\in I\} = 
        \closure{exists}{\leftrightleftrightarrows} \circ \closure{frame}{\forwardarrows}\{\fwdovertriple{\cup_{i\in I} (P_i\cap d'=i)}{\mathsf{c}}{\cup_{i\in I} (Q_i\cap d'=i)})\}$}
        \item \resizebox{\linewidth}{!}{$\closure{disj}{\leftrightleftrightarrows} \circ \closure{exists}{\leftrightleftrightarrows} \circ \closure{frame}{\forwardarrows}\{\fwdundertriple{P_i}{\mathsf{c}}{Q_i})\;|\;i\in I\} = 
        \closure{exists}{\leftrightleftrightarrows} \circ \closure{frame}{\forwardarrows}\{\fwdundertriple{\cup_{i\in I} (P_i\cap d'=i)}{\mathsf{c}}{\cup_{i\in I} (Q_i\cap d'=i)})\}$}
        \item \resizebox{\linewidth}{!}{$\closure{disj}{\leftrightleftrightarrows} \circ \closure{exists}{\leftrightleftrightarrows} \circ \closure{frame}{\backwardarrows}\{\bwdovertriple{P_i}{\mathsf{c}}{Q_i})\;|\;i\in I\} = 
        \closure{exists}{\leftrightleftrightarrows} \circ \closure{frame}{\backwardarrows}\{\bwdovertriple{\cup_{i\in I} (P_i\cap d'=i)}{\mathsf{c}}{\cup_{i\in I} (Q_i\cap d'=i)})\}$}
        \item \resizebox{\linewidth}{!}{$\closure{disj}{\leftrightleftrightarrows} \circ \closure{exists}{\leftrightleftrightarrows} \circ \closure{frame}{\backwardarrows}\{\bwdundertriple{P_i}{\mathsf{c}}{Q_i})\;|\;i\in I\} = 
        \closure{exists}{\leftrightleftrightarrows} \circ \closure{frame}{\backwardarrows}\{\bwdundertriple{\cup_{i\in I} (P_i\cap d'=i)}{\mathsf{c}}{\cup_{i\in I} (Q_i\cap d'=i)})\}$}
    \end{enumerate}
\end{theorem}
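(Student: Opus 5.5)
The plan is to prove the two inclusions separately, treating $d'$ as a logical variable that is fresh for the whole family. Concretely, I assume that every $P_i$ and $Q_i$ is invariant under changes of $d'$, and that $I$ injects into $\valdomain$, so that $d'=i$ denotes the set of memories whose logical store maps $d'$ to (the code of) $i$, with no constraint on the heap. Two facts will be used repeatedly:
\begin{itemize}
\item $\sast$ distributes over arbitrary unions, and $(P\cap d'=i)\sast R = P\sast(R\cap d'=i)$;
\item $\exists\mathsf{X}$ commutes with unions, as already used in the proof of Theorem~\ref{th:disj_closure}.
\end{itemize}
I give the argument for case (1); the other three cases reuse it verbatim, since only $\sast$, $\exists$, $\cup$ and $\heapcomp$ are involved and not the sense of approximation.

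For the inclusion $\supseteq$, take an element $(\exists\mathsf{X}.((\cup_i(P_i\cap d'=i))\sast R),\mathsf{c},\exists\mathsf{X}.((\cup_i(Q_i\cap d'=i))\sast R))$ of the right-hand side, with $R\in\mathbb{F}$ and $\cup_i(P_i\cap d'=i)\heapcomp R$.
\begin{itemize}
\item Rewrite the precondition as $\cup_i\exists\mathsf{X}.(P_i\sast R_i)$, where $R_i\triangleq R\cap(d'=i)$, and treat the postcondition in the same way.
\item Each $R_i$ is a universal frame, because intersecting with a condition on a logical variable keeps closure under changes of program variables.
\item $P_i\heapcomp R_i$ holds. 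Any two states of $P_i$ and $R_i$ with equal stores have $d'=i$, so the first also lies in $P_i\cap d'=i$, and Property~\ref{prop:union_indip} then gives disjointness of the heaps.
\end{itemize}
Hence each $(\exists\mathsf{X}.(P_i\sast R_i),\mathsf{c},\exists\mathsf{X}.(Q_i\sast R_i))$ lies in $\closure{exists}{\leftrightleftrightarrows}\circ\closure{frame}{\forwardarrows}\{(P_i,\mathsf{c},Q_i)\}$. Applying $\closure{disj}{\leftrightleftrightarrows}$ over $i\in I$ yields the triple we started from.

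For the inclusion $\subseteq$, an element of the left-hand side is $(\cup_{j\in J}\exists\mathsf{X}_j.(P_{i_j}\sast R_j),\mathsf{c},\cup_{j\in J}\exists\mathsf{X}_j.(Q_{i_j}\sast R_j))$, where $j\mapsto i_j$ is an arbitrary map into $I$ and each $R_j$ is compatible with $P_{i_j}$. I want one frame $R$ and one set $\mathsf{X}$ with $d'\in\mathsf{X}$ reproducing this union. Two mismatches have to be absorbed by the frame, and I build it in three steps.
\begin{itemize}
\item \emph{Disjuncts not selected.} Indices $i\notin\{i_j\}$ must contribute nothing. I choose $R$ so that it contains no state with $d'=i$ for such $i$; those disjuncts of the unified triple are then framed to $\emptyset$.
\item \emph{Repeated indices.} Several $j$ may share the same $i_j$. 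I introduce a second fresh logical tag $e'$ and set $R\triangleq\cup_j\,(R_j\cap d'=i_j\cap e'=j)$.
\item \emph{Different bound variables.} To reconcile the different sets $\mathsf{X}_j$, I first rename the variables of each $\mathsf{X}_j$ apart inside $R_j$, using the denumerable supply of logical variables. The renaming must leave $P_{i_j}$ and $Q_{i_j}$ unchanged, so it acts on variables of $\mathsf{X}_j$ that occur only in $R_j$. I then take $\mathsf{X}\triangleq\bigcup_j\mathsf{X}_j\cup\{d',e'\}$.
\end{itemize}
With this choice, distributing $\sast$ and $\exists$ over the union and then quantifying $d',e'$ should give back exactly the $j$-th disjunct $\exists\mathsf{X}_j.(P_{i_j}\sast R_j)$ in each branch, and likewise for the postconditions. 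Compatibility $\cup_i(P_i\cap d'=i)\heapcomp R$ reduces through Property~\ref{prop:union_indip} to the hypotheses $P_{i_j}\heapcomp R_j$, because states with equal stores share the same values of $d'$ and $e'$.

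The main obstacle is the last step of this construction, namely the heterogeneous $\mathsf{X}_j$. Quantifying a variable that is bound in one disjunct but free in another changes the meaning of the other disjunct. It is only harmless after the renaming, and only if $P_{i_j}$ and $Q_{i_j}$ themselves do not depend on the variables being renamed. I would try first to show that, without loss of generality, each $\mathsf{X}_j$ can be split into a part that does not occur in $P_{i_j},Q_{i_j}$ and a part that does. The first part is renamed into $R_j$. For the second part I would need the shared quantification to be sound, which may force an additional freshness assumption on the family $\{(P_i,\mathsf{c},Q_i)\}_{i\in I}$ or a restriction to a common $\mathsf{X}$. If that fails, I would fall back to proving the equality modulo $\closure{exists}{\leftrightleftrightarrows}$-equivalence, which is what the Remark after Theorem~\ref{th:disj_no_disj} actually needs for choosing axioms.
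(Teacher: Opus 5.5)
Your $\supseteq$ half is correct. The identity $(P_i\cap d'=i)\sast R=P_i\sast(R\cap d'=i)$, the distribution of $\sast$ and $\exists\mathsf{X}$ over unions, and the check $P_i\heapcomp(R\cap d'=i)$ (resp.\ $Q_i\heapcomp(R\cap d'=i)$ for items (3)--(4)) are all sound, and none of them needs freshness of $d'$.

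The gap is the $\subseteq$ half. The obstacle you flag at the end is not technical: that inclusion is false as stated, so no renaming or tagging can close it. A right-hand triple has the form $(\exists\mathsf{X}.A,\mathsf{c},\exists\mathsf{X}.B)$ for a single $\mathsf{X}$. So for each logical variable, its precondition is either invariant under changing that variable (if it is in $\mathsf{X}$) or inherits the constraints the $P_i$ put on it (if not). A $\closure{disj}{\leftrightleftrightarrows}$-combination can mix both behaviours.

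Concretely, take $I=\{1\}$, $\mathsf{c}=\skipexp$ and $P_1=Q_1=\llbrace\mathsf{z'\doteq 0}\rrbrace$; this triple is valid in all four senses. Frame it with $\llbrace\mathsf{y'\doteq 0}\rrbrace$ and, separately, with $\llbrace\mathsf{y'\doteq 1}\rrbrace$. Quantify $\mathsf{z'}$ in the first only, and join. The left side then contains $(S,\mathsf{c},S)$ with
\[
S=\exists\mathsf{z'}.\bigl(P_1\sast\llbrace\mathsf{y'\doteq 0}\rrbrace\bigr)\cup\bigl(P_1\sast\llbrace\mathsf{y'\doteq 1}\rrbrace\bigr)=\llbrace\mathsf{y'\doteq 0\vee(z'\doteq 0\wedge y'=1)}\rrbrace .
\]
If $\mathsf{z'}\notin\mathsf{X}$, every state of a right-hand precondition has $\mathsf{z'}=0$, but $S$ contains one with $\mathsf{y'}=0,\mathsf{z'}=5$. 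If $\mathsf{z'}\in\mathsf{X}$, the right-hand precondition is $\mathsf{z'}$-invariant, but $S$ contains a state with $\mathsf{z'}=0,\mathsf{y'}=1$ and not its $\mathsf{z'}=5$ variant. So $(S,\mathsf{c},S)$ lies on the left but not on the right, in all four items.

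There is a second failure that your freshness assumption does not cover, since it only concerns the $P_i,Q_i$. Left-hand frames range over all of $\mathbb{F}$ and may constrain $d'$ itself. With $I=\{1\}$ again, framing a nonempty $P_1$ with $\llbrace\mathsf{d'\doteq 2}\rrbrace$ gives a left-hand precondition that neither lies in $d'=1$ nor is $d'$-invariant. Every right-hand precondition does one of the two. Your $R=\bigcup_j(R_j\cap d'=i_j\cap e'=j)$ silently empties such disjuncts.

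The paper's chain of equalities never meets these problems for two reasons. It drops the $d'=i$ conjuncts in its second step, and at the end it identifies the $\closure{disj}{\leftrightleftrightarrows}$-image with unions over all of $I$ sharing one $R$ and one $\mathsf{X}$. Neither step matches Definition~\ref{def:disj_closure}, so there is no argument there for you to match.

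What you can prove, assuming the $P_i,Q_i$ are $d'$-invariant and $I$ injects into $\valdomain$, is two things. First, your $\supseteq$. Second, every disjunct $\exists\mathsf{X}_j.(P_{i_j}\sast R_j)$ of a left-hand triple whose frames are $d'$-invariant is itself a right-hand precondition: use frame $R_j\cap d'=i_j$ and quantified set $\mathsf{X}_j\cup\{d'\}$, and likewise for the postcondition. Together these give $\closure{disj}{\leftrightleftrightarrows}(\mathit{RHS})\subseteq\mathit{LHS}$, and every such left-hand triple lies in $\closure{disj}{\leftrightleftrightarrows}(\mathit{RHS})$. In other words, the unified triple can replace the family provided the Disj rule is kept, but it cannot replace Disj itself.

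The inclusion $\subseteq$ without $\closure{disj}{\leftrightleftrightarrows}$ holds only for restricted combinations, for example one common $\mathsf{X}$ and $d'$-invariant frames. There your construction with $\mathsf{X}\cup\{d'\}$ does go through.
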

\begin{proof}
    We only show the proof for the first statement, as the others are analogous. We proceed by simply calculating the two closures separately and proving they are equal.
    \begin{align*}
        &\closure{exists}{\leftrightleftrightarrows} \circ \closure{frame}{\forwardarrows}\{(\cup_{i\in I} (P_i\cap d'=i), \mathsf{c}, \cup_{i\in I} (Q_i\cap d'=i))\} \\
        &=\closure{exists}{\leftrightleftrightarrows} \{((\cup_{i\in I} (P_i\cap d'=i))\sast R, \mathsf{c}, (\cup_{i\in I} (Q_i\cap d'=i))\sast R) \;|\; R\in \mathbb{F}, (\cup_{i\in I} (P_i\cap d'=i))\heapcomp R\} \\
        &=\closure{exists}{\leftrightleftrightarrows} \{((\cup_{i\in I} P_i)\sast R, \mathsf{c}, (\cup_{i\in I} Q_i)\sast R) \;|\; R\in \mathbb{F}, (\cup_{i\in I} P_i)\heapcomp R\} \\
        &=\closure{exists}{\leftrightleftrightarrows} \{(\cup_{i\in I} P_i\sast R, \mathsf{c}, \cup_{i\in I} Q_i\sast R) \;|\; R\in \mathbb{F}, i\in I. P_i\heapcomp R\} \\
        &=\{(\exists \mathsf{X}.\cup_{i\in I} P_i\sast R, \mathsf{c}, \exists \mathsf{X}.\cup_{i\in I} Q_i\sast R) \;|\; R\in \mathbb{F}, i\in I. P_i\heapcomp R, \mathsf{X}\subseteq \mathbb{X}_\mathsf{l}\} \\
        &=\{(\cup_{i\in I}\exists \mathsf{X}.P_i\sast R, \mathsf{c}, \cup_{i\in I}\exists \mathsf{X}.Q_i\sast R)\;|\;i\in I, R\in \mathbb{F}, P_i \heapcomp R, \mathsf{X}\subseteq \mathbb{X}_\mathsf{l}\} \\
        &=\closure{disj}{\leftrightleftrightarrows} \{(\exists \mathsf{X}.P_i\sast R, \mathsf{c}, \exists \mathsf{X}.Q_i\sast R)\;|\;i\in I, R\in \mathbb{F}, P_i \heapcomp R, \mathsf{X}\subseteq \mathbb{X}_\mathsf{l}\} \\
        &=\closure{disj}{\leftrightleftrightarrows} \circ \closure{exists}{\leftrightleftrightarrows} \{(P_i\sast R, \mathsf{c}, Q_i\sast R)\;|\;i\in I, R\in \mathbb{F}, P_i \heapcomp R\} \\
        &=\closure{disj}{\leftrightleftrightarrows} \circ \closure{exists}{\leftrightleftrightarrows} \circ \closure{frame}{\forwardarrows}\{(P_i, \mathsf{c}, Q_i)\;|\;i\in I\}
    \end{align*}
    where in the second step we used the fact that $d'=i \in \mathbb{F}$. 
\end{proof}

\subsection{Proofs of Sec.\ref{sec:systematic_forward}}
We note that generic heaps $\mathsf{H^{\circled{i}}}$ described by the assertion language within the memory model $\circled{i}\in \{\circled{1}, \circled{2}\}$ must be of the following form:
\begin{align*}
    \mathsf{H^{\circled{1}} = \emp \sepconj_{i\in \{1,...,n\}} x_{i,1}\mapsto x_{i,2}} 
    &&
    \mathsf{H^{\circled{2}} = \emp \sepconj_{i\in \{1,...,n\}} x_{i,1}\mapsto x_{i,2} \sepconj_{j\in \{1,...,m\}} x_j\not\mapsto}
\end{align*}

\noindent
i.e., a (finite) separating conjunction of ownership predicate and negative heap assertion (only in the model \circled{2}) along with the empty assertion. Then, a generic memory takes the following form:
\begin{align*}
    \mathsf{M^{\circled{i}} = \bigvee_{k\in K} \left(H^{\circled{i}}_k \wedge P_k\right)}
\end{align*}

\noindent
i.e., a disjunction of structural ($\mathsf{H^{\circled{i}}_k}$) and pure ($\mathsf{P_k}$) assertions. 
Although the assertion language only accounts for finite disjunction, infinite disjunctions are encoded using the existential quantifier. 

For each of the memory models \circled{1} and \circled{2}, we define, respectively, two memories $\mathsf{\heapcore^{\circled{1}}}$ and $\mathsf{\heapcore^{\circled{2}}}$:
\[
\begin{aligned}    
    \mathsf{\heapcore^{\circled{1}}} &\triangleq \mathsf{(a'\doteq 1 \ast \emp) \vee (a'\doteq 2 \ast k_1'\mapsto k_2')} \\
    \mathsf{\heapcore^{\circled{2}}} &\triangleq \mathsf{(a'\doteq 1 \ast \emp) \vee (a'\doteq 2 \ast k_1'\mapsto k_2') \vee (a'\doteq 3 \ast k_1'\not\mapsto)}
\end{aligned}
\]

\begin{lemma} \label{lm:rewrite}
    Each assertion $\mathsf{M^{\circled{i}}}$ can be rewritten in a normal form $\mathsf{\exists a'. \heapcore^{\circled{i}} \ast M'}$ for some $\mathsf{M'}$.
\end{lemma}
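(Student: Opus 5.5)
We read the statement as saying that the heap part of $\mathsf{M^{\circled{i}}}$ can be reorganised so that one "selected" cell (or no cell) is isolated through the core $\heapcore^{\circled{i}}$, while everything else is collected into a remainder $\mathsf{M'}$. The proof is a direct syntactic construction on the normal form $\mathsf{M^{\circled{i}}=\bigvee_{k\in K}(H^{\circled{i}}_k\wedge P_k)}$ recalled just above, with $\mathsf{a',k_1',k_2'}$ chosen fresh logical variables not occurring in $\mathsf{M^{\circled{i}}}$. The main choice is how to build $\mathsf{M'}$ so that the decomposition is exact rather than an approximation.

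\textbf{Step 1: the candidate remainder.} For each disjunct, consider every way of selecting from $\mathsf{H_k}$ either nothing, or one points-to atom $\mathsf{x_{j,1}\mapsto x_{j,2}}$, or (model $\circled{2}$ only) one atom $\mathsf{x_j\not\mapsto}$. Write $\mathsf{H_k^{-j}}$ for $\mathsf{H_k}$ with the selected atom removed. Set
\[
\mathsf{M'\triangleq \bigvee_{k}\Big((a'=1\wedge H_k\wedge P_k)\vee\bigvee_j (a'=2\wedge k_1'=x_{j,1}\wedge k_2'=x_{j,2}\wedge H_k^{-j}\wedge P_k)\vee\bigvee_j(a'=3\wedge k_1'=x_j\wedge H_k^{-j}\wedge P_k)\Big)}.
\]
Since $\mathsf{M^{\circled{i}}}$ has finitely many disjuncts (infinite ones being encoded via $\exists$, which commutes with $\ast$ and $\vee$ on fresh variables), this is a finite assertion.

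\textbf{Step 2: soundness of $\supseteq$.} Take a memory in $\exists\mathsf{a'}.\heapcore\ast\mathsf{M'}$ and split on the value of $\mathsf{a'}$. If $\mathsf{a'=1}$, the core heap is empty and the remainder satisfies $\mathsf{H_k\wedge P_k}$. If $\mathsf{a'=2}$, the core contributes the cell $\mathsf{k_1'\mapsto k_2'}$, and the pure equalities turn it into $\mathsf{x_{j,1}\mapsto x_{j,2}}$. Joined disjointly with $\mathsf{H_k^{-j}}$, this gives $\mathsf{H_k}$ again. The case $\mathsf{a'=3}$ is analogous. Dropping the fresh logical variables by the existential then yields membership in $\mathsf{M^{\circled{i}}}$.

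\textbf{Step 3: completeness of $\subseteq$.} For a memory in $\mathsf{M^{\circled{i}}}$, instantiate $\mathsf{a'=1}$; the store constraint on $\mathsf{k'}$ is free. We need the store to assign the fresh logical variables suitable values. This is guaranteed because $\exists$ ranges over logical variables and the logical store has denumerably many variables for every value. The hardest bookkeeping is the pure constraints. They sit inside $\ast$-conjuncts with a shared store, so $\mathsf{(a'\doteq n\ast H)}$ behaves like $\mathsf{a'=n\wedge H}$. We also need heap-disjointness for $\mathsf{a'=2,3}$, which follows because the selected atom came out of a separating conjunction.

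I expect the main obstacle to be the treatment of pure parts under $\ast$ and of the existential encoding of infinite disjunctions. I would first establish two small facts:
\begin{itemize}
\item $\mathsf{(e\doteq e'\ast P)\equiv(e=e'\wedge P)}$;
\item $\exists$ distributes over $\vee$ and $\ast$ when applied to fresh logical variables, in the spirit of Property~\ref{prop:distributivity}.
\end{itemize}
With these in hand, the rest reduces to the case analysis above.
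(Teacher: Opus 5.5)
Your proof is correct, but it takes a different route from the paper's. The paper uses structural induction on the assertion, with cases pure $\mathsf{P}$; the heap forms $\mathsf{\emp}$, $\mathsf{x\mapsto y}$, $\mathsf{x\not\mapsto}$, $\mathsf{H_1\ast H_2}$; and then $\mathsf{P\wedge H}$, $\mathsf{P\ast H}$, $\mathsf{M_1\vee M_2}$, $\mathsf{\exists X.M_1}$. It builds $\mathsf{M'}$ compositionally:
\begin{itemize}
\item an atom is matched to the corresponding disjunct of $\heapcore$, e.g.\ $\mathsf{x\mapsto y\equiv\exists a'.\heapcore\ast(a'\doteq 2\ast k_1'\doteq x\ast k_2'\doteq y)}$;
\item a separating conjunction lets the core take its cell from either side, via $\mathsf{(H_1\ast H_2')\vee(H_1'\ast H_2)}$;
\item $\vee$ and $\exists$ are pushed through $\ast$ using freshness of $\mathsf{a'}$.
\end{itemize}
You instead start from the disjunctive form $\mathsf{\bigvee_k(H_k\wedge P_k)}$ given just before the lemma. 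You write one explicit $\mathsf{M'}$ listing every one-atom selection plus the empty one, so both inclusions become a case split on the value of $\mathsf{a'}$.

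The trade-offs are as follows. The paper's induction works directly on the connectives; it even has a separate case for $\mathsf{P\ast H}$ with $\mathsf{P}$ pure, a shape that does not literally fit the disjunctive form. Yours is non-inductive and relies on that form. In return, your empty selection $\mathsf{a'=1\wedge H_k\wedge P_k}$ makes the inclusion of $\mathsf{M}$ into $\mathsf{\exists a'.\heapcore\ast M'}$ hold with only $\mathsf{a'}$ bound. The paper's atom cases are exact only if $\mathsf{k_1',k_2'}$ are also quantified: its right-hand side for $\mathsf{x\mapsto y}$ forces $s(\mathsf{k_1'})=s(\mathsf{x})$ and $s(\mathsf{k_2'})=s(\mathsf{y})$.

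Note also that your Step~3 only ever uses $\mathsf{a'=1}$. So $\mathsf{M'\triangleq (a'=1\wedge M)}$ already proves the lemma as literally stated, for any $\mathsf{M}$ not mentioning $\mathsf{a'}$. Your remarks on heap disjointness for $\mathsf{a'=2,3}$ and on the denumerable supply of logical variables are therefore unnecessary. The cell-selecting disjuncts, which both your construction and the paper's provide, matter only for the later use of the normal form: exposing a cell inside the core in the completeness proof of Theorem~\ref{th:sl_completeness}.
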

\begin{proof}
    We only show the proof for memory model \circled{2} since it includes the case for memory model \circled{1}. 
    We proceed by induction on the structure of $\mathsf{M^{\circled{2}}}$, using the letter $\mathsf{P}$ for pure assertions:
    \\\textbf{Case $\mathsf{M^{\circled{2}}\triangleq P}$:} \\
    We rewrite $\mathsf{P \equiv \exists a'.\heapcore^{\circled{2}} \ast P}$ and we prove the equivalence by proving separately the two implications. \begin{description}
        \item[$\Rightarrow$)] Suppose $(s,h)\in \llbrace \mathsf{P}\rrbrace$ then $(s,h)\in \llbrace \mathsf{\exists a'.a'\doteq 1\ast \emp \ast P}\rrbrace = \llbrace \mathsf{\emp \ast P}\rrbrace$ = $\llbrace \mathsf{P}\rrbrace$ which is one of the disjunct of $\mathsf{\heapcore^{\circled{2}}}$.
        \item[$\Leftarrow$)] Suppose that $(s,h)\in \llbrace \mathsf{\exists a'.\heapcore^{\circled{2}} \ast P}\rrbrace$, then by definition of $\ast$ we know that $\exists v,h_1,h_2. (s[\mathsf{a'}\mapsto v], h_1)\in \llbrace\mathsf{\heapcore^{\circled{2}}}\rrbrace$ and $(s[\mathsf{a'}\mapsto v], h_2)\in \llbrace\mathsf{P}\rrbrace$. Then since $\mathsf{a'}$ is a fresh variable $(s,h_2)\in \llbrace\mathsf{P}\rrbrace$ and since $\mathsf{P}$ is a pure assertion also $(s,h_1\bullet h_2)\in \llbrace\mathsf{P}\rrbrace$.
    \end{description}
    \textbf{Case $\mathsf{M^{\circled{2}}\triangleq H^{\circled{2}}}$:} \\
    To rewrite $\mathsf{H^{\circled{2}}}$ in the normal form, we proceed by induction on the structure of $\mathsf{H^{\circled{2}}}$:
    \begin{enumerate}
        \item If $\mathsf{H^{\circled{2}} = \emp}$ then $\mathsf{\emp \equiv \exists a'. \heapcore^{\circled{2}} \ast a'\doteq 1}$
        \item If $\mathsf{H^{\circled{2}} = x\mapsto y}$ then $\mathsf{x\mapsto y \equiv \exists a'. \heapcore^{\circled{2}} \ast (a'\doteq 2 \ast k_1'\doteq x \ast k_2'\doteq y)}$
        \item If $\mathsf{H^{\circled{2}} = x\not\mapsto}$ then $\mathsf{x\not\mapsto \equiv \exists a'. \heapcore^{\circled{2}} \ast (a'\doteq 3 \ast k_1'\doteq x)}$
        \item If $\mathsf{H^{\circled{2}} = H_1 \ast H_2}$ then $\mathsf{H_1 \ast H_2 \equiv \exists a'.\heapcore^{\circled{2}} \ast ((H_1\ast H_2')\vee(H_1'\ast H_2))}$ with $\mathsf{H_1 \equiv \exists a'.\heapcore^{\circled{2}}\ast H_1'}$ and $\mathsf{H_2 \equiv \exists a'.\heapcore^{\circled{2}}\ast H_2'}$.  
    \end{enumerate}
    The first three equivalences are trivial, so we only prove (4):
    \begin{description}
        \item[$\Rightarrow$)] Suppose $(s,h)\in \llbrace \mathsf{H_1\ast H_2}\rrbrace$. By definition of $\ast$, there exists $h_1,h_2$ s.t. $h=h_1\bullet h_2$, $(s,h_1)\in \llbrace\mathsf{H_1}\rrbrace$ and $(s,h_2)\in \llbrace\mathsf{H_2}\rrbrace$. By definition of $\mathsf{H_1}$ and $\mathsf{H_2}$ we have $(s,h_1)\in \llbrace\mathsf{\exists a'.\heapcore^{\circled{2}}\ast H_1'}\rrbrace$ and $(s,h_2)\in \llbrace\mathsf{\exists a'.\heapcore^{\circled{2}}\ast H_2'}\rrbrace$. By definition of $\ast$, there exists $h_1', h_1'', h_2', h_2''$ s.t. $h_1=h_1'\bullet h_1''$, $h_2 = h_2'\bullet h_2''$, $(s,h_1')\in \llbrace \mathsf{\heapcore^{\circled{2}}}\rrbrace$, $(s,h_1'')\in \llbrace \mathsf{H_1'}\rrbrace$, $(s,h_2')\in \llbrace \mathsf{\heapcore^{\circled{2}}}\rrbrace$, $(s,h_2'')\in \llbrace \mathsf{H_2'}\rrbrace$. Then $(s,h_1'\bullet h_1''\bullet h_2)\in \llbrace \mathsf{\heapcore^{\circled{2}}\ast H_1'\ast H_2}\rrbrace$ and so also $(s,h_1'\bullet h_1''\bullet h_2)\in \llbrace \mathsf{\exists a'. \heapcore^{\circled{2}}\ast H_1'\ast H_2}\rrbrace$. Furthermore, $(s,h_1\bullet h_2'\bullet h_2'')\in \llbrace \mathsf{H_1\ast \heapcore^{\circled{2}} \ast H_2'}\rrbrace$ and so also $(s,h_1\bullet h_2'\bullet h_2'')\in \llbrace \mathsf{\exists a'. H_1\ast \heapcore^{\circled{2}} \ast H_2'}\rrbrace$. 
        \item[$\Leftarrow$)] Suppose $(s,h)\in \llbrace \mathsf{\exists a'.\heapcore^{\circled{2}} \ast ((H_1\ast H_2')\vee(H_1'\ast H_2))}\rrbrace$. By definition of $\ast$ and $\exists$, there exists $v, h_1,h_2$ s.t. $h=h_1\bullet h_2$, $(s[\mathsf{a'}\mapsto v],h_1)\in \llbrace \mathsf{\heapcore^{\circled{2}}}\rrbrace$ and $(s[\mathsf{a'}\mapsto v],h_2)\in \llbrace \mathsf{(H_1\ast H_2')\vee(H_1'\ast H_2)}\rrbrace$. We proceed by case on the disjuncts of the assertion $\mathsf{(H_1\ast H_2')\vee(H_1'\ast H_2)}$: 
            \begin{itemize}
                \item Suppose $(s[\mathsf{a'}\mapsto v],h_2)\in \llbrace \mathsf{H_1\ast H_2'}\rrbrace$. 
                Then by definition of $\ast$ there exists $h_2', h_2''$ s.t. $(s[\mathsf{a'}\mapsto v],h_2') \in \llbrace \mathsf{H_1}\rrbrace$ and $(s[\mathsf{a'}\mapsto v],h_2'') \in \llbrace \mathsf{H_2'}\rrbrace$. Then $(s[\mathsf{a'}\mapsto v],h_1\bullet h_2'') \in \llbrace \mathsf{H_2}\rrbrace$. Then $(s[\mathsf{a'}\mapsto v],h_1\bullet h_2'\bullet h_2'') \in \llbrace \mathsf{H_1\ast H_2}\rrbrace$.
                \item The other case is dual.
            \end{itemize}
    \end{description}
    \textbf{Case $\mathsf{M^{\circled{2}}\triangleq P\wedge H^{\circled{2}}}$:}\\
    By inductive hypothesis $\mathsf{H^{\circled{2}} \equiv \exists a'.\heapcore^{\circled{2}}\ast H'}$, then we rewrite $\mathsf{P\wedge H^{\circled{2}} \equiv \exists a'.\heapcore^{\circled{2}} \ast P\wedge H'}$:
    \[\mathsf{P\wedge H^{\circled{2}} \equiv P \wedge \exists a'.\heapcore^{\circled{2}}\ast H' \equiv \exists a'. P \wedge \heapcore^{\circled{2}}\ast H' \equiv \exists a'.\heapcore^{\circled{2}} \ast P\wedge H'}\]
    where we used the fact that $\mathsf{a'}$ is a fresh variable.
    \\\textbf{Case $\mathsf{M^{\circled{2}}\triangleq P\ast H^{\circled{2}}}$:}\\
    By inductive hypothesis $\mathsf{H^{\circled{2}} \equiv \exists a'.\heapcore^{\circled{2}}\ast H'}$, then we rewrite $\mathsf{P\ast H^{\circled{2}} \equiv \exists a'.\heapcore^{\circled{2}} \ast ((P\ast H^{\circled{2}})\vee(P\ast H'))}$. The proof is analogous to the proof of the rewriting of $\mathsf{H_1\ast H_2}$.
    \\\textbf{Case $\mathsf{M^{\circled{2}}\triangleq M_1 \vee M_2}$:}\\
    By inductive hypothesis $\mathsf{M_1 \equiv \exists a'.\heapcore^{\circled{2}}\ast M_1'}$ and $\mathsf{M_2 \equiv \exists a'.\heapcore^{\circled{2}}\ast M_2'}$, then we rewrite \\$\mathsf{M_1\vee M_2 \equiv \exists a'.\heapcore^{\circled{2}}\ast (M_1'\vee M_2')}$:
    \[\mathsf{M_1\vee M_2 \equiv (\exists a'.\heapcore^{\circled{2}}\ast M_1') \vee (\exists a'.\heapcore^{\circled{2}}\ast M_2') \equiv \exists a'.(\heapcore^{\circled{2}}\ast M_1') \vee (\heapcore^{\circled{2}}\ast M_2') \equiv \exists a'.\heapcore^{\circled{2}}\ast (M_1'\vee M_2')}\]
    where we used the distributivity of the $\ast$ over $\vee$.
    \\\textbf{Case $\mathsf{M^{\circled{2}}\triangleq \exists X.M_1}$:}\\
    By inductive hypothesis $\mathsf{M_1 \equiv \exists a'.\heapcore^{\circled{2}}\ast M_1'}$, then we rewrite $\mathsf{\exists X.M_1 \equiv \exists a'. \heapcore^{\circled{2}} \ast (\exists X. M_1')}$:
    \[\mathsf{\exists X.M_1 \equiv \exists X.(\exists a'.\heapcore^{\circled{2}}\ast M_1') \equiv \exists a'.\exists X. \heapcore^{\circled{2}}\ast M_1' \equiv \exists a'. \heapcore^{\circled{2}} \ast (\exists X. M_1')}\]
\end{proof}

\begin{lemma} \label{lm:rewrite2}
    We can rewrite $\mathsf{\heapcore^{\circled{1}}}$ and $\mathsf{\heapcore^{\circled{2}}}$ in the following forms:
    \begin{enumerate}
        \item $\mathsf{\heapcore^{\circled{1}} \equiv \overline{H}\ast \heapcore}$ with $\mathsf{\overline{H}=\emp}$
        \item $\mathsf{\heapcore^{\circled{2}} \equiv \overline{H}\ast \heapcore^{\circled{2}}}$ with $\mathsf{\overline{H}=\emp}$
        \item $\mathsf{\heapcore^{\circled{2}} \equiv \overline{H}\ast (\emp \vee (a\doteq 3\ast k_1\not\mapsto))}$ with $\mathsf{\overline{H}=((a\doteq 1\ast \emp) \vee (a\doteq 2\ast k_1\mapsto k_2))}$
    \end{enumerate}
\end{lemma}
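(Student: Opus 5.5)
All three items are equivalences between assertions. I would prove each one semantically, by showing that $\llbrace\cdot\rrbrace$ gives the same set of memories on both sides, using only the definitions of Section~\ref{sec:al}. Since $\mathsf{\emp}$ denotes $\{(s,[])\}$ and $h\bullet[]=h$, we have $P\sast\llbrace\mathsf{\emp}\rrbrace=P$ for every set of memories $P$. Items (1) and (2) are then immediate: $\mathsf{\emp}$ is the unit of $\ast$. In (1) I read the unadorned $\mathsf{\heapcore}$ as $\mathsf{\heapcore^{\circled{1}}}$, and I read the unprimed $\mathsf{a,k_1,k_2}$ in (3) as the logical variables $\mathsf{a',k_1',k_2'}$ used in the definition of $\mathsf{\heapcore^{\circled{i}}}$.

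For item (3), I would first distribute $\ast$ over $\vee$ on both sides; this is valid because $\sast$ distributes over $\cup$. The right-hand side $\mathsf{\overline{H}\ast(\emp\vee(a'\doteq3\ast k_1'\not\mapsto))}$ then becomes four disjuncts:
\begin{itemize}
\item $\mathsf{(a'\doteq1\ast\emp)}$;
\item $\mathsf{(a'\doteq2\ast k_1'\mapsto k_2')}$;
\item $\mathsf{(a'\doteq1\ast a'\doteq3\ast k_1'\not\mapsto)}$;
\item $\mathsf{(a'\doteq2\ast a'\doteq3\ast k_1'\not\mapsto)}$.
\end{itemize}
Recall that $\mathsf{x\doteq y}$ is the pure constraint $\mathsf{x=y}$ together with $\mathsf{\emp}$, and that $\sast$ requires both components to share the same store. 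Hence the pure constraints of the two factors are conjoined on a common store $s$. The last two disjuncts therefore require $s(\mathsf{a'})=1\wedge s(\mathsf{a'})=3$ or $s(\mathsf{a'})=2\wedge s(\mathsf{a'})=3$, so both are empty. The first two disjuncts are exactly the first two disjuncts of $\mathsf{\heapcore^{\circled{2}}}$. The inclusion of the right-hand side into $\mathsf{\heapcore^{\circled{2}}}$ follows at once.

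The main obstacle is the converse inclusion, specifically the third disjunct $\mathsf{(a'\doteq3\ast k_1'\not\mapsto)}$ of $\mathsf{\heapcore^{\circled{2}}}$. A memory with $s(\mathsf{a'})=3$ must split its heap into a part satisfying $\mathsf{\overline{H}}$ and a part satisfying the residual. But every disjunct of $\mathsf{\overline{H}}$, as written, forces $\mathsf{a'}\in\{1,2\}$, so under the literal reading the $\mathsf{a'}=3$ case is lost. The split is clearly meant as a case analysis on the tag $\mathsf{a'}$: the factor $\mathsf{\overline{H}}$ should contribute the empty heap when $\mathsf{a'}=3$. I would therefore first check whether $\mathsf{\overline{H}}$ must be taken as $\mathsf{(a'\doteq1\ast\emp)\vee(a'\doteq2\ast k_1'\mapsto k_2')\vee(a'\doteq3)}$, i.e. with an $\mathsf{a'=3\wedge\emp}$ alternative. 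With that amendment the proof goes through by a case split on $s(\mathsf{a'})$:
\begin{itemize}
\item for $s(\mathsf{a'})\in\{1,2\}$, take the residual to be $\mathsf{\emp}$;
\item for $s(\mathsf{a'})=3$, take $\mathsf{\overline{H}}$ to be $\mathsf{\emp}$ and the residual to be $\mathsf{a'\doteq3\ast k_1'\not\mapsto}$.
\end{itemize}
In each case heap disjointness is trivial because one side is empty. If no amendment is intended, I would report the literal item (3) as failing for the $\mathsf{a'}=3$ disjunct and use the corrected form in its place.
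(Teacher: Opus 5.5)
Items (1) and (2) are fine, since $\emp$ is the unit of $\ast$. Your diagnosis of item (3) is also correct. The paper states this lemma without proof, and as printed the right-hand side distributes to just the tag-$1$ and tag-$2$ disjuncts. So every memory $(s,[s(\mathsf{k_1'})\mapsto\bot])$ with $s(\mathsf{a'})=3$ lies in $\llbrace\heapcore^{\circled{2}}\rrbrace$ but not in the right-hand side.

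The gap is in your repair. Your case split on $s(\mathsf{a'})$ only re-proves $\llbrace\heapcore^{\circled{2}}\rrbrace\subseteq\llbrace\overline{\mathsf{H}}\ast(\emp\vee(\mathsf{a'}\doteq 3\ast\mathsf{k_1'}\not\mapsto))\rrbrace$. The converse inclusion, which you checked only for the original $\overline{\mathsf{H}}$, no longer holds.

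Once $\overline{\mathsf{H}}$ has the extra alternative $\mathsf{a'}\doteq 3$, distributing $\ast$ over $\vee$ produces the disjunct $(\mathsf{a'}\doteq 3)\ast\emp$. That disjunct denotes all memories $(s,[])$ with $s(\mathsf{a'})=3$. These memories belong to the amended right-hand side but to none of the disjuncts of $\heapcore^{\circled{2}}$, because its tag-$3$ disjunct forces the heap $[s(\mathsf{k_1'})\mapsto\bot]$. So your amended item (3) fails too, now with the right-hand side too large.

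The $\emp$ alternative of the residual must itself be guarded by the tag. For instance, take $\overline{\mathsf{H}}=(\mathsf{a'}\doteq 1)\vee(\mathsf{a'}\doteq 2\ast\mathsf{k_1'}\mapsto\mathsf{k_2'})\vee(\mathsf{a'}\doteq 3)$ with residual $(\mathsf{a'}\doteq 1)\vee(\mathsf{a'}\doteq 2)\vee(\mathsf{a'}\doteq 3\ast\mathsf{k_1'}\not\mapsto)$. All cross terms with different tags are then empty, because the two factors share the store. The three matching products are exactly the three disjuncts of $\heapcore^{\circled{2}}$, so both inclusions hold.

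This residual also mentions only logical variables. It is heap-compatible with $\overline{\mathsf{H}}$, since for each tag one of the two heaps is empty. So it can still serve as a universal frame, which is presumably the purpose of the split.
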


\setcounter{theorem}{0} 
\renewcommand{\thetheorem}{\ref{th:sl_completeness}}
\begin{theorem}[SL+ Sound and Relative Complete for Loop-Free Programs]
    Let $\mathsf{r\in Reg}$ and $\mathsf{P,Q\in Ast}$, then $\vdash_{\mathsf{SL+}}^\text{\circled{i}}\hl{\mathsf{P}}{\mathsf{r}}{\mathsf{Q}} \Rightarrow \llbracket \mathsf{r} \rrbracket^\mathsf{SL}_\text{\circled{i}} \llbrace \mathsf{P} \rrbrace \subseteq \llbrace \mathsf{Q} \rrbrace$. Moreover, $\llbracket \mathsf{r}\rrbracket^\mathsf{SL}_\text{\circled{i}} \llbrace \mathsf{P}\rrbrace \subseteq \llbrace \mathsf{Q}\rrbrace \Rightarrow \vdash_{\mathsf{SL+}}^\text{\circled{i}}\hl{\mathsf{P}}{\mathsf{r}}{\mathsf{Q}}$ when $\mathsf{r}$ is loop-free.
\end{theorem}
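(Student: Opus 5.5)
Soundness goes by induction on the derivation of $\vdash_{\mathsf{SL+}}^{\circled{i}}\hl{\mathsf{P}}{\mathsf{r}}{\mathsf{Q}}$. The structural rules (Seq, Choice, Iterate, Disj, Cons) are the standard SL ones and are handled by the usual arguments on the collecting semantics. \hllabel{Exists}{\circled{i}} is sound because logical variables never occur in $\mathsf{r}$, so $\llbracket\mathsf{r}\rrbracket$ commutes with $\exists\mathsf{X}$ for $\mathsf{X}\subseteq\logicvar$. For \hllabel{Frame}{\circled{i}}, the condition $\mathsf{fv(R)}\subseteq\logicvar$ gives $\llbrace\mathsf{R}\rrbrace\in\mathbb{F}$, and Proposition~\ref{th:compatibility} turns $\mathsf{P}\logicheapcomp\mathsf{R}$ into $\llbrace\mathsf{P}\rrbrace\heapcomp\llbrace\mathsf{R}\rrbrace$. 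Theorem~\ref{th:forward_preservation} then yields $\llbracket\mathsf{r}\rrbracket(\llbrace\mathsf{P}\rrbrace\sast\llbrace\mathsf{R}\rrbrace)=(\llbracket\mathsf{r}\rrbracket\llbrace\mathsf{P}\rrbrace)\sast\llbrace\mathsf{R}\rrbrace\subseteq\llbrace\mathsf{Q}\rrbrace\sast\llbrace\mathsf{R}\rrbrace$ by Property~\ref{prop:monotonicity}. For each axiom of Figure~\ref{fig:axioms_sl+} I would check validity directly against Figure~\ref{fig:sl_semantics}. This is a routine case analysis; the only points to watch are that the $\emp_{\progvar}$ conjuncts are preserved for unmodified variables, and that $\abort$ is excluded by the remark following the theorem.

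For relative completeness of loop-free $\mathsf{r}$, I induct on $\mathsf{r}$. The cases $\mathsf{r_1;r_2}$ and $\mathsf{r_1+r_2}$ follow Cook's argument: take the intermediate assertion to be the strongest postcondition $\llbracket\mathsf{r_1}\rrbracket\llbrace\mathsf{P}\rrbrace$, which is expressible by relative-completeness assumption, and apply Seq/Choice plus Cons. Everything therefore reduces to atomic commands $\mathsf{c}$. Given $\llbracket\mathsf{c}\rrbracket\llbrace\mathsf{P}\rrbrace\subseteq\llbrace\mathsf{Q}\rrbrace$, it suffices by Cons to derive $\hl{\mathsf{P}}{\mathsf{c}}{\mathsf{sp}}$ where $\llbrace\mathsf{sp}\rrbrace=\llbracket\mathsf{c}\rrbracket\llbrace\mathsf{P}\rrbrace$. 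Following Corollary~\ref{cor:triples}, I would realise this exact triple as Exists applied to Frame applied to a logical axiom.

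\textbf{Local decomposition of $\mathsf{P}$.} By Lemma~\ref{lm:rewrite}, write $\mathsf{P}\equiv\exists\mathsf{a'}.\heapcore^{\circled{i}}\ast\mathsf{M'}$. Then use Lemma~\ref{lm:rewrite2}, choosing the split that isolates the cell $\mathsf{k_1'}$ as the one possibly touched by $\mathsf{c}$. Next, conjoin $\emp_{\progvar}$ (or $\emp_{\progvar\setminus\{x\}}\wedge\mathsf{x\doteq x'}$ for assignment) and existentially quantify the fresh primed copies, so that $\mathsf{M'}$ is rewritten as $\mathsf{M'[x'/x]}_{\mathsf{x}\in\progvar}$. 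This is a universal frame $\mathsf{R}$. With this, $\mathsf{P}\equiv\exists\mathsf{X}.(\mathsf{H}\ast\mathsf{R})$, where $\mathsf{H}$ is the precondition of an axiom instance, or a disjunction of axiom instances indexed by $\mathsf{a'}$ as in Theorem~\ref{th:disj_no_disj}. For instance, for $\mathsf{free(x)}$ in \circled{1}, the disjunct $\mathsf{a'}\doteq 2$ with $\mathsf{k_1'}=\mathsf{x}$ matches \hllabel{Free}{\circled{1}}. Disjuncts whose heap does not contain $s(\mathsf{x})$ abort and are absent from the strongest post, so they are handled by Disj and the $\mathsf{Empty}$-style vacuous case (a false precondition for those branches).

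\textbf{Applying Frame.} I then apply Frame with $\mathsf{R}$, checking $\mathsf{H}\logicheapcomp\mathsf{R}$. This holds because $\mathsf{R}$ is obtained from the separating remainder of the decomposition, so its heap is disjoint whenever the stores agree. Exists over $\mathsf{X}\cup\{\mathsf{a'}\}$ then gives $\hl{\mathsf{P}}{\mathsf{c}}{\exists\mathsf{X}.(\mathsf{H}_{post}\ast\mathsf{R})}$. By Theorem~\ref{th:forward_preservation}, this postcondition equals $\llbracket\mathsf{c}\rrbracket\llbrace\mathsf{P}\rrbrace$.

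\textbf{Main obstacle.} The hard step is this syntactic decomposition: showing, uniformly for each atomic command and both memory models, that every $\mathsf{P}$ admits a split $\exists\mathsf{X}.(\mathsf{H}\ast\mathsf{R})$ with $\mathsf{fv(R)}\subseteq\logicvar$, $\mathsf{H}$ an instance of an axiom precondition, and $\mathsf{H}\logicheapcomp\mathsf{R}$. The \circled{2} allocation case is the delicate one, because its two sub-cases (fresh location versus reallocation of $\mathsf{l'}\not\mapsto$) must be combined via the $\mathsf{d'}$-indexing of Theorem~\ref{th:disj_no_disj}. I would attack it first via the normal form of Lemma~\ref{lm:rewrite} and item (3) of Lemma~\ref{lm:rewrite2}.
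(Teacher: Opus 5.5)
Your route is the paper's own. Soundness goes by induction on derivations. For loop-free $\mathsf{r}$, completeness is a structural induction with exact postconditions. Its atomic case uses the normal form of Lemmas~\ref{lm:rewrite} and~\ref{lm:rewrite2}, renames the program variables of the remainder to primed copies to obtain a universal frame, and closes with Frame and Exists. The gap is in the soundness half, which you call routine and the paper dispatches in one line: two of its steps fail for the system as written.

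\emph{The axiom check breaks on allocation.} In $\hl{\emp_{\progvar}}{\mathsf{x:=alloc()}}{\emp_{\progvar\setminus\{\mathsf{x}\}}\ast\mathsf{x\mapsto z'}}$ the logical variable $\mathsf{z'}$ has a fixed value in the store, but the stored content is chosen nondeterministically. From $(s,[])\in\llbrace\emp_{\progvar}\rrbrace$ with $s(\mathsf{z'})=0$, the command reaches $(s[\mathsf{x}\mapsto l],[l\mapsto 1])$, which lies outside the postcondition. Only the $\exists\mathsf{z'}$-closed triple (with $\mathsf{x\mapsto\_}$) is a valid over-approximation, and that is all the completeness proof actually uses. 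Alloc2 fails for the same reason. It also fails because, from $\mathsf{l'\not\mapsto}$, allocation may choose a fresh location and leave $\mathsf{l'}$ deallocated.

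\emph{Frame fails in model $\circled{2}$.} You take the inclusion $\llbracket\mathsf{r}\rrbracket(P\sast R)\subseteq(\llbracket\mathsf{r}\rrbracket P)\sast R$ from Theorem~\ref{th:forward_preservation}, but it is false there. Take $P=\llbrace\emp_{\progvar}\rrbrace$ and the universal, heap-compatible frame $R=\llbrace\mathsf{l'\not\mapsto}\rrbrace$. Then $\mathsf{x:=alloc()}$ may reallocate $s(\mathsf{l'})$ and end in a one-cell heap that is not in $(\llbracket\mathsf{x:=alloc()}\rrbracket P)\sast R$. So Frame maps the valid, derivable $\hl{\emp_{\progvar}}{\mathsf{x:=alloc()}}{\emp_{\progvar\setminus\{\mathsf{x}\}}\ast\mathsf{x\mapsto\_}}$ to an invalid triple, and soundness for $\circled{2}$ cannot be proved as stated.

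\emph{Model $\circled{1}$.} Here your Frame step is fine, but justify it by the standard frame and safety-monotonicity properties rather than by the equality in Theorem~\ref{th:forward_preservation}. Safety of $P$ follows from validity of the premise, since $\abort\notin\llbrace\mathsf{Q}\rrbrace$, and $R$ is universal. The equality's other inclusion, $(\llbracket\mathsf{r}\rrbracket P)\sast R\subseteq\llbracket\mathsf{r}\rrbracket(P\sast R)$, fails for composite programs: $\mathsf{x:=alloc();free(x)}$ run from $\emp_{\progvar}$ and framed with $\mathsf{l'\mapsto 5}$ can never end with $\mathsf{x=l'}$.

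\emph{Aborting disjuncts in your completeness argument.} The same reading of validity means that aborting disjuncts of $\mathsf{P}$ are not absent from the strongest postcondition. Under the hypothesis they must be empty, and they can simply be dropped by Cons.
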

\begin{proof}
    Soundness can be proved by induction on the derivation tree.
    Relative completeness for loop-free programs will be proved for the memory model $\circled{1}$ since the proof for $\circled{2}$ is analogous, but it uses $\mathsf{\heapcore^{\circled{2}}}$.
    We suppose that $\semantics{r}{SL}{\circled{1}} \llbrace \mathsf{P}\rrbrace \subseteq \llbrace \mathsf{Q}\rrbrace$ and we prove that the triple $\vdash_{\mathsf{SL+}}^\text{\circled{1}}\hl{\mathsf{P}}{\mathsf{r}}{\mathsf{Q}}$ is derivable in SL+. 
    In particular, we show that we can derive the exact postcondition, then by applying the consequence rule, we can derive any (expressible) over-approximation. 
    The proof proceeds by induction on the structure of $\mathsf{r}$.\\
    \textbf{Case $\mathsf{r\triangleq c}$:}\\
    By Lemma \ref{lm:rewrite} we have $\mathsf{P \equiv \exists a'.\heapcore^{\circled{1}}\ast P'}$ and by Proposition \ref{lm:rewrite2} we also have $\mathsf{P \equiv \exists a'.\emp \ast \heapcore^{\circled{1}}\ast P'}$ which are respectively equivalent to:
    \begin{enumerate}
        \item $\mathsf{P \equiv \textcolor{gray}{\exists a',X'.}\emp_{\mathbb{X}_\mathsf{p}}\ast\emp \textcolor{gray}{\ast \heapcore^{\circled{1}}\ast P'[X'/\mathbb{X}_\mathsf{p}]}}$
        \item $\mathsf{P \equiv \textcolor{gray}{\exists a',X'.}\emp_{\mathbb{X}_\mathsf{p}}\ast\heapcore^{\circled{1}}\textcolor{gray}{\ast P'[X'/\mathbb{X}_\mathsf{p}]}}$
    \end{enumerate}
    In gray, we outline the idle part of the memories that can be added at a later time, while in black, we outline the minimal memory needed for the execution of atomic commands. 
    In particular, the black part of (1) is the minimal memory needed for the execution of $\mathsf{b?}$, $\mathsf{x:=e}$ and $\mathsf{x:=alloc()}$ and the black part of (2) for $\mathsf{free(x)}$, $\mathsf{x:=[y]}$ and $\mathsf{[x]:=y}$. 
    We proceed by case for (1) and (2):
    \begin{enumerate}
        \item We show a derivation for the triple $\mathsf{\hl{P}{c}{\semantics{c}{SL}{\circled{1}} P}}$ by starting with the black part of (1):
        \[
        \resizebox{.99\linewidth}{!}{
        \infer[\hllabel{Frame}{\circled{i}}]
            {\infer[\hllabel{Exists}{\circled{i}}]
                {\infer[\hllabel{Cons}{\circled{i}}]
                    {\mathsf{\vdash_{\mathsf{SL+}}^{\circled{1}}\hl{P}{c}{\llbracket c\rrbracket^\mathsf{SL}_\text{\circled{1}} P}}}
                    {\mathsf{\vdash_{\mathsf{SL+}}^{\circled{1}}\hl{\exists a',X'.\emp_{\mathbb{X}_\mathsf{p}}\ast \heapcore^{\circled{1}} \ast P'[X'/\mathbb{X}_\mathsf{p}]}{c}{\exists a',X'.\llbracket c\rrbracket^\mathsf{SL}_\text{\circled{1}} (\emp_{\mathbb{X}_\mathsf{p}}\ast\heapcore^{\circled{1}}\ast P'[X'/\mathbb{X}_\mathsf{p}])}}}}
                {\mathsf{\vdash_{\mathsf{SL+}}^{\circled{1}}\hl{\emp_{\mathbb{X}_\mathsf{p}} \ast \heapcore^{\circled{1}} \ast P'[X'/\mathbb{X}_\mathsf{p}]}{c}{\llbracket c\rrbracket^\mathsf{SL}_\text{\circled{1}} (\emp_{\mathbb{X}_\mathsf{p}}\ast \heapcore^{\circled{1}}\ast P'[X'/\mathbb{X}_\mathsf{p}])}}}}
            {\mathsf{\heapcore^{\circled{1}} \ast P'[X'/\mathbb{X}_\mathsf{p}]\in \mathbb{F}} & \mathsf{\vdash_{\mathsf{SL+}}^{\circled{1}}\hl{\emp_{\mathbb{X}_\mathsf{p}}}{c}{\llbracket c\rrbracket^\mathsf{SL}_\text{\circled{1}} (\emp_{\mathbb{X}_\mathsf{p}})}} & \mathsf{\heapcore^{\circled{1}} \ast P'[X'/\mathbb{X}_\mathsf{p}]\logicheapcomp (\emp_{\mathbb{X}_\mathsf{p}})}}
        }\]    
        To conclude the proof for this case, we compute the exact semantics for each atomic command on the minimal memory, showing that it is expressible in the assertion language:
        \begin{itemize}
            \item $\llbracket \mathsf{b?}\rrbracket^\mathsf{SL}_\text{\circled{1}} \llbrace \emp_{\mathbb{X}_\mathsf{p}}\rrbrace = \{(s,h)\;|\;(s,h)\in \llbrace \emp_{\mathbb{X}_\mathsf{p}} \rrbrace, \llparenthesis \mathsf{b}\rrparenthesis s = \true\} = \llbrace \mathsf{\emp_{\mathbb{X}_\mathsf{p}} \wedge b}\rrbrace$
            \item 
                $\begin{aligned}[t]
                    &\llbracket \mathsf{x:=e}\rrbracket^\mathsf{SL}_\text{\circled{1}} \llbrace \emp_{\mathbb{X}_\mathsf{p}}\rrbrace &\\
                    &= \{(s[\mathsf{x}\mapsto \llparenthesis \mathsf{e}\rrparenthesis s], h) \;|\; \forall \mathsf{y}\in \mathbb{X}_\mathsf{p}. \exists \mathsf{y'}\in \mathbb{X}_\mathsf{l}. s(\mathsf{y})=s(\mathsf{y'})\} & \\
                    &= \{(s[\mathsf{x}\mapsto \llparenthesis \mathsf{e}[\mathsf{x'}/\mathsf{x}]\rrparenthesis s], h) \;|\; \forall \mathsf{y}\in \mathbb{X}_\mathsf{p}. \exists \mathsf{y'}\in \mathbb{X}_\mathsf{l}. s(\mathsf{y})=s(\mathsf{y'}) \} \\
                    &= \llbrace \mathsf{\emp_{\progvar\setminus \{x\}}\wedge x=e[x'/x]}\rrbrace
                \end{aligned}$
            \item 
                $\begin{aligned}[t]
                    &\llbracket \mathsf{x:=alloc()}\rrbracket^\mathsf{SL}_\text{\circled{1}} \llbrace \emp_{\mathbb{X}_\mathsf{p}}\rrbrace & \\
                    &=\{(s[\mathsf{x}\mapsto l],h[l\mapsto v])\;|\; (s,h)\in \llbrace \emp_{\mathbb{X}_\mathsf{p}}\rrbrace, v\in \mathbb{V}\} &\\
                    &=\{(s[\mathsf{x}\mapsto l],[l\mapsto v])\;|\; \forall \mathsf{y}\in \mathbb{X}_\mathsf{p}. \exists \mathsf{y'}\in \mathbb{X}_\mathsf{l}. s(\mathsf{y})=s(\mathsf{y'}), v \in \mathbb{V}\} &\\
                    &=\llbrace \emp_{\mathbb{X}_\mathsf{p}\setminus \{\mathsf{x}\}} \ast \mathsf{x}\mapsto \_\rrbrace
                \end{aligned}$\\
                This exact postcondition can be obtained from axiom $\hllabel{Alloc}{\circled{i}}$ by using rule $\hllabel{Exists}{\circled{i}}$.
        \end{itemize}
        \item Although $\mathsf{\emp_{\mathbb{X}_\mathsf{p}}\ast\heapcore^{\circled{1}}}$ being the minimal memory for the execution of $\mathsf{free(x)}$, $\mathsf{x:=[y]}$ and $\mathsf{[x]:=y}$, their executions on the empty heap yelds the abort state $\abort$, which is not expressible in our assertion language. 
        Then, the minimal part of the memory needed for the execution of commands $\mathsf{free(x)}$, $\mathsf{x:=[y]}$ and $\mathsf{[x]:=y}$ yielding an expressible postcondition is described by the assertion $\mathsf{M_1 \triangleq \emp_{\mathbb{X}_\mathsf{p}}\ast (a'\doteq 2 \ast k_1'\mapsto k_2')}$. 
        We show a derivation for the triple $\mathsf{\hl{\exists a',X'. M_1\ast P[X'/\mathbb{X}_\mathsf{p}]}{c}{\semantics{c}{SL}{\circled{1}} (\exists a',X'. M_1\ast P[X'/\mathbb{X}_\mathsf{p}])}}$ by starting from $\mathsf{M_1}$:
        \[
        \resizebox{.99\linewidth}{!}{
        \infer[\hllabel{Frame}{\circled{i}}]
            {\infer[\hllabel{Exists}{\circled{i}}]
                {\mathsf{\vdash_{\mathsf{SL+}}^{\circled{1}}\hl{\exists a,X'. M_1\ast P'[X'/\mathbb{X}_\mathsf{p}]}{c}{\exists a,X'.\llbracket c\rrbracket^\mathsf{SL}_\text{\circled{1}} (M_1\ast P'[X'/\mathbb{X}_\mathsf{p}])}}}
                {\mathsf{\vdash_{\mathsf{SL+}}^{\circled{1}}\hl{M_1\ast P'[X'/\mathbb{X}_\mathsf{p}]}{c}{\llbracket c\rrbracket^\mathsf{SL}_\text{\circled{1}} (M_1\ast P'[X'/\mathbb{X}_\mathsf{p}])}}}}
            {\mathsf{P'[X'/\mathbb{X}_\mathsf{p}]\in \mathbb{F}} & \mathsf{\vdash_{\mathsf{SL+}}^{\circled{1}}\hl{M_1}{c}{\llbracket c\rrbracket^\mathsf{SL}_\text{\circled{1}} M_1}} & \mathsf{P'[X'/\mathbb{X}_\mathsf{p}]\logicheapcomp M_1}}
        }\]
        To conclude the proof for this case, we compute the exact semantics for each atomic command on the minimal memory, showing that it is expressible in the assertion language:
        \begin{itemize}
            \item 
                $\begin{aligned}[t]
                    &\llbracket \mathsf{free(x)}\rrbracket^\mathsf{SL}_\text{\circled{1}} \llbrace \mathsf{\emp_{\mathbb{X}_\mathsf{p}}\ast a'\doteq 2 \ast x'\mapsto k_2'}\rrbrace & \\
                    =& \{(s,[])\;|\; \forall \mathsf{y}\in \mathbb{X}_\mathsf{p}. \exists \mathsf{y'}\in \mathbb{X}_\mathsf{l}. s(\mathsf{y})=s(\mathsf{y'}), s(\mathsf{a'})=2\} &\\
                    =& \llbrace \mathsf{\emp_{\mathbb{X}_\mathsf{p}}\ast a'\doteq 2} \rrbrace 
                \end{aligned}$
            \item 
                $\begin{aligned}[t]
                    &\llbracket \mathsf{x:=[y]}\rrbracket^\mathsf{SL}_\text{\circled{1}} \llbrace \mathsf{\emp_{\mathbb{X}_\mathsf{p}}\ast a'\doteq 2 \ast y'\mapsto k_2'}\rrbrace \\
                    =& \{(s[\mathsf{x}\mapsto \mathsf{k_2'}],[\mathsf{y'}\mapsto \mathsf{k_2'}]) \;|\; \forall \mathsf{z}\in \mathbb{X}_\mathsf{p}. \exists \mathsf{z'}\in \mathbb{X}_\mathsf{l}. s(\mathsf{z})=s(\mathsf{z'}), s(\mathsf{a'})=2\} \\
                    =& \llbrace \mathsf{\emp_{\mathbb{X}_\mathsf{p}\setminus\{\mathsf{x}\}}\ast a'\doteq 2 \ast y'\mapsto k_2' \ast x\doteq k_2'}\rrbrace
                \end{aligned}$
            \item 
                $\begin{aligned}[t]
                    &\llbracket \mathsf{[x]:=y}\rrbracket^\mathsf{SL}_\text{\circled{1}} \llbrace \mathsf{\emp_{\mathbb{X}_\mathsf{p}}\ast a'\doteq 2 \ast x'\mapsto k_2'}\rrbrace \\
                    =& \{(s,[\mathsf{x'}\mapsto \mathsf{y}]) \;|\; \forall \mathsf{y}\in \mathbb{X}_\mathsf{p}. \exists \mathsf{y'}\in \mathbb{X}_\mathsf{l}. s(\mathsf{y})=s(\mathsf{y'}), s(\mathsf{a'})=2\} \\
                    =& \llbrace \mathsf{\emp_{\mathbb{X}_\mathsf{p}}\ast a'\doteq 2 \ast x'\mapsto y}\rrbrace
                \end{aligned}$
        \end{itemize}
    \end{enumerate}
    \textbf{Case $\mathsf{r\triangleq r_1;r_2}$:}\\
    Let us assume that $\semantics{r_1;r_2}{SL}{\circled{1}} \llbrace \mathsf{P}\rrbrace=\llbrace \mathsf{Q}\rrbrace$. 
    By definition of sequential composition $\semantics{r_1;r_2}{SL}{\circled{1}} = \semantics{r_2}{SL}{\circled{1}}(\semantics{r_1}{SL}{\circled{1}} \llbrace \mathsf{P}\rrbrace)$. 
    By inductive hypothesis on $\mathsf{r_1}$ there exists an assertion $\mathsf{S}$ s.t. $\semantics{r_1}{SL}{\circled{1}} \llbrace \mathsf{P}\rrbrace = \llbrace \mathsf{S}\rrbrace$ and $\vdash_{\mathsf{SL+}}^{\circled{1}}\hl{\mathsf{P}}{\mathsf{r_1}}{\mathsf{S}}$ is provable. 
    Furthermore, by inductive hypothesis on $\mathsf{r_2}$ there exists an assertion $\mathsf{Q}$ s.t. $\semantics{r_2}{SL}{\circled{1}} \llbrace \mathsf{S}\rrbrace = \llbrace \mathsf{Q}\rrbrace$ and $\vdash_{\mathsf{SL+}}^{\circled{1}}\hl{\mathsf{S}}{\mathsf{r_2}}{\mathsf{Q}}$ is provable. Then, by applying rule of Seq, we can derive $\vdash_{SL+}\hl{\mathsf{P}}{\mathsf{r_1;r_2}}{\mathsf{Q}}$:
    \[
    \infer[\hllabel{Seq}{\circled{i}}]
        {\vdash_{\mathsf{SL+}}^{\circled{1}}\hl{\mathsf{P}}{\mathsf{r_1;r_2}}{\mathsf{Q}}}
        {\vdash_{\mathsf{SL+}}^{\circled{1}}\hl{\mathsf{P}}{\mathsf{r_1}}{\mathsf{S}} & \vdash_{\mathsf{SL+}}^{\circled{1}}\hl{\mathsf{S}}{\mathsf{r_2}}{\mathsf{Q}}}
    \]
    \textbf{Case $\mathsf{r\triangleq r_1+r_2}$:}\\
    Let us assume that $\semantics{r_1+r_2}{SL}{\circled{1}} \llbrace \mathsf{P}\rrbrace=\llbrace \mathsf{Q}\rrbrace$. 
    By inductive hypothesis on $\mathsf{r_1}$ there exists an assertion $\mathsf{Q_1}$ s.t. $\semantics{r_1}{SL}{\circled{1}} \llbrace \mathsf{P}\rrbrace = \llbrace \mathsf{Q_1}\rrbrace$ and $\vdash_{\mathsf{SL+}}^{\circled{1}}\hl{\mathsf{P}}{\mathsf{r_1}}{\mathsf{Q_1}}$ is provable. 
    By inductive hypothesis on $\mathsf{r_2}$ there exists an assertion $\mathsf{Q_2}$ s.t. $\semantics{r_2}{SL}{\circled{1}} \llbrace \mathsf{P}\rrbrace = \llbrace \mathsf{Q_2}\rrbrace$ and $\vdash_{\mathsf{SL+}}^{\circled{1}}\hl{\mathsf{P}}{\mathsf{r_2}}{\mathsf{Q_2}}$. 
    Furthermore, by definition of nondeterministic choice $\semantics{r_1+r_2}{SL}{\circled{1}} \llbrace \mathsf{P}\rrbrace = \semantics{r_1}{SL}{\circled{1}} \llbrace \mathsf{P}\rrbrace \cup \semantics{r_2}{SL}{\circled{1}} \llbrace \mathsf{P}\rrbrace = \llbrace \mathsf{Q_1\vee Q_2}\rrbrace$. 
    Then by applying the rule of choice and consequence, we can derive $\vdash_{\mathsf{SL+}}^{\circled{1}}\hl{\mathsf{P}}{\mathsf{r_1+r_2}}{\mathsf{\mathsf{Q_1}\vee \mathsf{Q_2}}}$:
    \[
    \infer[\hllabel{Choice}{\circled{i}}]
        {\vdash_{\mathsf{SL+}}^{\circled{1}}\hl{\mathsf{P}}{\mathsf{r_1;r_2}}{\mathsf{Q_1\vee Q_2}}}
        {\infer[\hllabel{Cons}{\circled{i}}]{\vdash_{\mathsf{SL+}}^{\circled{1}}\hl{\mathsf{P}}{\mathsf{r_1}}{\mathsf{Q_1\vee Q_2}}}{\vdash_{\mathsf{SL+}}^{\circled{1}}\hl{\mathsf{P}}{\mathsf{r_1}}{\mathsf{Q_1}}} & 
         \infer[\hllabel{Cons}{\circled{i}}]{\vdash_{\mathsf{SL+}}^{\circled{1}}\hl{\mathsf{P}}{\mathsf{r_2}}{\mathsf{Q_1\vee Q_2}}}{\vdash_{\mathsf{SL+}}^{\circled{1}}\hl{\mathsf{P}}{\mathsf{r_2}}{\mathsf{Q_2}}}}
    \]
\end{proof}

\setcounter{theorem}{0} 
\renewcommand{\thetheorem}{\ref{th:relative_completeness_loopfree_isl}}

\begin{theorem}[ISL+ Sound and Relative Complete for Loop-Free Programs]
    Let $\mathsf{r\in Reg}$ and $\mathsf{P,Q\in Ast}$, then $\vdash_{\mathsf{ISL+}}^{\text{\circled{i}}}\il{\mathsf{P}}{\mathsf{r}}{\mathsf{Q}}\Rightarrow \llbrace \mathsf{Q} \rrbrace \subseteq \llbracket \mathsf{r} \rrbracket^\mathsf{ISL}_{\text{\circled{i}}} \llbrace \mathsf{P} \rrbrace$. Moreover, $\llbrace \mathsf{Q}\rrbrace\subseteq\llbracket \mathsf{r}\rrbracket^\mathsf{ISL}_{\text{\circled{i}}}\llbrace \mathsf{P}\rrbrace \Rightarrow \vdash_{\mathsf{ISL+}}^{\text{\circled{i}}}\il{\mathsf{P}}{\mathsf{r}}{\mathsf{Q}}$ when $\mathsf{r}$ is loop-free.
\end{theorem}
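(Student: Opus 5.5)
I will mirror the proof of Theorem~\ref{th:sl_completeness}. Soundness goes by induction on the derivation. For each axiom in Figure~\ref{fig:axioms_isl+} I check directly that the postcondition is contained in $\semantics{c}{ISL}{\circled{i}}$ of the precondition. In practice this means the postcondition is exactly the image, for each value of the logical variables. For the $\mathsf{x\nnmapsto}$ axioms I check that the reserved marker makes $\mathsf{x}$ unallocated in every heap satisfying the precondition, so the error branch is taken. Frame is sound by Theorem~\ref{th:forward_preservation}: if $\llbrace\mathsf{Q}\rrbrace\subseteq\llbracket\mathsf{r}\rrbracket\llbrace\mathsf{P}\rrbrace$ and $\llbrace\mathsf{P}\rrbrace\heapcomp\llbrace\mathsf{R}\rrbrace$, then $\llbrace\mathsf{Q}\rrbrace\sast\llbrace\mathsf{R}\rrbrace\subseteq(\llbracket\mathsf{r}\rrbracket\llbrace\mathsf{P}\rrbrace)\sast\llbrace\mathsf{R}\rrbrace=\llbracket\mathsf{r}\rrbracket(\llbrace\mathsf{P}\rrbrace\sast\llbrace\mathsf{R}\rrbrace)$, using Property~\ref{prop:monotonicity}. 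The side condition is translated by Proposition~\ref{th:compatibility}. Exists, Disj, Cons, Seq, SeqEr, Choice and the Iterate rules are sound by additivity and monotonicity of the collecting semantics, as in \cite{DBLP:journals/pacmpl/OHearn20}.

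For relative completeness I prove a stronger claim by induction on the loop-free $\mathsf{r}$: for every $\mathsf{P}$ there is an expressible $\mathsf{Q}_\mathsf{P}$ with $\llbrace\mathsf{Q}_\mathsf{P}\rrbrace=\semantics{r}{ISL}{\circled{i}}\llbrace\mathsf{P}\rrbrace$, separately for the $\ok$ and $\er$ components, and $\vdash\il{\mathsf{P}}{\mathsf{r}}{\mathsf{Q}_\mathsf{P}}$ is derivable. Any valid $\il{\mathsf{P}}{\mathsf{r}}{\mathsf{Q}}$ then follows by Cons, since $\mathsf{Q}\Rightarrow\mathsf{Q}_\mathsf{P}$. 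The composite cases are those of Theorem~\ref{th:sl_completeness}, with two adjustments. In the Seq case the error states of $\mathsf{r_1}$ are handled by SeqEr, and the result is joined with the $\ok$-continuation through Disj. In the Choice case the ISL+ rule already produces $\mathsf{Q_1\vee Q_2}$, so no consequence step is needed.

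The atomic case is the main step. By Lemma~\ref{lm:rewrite} and Lemma~\ref{lm:rewrite2} I write $\mathsf{P}\equiv\exists\mathsf{a',X'}.\,\emp_{\progvar}\ast\heapcore^{\circled{i}}\ast\mathsf{P'[X'/\progvar]}$. I then split $\heapcore^{\circled{i}}$ by the value of $\mathsf{a'}$ into disjuncts: empty heap, $\mathsf{k_1'\mapsto k_2'}$, and (in \circled{2}) $\mathsf{k_1'\not\mapsto}$. I refine each disjunct by pure case splits on the relevant pointer variable. For the empty disjunct the cases are $\mathsf{x=null}$, $\mathsf{x}$ equal to $\mathsf{k_1'}$, or neither. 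For each disjunct I apply the matching axiom: Alloc/Alloc2, Free1/Free2, Load, Store, the error axioms FreeEr1--3, LoadEr1--3, StoreEr1--3, Assign, Assume or Error. The axiom is first instantiated through Exists to obtain the exact local image, then Frame adds the universal frame $\mathsf{P'[X'/\progvar]}$ together with the remaining pure constraints. Compatibility holds because $\emp_{\progvar}$ fixes the store and the frame's heap is disjoint from the accessed cell by construction. Finally Exists over $\mathsf{a',X'}$ and Disj over the cases, justified by Theorem~\ref{th:disj_no_disj}, rebuild $\mathsf{P}$ with its exact postcondition.

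The obstacle I expect is the case where the command would fault on a location that the frame might later supply, case (5) of the underspecification discussion. A frame carrying $\mathsf{x}\mapsto\_$ would then be incompatible, so that portion of $\llbrace\mathsf{P}\rrbrace$ must be covered by the $\mathsf{x\nnmapsto}$ axioms. To handle it, I would show that the states of $\llbrace\mathsf{P}\rrbrace$ in which the accessed location is absent are exactly those captured after splitting $\mathsf{P}$ into the part whose heap contains $s(\mathsf{x})$ and its complement. On the complement I would frame the $\nnmapsto$-axiom, which preserves the error semantics by Theorem~\ref{th:forward_preservation}, and check that the reserved marker is compatible with every frame not mentioning that location. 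Verifying that these error disjuncts together with the $\ok$ disjuncts exhaust $\semantics{c}{ISL}{\circled{i}}\llbrace\mathsf{P}\rrbrace$ exactly is the delicate bookkeeping.
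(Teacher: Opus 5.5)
Your overall plan is the route the paper's one-line proof intends: replay the SL+ argument on the normal form of Lemma~\ref{lm:rewrite}, with SeqEr and Disj handling sequencing. But it breaks exactly at the obstacle you flagged, and the repair you sketch cannot work.

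\textbf{Completeness.} Under Figure~\ref{fig:isl_semantics}, a state $(s,h)$ with $s(\mathsf{x})\in\locdomain\setminus dom(h)$ is sent by $\mathsf{free(x)}$ to $\mathsf{er}{:}(s,h)$, and similarly for loads and stores. No ISL+ derivation has such a state in its postcondition:
\begin{itemize}
\item the $\nnmapsto$-axioms carry the cell $s(\mathsf{x})\mapsto\reserved$ in both pre- and postcondition;
\item Frame only adds cells and Exists only quantifies logical variables;
\item Disj takes unions, and the ISL consequence rule only shrinks postconditions.
\end{itemize}
Hence every derivable error postcondition for $\mathsf{free(x)}$ has $\mathsf{x}=\mathsf{null}$ or $s(\mathsf{x})\in dom(h)$. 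Splitting $\llbrace\mathsf{P}\rrbrace$ into the part whose heap contains $s(\mathsf{x})$ and its complement does not help, because the complement has no $\reserved$-cell around which the $\nnmapsto$-axiom could be framed. Concretely, $[\emp]\;\mathsf{free(x)}\;[\mathsf{er}{:}\,\emp]$ is valid, since freeing on an empty heap always fails, yet it is underivable. So your exact-postcondition invariant fails for preconditions that leave the accessed cell absent, and so does relative completeness as stated. To fix this you must change the semantics (e.g.\ no outcome on absent cells, as in the original ISL) or restrict the preconditions.

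\textbf{Soundness of Frame.} You use the equality $(\llbracket\mathsf{r}\rrbracket P)\sast R=\llbracket\mathsf{r}\rrbracket(P\sast R)$ from Theorem~\ref{th:forward_preservation}. Its proof rests on ``$\mathsf{r}$ leaves $h_2$ untouched'', which fails here in two ways.

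First, errors on absent cells. The triple $[\mathsf{x\doteq l'}]\;\mathsf{free(x)}\;[\mathsf{er}{:}\,\mathsf{x\doteq l'}]$ is valid and compatible with the frame $\mathsf{l'\mapsto 0}$. Yet the framed triple is invalid, because $\mathsf{free(x)}$ now succeeds. So Frame is not sound for arbitrary valid premises. You would instead need to prove, by induction on derivations, that every derivable triple stays valid under every compatible universal frame. The error axioms avoid absent cells precisely so that they satisfy this.

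Second, compatibility is checked only on the precondition, never on intermediate states. In model $\circled{1}$ this breaks even that invariant under Seq. The derivation uses, in order:
\begin{itemize}
\item Alloc;
\item Free1, after Exists on $\mathsf{x'}$;
\item Seq;
\item Frame with $\mathsf{l'\mapsto 5}$, which is compatible since the precondition's heap is empty;
\item Cons.
\end{itemize}
It yields $[\emp_{\progvar}\ast\mathsf{l'\mapsto 5}]\;\mathsf{x:=alloc();free(x)}\;[\mathsf{ok}{:}\,(\emp_{\progvar\setminus\{\mathsf{x}\}}\ast\mathsf{l'\mapsto 5})\wedge\mathsf{x=l'}]$. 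Its nonempty postcondition is unreachable, because $\mathsf{alloc}$ never returns the still-allocated $\mathsf{l'}$.

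Separately, Free1's postcondition $\emp_{\progvar}$ already fails your direct axiom check. It admits $\mathsf{x}=\mathsf{null}$, which is unreachable from $\mathsf{x\mapsto z'}$.

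So in model $\circled{1}$ the soundness half cannot be proved as stated. It needs negative heap information, as Free2's $\mathsf{x\not\mapsto}$ supplies in model $\circled{2}$, or a compatibility check on intermediate states. These gaps are inherited from the paper's own proof, which simply defers to Theorem~\ref{th:sl_completeness} and Theorem~\ref{th:forward_preservation}.
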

\begin{proof}
    The proof proceeds like the proof of Theorem \ref{th:sl_completeness} and by using Lemma \ref{lm:rewrite2}.
\end{proof}

\setcounter{theorem}{0} 
\renewcommand{\thetheorem}{\ref{th:statewise_completeness_isl}}

\begin{theorem}[State-wise completeness] 
    Given a program $\mathsf{r}$ and two states $(\oker: (s,h)),(\oker:(s',h'))$ such that, $(\oker:(s',h'))\in \semantics{r}{ISL}{\circled{i}} (\oker:(s,h))$, for every assertion $\mathsf{P}$ such that $(s,h) \in \llbrace \mathsf{P}\rrbrace$ there exists an assertion $\mathsf{Q}$ such that $(s',h') \in \llbrace \mathsf{Q}\rrbrace$ and $\vdash_{\mathsf{ISL+}}^{\circled{i}}\il{\mathsf{P}}{\mathsf{r}}{\mathsf{Q}}$ is provable. 
\end{theorem}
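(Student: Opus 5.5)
The proof goes by structural induction on $\mathsf{r}$. For loop-free parts we reuse Theorem~\ref{th:relative_completeness_loopfree_isl}; for iteration we exploit that only one finite run matters.

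The induction hypothesis is the statement itself, for all states and all $\mathsf{P}$. The key observation is that the witness $(\oker:(s',h'))\in\semantics{r}{ISL}{\circled{i}}(\oker:(s,h))$ is produced by one finite execution. We may therefore always pick a postcondition $\mathsf{Q}$ that is as weak as we like, provided it is under-approximate and contains $(s',h')$.

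\textbf{Atomic commands.} Take $\mathsf{Q}$ to be the exact post. By Lemma~\ref{lm:rewrite} and Lemma~\ref{lm:rewrite2}, write $\mathsf{P}$ in the normal form $\exists \mathsf{a'},\mathsf{X'}.\,\emp_{\progvar}\ast\heapcore^{\circled{i}}\ast\mathsf{P'}[\mathsf{X'}/\progvar]$. Then derive $\il{\mathsf{P}}{\mathsf{c}}{\mathsf{Q}}$ with the same Frame/Exists chain used in the proof of Theorem~\ref{th:relative_completeness_loopfree_isl}, using the ISL+ axioms of Figure~\ref{fig:axioms_isl+}. In the error cases, use the axioms FreeEr1--3, LoadEr1--3, StoreEr1--3 and Error. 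Here the assertion $\mathsf{x\nnmapsto}$ covers case~(5), so every erroneous outcome is expressible. Since $(s,h)\in\llbrace\mathsf{P}\rrbrace$, the state $(s',h')$ lies in the exact post.

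\textbf{Sequential composition.} Suppose $(s',h')$ is reached from $(s,h)$ via an intermediate state $(s'',h'')$ with $\ok:(s'',h'')\in\semantics{r_1}{ISL}{\circled{i}}(\oker:(s,h))$. Apply the hypothesis for $\mathsf{r_1}$ to obtain $\mathsf{S}\ni(s'',h'')$ with $\il{\mathsf{P}}{\mathsf{r_1}}{\mathsf{S}}$. Then apply the hypothesis for $\mathsf{r_2}$ with precondition $\mathsf{S}$ to get $\mathsf{Q}$, and close with Seq. If $\mathsf{r_1}$ already errs, use SeqEr.

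\textbf{Choice.} The witness comes from some $\mathsf{r_k}$. Obtain $\il{\mathsf{P}}{\mathsf{r_k}}{\mathsf{Q_k}}$ from the hypothesis. For the other branch, use Cons with postcondition $\false$ (derivable via the Empty rule, i.e.\ Cons from any triple). Choice then yields $\mathsf{Q_k}\vee\false$.

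\textbf{Iteration.} The witness belongs to $\semantics{r^n}{ISL}{\circled{i}}$ for some finite $n$. We induct on $n$: Iterate-zero handles $n=0$. For $n+1$, the inner induction hypothesis gives $\il{\mathsf{P}}{\mathsf{r^\ast}}{\mathsf{S}}$ with $\mathsf{S}$ containing the intermediate state, and the outer hypothesis on $\mathsf{r}$ extends it. Seq then gives $\il{\mathsf{P}}{\mathsf{r^\ast;r}}{\mathsf{Q}}$, and Iterate concludes. To keep the unfolding consistent with $\mathsf{r^{i+1}}=\mathsf{r;r^i}$, we may need a small semantic lemma: $\semantics{r^{n+1}}{ISL}{\circled{i}}$ also equals $\mathsf{r^n;r}$.

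\textbf{Main obstacle.} The hard step is the atomic case. We must check that, for every $\mathsf{P}$ (not only normal-form-friendly ones), the relevant fragment of the post is expressible, and that the side condition $\mathsf{P}\logicheapcomp\mathsf{R}$ of the Frame rule holds for the chosen frame. This is especially delicate for error outcomes caused by unallocated locations, where $\mathsf{x\nnmapsto}$ must be introduced consistently with the frame. I would handle it by splitting each heap-accessing command into the five cases (1)--(5) and verifying compatibility case by case via Proposition~\ref{th:compatibility}.
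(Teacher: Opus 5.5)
Your proposal is correct and follows essentially the same route as the paper: structural induction on $\mathsf{r}$, atomic commands via the loop-free relative completeness result (Theorem~\ref{th:relative_completeness_loopfree_isl}), sequential composition split into the SeqEr and Seq cases, choice closed by pairing the inductive triple with an Empty triple $[\mathsf{P}]\;\mathsf{r_2}\;[\false]$ for the unused branch, and iteration by unfolding the finite run into a chain of Iterate-zero, Seq and Iterate steps. The only difference is emphasis: the paper handles the atomic case simply by citing Theorem~\ref{th:relative_completeness_loopfree_isl}, and does not re-open the five-case analysis you single out as the main obstacle.
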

\begin{proof}
    The proof proceeds by induction on the structure of $\mathsf{r}$:
    \\\textbf{Case $\mathsf{r\triangleq c}$:}\\
    Atomic commands are loop-free programs. Completeness is established in Theorem \ref{th:relative_completeness_loopfree_isl}.
    \\\textbf{Case $\mathsf{r\triangleq r_1;r_2}$:}\\
    We suppose $(\oker:(s',h')) \in \semantics{r_1;r_2}{ISL}{\circled{i}} (\oker:(s,h))$ with $(s,h) \in \llbrace \mathsf{P}\rrbrace$. 
    By definition, we distinguish the two cases:
    \begin{itemize}
        \item $(\er:(s',h'))\in \semantics{r_1}{ISL}{\circled{i}}(\oker:(s,h))$: 
        By inductive hypothesis there exists an assertion $\mathsf{Q}$ such that $\il[blue][red]{\mathsf{P}}{\mathsf{r_1}}{\mathsf{Q}}$ and $(s',h')\in \llbrace\mathsf{Q}\rrbrace$. 
        By using rule $\illabel{SeqEr}{\circled{i}}$ we conclude $\vdash_{\mathsf{ISL+}}^{\circled{i}}\il[blue][red]{\mathsf{P}}{\mathsf{r_1;r_2}}{\mathsf{Q}}$.
        \item $(\ok:(s'',h''))\in \semantics{r_1}{ISL}{\circled{i}}(\oker:(s,h))$ and $(\oker:(s',h'))\in \semantics{r_2}{ISL}{\circled{i}} (\ok:(s'',h''))$. 
        By inductive hypothesis there exists an assertion $\mathsf{S}$ such that $\vdash_{\mathsf{ISL+}}^{\circled{i}}\il[blue][mygreen]{\mathsf{P}}{\mathsf{r_1}}{\mathsf{S}}$ and $(s'',h'') \in \llbrace \mathsf{S}\rrbrace$. 
        Again by inductive hypothesis, there exists an assertion $\mathsf{Q}$ such that  $\vdash_{\mathsf{ISL+}}^{\circled{i}}\il[blue][blue]{\mathsf{S}}{\mathsf{r_2}}{\mathsf{Q}}$ and $(s',h')\in \llbrace \mathsf{Q}\rrbrace$. 
        By using rule $\illabel{Seq}{\circled{i}}$ we conclude $\vdash_{\mathsf{ISL+}}^{\circled{i}}\il[blue][blue]{\mathsf{P}}{\mathsf{r_1;r_2}}{\mathsf{Q}}$.
    \end{itemize}
    \textbf{Case $\mathsf{r\triangleq r_1+r_2}$:}\\
    We suppose $(\oker:(s',h')) \in \semantics{r_1+r_2}{ISL}{\circled{i}} (\oker:(s,h)) = \semantics{r_1}{ISL}{\circled{i}} (\oker:(s,h)) \cup \semantics{r_2}{ISL}{\circled{i}} (\oker:(s,h))$. 
    Then there must exists $\mathsf{i\in \{1,2\}}$ such that $(\oker:(s',h')) \in \semantics{r_i}{ISL}{\circled{i}} (\oker:(s,h))$. 
    We suppose $\mathsf{i=1}$ since the other case is analogous. 
    By inductive hypothesis there exists an assertion $\mathsf{Q}$ such that $\vdash_{\mathsf{ISL+}}^{\circled{i}}\il[blue][blue]{\mathsf{P}}{\mathsf{r_1}}{\mathsf{Q}}$ and $(s',h') \in \llbrace \mathsf{Q}\rrbrace$. 
    To conclude the proof, we show a derivation for $\vdash_{\mathsf{ISL+}}^{\circled{i}}\il[blue][blue]{\mathsf{P}}{\mathsf{r_1+r_2}}{\mathsf{Q}}$:
    \[
        \infer[\illabel{Choice}{\circled{i}}]
            {\vdash_{\mathsf{ISL+}}^{\circled{i}}\il[blue][blue]{\mathsf{P}}{\mathsf{r_1+r_2}}{\mathsf{Q}}}
            {
                \infer{\vdash_{\mathsf{ISL+}}^{\circled{i}}\il[blue][blue]{\mathsf{P}}{\mathsf{r_1}}{\mathsf{Q}}}{\text{by induction}} 
                & 
                \infer[\illabel{Empty}{\circled{i}}]{\vdash_{\mathsf{ISL+}}^{\circled{i}}\il[blue][blue]{\mathsf{P}}{\mathsf{r_2}}{\mathsf{\false}}}{} 
            }
    \]
    \textbf{Case $\mathsf{r\triangleq r_1^\ast}$:}\\
    We suppose $(\oker:(s',h')) \in \semantics{r_1^\ast}{ISL}{\circled{i}} (\oker:(s,h)) = \bigcup_{\mathsf{i}\in \mathbb{N}} \semantics{r_1^i}{ISL}{\circled{i}} (\oker:(s,h))$. 
    Then there must exists an $\mathsf{m}\geq 0$ such that $\semantics{r_1^m}{ISL}{\circled{i}} (\oker:(s,h))$, i.e. there exists a sequence of states $\{\sigma_\mathsf{i}\}_{0 \leq \mathsf{i} \leq \mathsf{m}}$ where $\sigma_0=\oker:(s,h)$, $\sigma_m=\oker:(s',h')$ and $\sigma_{\mathsf{i+1}}\in \semantics{r_1}{ISL}{\circled{i}} \sigma_\mathsf{i}$. 
    By inductive hypothesis, fixed $\mathsf{P_0=P}$, there exists a sequence of assertions $\{\mathsf{P_i}\}_{0 \leq \mathsf{i} \leq \mathsf{m}}$ such that $\sigma_{\mathsf{i}}\in \llbrace\mathsf{P_{i}}\rrbrace$ and $\vdash_{\mathsf{ISL+}}^{\circled{i}}\il[blue][blue]{\mathsf{P_i}}{\mathsf{r_1}}{\mathsf{P_{i+1}}}$ is provable for all $0\leq \mathsf{i} \leq \mathsf{m}$. 
    We take $\mathsf{Q=P_m}$ and we show a derivation for $\vdash_{\mathsf{ISL+}}^{\circled{i}}\il[blue][blue]{\mathsf{P}}{\mathsf{r_1^\ast}}{\mathsf{Q}}$:
    \[\resizebox{.99\linewidth}{!}{
        \infer[\illabel{Iterate}{\circled{i}}]
            {\vdash_{\mathsf{ISL+}}^{\circled{i}}\il[blue][blue]{\mathsf{P_0}}{\mathsf{r_1^\ast}}{\epsilon:\mathsf{P_m}}}
            {\infer[\illabel{Seq}{\circled{i}}]
                {\vdash_{\mathsf{ISL+}}^{\circled{i}}\il[blue][blue]{\mathsf{P_0}}{\mathsf{r_1^\ast; r_1}}{\epsilon:\mathsf{P_m}}}
                {\infer[\illabel{Iterate}{\circled{i}}]
                    {\vdash_{\mathsf{ISL+}}^{\circled{i}}\il[blue][blue]{\mathsf{P_0}}{\mathsf{r_1^\ast}}{\epsilon:\mathsf{P_{m-1}}}}
                    {\infer[\illabel{Seq}{\circled{i}}]
                        {\vdash_{\mathsf{ISL+}}^{\circled{i}}\il[blue][blue]{\mathsf{P_0}}{\mathsf{r_1^\ast; r_1}}{\epsilon:\mathsf{P_{m-1}}}}
                        {
                            \infer[\illabel{Iterate}{\circled{i}}]
                                {\vdash_{\mathsf{ISL+}}^{\circled{i}}\il[blue][blue]{\mathsf{P_0}}{\mathsf{r_1^\ast}}{\epsilon:\mathsf{P_{m-2}}}}
                                {\infer
                                    {\vdots}
                                    {
                                \infer[\illabel{Iterate\text{-}zero}{\circled{i}}]
                                    {\vdash_{\mathsf{ISL+}}^{\circled{i}}\il[blue][blue]{\mathsf{P_0}}{\mathsf{r_1^\ast}}{\epsilon:\mathsf{P_0}}}{}}
                                }
                            &
                            \infer{\vdash_{\mathsf{ISL+}}^{\circled{i}}\il[blue][blue]{\mathsf{P_{m-2}}}{\mathsf{r_1}}{\epsilon:\mathsf{P_{m-1}}}}{\text{by induction}}
                        }
                    }
                &
                \infer{\vdash_{\mathsf{ISL+}}^{\circled{i}}\il[blue][blue]{\mathsf{P_{m-1}}}{\mathsf{r_1}}{\epsilon:\mathsf{P_m}}}{\text{by induction}}}
            }
    }\]
\end{proof}

\clearpage
\section{Detailed Examples}\label{sec:additional_material}
This appendix contains the full derivations of sample triples discussed in the paper.
They witness the increased expressiveness of SL+ and ISL+ w.r.t. SL and ISL, respectively.
For the sake of readability and to avoid including uninformative intermediate steps, we rely on two facts in the derivations.

First, according to the new frame rule, we only add memories that constrain logical variables. 
Nonetheless, if, for example, we need to use a frame $\mathsf{x'\mapsto \_}$ on a memory where $\mathsf{x\doteq x'}$, then we directly use the assertion $\mathsf{x\mapsto\_}$ on the program variable.
This prevents the need to apply the frame rule followed by the consequence rule to substitute an equivalent assertion: $\mathsf{x\doteq x' \ast x'\mapsto\_ \equiv x\doteq x' \ast x\mapsto\_}$.

Second, when we apply the exists rule, we rely on the fact that $\mathsf{\exists x'. x=x’ \ast x\mapsto \_}$ is equivalent to $\mathsf{x\mapsto\_}$, when $\mathsf{x'}$ is a logical variable. Again, this prevents the need to apply the exists rule followed by the consequence rule to substitute an equivalent assertion.

\begin{figure*}[t]
    \centering
    \[\resizebox{.99\linewidth}{!}{
        \infer[\hllabel{Choice}{\circled{1}}]
            {\dagger}
            {\infer[\hllabel{Cons}{\circled{1}}]
                {\vdash_\mathsf{SL+}^{\circled{1}}\hl{\mathsf{\emp_{\progvar}}}{\mathsf{(\false?; x:=[y])}}{\mathsf{\emp_{\progvar}}}}
                {
                \infer[\hllabel{Seq}{\circled{1}}]
                    {\vdash_\mathsf{SL+}^{\circled{1}}\hl{\mathsf{\emp_{\progvar}}}{\mathsf{(\false?; x:=[y])}}{\mathsf{\false}}}
                    {\infer[\hllabel{Assume}{\circled{1}}]
                        {\vdash_\mathsf{SL+}^{\circled{1}}\hl{\mathsf{\emp_{\progvar}}}{\mathsf{\false?}}{\mathsf{\false}}}
                        {}
                         & 
                     \infer[\hllabel{Frame}{\circled{1}}]
                        {\vdash_\mathsf{SL+}^{\circled{1}}\hl{\mathsf{\false}}{\mathsf{x:=[y]}}{\mathsf{\false}}}
                        {\infer[\hllabel{Load}{\circled{1}}]
                            {\vdash_\mathsf{SL+}^{\circled{1}}\hl{\mathsf{\emp_{\progvar}\ast y\mapsto v}}{\mathsf{x:=[y]}}{\mathsf{\emp_{\progvar\setminus \{x\}}\ast y\mapsto v \wedge x=v}}}
                            {}
                        }
                    }
                }
            }
    }\]
    \[
        \infer[\mathsf{\hllabel{Choice}{\circled{1}}}]
            {\ddagger}
            {\infer[\hllabel{Seq}{\circled{1}}]
                    {\vdash_\mathsf{SL+}^{\circled{1}}\hl{\mathsf{\emp_{\progvar}}}{\mathsf{(\true?;\true?)}}{\mathsf{\emp_{\progvar}}}}
                    {\infer[\hllabel{Assume}{\circled{1}}]
                        {\vdash_\mathsf{SL+}^{\circled{1}}\hl{\mathsf{\emp_{\progvar}}}{\mathsf{\true?}}{\mathsf{\emp_{\progvar}}} }
                        {}
                    & \infer[\hllabel{Assume}{\circled{1}}]
                        {\vdash_\mathsf{SL+}^{\circled{1}} \hl{\mathsf{\emp_{\progvar}}}{\mathsf{\true?}}{\mathsf{\emp_{\progvar}}}}
                        {}
                    }
            }
    \] 
    \[\resizebox{.99\linewidth}{!}{
        \infer[\hllabel{Choice}{\circled{1}}]
            {\infer[\hllabel{Exists}{\circled{1}}]
                {\vdash_\mathsf{SL+}^{\circled{1}}\hl{\mathsf{x\mapsto\_}}{\mathsf{(\false?; x:=[y])+(\true?;\true?)}}{\mathsf{x\mapsto\_}}}
                {
                    \infer[\hllabel{Frame}{\circled{1}}]
                        {\vdash_\mathsf{SL+}^{\circled{1}}\hl{\mathsf{\emp_{\progvar}\ast x\mapsto\_}}{\mathsf{(\false?; x:=[y])+(\true?;\true?)}}{\mathsf{\emp_{\progvar}\ast x\mapsto\_}}}
                        {\vdash_\mathsf{SL+}^{\circled{1}}\hl{\mathsf{\emp_{\progvar}}}{\mathsf{(\false?; x:=[y])+(\true?;\true?)}}{\mathsf{\emp_{\progvar}}}}
                }
            }
            {\dagger & \ddagger}
    }\]
     \caption{A sample derivation in SL+ showing that it is more expressive than SL: the triple $\hl{\mathsf{x\mapsto \_}}{\mathsf{r}}{\mathsf{x\mapsto \_}}$ is not derivable in SL.}
    \Description{A sample derivation in SL+ showing that it is more expressive than SL}
    \label{fig:moreslplus}
\end{figure*}

\begin{example}[SL+ more expressive than SL] \label{eg:sl_derivation}
    Let us take $\mathsf{r\triangleq (\false?; x:=[y]) + (\true?;\skipexp)}$. As shown in Section \ref{sec:introduction} in SL we can derive the triple $\hl{\mathsf{\emp}}{\mathsf{r}}{\mathsf{\emp}}$ but we cannot extend such triple with a frame $\mathsf{x\mapsto \_}$ because $\mathsf{x\in mod(r)}$ and we cannot apply the frame rule, despite the assignment $\mathsf{x:=[y]}$ is dead code. In Figure~\ref{fig:moreslplus}, we show a derivation in SL+ for the triple $\hl{\mathsf{x\mapsto \_}}{\mathsf{r}}{\mathsf{x\mapsto \_}}$. Note that, according to Section \ref{sec:background} we have that $\mathsf{\skipexp \triangleq true?}$.
\end{example}

\begin{figure*}[t]
    \centering
\[
    \infer[\illabel{Seq}{\circled{2}}]
        {(\dagger)}
        {
            \infer[\illabel{Cons}{\circled{2}}]
                {\vdash_\mathsf{ISL+}^{\circled{2}}\il[blue][mygreen]{\mathsf{\emp_{\mathbb{X}_\mathsf{p}}\ast y\mapsto v}}{\mathsf{x:=alloc()}}{\mathsf{\emp_{\mathbb{X}_\mathsf{p}}\ast x\mapsto \_ \ast y\mapsto v}}}
                {\infer[\illabel{Exists}{\circled{2}}]
                    {\vdash_\mathsf{ISL+}^{\circled{2}}\il[blue][mygreen]{\mathsf{\emp_{\mathbb{X}_\mathsf{p}}\ast y\mapsto v}}{\mathsf{x:=alloc()}}{\mathsf{\emp_{\mathbb{X}_\mathsf{p}\setminus \{x\}} \ast x\mapsto \_ \ast y\mapsto v}}}
                    {\infer[\illabel{Frame}{\circled{2}}]
                        {\vdash_\mathsf{ISL+}^{\circled{2}}\il[blue][mygreen]{\mathsf{\emp_{\mathbb{X}_\mathsf{p}}\ast y\mapsto v}}{\mathsf{x:=alloc()}}{\mathsf{\emp_{\mathbb{X}_\mathsf{p}\setminus \{x\}} \ast x\mapsto z'\ast y\mapsto v}}}
                        {\infer[\illabel{Alloc}{\circled{2}}]
                            {\vdash_\mathsf{ISL+}^{\circled{2}}\il[blue][mygreen]{\mathsf{\emp_{\mathbb{X}_\mathsf{p}}}}{\mathsf{x:=alloc()}}{\mathsf{\emp_{\mathbb{X}_\mathsf{p}\setminus \{x\}} \ast x\mapsto z'}}}
                            {}
                        }
                    }
                }
        }
\]

\[
    \infer[\illabel{Seq}{\circled{2}}]
        {(\ddagger)}
        {
            \infer[\illabel{Exists}{\circled{2}}]
                {\vdash_\mathsf{ISL+}^{\circled{2}}\il[blue][mygreen]{\mathsf{\emp_{\mathbb{X}_\mathsf{p}} \ast y\mapsto v \ast x\mapsto \_}}{\mathsf{free(y)}}{\mathsf{\emp_{\mathbb{X}_\mathsf{p}}}\ast {\mathsf{y\not\mapsto}} \ast \mathsf{ x\mapsto \_}}}
                {\infer[\illabel{Frame}{\circled{2}}]
                    {\vdash_\mathsf{ISL+}^{\circled{2}}\il[blue][mygreen]{\mathsf{\emp_{\mathbb{X}_\mathsf{p}} \ast y\mapsto z' \ast z'\doteq v \ast x\mapsto\_}}{\mathsf{free(y)}}{\mathsf{\emp_{\mathbb{X}_\mathsf{p}}}\ast {\mathsf{y\not\mapsto}} \ast \mathsf{z'\doteq v \ast x\mapsto\_}}}
                    {\infer[\illabel{Free2}{\circled{2}}]
                        {\vdash_\mathsf{ISL+}^{\circled{2}}\il[blue][mygreen]{\mathsf{\emp_{\mathbb{X}_\mathsf{p}} \ast y\mapsto z'}}{\mathsf{free(y)}}{\mathsf{\emp_{\mathbb{X}_\mathsf{p}}\ast y\not\mapsto}}}
                        {}
                    }
                }
        }
\]

\[
    \infer[\illabel{Seq}{\circled{2}}]
        {(\star)}
        {
            \infer[\illabel{Frame}{\circled{2}}]
                {\vdash_\mathsf{ISL+}^{\circled{2}}\il[blue][red]{\mathsf{\emp_{\mathbb{X}_\mathsf{p}}}\ast {\mathsf{y\not\mapsto}} \ast \mathsf{x\mapsto \_}}{\mathsf{x:=[y]}}{\mathsf{\emp_{\mathbb{X}_\mathsf{p}}}\ast {\mathsf{y\not\mapsto}} \ast \mathsf{x\mapsto \_}}}
                {\infer[\illabel{LoadEr3}{\circled{2}}]
                    {\vdash_\mathsf{ISL+}^{\circled{2}}\il[blue][red]{\mathsf{\emp_{\mathbb{X}_\mathsf{p}}\ast y\not\mapsto}}{\mathsf{x:=[y]}}{\mathsf{\emp_{\mathbb{X}_\mathsf{p}}\ast y\not\mapsto}}}
                    {}
                }
        }
\]

\[
    \infer[\illabel{Exists}{\circled{2}}]
        {\vdash_\mathsf{ISL+}^{\circled{2}}\il[blue][red]{\mathsf{y\mapsto v}}{\mathsf{r}}{{\mathsf{y\not\mapsto}}\ast \mathsf{ x\mapsto\_}}}
        {\infer[\illabel{Seq}{\circled{2}}]
            {\vdash_\mathsf{ISL+}^{\circled{2}} \il[blue][red]{\mathsf{\emp_{\mathbb{X}_\mathsf{p}}\ast y\mapsto v}}{\mathsf{r}}{\mathsf{\emp_{\mathbb{X}_\mathsf{p}}}\ast {\mathsf{y\not\mapsto}} \ast \mathsf{x\mapsto\_}}}
            {(\dagger)&(\ddagger)&(\star)}
        }
\]
     \caption{A sample derivation in ISL+ showing that it is more expressive than ISL: the triple $\il[blue][red]{\mathsf{y\mapsto v}}{\mathsf{r}}{{\mathsf{y\not\mapsto}}\ast \mathsf{x\mapsto \_}}$ is not derivable in ISL.}
    \Description{A sample derivation in ISL+ showing that it is more expressive than ISL}
    \label{fig:moreislplus}
\end{figure*}

\begin{example}[ISL+ more expressive than ISL] \label{eg:isl_derivation}
    Let us take $\mathsf{r \triangleq x:=alloc();free(y);x:=[y]}$. 
    As shown in Section \ref{sec:introduction} in ISL the triple $\il[blue][red]{\mathsf{y\mapsto v}}{\mathsf{r}}{{\mathsf{y\not\mapsto}}\ast \mathsf{x\mapsto \_}}$ cannot be derived. This is because when handling the command $\mathsf{x:=[y]}$ after $\mathsf{y}$ has been deallocated, we end up in an error, without any actual modification to the variable $\mathsf{x}$. Despite that, we cannot use the frame rule to frame out the assertion on $\mathsf{x\mapsto\_}$ due to the side condition $\mathsf{mod(r)\cap fv(R)= \emptyset}$, which prevents the frame from mentioning the variable $\mathsf{x}$, leading to the impossibility of apply the local axiom for $\mathsf{x:=[y]}$. A possible derivation in ISL+ for such a triple is shown in Figure~\ref{fig:moreislplus}.
\end{example}

\end{document}